\pdfoutput=1
\RequirePackage{ifpdf}
\ifpdf 
\documentclass[pdftex]{sigma}
\else
\documentclass{sigma}
\fi

\numberwithin{equation}{section}

\newtheorem{thm}{Theorem}[section]
\newtheorem*{Theorem*}{Theorem}

\newtheorem{lem}[thm]{Lemma}
\newtheorem{prop}[thm]{Proposition}
 { \theoremstyle{definition}
\newtheorem{Def}[thm]{Definition}

\newtheorem{ex}[thm]{Example}
\newtheorem{rem}[thm]{Remark} }

\newcommand{\e}{\mathrm{e}}

\def\IZ{\mathbb {Z}}
\def\IQ{\mathbb {Q}}
\def\IR{\mathbb {R}}
\def\IC{\mathbb {C}}
\def\IP{\mathbb {P}}

\def\cC{\mathcal{C}}
\def\cE{\mathcal{E}}
\def\cM{\mathcal{M}}
\def\cW{\mathcal{W}}

\def\mfm{\mathfrak{m}}

\def\sff{\mathsf{f}}

\def\sfu{\mathsf{u}}
\def\sfv{\mathsf{v}}
\def\sfx{\mathsf{x}}
\def\sfy{\mathsf{y}}
\def\sfA{\mathsf{A}}
\def\sfB{\mathsf{B}}
\def\sfC{\mathsf{C}}
\def\sfD{\mathsf{D}}

\def\VO{V}
\def\sfm{\mathsf{f}^{\mathrm{MV}}}

\begin{document}
\allowdisplaybreaks

\newcommand{\arXivNumber}{2303.14154}

\renewcommand{\PaperNumber}{043}

\FirstPageHeading

\ShortArticleName{Some Generalizations of Mirzakhani's Recursion and Masur--Veech Volumes}

\ArticleName{Some Generalizations of Mirzakhani's Recursion\\ and Masur--Veech Volumes via Topological Recursions}

\Author{Hiroyuki FUJI~$^{\rm a}$ and Masahide MANABE~$^{\rm bc}$}

\AuthorNameForHeading{H.~Fuji and M.~Manabe}

\Address{$^{\rm a)}$~Center for Mathematical and Data Sciences and Department of Mathematics, Kobe University,\\
\hphantom{$^{\rm a)}$}~Rokko, Kobe 657-8501, Japan}
\EmailD{\href{mailto:hfuji@math.kobe-u.ac.jp}{hfuji@math.kobe-u.ac.jp}}

\Address{$^{\rm b)}$~Osaka Central Advanced Mathematical Institute, Osaka Metropolitan University,\\
\hphantom{$^{\rm b)}$}~3-3-138 Sugimoto, Sumiyoshi-ku, Osaka, 558-8585, Japan}
\EmailD{\href{mailto:masahidemanabe@gmail.com}{masahidemanabe@gmail.com}}

\Address{$^{\rm c)}$~Department of Mathematics, Graduate School of Science, Osaka University, \\
\hphantom{$^{\rm c)}$}~Toyonaka, Osaka 560-0043, Japan}

\ArticleDates{Received April 04, 2023, in final form May 09, 2024; Published online May 27, 2024}

\Abstract{Via Andersen--Borot--Orantin's geometric recursion, a twist of the topological recursion was proposed, and a recursion for the Masur--Veech polynomials was uncovered. The purpose of this article is to explore generalizations of Mirzakhani's recursion based on physical two-dimensional gravity models related to the Jackiw--Teitelboim gravity and to provide an introduction to various realizations of topological recursion. For generalized Mirzakhani's recursions involving a Masur--Veech type twist, we derive Virasoro constraints and cut-and-join equations, and also show some computations of generalized volumes for the physical two-dimensional gravity models.}

\Keywords{topological recursion; Weil--Petersson volume; Masur--Veech volume; quantum Airy structure; Jackiw--Teitelboim gravity}

\Classification{81T45; 14D21; 14N10}

\section{Introduction}\label{sect:Intro}

A remarkable identity \cite{McShane} concerning the lengths of simple closed geodesics on a once-punctured torus
with a complete finite-area hyperbolic structure was discovered by McShane in his Ph.D.\ Thesis \cite{McShane_PhD}.
After his remarkable discovery, McShane's identity was generalized
to bordered hyperbolic Riemann surfaces of higher genus in a series of Mirzakhani's papers \cite{Mirz3,Mirz1}.
The generalized identity leads to a striking recursion relation for the Weil--Petersson volumes of moduli spaces of
bordered Riemann surfaces, referred to as \emph{Mirzakhani's recursion}.

\begin{thm}[Mirzakhani's recursion \cite{Mirz3,Mirz1}]
Let $V_{g,n}^{\mathrm{WP}}(L_1,\dots,L_n)$ be
the Weil--Petersson volume for the moduli space of bordered connected Riemann surfaces of genus $g$
with $n$ ordered boundary components of lengths $L_1,\dots,L_n$.
The Weil--Petersson volumes for $2g-2+n>0$ obey Mirzakhani's recursion
\begin{align}
L_1V_{g,n}^{\mathrm{WP}}(L_1,\dots,L_n)
={}&\frac{1}{2}\int_{\mathbb{R}_+^2}
D^{\mathrm{WP}}(L_1,\ell,\ell') P^{\mathrm{WP}}_{g,n}(\ell,\ell',L_K)
\ell\ell'{\rm d}\ell {\rm d}\ell'\nonumber
\\
&
+\sum_{m=2}^n\int_{\mathbb{R}_+}
R^{\mathrm{WP}}(L_1,L_m,\ell) V^{\mathrm{WP}}_{g,n-1}(\ell,L_{K\setminus\{m\}})\ell {\rm d}\ell,\label{eq:Mirzakhani_rec}
\end{align}
where $L_K=\{L_2,\dots,L_n\}$ and $L_J=\{L_{i_1},\dots,L_{i_{|J|}}\}$
for $J=\{i_1,\dots,i_{|J|}\}\subset K=\{2,\dots,n\}$.
Here $P^{\mathrm{WP}}_{g,n}$ is
\begin{align}
&
P^{\mathrm{WP}}_{g,n}(\ell,\ell',L_K)
=V_{g-1,n+1}^{\mathrm{WP}}(\ell,\ell',L_K)+
\mathop{\sum_{h+h'=g}}
\limits_{J \cup J'=K}^{\mathrm{stable}}
V^{\mathrm{WP}}_{h,1+|J|}(\ell,L_J) V^{\mathrm{WP}}_{h',1+|J'|}(\ell',L_{J'}),
\label{eq:P_WP}
\end{align}
where stable in the sum means that $h$, $h'$, $J$, $J'$ obey $2h-1+|J|>0$
and $2h'-1+|J'|>0$, and~$D^{\mathrm{WP}}$ and $R^{\mathrm{WP}}$ are
\begin{align*}
D^{\mathrm{WP}}(x,y,z)={}&\int_{0}^{x}H^{\mathrm{WP}}(y+z,x') {\rm d}x',
\\
R^{\mathrm{WP}}(x,y,z)={}&\frac{1}{2}\int_{0}^{x}
\left(H^{\mathrm{WP}}(z,x'+y)+H^{\mathrm{WP}}(z,x'-y)\right){\rm d}x',
\end{align*}
where the recursion kernel $H^{\mathrm{WP}}(x,y)$ is
\begin{align*}
H^{\mathrm{WP}}(x,y)=\frac{1}{1+\mathrm{e}^{\frac{x+y}{2}}}+\frac{1}{1+\mathrm{e}^{\frac{x-y}{2}}}.
\end{align*}
Especially for $(g,n)=(0,3)$ and $(1,1)$,\footnote{We employ a different normalization for $V^{\mathrm{WP}}_{1,1}(L_1)$
compared to the one used in \cite{Mirz3,Mirz1} by a factor of $2$.}
\begin{align*}
&V^{\mathrm{WP}}_{0,3}(L_1,L_2,L_3)=1,
\qquad
L_1V^{\mathrm{WP}}_{1,1}(L_1)=
\frac12 \int_{\mathbb{R}_+}
D^{\mathrm{WP}}(L_1,\ell,\ell) \ell {\rm d}\ell.
\end{align*}
\end{thm}

\begin{table}[t]\renewcommand{\arraystretch}{1.2}
 \centering
 \begin{tabular}{|ll|}
 \hline
 Bosonic model & $\mathsf{y}$-coordinate function \\
 \hline
KdV (Kontsevich's matrix model) &$\mathsf{y}^{\mathrm{KdV}}(z)=z+\sum_{a\ge 2}\mathsf{u}_a z^a$\\
 Weil--Petersson (JT gravity) & $\mathsf{y}^{\mathrm{WP}}(z)=\frac{1}{2\pi}\sin(2\pi z)$ \\
 Airy (topological gravity) & $\mathsf{y}^{\mathrm{A}}(z)=z$ \\
 FZZT ($(2,p)$ minimal string) & $\mathsf{y}^{\mathrm{M}(p)}(z)=\frac{(-1)^{\frac{p-1}{2}}}{2\pi} T_p\big(\frac{2\pi}{p}z\big)$
 \\
 \hline
 Supersymmetric model & $\mathsf{y}$-coordinate function \\
 \hline
 BGW (BGW matrix model) & $\mathsf{y}^{\mathrm{BGW}}(z)=\frac{1}{z}+\sum_{a\ge 1}\mathsf{v}_a z^a$ \\
 Super Weil--Petersson (JT supergravity) & $\mathsf{y}^{\mathrm{SWP}}(z)=\frac{1}{z}\cos(2\pi z)$ \\
 Bessel (analogue of topological gravity) & $\mathsf{y}^{\mathrm{B}}(z)=\frac{1}{z}$ \\
 Brane ($(2,2p-2)$ minimal superstring) & $\mathsf{y}^{\mathrm{SM}(p)}(z)=\frac{(-1)^{\frac{p-1}{2}}}{z} U_{p-1}\big(\frac{2\pi}{p}z\big)$
\\
\hline
 \end{tabular}
 \caption{$\mathsf{y}$-coordinate functions of physical 2D gravity models.
 $T_p(z)$ and $U_p(z)$ denote the Chebyshev polynomials of the first and second kind, respectively.}\label{tab:spectral_curve}
\end{table}

In \cite{EO2007,Eynard:2007fi},
it was shown that a Laplace transform of Mirzakhani's recursion for the Weil--Petersson volumes
obeys the \emph{Chekhov--Eynard--Orantin $($CEO$)$ topological recursion} \cite{EO2007}.
The CEO topological recursion was originally found in the asymptotic analysis of correlation functions of Hermitian matrix models \cite{Alexandrov:2003pj,Eynard:2004mh},
and the basic data of the recursion relation is extracted from algebro-geometric data of a \emph{spectral curve}.
The spectral curve $\mathcal{C}$ consists of basic data~$(\Sigma;\mathsf{x},\mathsf{y},B)$: a compact Riemann surface $\Sigma$, coordinate functions $(\mathsf{x},\mathsf{y})$ on $\Sigma^{\otimes 2}$, and
a bidifferential $B$ on $\Sigma^{\otimes 2}$.
In this paper, we focus on the following class of spectral curves with the basic data:
\begin{align}
\label{eq:spectral_curve_data_1}
\Sigma=\IP^1,\qquad
\mathsf{x}(z)=\frac{1}{2}z^2,\qquad B(z,w)=\frac{{\rm d}z\otimes {\rm d}w}{(z-w)^2},
\end{align}
where the remaining $\mathsf{y}$-coordinate function is specified depending on the models.
In particular, the $\mathsf{y}$-coordinate function for
the Laplace dual of Mirzakhani's recursion
is $\sfy^{\mathrm{WP}}(z)$ in Table \ref{tab:spectral_curve}.
This spectral curve resides in the class of the KdV spectral curve in Table \ref{tab:spectral_curve} which involves time variables $\mathsf{u}_a$ and leads to the asymptotic expansion of the tau-function of the KdV hierarchy
given by Kontsevich's matrix integral \cite{Kontsevich}
via the CEO topological recursion \cite{EO2007}.

In recent years, several fascinating developments and extensions of the CEO topological recursion for the Weil--Petersson volumes
have been reported in theoretical physics and geometry.
In theoretical physics, the non-perturbative studies of the \emph{Jackiw--Teitelboim $($JT$)$ gravity}
motivated by gauge/gravity correspondence uncovered a novel aspect of the Weil--Petersson volumes and their recursions.
In Saad--Shenker--Stanford's work \cite{Saad:2019lba},
the Weil--Petersson volume of the moduli space of hyperbolic bordered Riemann surfaces arises
in the computation of the path integral of the partition function in the JT gravity,
and the physical interpretation of the JT gravity partition function as a matrix integral was pointed out.

In terms of the JT gravity interpretation, the coordinate function of the spectral curve for the Weil--Petersson volumes can be found from the disk partition function of the $(2,p)$ minimal string theory in the background of \emph{Fateev--Zamolodchikov--Zamolodchikov--Teschner $($FZZT$)$ brane}
\mbox{\cite{Fateev:2000ik,Teschner:2000md}} in the $p\to\infty$ limit \cite{Saad:2019lba}.%
\footnote{
The spectral curve for the $(2,p)$ minimal string was also considered in \cite{BergereEynard09}.}
The $(2,p)$ minimal string theory for $p=1$ is, in particular, equivalent to the topological gravity, which is also known as Kontsevich--Witten's intersection theory on the moduli space of stable curves.
The coordinate function of the spectral curve
for the topological gravity
is the KdV spectral curve with all time variables set to zero, referred to as the Airy spectral curve.

Saad--Shenker--Stanford's analysis was further extended to the JT supergravity by Stanford--Witten's work \cite{Stanford:2019vob}.
The path integral for the partition function of the JT supergravity is performed over the moduli space of super Riemann surfaces which are constructed as Riemann surfaces equipped with a spin structure \cite{Crane,Felder:2019iqj,Ip:2016ojn,Ip:2017msi,LeBrun,Penner:2015xla,Rosly,Witten:2012ga}.
In \cite{Stanford:2019vob}, a supersymmetric extension of Mirzakhani's recursion for hyperbolic (Neveu--Schwarz) bordered super Riemann surfaces was derived, and the spectral curve of the CEO topological recursion for the supersymmetric extension of the Weil--Petersson volumes was unveiled.

\begin{thm}[Stanford--Witten's recursion \cite{Norbury:2020vyi,Stanford:2019vob}]
Let $V_{g,n}^{\mathrm{SWP}}(L_1,\dots,L_n)$ be
the supersymmetric analogue of the Weil--Petersson volume, referred to as the super Weil--Petersson volume, for the moduli space of bordered connected super Riemann surfaces of genus $g$
with $n$ ordered NS boundary components of lengths $L_1,\dots,L_n$.%
\footnote{
There are several choices of the orientation- and time-reversal symmetries to define the JT supergravity.
Depending on the choice of these symmetries, the sign and power of $2$ factors must be implemented to find the partition function and the supersymmetric volume $V_{g,n}^{\mathrm{SWP}}$.
In this article, we adopt the normalization of $V_{g,n}^{\mathrm{SWP}}$ to agree with $V^{\Theta}_{g,n}$ defined in \cite{Norbury:2020vyi} (see equation~\eqref{sw_volume}).
}
The super Weil--Petersson volumes for~${2g-2+n>0}$ obey
the same recursion relation as Mirzakhani's recursion \eqref{eq:Mirzakhani_rec} with replacements:
\begin{align*}
V^{\mathrm{WP}}\to V^{\mathrm{SWP}},\qquad
D^{\mathrm{WP}}\to D^{\mathrm{SWP}},\qquad
R^{\mathrm{WP}}\to R^{\mathrm{SWP}},
\end{align*}
where $D^{\mathrm{SWP}}$ and $R^{\mathrm{SWP}}$ are
\begin{align*}
D^{\mathrm{SWP}}(x,y,z)={}&H^{\mathrm{SWP}}(y+z,x),
\\
R^{\mathrm{SWP}}(x,y,z)={}&\frac{1}{2}
\left(H^{\mathrm{SWP}}(z,x+y)+H^{\mathrm{SWP}}(z,x-y)\right),
\end{align*}
and the kernel function $H^{\mathrm{SWP}}(x,y)$ is
\begin{align*}
H^{\mathrm{SWP}}(x,y)=\frac{1}{4\pi}\left(
\frac{1}{\cosh\frac{x-y}{4}}-\frac{1}{\cosh\frac{x+y}{4}}
\right).
\end{align*}
\end{thm}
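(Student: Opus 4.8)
The plan is to follow the architecture of Mirzakhani's original derivation of \eqref{eq:Mirzakhani_rec}, adapting each step to NS-bordered super hyperbolic surfaces, so that the two recursions share their combinatorial skeleton and differ only in the geometric kernels. The starting point is the super analogue of the McShane--Mirzakhani identity due to Stanford--Witten. On a super hyperbolic surface carrying geodesic NS boundaries, with the distinguished boundary $\partial_1$ of length $L_1$, this identity expresses a fixed geometric quantity attached to $\partial_1$ as a sum over free homotopy classes of embedded pairs of pants bounded by $\partial_1$ together with two interior simple closed geodesics of lengths $\ell,\ell'$, plus a sum over embedded cylinders joining $\partial_1$ to a second boundary $\partial_m$ across an interior geodesic of length $\ell$. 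I would first fix the precise form of the two summand functions, including their dependence on the odd moduli supported near the cutting geodesics.

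Next I would integrate this identity over the super moduli space $\cM_{g,n}^{\mathrm{SWP}}(L_1,\ldots,L_n)$ against the super Weil--Petersson measure, using Mirzakhani's unfolding (covering) argument. Each mapping-class-group orbit of the relevant multicurve is replaced by an integral over a covering moduli space on which the chosen geodesic is marked; cutting along that geodesic $\gamma$ of length $\ell$ decomposes the surface and factorizes the measure as $d\ell\wedge d\tau$ in the bosonic Fenchel--Nielsen directions, multiplied by the volume forms on the cut pieces and by the Berezin measure in the odd directions. Integrating the twist $\tau$ over its range yields the prefactor $\ell$ (and the symmetry factor $\tfrac12$ in the non-separating case), reproducing the moments built from $V^{\mathrm{SWP}}_{g-1,n+1}$, products $V^{\mathrm{SWP}}_{h,1+|J|}V^{\mathrm{SWP}}_{h',1+|J'|}$, and $V^{\mathrm{SWP}}_{g,n-1}$ exactly as on the right-hand side of \eqref{eq:Mirzakhani_rec} through \eqref{eq:P_WP}.

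The crucial step is to show that the remaining length integration of the super McShane summands reproduces the kernels in the statement, namely $D^{\mathrm{SWP}}(x,y,z)=H^{\mathrm{SWP}}(y+z,x)$ and $R^{\mathrm{SWP}}(x,y,z)=\tfrac12\bigl(H^{\mathrm{SWP}}(z,x+y)+H^{\mathrm{SWP}}(z,x-y)\bigr)$ with $H^{\mathrm{SWP}}$ as in \eqref{eq:SWP_H_function}. The decisive structural feature is that the super kernels lack the outer integration $\int_0^x(\cdots)\,dx'$ present in their bosonic counterparts $D^{\mathrm{WP}}$ and $R^{\mathrm{WP}}$. I expect this to be the main obstacle, and I would attribute it to the Berezin integration over the pair of odd moduli localized at the cutting geodesic: performing that integration differentiates the bosonic-type primitive and thereby converts $\int_0^{L_1}H(\,\cdot\,,x')\,dx'$ into its integrand evaluated at $x'=L_1$, leaving precisely the $\cosh$-profile of $H^{\mathrm{SWP}}$. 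Controlling this fermionic bookkeeping, together with a rigorous treatment of the geometry of embedded pairs of pants on super hyperbolic surfaces, is the delicate part of the argument.

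Finally I would treat the low-complexity seeds. In the super setting the effective base data are the super $(1,1)$ and $(0,3)$ contributions; fixing their normalization so that $V_{g,n}^{\mathrm{SWP}}$ agrees with Norbury's $\Theta$-class volumes $V^{\Theta}_{g,n}$ pins down all remaining constants. As an independent consistency check I would compare the output of the recursion, in low genus, with Norbury's expression of the super Weil--Petersson volumes as intersection numbers of the class $\Theta_{g,n}$ against $\psi$-classes on $\overline{\cM}_{g,n}$.
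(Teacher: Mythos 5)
The theorem you are proving is quoted in this paper from Stanford--Witten and Norbury, and to the extent the paper derives such recursions itself it does so by a route entirely different from yours: one takes as input the spectral curve $\cC^{\mathrm{SWP}}=(\IP^1;\sfx,\sfy^{\mathrm{SWP}},B)$ with $\sfy^{\mathrm{SWP}}(z)=\cos(2\pi z)/z$ and inverse-Laplace-transforms the CEO topological recursion, picking up residues at the poles $w=\pm(2j-1)/4$ of $dw/\sfy^{\mathrm{SWP}}(w)$; the sum over $j$ of the resulting exponentials assembles into the $1/\cosh$ kernel $H^{\mathrm{SWP}}$, and the absence of the outer $\int_0^{L_1}dx'$ present in $D^{\mathrm{WP}},R^{\mathrm{WP}}$ is the direct reflection of $\sfy^{\mathrm{SWP}}=\partial_{\sfx}\sfy^{\mathrm{WP}}$. (Appendix \ref{sec:derivation_super_minimal} carries this out for the $(2,2p-2)$ minimal superstring, of which the SWP case is the $p\to\infty$ limit.) Your route is the original hyperbolic-geometric one of Stanford--Witten --- super McShane identity, Mirzakhani unfolding, Berezin integration over odd moduli --- which is legitimate and is in fact what the cited references do; it buys a geometric explanation of the recursion, whereas the spectral-curve route reduces everything to a residue computation but must take the disk amplitude as given.

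That said, your proposal defers the decisive computation, and the heuristic you offer for it is not correct as stated. If the Berezin integration merely ``differentiated the bosonic primitive,'' converting $\int_0^{L_1}H(\,\cdot\,,x')\,dx'$ into its integrand at $x'=L_1$, you would land on $D(x,y,z)=H^{\mathrm{WP}}(y+z,x)$, i.e.\ the \emph{bosonic} profile $\frac{1}{1+\e^{(x+y)/2}}+\frac{1}{1+\e^{(x-y)/2}}$ evaluated at the endpoint --- not the genuinely different profile $H^{\mathrm{SWP}}(x,y)=\frac{1}{4\pi}\bigl(\frac{1}{\cosh\frac{x-y}{4}}-\frac{1}{\cosh\frac{x+y}{4}}\bigr)$, with its relative minus sign and quarter-length arguments. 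Those features come from the super McShane identity having its own summand functions, fixed by the geometry of the super pair of pants (two odd moduli, no even ones) and the super trumpet; they cannot be produced by differentiating the bosonic kernel. So the fermionic bookkeeping you label as ``the delicate part'' is not a refinement to be controlled at the end --- it is where the entire content of the theorem lives, and the proposal as written does not supply it.
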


The spectral curve \cite{Norbury:2020vyi,Stanford:2019vob} for the super Weil--Petersson volumes was found as a specialization of the BGW spectral curve for the tau-function of the Br\'ezin--Gross--Witten (BGW) model \cite{Brezin:1980rk,Gross:1980he}.
The BGW spectral curve involves time variables $\mathsf{v}_a$, and we find the well-known spectral curve referred to as the Bessel spectral curve \cite{Do:2016odu}
by setting all the time variables $\mathsf{v}_a$ to zero.
By comparison with the above bosonic models in the KdV hierarchy, one can naturally consider a one-parameter family
of spectral curves, which interpolates Stanford--Witten's curve for the super Weil--Petersson volumes and the Bessel spectral curve,
a supersymmetric analogue of the spectral curve for the $(2,p)$ minimal string.%
\footnote{
The $\mathsf{y}$-coordinate functions for the bosonic model $\mathsf{y}^{\mathrm{boson}}$ and the supersymmetric model $\mathsf{y}^{\mathrm{super}}$ are related by \smash{$\mathsf{y}^{\mathrm{super}}=\frac{\partial \mathsf{y}^{\mathrm{boson}}}{\partial \mathsf{x}}$},
where $\mathsf{x}(z)=z^2/2$ for the physical 2D gravity models in Table \ref{tab:spectral_curve}.
}
From some physical observations, the basic data of the spectral curve for this supersymmetric analogue are
expected to be found from some brane partition functions of type 0A $(2,2p-2)$ minimal superstring.\footnote{$p$ is an odd positive integer in the $(2,p)$ minimal string and the $(2,2p-2)$ minimal superstring.}

In geometry, the framework of Mirzakhani's recursion was generalized
on the basis of the Teichm\"uller theory by Andersen--Borot--Orantin's work \cite{Andersen_GR}, which is named as the \emph{geometric recursion}.
The basic data of the geometric recursion consists of measurable functions on the Teichm\"uller space of a bordered Riemann surface,
and McShane-Mirzakhani's identity is represented in the framework of the geometric recursion.
In this article, we call the generalized Mirzakhani's recursion as \emph{Andersen--Borot--Orantin $($ABO$)$ topological recursion},
which arises from the geometric recursion.
And for the above physical 2D gravity models, the ABO topological recursion is obtained as a Laplace dual of the CEO topological recursion.

In another work of Mirzakhani's \cite{Mirz_MV},
the enumerative problem of simple closed geodesics in hyperbolic bordered Riemann surfaces was extended and an elegant combinatorial approach to the computation of the \emph{Masur--Veech volume} of the moduli space of quadratic differentials
on Riemann surfaces with marked points was formulated explicitly.
In this approach,
the combinatorial data of the distribution of simple closed geodesics is described by
stable graphs, and the Masur--Veech volumes are computed by combinations of Weil--Petersson volumes.
In recent years, Mirzakhani's combinatorial approach to compute the Masur--Veech volumes was established further in a series of works by Delecroix, Goujard, Zograf, and Zorich \cite{DGZZ1,DGZZ3,DGZZ2,DGZZ_latest,DGZZ4}. (See also \cite{CMSZBGL,GM18} for related works.)

The Masur--Veech volume $\mathrm{Vol}\mathcal{Q}_{g,n}$ for the moduli space of quadratic differentials $q\in \mathcal{Q}_{g,n}$
is labeled by the order of zeros and poles of $q$.
In particular for the \emph{principal stratum} of the moduli space $\mathcal{Q}_{g,n}$ of quadratic differentials,
a novel connection between Delecroix--Goujard--Zograf--Zorich's result \cite{DGZZ_latest} and the ABO topological recursion was proposed in \cite{Andersen_MV,Andersen_GR}.
The ABO topological recursion to compute the Masur--Veech volumes is the Laplace dual of the CEO topological recursion
for the Airy spectral curve accompanied with an action of \emph{twist}.
The twist action shifts
the basic data of the ABO topological recursion (i.e., functions $D$ and $R$ in Mirzakhani's recursion), which implements the combinatorial data of stable graphs.
It was shown in \cite{Andersen_MV} that
the constant term in the polynomial obtained from the twisted ABO topological recursion for the Airy spectral curve provides the Masur--Veech volume $\mathrm{Vol}\mathcal{Q}_{g,n}$.

In this article, we discuss the following points on the basis of the above developments for the physical 2D gravity models listed in Table~\ref{tab:spectral_curve}:
\begin{enumerate}\itemsep=0pt
\item[(1)] derivation of a Mirzakhani type ABO topological recursion as the Laplace dual of the CEO topological recursion (Sections \ref{sec:examples_gr}, \ref{sec:examples_gr_super}, and Appendix \ref{sec:Derivation_Mirz});
\item[(2)] generalizations of the Masur--Veech volume
(Sections \ref{sec:examples_gr_tw} and \ref{sect:MasurVeech}), and
a direct proof of the twisted CEO topological recursion
in Theorem \ref{thm_tw_CEO} (Theorem \ref{thm:Laplace_dual_GR}
of Section \ref{sec:twisting_tr});
\item[(3)] derivation of Virasoro constraints with or without Masur--Veech type twist and their solutions via cut-and-join equations (Section \ref{sec:virasoro_constraint});
\item[(4)] physical derivations of the basic data of the spectral curves and the Masur--Veech twist action of the ABO topological recursion (see Appendix \ref{sec:physical}).
\end{enumerate}

On the first point, we will derive kernel functions of the generalized Mirzakhani's recursions for the 2D gravity models
in Table \ref{tab:spectral_curve}.
The recursion kernel for each model is listed in Table \ref{tab:kernel}.
\begin{table}[t]\renewcommand{\arraystretch}{1.2}
 \centering
 \begin{tabular}{|ll|}
 \hline
 Bosonic model & Recursion kernel $H(x,y)$ \\
 \hline
 Weil Petersson (JT gravity) & $H^{\mathrm{WP}}(x,y)=\frac{1}{1+\mathrm{e}^{\frac{x+y}{2}}}+\frac{1}{1+\mathrm{e}^{\frac{x-y}{2}}}$ \\
 Airy (topological gravity) & $H^{\mathrm{A}}(x,y)=\theta(y-x)+\theta(-x-y)$ \\
 FZZT ($(2,p)$ minimal string) & $u_j:=\frac{p}{2\pi}\sin\big(j\pi/p\big)$\\
 \multicolumn{2}{|l|}{
 \hspace{1cm}$H^{\mathrm{M}(p)}(x,y)=-\sum_{j=1}^{(p-1)/2}(-1)^j\cos\big(\frac{\pi}{p}j\big)
\left(\mathrm{e}^{-u_j(x+y)} \theta(x+y)
+\mathrm{e}^{-u_j(x-y)} \theta(x-y)\right)
$}\\
\multicolumn{2}{|l|}{$
\hspace{3.45cm}{}+\sum_{j=0}^{(p-1)/2}(-1)^j\cos\big(\frac{\pi}{p}j\big)
\left(\mathrm{e}^{u_j(x+y)} \theta(-x-y)
+\mathrm{e}^{u_j(x-y)} \theta(y-x)\right)$}\\
 \hline
 Supersymmetric model & Recursion kernel $H(x,y)$ \\
 \hline
 Super Weil--Petersson (JT supergravity) & $H^{\mathrm{SWP}}(x,y)=\frac{1}{4\pi}\Big(
\frac{1}{\cosh\frac{x-y}{4}}-\frac{1}{\cosh\frac{x+y}{4}}
\Big)$ \\
 Bessel (topological gravity) & $H^{\mathrm{B}}(x,y)=\delta(x-y)-\delta(x+y)$ \\
 Brane ($(2,2p-2)$ minimal superstring)& $u'_j:=\frac{p}{2\pi}\sin\left((j-1/2)\pi/p\right)$\\
 \multicolumn{2}{|l|}{
 \hspace{1cm}$H^{\mathrm{SM}(p)}(x,y)=\frac{1}{2\pi}\sum_{j=1}^{(p-1)/2}(-1)^j
\cos^2\big(\frac{\pi}{p}\big(j-\frac{1}{2}\big)\big)\big(\mathrm{e}^{-u'_j(x+y)} \theta(x+y)$}\\
\multicolumn{2}{|l|}{
$\hspace{3.5cm}
{}-\mathrm{e}^{-u'_j(x-y)} \theta(x-y)+\mathrm{e}^{u'_j(x+y)} \theta(-x-y)
-\mathrm{e}^{u'_j(x-y)} \theta(y-x)\big)
$}\\
\multicolumn{2}{|l|}{$
\hspace{3.5cm}
{}+\delta_{p,1}\left(\delta(x-y)-\delta(x+y)\right)$}
\\
\hline
 \end{tabular}
 \caption{Recursion kernels of physical 2D gravity models ($\theta$ denotes the Heaviside step function).}\label{tab:kernel}
\end{table}

On the second point, we will discuss twisted volume polynomials $V_{g,n}\big[\mathsf{f}^{\mathrm{MV}}\big]$ with Masur--Veech type twist $\mathsf{f}^{\mathrm{MV}}$ for the 2D gravity models listed in Table \ref{tab:kernel}.
For these physical models, the twisted Mirzakhani type ABO topological recursion is
\begin{align*}
L_1V_{g,n}\big[\mathsf{f}^{\mathrm{MV}}\big](L_1,\dots,L_n)
={}&\frac{1}{2}\int_{\mathbb{R}_+^2}
D\big[\mathsf{f}^{\mathrm{MV}}\big](L_1,\ell,\ell') P_{g,n}\big[\mathsf{f}^{\mathrm{MV}}\big](\ell,\ell',L_K)
\ell\ell' {\rm d}\ell {\rm d}\ell'
\\
&
+\sum_{m=2}^n\int_{\mathbb{R}_+} R\big[\mathsf{f}^{\mathrm{MV}}\big](L_1,L_m,\ell) V_{g,n-1}\big[\mathsf{f}^{\mathrm{MV}}\big](\ell,L_{K\setminus\{m\}})
\ell {\rm d}\ell,
\end{align*}
where $P_{g,n}\big[\mathsf{f}^{\mathrm{MV}}\big]$ is given by equation~\eqref{eq:P_WP} for the twisted volume polynomials, and $D\big[\mathsf{f}^{\mathrm{MV}}\big]$ and~$R\big[\mathsf{f}^{\mathrm{MV}}\big]$ are
\begin{align*}
D\big[\mathsf{f}^{\mathrm{MV}}\big](L_1,L_2,L_3)={}&
D(L_1,L_2,L_3)+R(L_1,L_2,L_3) \mathsf{f}^{\mathrm{MV}}(L_2)
\\
&
+R(L_1,L_3,L_2) \mathsf{f}^{\mathrm{MV}}(L_3)
+L_1 \mathsf{f}^{\mathrm{MV}}(L_2) \mathsf{f}^{\mathrm{MV}}(L_3),
\\
R\big[\mathsf{f}^{\mathrm{MV}}\big](L_1,L_2,L_3)={}&
R(L_1,L_2,L_3)+L_1 \mathsf{f}^{\mathrm{MV}}(L_3).
\end{align*}
Here the Masur--Veech type twist function $\mathsf{f}^{\mathrm{MV}}$ is $\mathsf{f}^{\mathrm{MV}}(\ell)=\frac{1}{\mathrm{e}^{\ell}-1}$. In fact, the Masur--Veech volume $\mathrm{Vol}\mathcal{Q}_{g,n}$ for the moduli space of quadratic differentials on a~Riemann surface of genus $g$ with $n$ marked points is the constant term of the twisted volume~$V^{\mathrm{A}}_{g,n}\big[\mathsf{f}^{\mathrm{MV}}\big]$ for the symplectic volume $V^{\mathrm{A}}_{g,n}$ of the moduli space $\mathcal{M}_{g,n}$ of stable curves of genus $g$ with $n$ marked points in Kontsevich--Witten's theory.
In this article, we refer to the twist action of the topological recursion by the function $\mathsf{f}^{\mathrm{MV}}$
as the \emph{Masur--Veech type twist}.
We will compute an analogue of the Masur--Veech volume for each 2D gravity model
by a combinatorial method developed in~\mbox{\cite{DGZZ_latest,Mirz_MV}}.

The main claim of this part is a derivation of the CEO topological recursion for twisted multidifferentials $\omega_{g,n}\big[\mathsf{f}^{\mathrm{MV}}\big]$
as a Laplace dual of the twisted volume polynomials $V_{g,n}\big[\mathsf{f}^{\mathrm{MV}}\big]$.

\begin{thm}[twisted CEO topological recursion \cite{Andersen_MV}]\label{thm_tw_CEO}
Let $V_{g,n}\big[\mathsf{f}^{\mathrm{MV}}\big]$ be the twisted volume polynomials for the physical $2$D gravity models in Table {\rm \ref{tab:kernel}}, which are expanded as
\begin{align*}
V_{g,n}\big[\mathsf{f}^{\mathrm{MV}}\big](L_1,\dots,L_n)=\sum_{a_1,\dots,a_n\ge 0}F^{(g)}\big[\mathsf{f}^{\mathrm{MV}}\big]_{a_1,\dots,a_n}\prod_{i=1}^n\frac{L_i^{2a_i}}{(2a_i+1)!}.
\end{align*}
Then, for $2g-2+n>0$, the multidifferentials $\omega_{g,n}\big[\mathsf{f}^{\mathrm{MV}}\big]$
obtained from $V_{g,n}\big[\mathsf{f}^{\mathrm{MV}}\big]$,
\begin{align*}
\omega_{g,n}\big[\mathsf{f}^{\mathrm{MV}}\big](z_1,\dots,z_n)=
\sum_{a_1,\dots,a_n\ge 0}F^{(g)}\big[\mathsf{f}^{\mathrm{MV}}\big]_{a_1,\dots,a_n}
\otimes_{i=1}^n \zeta_{\mathrm{H}}(2a_i+2;z_i) {\rm d}z_i,
\end{align*}
where
\begin{align*}
\zeta_{\mathrm{H}}(2d;z)=
\frac{1}{z^{2d}} + \frac12 \sum_{\mfm \in \IZ^*} \frac{1}{(z+\mfm)^{2d}}
\end{align*}
is the Hurwitz zeta function,\footnote{$\mathbb{Z}^*$ implies $\mathbb{Z}\setminus \{0\}$.}
obey the CEO topological recursion twisted by $\mathsf{f}^{\mathrm{MV}}$ such that
\begin{align*}
&
\omega_{g,n}\big[\mathsf{f}^{\mathrm{MV}}\big](z_1,\dots,z_n)
=\mathop{\mathrm{Res}}\limits_{w=0}K\big[\sfm\big](z_1,w)
\mathcal{R}\omega_{g,n}\big[\sfm\big](w,z_K),
\end{align*}
where $z_K=\{z_2, \dots, z_n\}$,
\begin{align*}
&
K\big[\mathsf{f}^{\mathrm{MV}}\big](z,w)
=\frac{(-1) {\rm d}z}{\left(\mathsf{y}(w)-\mathsf{y}(-w)\right){\rm d}w}
\left(
\frac{1}{z^2-w^2}+\frac{1}{2}\sum_{\mathfrak{m}\in\mathbb{Z}^*}
\frac{1}{(z+\mfm)^2-w^2}
\right),
\\
&
\mathcal{R}\omega_{g,n}\big[\sfm\big](w,z_K)
=\omega_{g-1,n+1}\big[\sfm\big](w,-w,z_K)
\\
&\phantom{\mathcal{R}\omega_{g,n}\big[\sfm\big](w,z_K)
=}{}
+\mathop{\sum_{h+h'=g}}
\limits_{J \cup J'=K}^{\mathrm{no (0,1)}}
\omega_{h,1+|J|}\big[\sfm\big](w, z_J) \omega_{h',1+|J'|}\big[\sfm\big](-w, z_{J'}),
\end{align*}
and
\begin{align*}
\omega_{0,2}\big[\mathsf{f}^{\mathrm{MV}}\big](z_1,z_2)
={}&B\big[\mathsf{f}^{\mathrm{MV}}\big](z_1,z_2)
=\frac{{\rm d}z_1\otimes {\rm d}z_2}{(z_1-z_2)^2}+\frac{1}{2}\sum_{\mathfrak{m}\in\mathbb{Z}^*}\frac{{\rm d}z_1\otimes {\rm d}z_2}{(z_1-z_2+\mathfrak{m})^2}
\\
={}&\zeta_{\mathrm{H}}(2;z_1-z_2) {\rm d}z_1 \otimes {\rm d}z_2.
\end{align*}
\end{thm}

On the third point, we will focus on an algebraic aspect,
which is formulated as the \emph{quantum Airy structure}
\cite{Andersen:2017vyk,Kontsevich:2017vdc},
of the ABO topological recursion and the CEO topological recursion.
For the physical 2D gravity models, we see that the quantum Airy structures
are equivalent to the Virasoro constraints, where the quantum Airy structures admit the Masur--Veech type twist by a group action in \cite{Andersen:2017vyk}
and then the Virasoro constraints are twisted as well.
We explicitly obtain solutions of the Virasoro constraints with or without
Masur--Veech type twist by using the cut-and-join equations in \cite{Alexandrov:2010bn,Alexandrov:2016kjl}, which are derived from the Virasoro constraints, and the group action mentioned above.

On the fourth point, we will discuss a physical interpretation of the Masur--Veech type twist of the ABO topological recursion
in terms of the JT gravity.
Via the path integral computations, the bidifferential $B$ in the basic data \eqref{eq:spectral_curve_data_1} of the spectral curve $\cC$ for
the Weil--Petersson volumes is found from the JT gravity partition function
on a hyperbolic double trumpet \cite{Saad:2019lba}.
As mentioned above, the basic data $\big(\IP^1;\mathsf{x},\mathsf{y},B\big[\mathsf{f}^{\mathrm{MV}}\big]\big)$ of the spectral curve for the twisted CEO topological recursion differs from $\mathcal{C}$ only by a shift of the bidifferential.
In this article, we find that such a shift of the bidifferential is obtained from the partition function of
a massless scalar field coupled to the JT gravity fields \cite{Jafferis:2022wez}.
We also discuss a derivation of the basic data of spectral curves
for the other physical 2D gravity models in the parallel way as the JT gravity.

Here we highlight the consequences of this article.\footnote{To make this article a valuable resource for readers in the physical and mathematical community, some introductory aspects of the Masur--Veech volumes, topological recursions and two-dimensional gravities are provided with explicit computations.}
From our observation in Appendix \ref{sec:physical}, the Masur--Veech type twist function $\mathsf{f}^{\mathrm{MV}}$ is found in the partition function of the massless scalar field coupled to the metric field of the JT gravity \cite{Jafferis:2022wez}. This physical interpretation is quite novel and matches with Mirzakhani's enumeration of simple closed geodesics in hyperbolic bordered Riemann surfaces \cite{Mirz_MV}.
To apply our physical interpretation of the Masur--Veech type twist further,
we perform a reverse construction of the ABO topological recursion data $\mathsf{A}$, $\mathsf{B}$, $\mathsf{C}$, $\mathsf{D}$ for the $(2,p)$ minimal string endowed with the FZZT boundary condition and its supersymmetric analogue, and the generalizations of the combinatorial formula of the Masur--Veech volume in~\cite{DGZZ_latest, Mirz_MV}.
The geometry of moduli spaces of $(2,p)$ minimal strings is still veiled in secrecy,
and the symplectic volume of such moduli spaces is not studied well even in the physical context.
We hope that our computational results of the generalized symplectic volume and its Masur--Veech type twist may be helpful for further studies on the Liouville gravity.

This paper is organized as follows.
In Section \ref{sec:geometric_recursion},
we summarize the formulation of the ABO topological recursion and discuss physical 2D gravity examples.
In Section \ref{sect:MasurVeech},
we show the combinatorial computation of the Masur--Veech volume $\mathrm{Vol}\mathcal{Q}_{g,n}$
and its generalizations to the physical 2D gravity models.
In Section \ref{sec:topological_recursion},
we discuss the CEO topological recursion for the 2D gravity models, and
derive the twisted CEO topological recursion for generalized Masur--Veech polynomials as a Laplace transform of the twisted ABO topological recursion.
In Section \ref{sec:virasoro_constraint},
we derive the manifest form of Virasoro generators from the (twisted) ABO topological recursion on the basis of the quantum Airy structure,
and compute free energies by solving cut-and-join equations iteratively for the 2D gravity models.
In Appendix \ref{sec:physical},
we give a physical interpretation of the Masur--Veech type twist
of the topological recursions by an extra scalar field coupled to the JT gravity fields,
and discuss a derivation of the basic data of spectral curves for the 2D gravity models.
In Appendix \ref{sec:Derivation_Mirz},
we derive the functions $D$ and $R$ in the Mirzakhani type ABO topological recursions for
the FZZT brane in the $(2,p)$ minimal string and its supersymmetric analogue
from the CEO topological recursion in the similar way as the paper \cite{Eynard:2007fi} by Eynard and Orantin.
In Appendix \ref{sec:table_vol},
we give 
the (twisted) volume polynomials for the 2D gravity models.

\section{ABO topological recursion}\label{sec:geometric_recursion}

In this section, after recalling the ABO topological recursion \cite{Andersen_GR} which generalizes Mirzakhani's recursion \cite{Mirz3,Mirz1}, we apply it to the physical 2D gravity models in Table \ref{tab:kernel}. In particular, we provide the kernel functions \eqref{eq:H_minimal} and \eqref{eq:H_super_minimal} for the $(2,p)$ minimal string and the $(2,2p-2)$ minimal superstring.
In Sections \ref{sec:twisting_gr} and \ref{sec:examples_gr_tw}, we also recall the ABO topological recursion with a twist proposed in \cite{Andersen_MV,Andersen_GR}, which generalizes
the combinatorial formula of the Masur--Veech volume in
\cite{DGZZ_latest,Mirz_MV} (see Section \ref{sect:MasurVeech}),
and apply it to the physical 2D gravity models.

\subsection{Formulation}\label{sec:def_gr}

The ABO topological recursion is a framework of recursions for \emph{volume polynomials} defined on the moduli space of connected bordered Riemann surfaces, which is a generalization of Mirzakhani's recursion.%
\footnote{The physical meaning of the volume polynomials for the 2D gravity models will be discussed in Appendix \ref{sec:physical}.
At present, the volume polynomials for such models are not defined on the moduli space of connected bordered Riemann surfaces. In this article, we define these volume polynomials as solutions of the ABO topological recursion~\eqref{geom_rec} whose initial data are found from the inverse Laplace transforms of the CEO topological recursions for the 2D gravity models. (In this approach, the volume polynomial $V_{1,1}(L_1)=V\mathsf{D}(L_1)$ is not found as the integral of the initial data $\mathsf{D}$ on $\mathcal{M}_{1,1}(L_1)$.)
}

\begin{Def}[ABO topological recursion \cite{Andersen_GR}]\label{def:geom_rec}
Let $\VO_{g,n}(L_1,\dots,L_n)$ be a volume polynomial
labeled by $g \ge 0$, $n \ge 1$ satisfying $2g-2+n > 0$,
on the moduli space $\cM_{g,n}(L_1,\dots,L_n)$ of connected bordered Riemann surfaces of genus $g$ with $n$ ordered boundary components of lengths~${L_1, \dots, L_n}$,
which obeys the ABO topological recursion such that%
\footnote{
In \cite{Andersen_MV,Andersen_GR}, the volume polynomial $\VO_{g,n}(L_1,\dots,L_n)$ is denoted as
$V\Omega_{g,n}(L_1,\dots,L_n)$.}
\begin{align}
\VO_{g,n}(L_1,\dots,L_n)={}&
\sum_{m=2}^n \int_{\IR_+} \sfB(L_1,L_m,\ell)
\VO_{g,n-1}(\ell, L_{K\setminus \{m\}}) \ell {\rm d}\ell\nonumber
\\
&
+ \frac12 \int_{\IR_+^2} \sfC(L_1, \ell, \ell')
P_{g,n}(\ell, \ell', L_K)
\ell\ell'{\rm d}\ell {\rm d}\ell',
\label{geom_rec}
\end{align}
where $\IR_+ = [0, \infty)$, $K=\{2,\dots,n\}$.
The topological recursion requires our initial data $\sfB(L_1,L_2,\ell)$, $\sfC(L_1, \ell, \ell')$, and
\begin{align*}
\VO_{0,3}(L_1,L_2,L_3)=\sfA(L_1,L_2,L_3),\qquad
\VO_{1,1}(L_1)=V\sfD(L_1)
=\int_{\cM_{1,1}(L_1)} \sfD(\sigma) {\rm d}\mu_{\mathrm{WP}}(\sigma),
\end{align*}
where $\mu_{\mathrm{WP}}(\sigma)$ denotes the Weil--Petersson measure on the moduli space
$\cM_{1,1}(L_1)$ endowed with a hyperbolic metric $\sigma$ on a torus with one boundary, and $\sfD(\sigma)$ is a measurable function on~${\cM_{1,1}(L_1)}$.
The initial data satisfies some decaying constraints and symmetry properties called \textit{admissibility conditions} \cite{Andersen_GR}.
Here
\begin{align}
P_{g,n}(\ell, \ell', L_K)=
\VO_{g-1, n+1}(\ell, \ell', L_K)
+ \mathop{\sum_{h+h'=g}}
\limits_{J \cup J'=K}^{\mathrm{stable}}
\VO_{h, 1+|J|}(\ell, L_J) \VO_{h', 1+|J'|}(\ell', L_{J'}),
\label{geo_rec_p}
\end{align}
where $\mathrm{stable}$ in the sum means that
$h$, $h'$, $J$, $J'$ obey $2h-1+|J|>0$ and $2h'-1+|J'|>0$, and
$L_J=\{L_{i_1}, \dots, L_{i_{|J|}}\}$,
$L_{J'}=\{L_{i_{|J|+1}}, \dots, L_{i_{n-1}}\}$ for
$J=\{i_1, \dots, i_{|J|}\} \subseteq K$.
\end{Def}

Assume that the volume polynomials $\VO_{g,n}$ are expanded as
\begin{align}
\VO_{g,n}(L_1,\dots,L_n)=
\sum_{a_1,\dots,a_n \ge 0} F^{(g)}_{a_1,\dots,a_n}
\prod_{i=1}^n e_{a_i}(L_i),
\label{gr_exp}
\end{align}
where $F^{(g)}_{a_1,\dots,a_n}$ is referred to as the \emph{volume coefficient}, and
\begin{align}
e_{a}(L)=\frac{L^{2a}}{(2a+1)!}.
\label{e_base}
\end{align}
By
\begin{align}
&
\int_{\IR_+} \sfB(L_1,L_2,\ell) e_{a}(\ell) \ell {\rm d}\ell
=\sum_{a_1, a_2 \ge 0} \sfB^{a_1}_{a_2, a} e_{a_1}(L_1) e_{a_2}(L_2),\nonumber
\\
&
\int_{\IR_+^2} \sfC(L_1,\ell,\ell') e_{a}(\ell) e_{b}(\ell')
\ell \ell' {\rm d}\ell {\rm d}\ell'
=\sum_{a_1 \ge 0} \sfC^{a_1}_{a, b} e_{a_1}(L_1),
\label{gr_coeff_bc}
\end{align}
the ABO topological recursion \eqref{geom_rec} gives
a recursion for the volume coefficients:
\begin{align}
F^{(g)}_{a_1,\dots,a_n}={}&
\sum_{m=2}^n \sum_{b \ge 0} \sfB^{a_1}_{a_m, b}
F^{(g)}_{b, a_2, \dots, \widehat{a}_m, \dots,a_n}\nonumber
\\
&
+\frac12 \sum_{a,b\ge 0} \sfC^{a_1}_{a, b}
\Bigg(F^{(g-1)}_{a,b, a_2, \dots, a_n} +
\mathop{\sum_{h+h'=g}}
\limits_{J \cup J'=K}^{\mathrm{stable}}
F^{(h)}_{a,a_J} F^{(h')}_{b,a_{J'}} \Bigg),
\label{geom_rec_coeff}
\end{align}
where $a_J=\{a_{i_1},\dots,a_{i_{|J|}}\}$ and $a_{J'}=\{a_{i_{|J|+1}},\dots,a_{i_{n-1}}\}$ for $J=\{i_1,\dots,i_{|J|}\} \subseteq K$. The initial inputs are $\sfB^{a_1}_{a_2, a}$, $\sfC^{a_1}_{a, b}$, and
\begin{align}
F^{(0)}_{a_1,a_2,a_3}=\sfA^{a_1}_{a_2, a_3},\qquad
F^{(1)}_{a_1}=\sfD^{a_1},
\label{geom_rec_coeff_ini}
\end{align}
where note that $\sfB^{a}_{b, c}=\sfB^{a}_{c, b}$, $\sfC^{a}_{b, c}=\sfC^{a}_{c, b}$ and $\sfA^{a}_{b, c}=\sfA^{a}_{c, b}=\sfA^{b}_{a, c}$.

\begin{rem}[Mirzakhani type ABO topological recursion]\!\label{rem:mir_rec}
The ABO topological re\-cur\-sion~\eqref{geom_rec} is a generalization of
Mirzakhani's re\-cur\-sion \cite{Mirz3,Mirz1} for the Weil--Petersson volume
$V_{g,n}^{\mathrm{WP}}\!(L_1,\allowbreak\dots, L_n)$ of the moduli space
$\cM_{g,n}(L_1,\dots,L_n)$ of genus $g$ hyperbolic surfaces with $n$ geodesic boundaries of length $L_1, \dots, L_n$.
In this article, we call the following form of the ABO topological recursion
the \emph{Mirzakhani type ABO topological recursion}:
\begin{align}
L_1V_{g,n}(L_1,\dots,L_n)={}&
\sum_{m=2}^n \int_{\IR_+}
x R(L_1,L_m,x) V_{g,n-1}(x,L_{K \setminus\{m\}}) {\rm d}x\nonumber
\\
&
+\frac{1}{2}\int_{\IR_+^2}
xy D(L_1,x,y) P_{g,n}(x,y,L_K) {\rm d}x{\rm d}y,
\label{eq:Mirzakhani's}
\end{align}
where
\begin{align}
R(x,y,z)=x \sfB(x,y,z),
\qquad
D(x,y,z)=x \sfC(x, y, z).
\label{mirz_abo_data}
\end{align}
\end{rem}

\subsection{Bosonic models}\label{sec:examples_gr}

We refer to a class of physical 2D gravity models such as the JT gravity, the topological gravity,
and the $(2,p)$ minimal string (denoted by resp.\ WP, A, and M$(p)$) as \emph{bosonic models} (see Table \ref{tab:kernel} in Section \ref{sect:Intro}, and Appendix \ref{sec:physical} for physical arguments).
For the JT gravity, the Weil--Petersson volumes appear in a part of the path integral of the partition function \cite{Saad:2019lba}.
In the Mirzakhani type ABO topological recursion \eqref{eq:Mirzakhani's}
for each bosonic model, two functions $R(x,y,z)$ and $D(x,y,z)$
are given in terms of a kernel function $H(x,y)$ as
\begin{align}
R(x,y,z)=\frac{1}{2}\int_0^{x}\left(H(z,t+y)+H(z,t-y)\right){\rm d}t,
\qquad
D(x,y,z)=\int_0^x H(y+z,t) {\rm d}t.
\label{eq:DR_H_bosonic}
\end{align}
In the following, we will provide the kernel functions $H(x,y)$ for the bosonic models, and find their topological recursions for the volume polynomials $\VO_{g,n}$.

\subsubsection{Weil--Petersson volumes}\label{subsec:gr_mir}

The initial data of the ABO topological recursion \cite{Andersen_MV,Andersen_GR} for the Weil--Petersson volumes of moduli spaces of connected bordered Riemann surfaces are
\begin{align}
&\sfA^{\mathrm{WP}}(L_1,L_2,L_3)=1,
\qquad
\sfB^{\mathrm{WP}}(L_1,L_2,\ell)
=
1 - \frac{1}{L_1}\log
\left(
\frac{\cosh\bigl(\frac{L_2}{2}\bigr)+\cosh\bigl(\frac{L_1+\ell}{2}\bigr)}
{\cosh\bigl(\frac{L_2}{2}\bigr)+\cosh\bigl(\frac{L_1-\ell}{2}\bigr)}
\right),\nonumber
\\
&\sfC^{\mathrm{WP}}(L_1, \ell, \ell')=\frac{2}{L_1} \log
\left(
\frac{\e^{\frac{L_1}{2}}+\e^{\frac{\ell + \ell'}{2}}}
{\e^{-\frac{L_1}{2}}+\e^{\frac{\ell + \ell'}{2}}}
\right),
\qquad
V\sfD^{\mathrm{WP}}(L_1)=\frac{\pi^2}{12} + \frac{1}{48}L_1^2.
\label{gr_mir_abcd}
\end{align}
Here the initial data $\sfB^{\mathrm{WP}}$ and $\sfC^{\mathrm{WP}}$ are given
by the formulae \eqref{mirz_abo_data} and \eqref{eq:DR_H_bosonic}
with the kernel function
\begin{align}
H^{\mathrm{WP}}(x,y)=
\frac{1}{1+\mathrm{e}^{\frac{x-y}{2}}}+\frac{1}{1+\mathrm{e}^{\frac{x+y}{2}}}.
\label{eq:H_WP}
\end{align}

For this model, the volume polynomial gives
the Weil--Petersson volume of the moduli space~${\cM_{g,n}(L_1,\dots,L_n)}$ of
connected bordered Riemann surfaces \cite{Mirz1}:
\begin{align}
\VO_{g,n}^{\mathrm{WP}}(L_1,\dots,L_n)=
\int_{\cM_{g,n}(L_1,\dots,L_n)} \exp{\omega^{\mathrm{WP}}}=
\int_{\overline{\cM}_{g,n}}
\exp\left(2\pi^2 \kappa_1 + \sum_{i=1}^n \frac{L_i^2}{2}\psi_i\right).
\label{mir_volume}
\end{align}
Here $\omega^{\mathrm{WP}}$ denotes the Weil--Petersson symplectic form, and
in the last equality, \smash{$\VO_{g,n}^{\mathrm{WP}}$} is represented by
the integral of the $\psi$ and $\kappa_1$ classes on
the moduli space $\overline{\cM}_{g,n}$,
which is the Deligne--Mumford compactification of the moduli space of stable curves of genus $g$ with $n$ marked points.
(This equality is proved in Wolpert's work \cite{Wolpert}.)
The psi class $\psi_i$ is the first Chern class of the line bundle over
$\overline{\cM}_{g,n}$ with fiber over $(C, p_1, \dots , p_n)$ being the cotangent space $T^*_{p}C$.
The first Miller--Morita--Mumford class $\kappa_1$ is a tautological class
defined by considering the pushforward of $\psi_{n+1}^2$ with respect to the forgetful map $\pi\colon\overline{\mathcal{M}}_{g,n+1}\to \overline{\mathcal{M}}_{g,n}$.
By comparison of equation~\eqref{gr_exp} with equation~\eqref{mir_volume}, the Weil--Petersson volume coefficients are
\begin{align}
F^{\mathrm{WP}(g)}_{a_1,\dots,a_n}
=
\left(\prod_{i=1}^n(2a_i+1)!!\right)
\int_{\overline{\cM}_{g,n}}
\e^{2\pi^2 \kappa_1} \psi_1^{a_1} \cdots \psi_n^{a_n}.
\label{feg_mir}
\end{align}
Here the volume coefficient $F^{\mathrm{WP}(g)}_{a_1,\dots,a_n}$ does not vanish, if the condition below is satisfied: $
\sum_{i=1}^n a_i \le 3g-3+n$. Some explicit results of $\VO_{g,n}^{\mathrm{WP}}$ are listed in~\eqref{lst_wp}.

\begin{rem}\label{rem:fugacity_kappa}
One can introduce a deformation parameter $s$ for the Weil--Petersson vol\-ume \eqref{mir_volume} by replacing $\pi^2$ with $\pi^2 s$
in $V\sfD^{\mathrm{WP}}$ of equation~\eqref{gr_mir_abcd}.
\end{rem}

\subsubsection{Kontsevich--Witten symplectic volumes}\label{subsec:gr_kw}

In Witten's work \cite{Witten:1990hr},
a novel approach to the intersection theory of the moduli space $\overline{\cM}_{g,n}$ of stable curves
is proposed based on the two-dimensional topological gravity, and it is conjectured that
the generating function of integrals over $\overline{\cM}_{g,n}$ is given by
the tau function of the KdV hierarchy. Witten's conjecture is proved elegantly by Kontsevich \cite{Kontsevich}, and the cell decomposition of $\overline{\cM}_{g,n}$ on the basis of Strebel's quadratic differential is realized by metric ribbon graphs \cite{MP98} in his proof.
On the basis of Kontsevich's work, the symplectic volume of the moduli space of stable curves is defined on the space of metric ribbon graphs in \cite{BCSW}, and is referred to as the \emph{Kontsevich--Witten symplectic volume}.

The initial data of the ABO topological recursion \cite{Andersen_MV,Andersen_GR,BCSW} for the Kontsevich--Witten symplectic
volumes of moduli spaces of stable curves are
\begin{align}
&\sfA^{\mathrm{A}}(L_1,L_2,L_3)=1,\nonumber
\\
&\sfB^{\mathrm{A}}(L_1,L_2,\ell)=
\frac{1}{2L_1}\left(
\left[L_1-L_2-\ell\right]_+ - \left[-L_1+L_2-\ell\right]_+
+\left[L_1+L_2-\ell\right]_+ \right)\nonumber
\\
&\phantom{\sfB^{\mathrm{A}}(L_1,L_2,\ell)}{}=
\begin{cases}
1\ \
& \textrm{if}\ \ 0 \le \ell \le L_2 - L_1,
\\
1-\dfrac{\ell}{L_1}\ \
& \textrm{if}\ \ 0 \le \ell \le L_1 - L_2,
\vspace{1mm}\\
\dfrac{L_1+L_2-\ell}{2L_1}\ \
& \textrm{if}\ \ |L_1-L_2| \le \ell \le L_1+L_2,
\\
0\ \
& \textrm{if}\ \ L_1+L_2 \le \ell,
\end{cases}\nonumber
\\
&\sfC^{\mathrm{A}}(L_1, \ell, \ell')=\frac{1}{L_1} \left[ L_1 - \ell - \ell' \right]_+,
\qquad
V\sfD^{\mathrm{A}}(L_1)=\frac{1}{48}L_1^2,
\label{gr_kw_abcd}
\end{align}
where $[x]_+=x$ for $x>0$ and $[x]_+=0$ for $x \le 0$.
The kernel function $H^{\mathrm{A}}$ which provides the initial data $\sfB^{\mathrm{A}}$ and $\sfC^{\mathrm{A}}$
is found from $H^{\mathrm{WP}}$ in equation~\eqref{eq:H_WP} for the Weil--Petersson volume in the scaling limit such that
\begin{align*}
H^{\mathrm{A}}(x,y)=\lim_{\beta \to \infty} H^{\mathrm{WP}}(\beta x,\beta y)=
\theta(y-x)+\theta(-x-y),
\end{align*}
where $\theta(t)$ denotes the Heaviside step function,
\begin{align*}
\theta(t)=\left\{
\begin{array}{lc}
1 & \textrm{if}\ t>0, \\
0 & \textrm{if}\ t\le 0.
\end{array}
\right.
\end{align*}
A geometric interpretation of the kernels $\mathsf{B}^{\mathrm{A}}$ and $\mathsf{C}^{\mathrm{A}}$ is given in \cite{Andersen_GR}.

For the initial data \eqref{gr_kw_abcd}, the volume polynomial gives
the Kontsevich--Witten symplectic volume of the moduli space $\overline{\cM}_{g,n}$ of stable curves:
\begin{align}
\VO_{g,n}^{\mathrm{A}}(L_1,\dots,L_n)={}&
\int_{\overline{\cM}_{g,n}}
\exp\left(\sum_{i=1}^n \frac{L_i^2}{2}\psi_i\right)\nonumber
\\
={}&
\mathop{\sum_{a_1, \dots, a_n \ge 0}}\limits_{|\mathbf{a}|=3g-3+n}
\left(\prod_{i=1}^n(2a_i+1)!!\right)
\int_{\overline{\cM}_{g,n}} \psi_1^{a_1} \cdots \psi_n^{a_n}
\prod_{i=1}^n e_{a_i}(L_i),
\label{kw_volume}
\end{align}
where $e_{a}(L)$ is defined in equation~\eqref{e_base},
and note the homogeneity condition
\begin{align}
|\mathbf{a}|=\sum_{i=1}^n a_i = 3g-3+n.
\label{hom_kw}
\end{align}
By comparison of equation~\eqref{gr_exp} with equation~\eqref{kw_volume},
the volume coefficients are
\begin{align}
F^{\mathrm{A}(g)}_{a_1,\dots,a_n}
=
\left(\prod_{i=1}^n(2a_i+1)!!\right)
\int_{\overline{\cM}_{g,n}} \psi_1^{a_1} \cdots \psi_n^{a_n}.
\label{feg_kw}
\end{align}
Note that the volume polynomials $\VO_{g,n}^{\mathrm{A}}$ are obtained from
the Weil--Petersson volumes $\VO_{g,n}^{\mathrm{WP}}$ in equation~\eqref{mir_volume} by
\begin{align*}
\VO_{g,n}^{\mathrm{A}}(L_1,\dots,L_n)=
\lim_{\beta \to \infty} \frac{1}{\beta^{6g-6+2n}} \VO_{g,n}^{\mathrm{WP}}(\beta L_1,\dots,\beta L_n),
\end{align*}
and some computational results are listed in~\eqref{lst_wp}.

\begin{rem}
Essentially the recursion relation \eqref{geom_rec_coeff} for Kontsevich--Witten's symplectic volume coefficients
is equivalent to the Dijkgraaf--Verlinde--Verlinde formula \cite{Dijkgraaf:1990rs} for the intersection numbers on the moduli space
of stable curves.
\end{rem}

\subsubsection[(2,p) minimal string]{$\boldsymbol{(2,p)}$ minimal string}\label{subsec:gr_mg}

Let $p$ be an odd positive integer.
The $(2,p)$ minimal string reviewed in Appendix \ref{sec:minimal} resides in a class of two-dimensional gravity,
which yields the JT gravity for $(p=\infty)$ and the topological gravity for $(p=1)$.
Accordingly, the volume polynomial \smash{$\VO^{\mathrm{M}(p)}_{g,n}$} for the $(2,p)$ minimal string interpolates
the Weil--Petersson volume $\VO^{\mathrm{WP}}_{g,n}$ in Section \ref{subsec:gr_mir} and
the Kontsevich--Witten symplectic volume $\VO^{\mathrm{A}}_{g,n}$ in Section \ref{subsec:gr_kw}.
The kernel function given below for the $(2,p)$ minimal string is derived in
Appendix \ref{sec:derivation_minimal} from the CEO topological recursion for the spectral curve found from the physical amplitude for the disk topology ending on the FZZT brane \cite{Fateev:2000ik,Seiberg:2003nm,Teschner:2000md}.
Here we just define it by
\begin{align}
H^{\mathrm{M}(p)}(x,y)={}&
-\sum_{j=1}^{(p-1)/2}(-1)^j\cos\left(\frac{\pi}{p}j\right)
\big(\mathrm{e}^{-u_j(x+y)} \theta(x+y)
+\mathrm{e}^{-u_j(x-y)} \theta(x-y)\big)\nonumber
\\
&
+\sum_{j=0}^{(p-1)/2}(-1)^j\cos\left(\frac{\pi}{p}j\right)
\big(\mathrm{e}^{u_j(x+y)} \theta(-x-y)
+\mathrm{e}^{u_j(x-y)} \theta(y-x)\big),
\label{eq:H_minimal}
\end{align}
where
\begin{align}
u_j=&\frac{p}{2\pi}\sin\left(\frac{\pi}{p}j\right).
\label{eq:uj_bosonic}
\end{align}
The kernel function \eqref{eq:H_minimal} yields
$H^{\mathrm{M}(1)}(x,y)=H^{\mathrm{A}}(x,y)$ and $H^{\mathrm{M}(\infty)}(x,y)=H^{\mathrm{WP}}(x,y)$.
The formulae \eqref{mirz_abo_data} and \eqref{eq:DR_H_bosonic} provide
the initial data $\sfB^{\mathrm{M}(p)}$ and $\sfC^{\mathrm{M}(p)}$ of
the ABO topological recursion, and the remaining initial data are
\begin{align*}
\sfA^{\mathrm{M}(p)}(L_1,L_2,L_3)=1,
\qquad
V\sfD^{\mathrm{M}(p)}(L_1)=\frac{\pi^2}{12}\left(1-\frac{1}{p^2}\right)
+ \frac{1}{48}L_1^2.
\end{align*}
From these initial data, one finds the volume polynomials $\VO^{\mathrm{M}(p)}_{g,n}$ for the $(2,p)$ minimal string, and
some computational results are listed in~\eqref{lst_mst}.

The following theorem is proved from the formula \eqref{mst_airy} in Section \ref{subsec:tr_mg} by Theorem \ref{thm_tw_CEO}.

\begin{thm}\label{thm:mst_vol_poly}
The volume polynomials for the $(2,p)$ minimal string obey
\begin{align}
\VO^{\mathrm{M}(p)}_{g,n}(L_1,\dots,L_n)=
\mathop{\sum_{a_1,\dots,a_n \ge 0}}\limits_{|\mathbf{a}|\le 3g-3+n}
F^{\mathrm{M}(p) (g)}_{a_1,\dots,a_n}
\prod_{i=1}^n e_{a_i}(L_i),
\label{gr_exp_mg}
\end{align}
where the volume coefficients are
\begin{align}
F^{\mathrm{M}(p) (g)}_{a_1,\dots,a_n}
={}&
\left(\prod_{i=1}^n(2a_i+1)!!\right)\nonumber
\\
& \times
\sum_{m \ge 0} \frac{(-1)^m}{m!}
\mathop{\sum_{b_1,\dots,b_m \ge 1}}\limits_{|\mathbf{b}|=3g-3+n-|\mathbf{a}|}
\Bigg(\prod_{j=1}^m \frac{\bigl(-2\pi^2\bigr)^{b_j}}{b_j!}
\prod_{k=1}^{b_j}\left(1-\frac{(2k-1)^2}{p^2}\right)\Bigg)\nonumber
\\
& \times
\int_{\overline{\mathcal{M}}_{g,n+m}}
\psi_1^{a_1} \cdots \psi_n^{a_n}
\psi_{n+1}^{b_1+1} \cdots \psi_{n+m}^{b_m+1}\nonumber
\\
={}&
\left(\prod_{i=1}^n(2a_i+1)!!\right)\nonumber
\\
&\times
\Bigg<
\exp{\Biggl(-\sum_{b \ge 1} \frac{\bigl(-2\pi^2\bigr)^{b}}{b!}
\Bigg(\prod_{k=1}^{b}\Bigg(1-\frac{(2k-1)^2}{p^2}\Bigg)\Bigg)\tau_{b+1}\Biggr)}
\tau_{a_1} \cdots \tau_{a_n}\Bigg>_{g},
\label{feg_mg}
\end{align}
and the condition for non-zero
volume coefficients $F^{\mathrm{M}(p) (g)}_{a_1,\dots,a_n}$ is
$|\mathbf{a}|=\sum_{i=1}^n a_i \le 3g-3+n$.
Here we set $\tau_a=\tau_{a,i}=\psi_i^a$ and introduced the notation
\begin{align*}
\big<\e^{ \xi \tau_a}\big>_{g}=
\sum_{m \ge 0}\frac{\xi^m}{m!}
\left<\tau_a^m\right>_{g}=
\sum_{m \ge 0}\frac{\xi^m}{m!}
\int_{\overline{\mathcal{M}}_{g,m}}
\psi_{1}^{a} \cdots \psi_{m}^{a}.
\end{align*}
\end{thm}

\begin{rem}
As mentioned above, the volume polynomial \eqref{gr_exp_mg} interpolates
the Weil--Pe\-ters\-son volume \eqref{mir_volume} at $p=\infty$ and
the Kontsevich--Witten symplectic volume \eqref{kw_volume} at $p=1$:
\begin{align*}
\VO^{\mathrm{M}(1)}_{g,n}(L_1,\dots,L_n)=
\VO^{\mathrm{A}}_{g,n}(L_1,\dots,L_n),
\qquad
\VO^{\mathrm{M}(\infty)}_{g,n}(L_1,\dots,L_n)=
\VO^{\mathrm{WP}}_{g,n}(L_1,\dots,L_n).
\end{align*}
In particular, by comparison of the Weil--Petersson volume coefficients \eqref{feg_mir} with the formula~\eqref{feg_mg} for $p=\infty$, we obtain
a formula
\begin{align}
\int_{\overline{\cM}_{g,n}}
\e^{2\pi^2 \kappa_1} \psi_1^{a_1} \cdots \psi_n^{a_n}=
\Bigg<
\exp{\Bigg(-\sum_{b \ge 1} \frac{\bigl(-2\pi^2\bigr)^{b}}{b!} \tau_{b+1}\Bigg)}
\tau_{a_1} \cdots \tau_{a_n}\Bigg>_{g},
\label{feg_wp_formula}
\end{align}
which is found in the literature (see, e.g., of \cite[equation (2.24)]{Dijkgraaf:2018vnm})
and intended as
\begin{align*}
\int_{\overline{\cM}_{g}} \frac{\kappa_1^{3g-3}}{(3g-3)!}=
\sum_{m=0}^{3g-3} \frac{(-1)^{3g-3+m}}{m!}
\mathop{\sum_{b_1,\dots,b_m \ge 1}}\limits_{|\mathbf{b}|=3g-3}
\int_{\overline{\cM}_{g,m}}
\frac{\psi_1^{b_1+1}\cdots \psi_m^{b_m+1}}{b_1!\cdots b_m!},
\end{align*}
by the homogeneity condition \eqref{hom_kw}.
\end{rem}

\subsection{Supersymmetric models}\label{sec:examples_gr_super}

Supersymmetric generalizations of the bosonic models are considered
as we will discuss in Appendices \ref{sec:super_models} and \ref{sec:type0A}.
We refer to a class of models such as
the JT supergravity \cite{Stanford:2019vob}, the BGW model \cite{Brezin:1980rk,Gross:1980he,Norbury:2020vyi} in the limit of all time variables set to zero,
and the $(2,2p-2)$ minimal superstring (denoted by resp.\ SWP, B, and SM$(p)$) as \emph{supersymmetric models}
(see Table \ref{tab:kernel} in Section~\ref{sect:Intro}).
The super Weil--Petersson volumes \cite{Norbury:2020vyi,Stanford:2019vob} arise in a part of
the path integral of the partition function of the JT supergravity.
In the Mirzakhani type ABO topological recursion~\eqref{eq:Mirzakhani's}
for each supersymmetric model, the two functions $R(x,y,z)$ and $D(x,y,z)$ are given by a~kernel function $H(x,y)$ as
\begin{align}
R(x,y,z)=\frac{1}{2}\left(H(z,x+y)+H(z,x-y)\right),
\qquad
D(x,y,z)=H(y+z,x).
\label{eq:DR_H_super}
\end{align}
We will provide the kernel functions $H(x,y)$ for the supersymmetric models as well as their volume polynomials $\VO_{g,n}$ in the following.

\subsubsection{Super Weil--Petersson volumes}\label{subsec:gr_sw}

The initial data of the ABO topological recursion for the super Weil--Petersson volumes%
\footnote{
In this article, the normalization of the super Weil--Petersson volume $\VO_{g,n}^{\mathrm{SWP}}$ is chosen to be identified with~$\VO_{g,n}^{\Theta}$ defined in \cite{Norbury:2020vyi}, which is equivalent to a specialization of the BGW tau function of the KdV hierarchy. The normalization is checked as follows:
\begin{enumerate}\itemsep=0pt
\item[1)] The choice of the model in \cite{Norbury:2020vyi} corresponds to the supergravity for an odd spin structure in Stanford--Witten's work \cite{Stanford:2019vob}.
The functions $D$ and $R$ in \cite{Norbury:2020vyi} are related to the functions \smash{$\widetilde{\mathcal{D}}$} and \smash{$\widetilde{\mathcal{T}}$} of \cite[equations~(D.44) and~(D.47)]{Stanford:2019vob} by \smash{$D(x,y,z)=-2\widetilde{\mathcal{D}}(x,y,z)$} and \smash{$R(x,y,z)=2\widetilde{\mathcal{T}}(x,y,z)$}.
\item[2)] Let $\VO_{g,n}^{\mathrm{SW}}$ denote the volume function of the supergravity with the odd spin structure in \cite{Stanford:2019vob}. Two volumes~$\VO_{g,n}^{\mathrm{SW}}$ and~$V_{g,n}^{\Theta}$ are related by $V_{g,n}^{\mathrm{SW}}=(-1)^{n}2^{1-g}V_{g,n}^{\Theta}$.
\item[3)] Multiplying a factor $(-1)^{n}2^{g-1}$ to the supersymmetric recursion relation (D.30) in \cite{Stanford:2019vob}, we recover
the recursion relation (7) in \cite{Norbury:2020vyi}.
\end{enumerate}
}
are \cite{Norbury:2020vyi,Stanford:2019vob},
\begin{align*}
\sfA^{\mathrm{SWP}}(L_1,L_2,L_3)=0,
\qquad
V\sfD^{\mathrm{SWP}}(L_1)={}&\frac{1}{8},
\end{align*}
and the remaining ones $\sfB^{\mathrm{SWP}}$ and $\sfC^{\mathrm{SWP}}$ are found by the formulae \eqref{mirz_abo_data} and \eqref{eq:DR_H_super} from the kernel function
\begin{align*}
H^{\mathrm{SWP}}(x,y)=
\frac{1}{4\pi}\left(\frac{1}{\cosh\frac{x-y}{4}}-\frac{1}{\cosh\frac{x+y}{4}}\right).
\end{align*}

For the above initial data, the volume polynomial gives
the super Weil--Petersson volume of the moduli space of super Riemann surfaces
which is given by an integral over the moduli space of stable curves
\begin{align}
\begin{split}
\VO_{g,n}^{\mathrm{SWP}}(L_1,\dots,L_n)={}&
\int_{\overline{\cM}_{g,n}} \Theta_{g,n}
\exp\left(2\pi^2 \kappa_1 + \sum_{i=1}^n \frac{L_i^2}{2}\psi_i\right),
\label{sw_volume}
\end{split}
\end{align}
and the super Weil--Petersson volume coefficients are
\begin{align}
F^{\mathrm{SWP}(g)}_{a_1,\dots,a_n}
=
\left(\prod_{i=1}^n(2a_i+1)!!\right)
\int_{\overline{\cM}_{g,n}}
\e^{2\pi^2 \kappa_1} \Theta_{g,n} \psi_1^{a_1} \cdots \psi_n^{a_n},
\label{feg_sw}
\end{align}
where the Norbury classes $\Theta_{g,n}\in H^{4g-4+2n}\big(\overline{\cM}_{g,n}, \IQ\big)$ are defined in \cite{Chidambaram:2022cqc,{Norbury:1712}}.
The super Weil--Petersson volume coefficients
\smash{$F^{\mathrm{SWP}(g)}_{a_1,\dots,a_n}$} do not vanish only if $\sum_{i=1}^n a_i \le g-1$, and this condition implies that
$\VO_{0,n}^{\mathrm{SWP}}=0$ and $\VO_{1,n}^{\mathrm{SWP}}$'s are constants which do not depend on $L_i$.
Some computational results of the volume polynomials \smash{$\VO_{g,n}^{\mathrm{SWP}}$} are listed in~\eqref{lst_swp}
(see also~\cite{Norbury:2020vyi}).

\subsubsection{Super symplectic volumes}\label{subsec:gr_be}

Using the scaling relation
\begin{align*}
\lim_{\beta \to\infty}\frac{\beta}{\pi}\frac{1}{\cosh(\beta x)}=\delta(x),
\end{align*}
one finds the kernel function
\begin{align*}
H^{\mathrm{B}}(x,y)
=\lim_{\beta \to \infty} \beta H^{\mathrm{SWP}}(\beta x,\beta y)
=\delta(x-y)-\delta(x+y),
\end{align*}
for a supersymmetric analogue of the Kontsevich--Witten symplectic volumes
referred to as the \emph{super symplectic volumes}.
The initial data of the ABO topological recursion is given by
\begin{align}
&\sfA^{\mathrm{B}}(L_1,L_2,L_3)=0,\nonumber
\\
&\sfB^{\mathrm{B}}(L_1,L_2,\ell)=
\frac{1}{2L_1}
\left(
\delta(L_1-L_2-\ell) - \delta(-L_1+L_2-\ell) + \delta(L_1+L_2-\ell)
\right),\nonumber
\\
&\sfC^{\mathrm{B}}(L_1, \ell, \ell')=\frac{1}{L_1} \delta(L_1 - \ell - \ell'),
\qquad
V\sfD^{\mathrm{B}}(L_1)=\frac{1}{8}.
\label{gr_be_abcd}
\end{align}

The volume polynomial for the initial data \eqref{gr_be_abcd}
gives the super symplectic volume
\cite[Proposition 6.2]{Norbury:2020vyi}:
\begin{align}
\VO_{g,n}^{\mathrm{B}}(L_1,\dots,L_n)={}&
\int_{\overline{\cM}_{g,n}} \Theta_{g,n}
\exp\left(\sum_{i=1}^n \frac{L_i^2}{2}\psi_i\right)\nonumber
\\
={}&
\mathop{\sum_{a_1, \dots, a_n \ge 0}}\limits_{|\mathbf{a}|=g-1}
\left(\prod_{i=1}^n(2a_i+1)!!\right)
\int_{\overline{\cM}_{g,n}}
\Theta_{g,n} \psi_1^{a_1} \cdots \psi_n^{a_n}
\prod_{i=1}^n e_{a_i}(L_i),
\label{be_volume}
\end{align}
where note the homogeneity condition
\begin{align}
|\mathbf{a}|=\sum_{i=1}^n a_i = (3g-3+n) - (2g-2+n) = g-1.
\label{hom_be}
\end{align}
By comparison of equation~\eqref{gr_exp} with equation~\eqref{be_volume},
the volume coefficients are
\begin{align}
F^{\mathrm{B}(g)}_{a_1,\dots,a_n}
=
\left(\prod_{i=1}^n(2a_i+1)!!\right)
\int_{\overline{\cM}_{g,n}}
\Theta_{g,n} \psi_1^{a_1} \cdots \psi_n^{a_n}.
\label{feg_be}
\end{align}
Note that some of the volume polynomials $\VO_{g,n}^{\mathrm{B}}$ are found from
the super Weil--Petersson volumes~$\VO_{g,n}^{\mathrm{SWP}}$ in~\eqref{lst_swp}
by
\begin{align*}
\VO_{g,n}^{\mathrm{B}}(L_1,\dots,L_n)=
\lim_{\beta \to \infty} \frac{1}{\beta^{2g-2}} \VO_{g,n}^{\mathrm{SWP}}(\beta L_1,\dots,\beta L_n),
\end{align*}
or from the volume polynomials $\VO^{\mathrm{SM}(p)}_{g,n}$ for
the $(2,2p-2)$ minimal superstring below in equation~\eqref{vol_minimal_superstring} by \smash{$\VO_{g,n}^{\mathrm{B}}=\VO^{\mathrm{SM}(1)}_{g,n}$}.

\subsubsection[(2,2p-2) minimal superstring]{$\boldsymbol{(2,2p-2)}$ minimal superstring}\label{subsec:gr_mg_super}
Consider a family of ABO recursions interpolating the ABO topological recursions of super Weil--Petersson volumes in Section \ref{subsec:gr_sw} and super symplectic volumes in Section~\ref{subsec:gr_be}.
Such a~model is provided by the type 0A $(2,2p-2)$ minimal superstring with any odd positive integers~$p$.
A spectral curve of the CEO topological recursion
in the $(2,2p-2)$ minimal superstring is heuristically obtained
in Appendix~\ref{sec:type0A},
and then the kernel function in the Mirzakhani type ABO topological recursion is derived in Appendix \ref{sec:derivation_super_minimal}:
\begin{align}
H^{\mathrm{SM}(p)}(x,y)={}&
\frac{1}{2\pi}\sum_{j=1}^{(p-1)/2}(-1)^j
\cos^2\left(\frac{\pi}{p}\left(j-\frac{1}{2}\right)\right)\bigl(\mathrm{e}^{-u'_j(x+y)} \theta(x+y)\nonumber\\
&-\mathrm{e}^{-u'_j(x-y)} \theta(x-y)
+\mathrm{e}^{u'_j(x+y)} \theta(-x-y)
-\mathrm{e}^{u'_j(x-y)} \theta(y-x)\bigr)\nonumber
\\
&+\left(\delta(x-y)-\delta(x+y)\right)\delta_{p,1},\label{eq:H_super_minimal}
\end{align}
where
\begin{align}
u'_j= \frac{p}{2\pi}\sin\left(\frac{\pi}{p}\left(j-\frac{1}{2}\right)\right).
\label{eq:uj_super}
\end{align}
This kernel function obeys
$H^{\mathrm{SM}(1)}(x,y)=H^{\mathrm{B}}(x,y)$ and $H^{\mathrm{SM}(\infty)}(x,y)=H^{\mathrm{SWP}}(x,y)$.
From the formulae \eqref{mirz_abo_data} and \eqref{eq:DR_H_super},
the initial data $\sfB^{\mathrm{SM}(p)}$ and $\sfC^{\mathrm{SM}(p)}$
of the ABO topological recursion are obtained, and the remaining ones are
\begin{align*}
\sfA^{\mathrm{SM}(p)}(L_1,L_2,L_3)=0,
\qquad
V\sfD^{\mathrm{SM}(p)}(L_1)=\frac18.
\end{align*}
Using the initial data,
one can compute the volume polynomials \smash{$\VO^{\mathrm{SM}(p)}_{g,n}$} for
the $(2,2p-2)$ minimal superstring iteratively.
In particular for the supersymmetric model, the recursion for the volume coefficients simplifies drastically,
and the general form of the volume polynomials for any odd positive integers $p$ is obtained for $g=0,1,2,3$ as follows:
\begin{gather}
\VO^{\mathrm{SM}(p)}_{0,n}(L_1,\dots,L_n)= 0,\qquad
\VO^{\mathrm{SM}(p)}_{1,n}(L_1,\dots,L_n)=\frac{(n-1)!}{8},\nonumber
\\
\VO^{\mathrm{SM}(p)}_{2,n}(L_1,\dots,L_n)= 
\frac{3(n+1)!}{128}
\left[(n+2)\left(1-\frac{1}{p^2}\right)\pi^2
+ \frac14 \sum_{i=1}^n L_i^2\right],\nonumber
\\
\VO^{\mathrm{SM}(p)}_{3,n}(L_1,\dots,L_n)= 
\frac{(n+3)!}{2^{16}\cdot 5}
\Biggl[16(n+4)\left(42n\left(1-\frac{1}{p^2}\right)+185+\frac{15}{p^2}\right)
\left(1-\frac{1}{p^2}\right)\pi^4 \nonumber
\\
\hphantom{\VO^{\mathrm{SM}(p)}_{3,n}(L_1,\dots,L_n)=}{}+336(n+4)\left(1-\frac{1}{p^2}\right)\pi^2 \sum_{i=1}^nL_i^2 \nonumber\\
\hphantom{\VO^{\mathrm{SM}(p)}_{3,n}(L_1,\dots,L_n)=}{}+84\sum_{i \ne j}^n L_i^2 L_j^2
+25\sum_{i=1}^n L_i^4\Biggr].\label{vol_minimal_superstring}
\end{gather}
More computational results are listed in~\eqref{lst_super_mst}.

Similar to Theorem \ref{thm:mst_vol_poly}, the following theorem is proved from the formula \eqref{msst_bessel} in Section~\ref{subsec:tr_mg_super}
by Theorem \ref{thm_tw_CEO}.

\begin{thm}\label{thm:mst_vol_poly_super}
The volume polynomials for the $(2,2p-2)$ minimal superstring obey
\begin{align}
\VO^{\mathrm{SM}(p)}_{g,n}(L_1,\dots,L_n)=
\mathop{\sum_{a_1,\dots,a_n \ge 0}}\limits_{|\mathbf{a}|\le g-1}
F^{\mathrm{SM}(p) (g)}_{a_1,\dots,a_n}
\prod_{i=1}^n e_{a_i}(L_i),
\label{gr_exp_smg}
\end{align}
where the volume coefficients are
\begin{align}
F^{\mathrm{SM}(p) (g)}_{a_1,\dots,a_n}
={}&
\left(\prod_{i=1}^n(2a_i+1)!!\right)\nonumber
\\
&
\times
\sum_{m \ge 0} \frac{(-1)^m}{m!}
\mathop{\sum_{b_1,\dots,b_m \ge 1}}\limits_{|\mathbf{b}|=g-1-|\mathbf{a}|}
\Bigg(\prod_{j=1}^m \frac{\bigl(-2\pi^2\bigr)^{b_j}}{b_j!}
\prod_{k=1}^{b_j}\left(1-\frac{(2k-1)^2}{p^2}\right)\Bigg)\nonumber
\\
&
\times
\int_{\overline{\mathcal{M}}_{g,n+m}}
\Theta_{g,n+m}
\psi_1^{a_1} \cdots \psi_n^{a_n}
\psi_{n+1}^{b_1} \cdots \psi_{n+m}^{b_m},
\label{feg_smg}
\end{align}
and do not vanish only if
$|\mathbf{a}|=\sum_{i=1}^n a_i \le g-1$.
\end{thm}

\begin{rem}
The volume polynomial \eqref{gr_exp_smg} interpolates
the super Weil--Petersson volume \eqref{sw_volume} at $p=\infty$ and
the super symplectic vol\-ume \eqref{be_volume} at $p=1$:
\begin{align*}
\VO^{\mathrm{SM}(1)}_{g,n}(L_1,\dots,L_n)=
\VO^{\mathrm{B}}_{g,n}(L_1,\dots,L_n),
\qquad
\VO^{\mathrm{SM}(\infty)}_{g,n}(L_1,\dots,L_n)=
\VO^{\mathrm{SWP}}_{g,n}(L_1,\dots,L_n).
\end{align*}
In particular, by comparing the super Weil--Petersson volume coefficients \eqref{feg_sw} with the formula~\eqref{feg_smg} for $p=\infty$, we obtain
a super analogue of the formula \eqref{feg_wp_formula}.
\end{rem}

\subsection{Twisting}\label{sec:twisting_gr}

Here we consider a twist action of the ABO topological recursion \cite{Andersen_MV,Andersen_GR}.
The twisting of the ABO topological recursion is originated from the study of the statistics of length of multicurves in a connected bordered Riemann surface, which leads to a combinatorial computation of the Masur--Veech volume \cite{Masur,Veech} for the moduli space of quadratic differentials on Riemann surfaces.
To discuss this aspect of the ABO topological recursion, we summarize a working definition of
stable graphs \cite{Mirz_MV} (see \cite[Appendix B]{DGZZ_latest} for a formal definition of stable graphs).

Let $\gamma_i$ ($i=1,\dots,k$) be simple closed geodesics on a bordered surface $S_{g,n}$ of genus $g$ with $n$ boundary components.
We assume that $\gamma_i$ and $\gamma_j$ ($i\ne j$) are pairwise non-isotopic with regard to the action of the mapping class group
$\mathrm{Mod}_{g,n}$ on $S_{g,n}$ and not intersecting each other
(see Figure \ref{fig:multi_curve} (left) for an example).
\begin{figure}[t]\centering
\includegraphics[width=100mm]{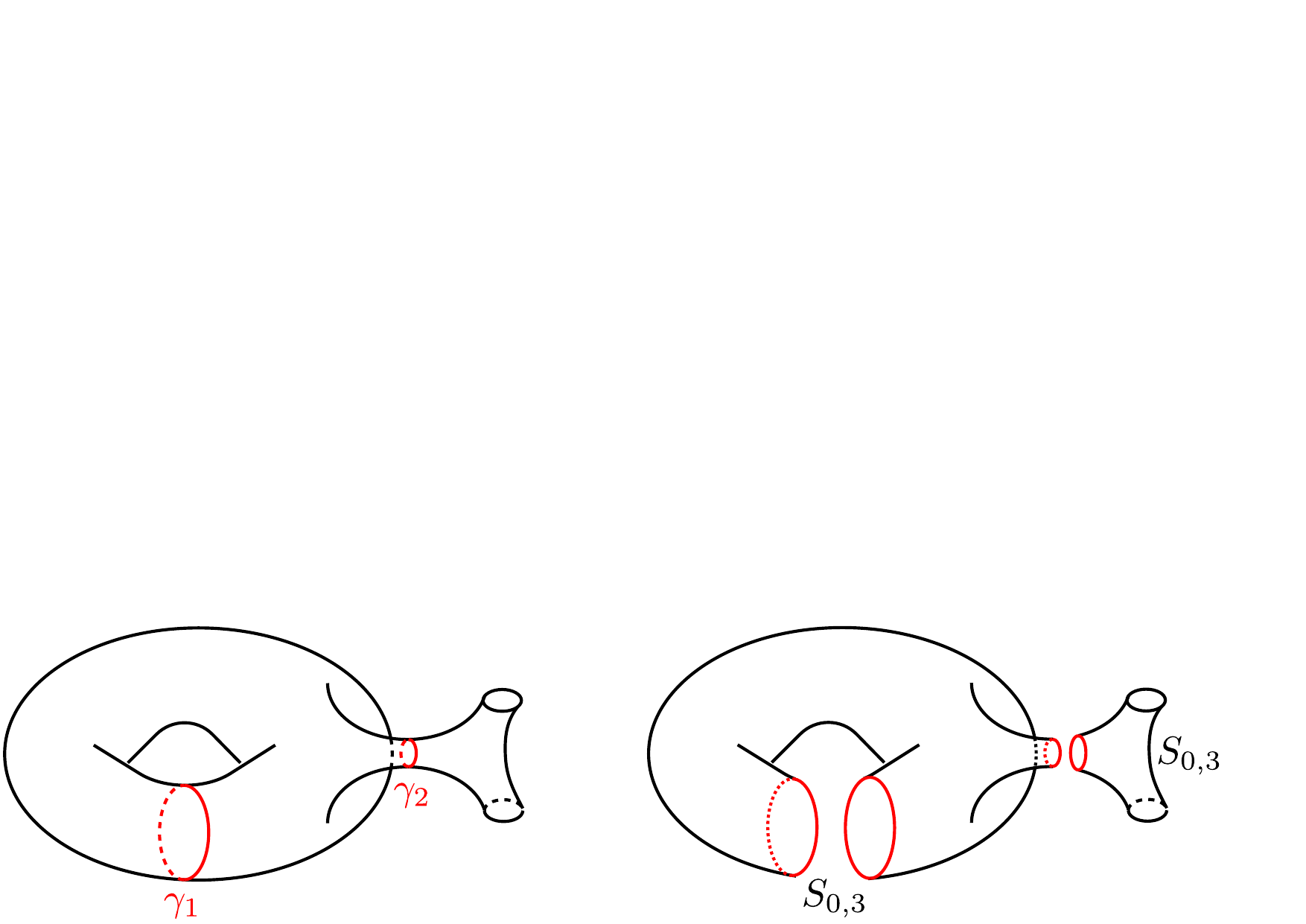}
\caption{Left: Two simple closed geodesics $\gamma_1$, $\gamma_2$ in $S_{1,2}$.
Right: Decomposition of $S_{1,2}$ cut by $\gamma_1$,~$\gamma_2$.}\label{fig:multi_curve}
\end{figure}
A multicurve $\gamma$ is given by
\begin{align*}
\gamma=\sum_{i=1}^kH_i\gamma_i,
\end{align*}
where $(H_1,\dots,H_k)$ is a set of positive integers and $\gamma_i$'s are disjoint, essential, non-peripheral
simple closed curves in $S_{g,n}$.\footnote{A curve $\gamma_i$ is said to be essential (resp.\ non-peripheral)
if $S_{g,n}\setminus\gamma_i$ does not have disk (resp.\ annulus) components.}
For the above multicurve $\gamma$, the reduced multicurve $\gamma_{\mathrm{red}}$ is $\gamma_{\mathrm{red}}=\sum_{i=1}^k\gamma_i$.

A \emph{stable graph} $\Gamma$ is associated with the pair
$(S_{g,n},\gamma_{\mathrm{red}})$ by cutting a bordered surface $S_{g,n}$
along a reduced multicurve $\gamma_{\mathrm{red}}$ such that
\begin{align}
\label{eq:cut_Riemann_surface}
S_{g,n}\setminus\gamma_{\mathrm{red}}=
\bigsqcup_{a=1}^{N} S_{g_a,n_a},
\end{align}
where $S_{g_a,n_a}$'s are connected stable bordered surfaces with $2g_a-2+n_a>0$.
For example, a~decomposition $S_{1,2}\setminus\{\gamma_1,\gamma_2\}$ is found in Figure \ref{fig:multi_curve} (right).
The associated stable graph $\Gamma$ is the dual decorated graph
made of decorated vertices $v_a$ and edges $e_i$ found from the decomposition~\eqref{eq:cut_Riemann_surface} as in Figure \ref{fig:stable_graph_components}. The basic data of the stable graph are given as follows:
\begin{figure}[t]\centering
\includegraphics[width=120mm]{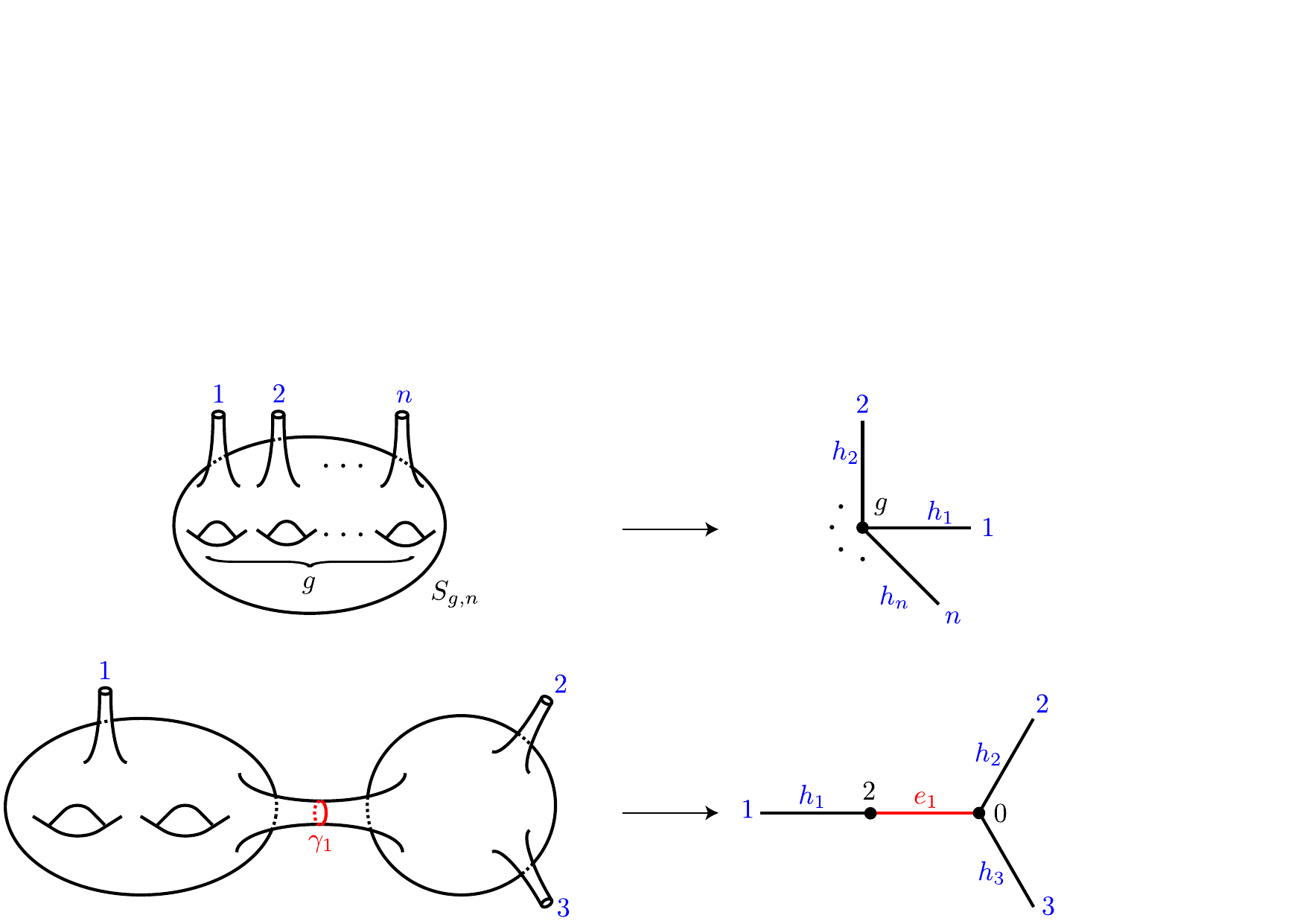}

\caption{Top: Identification of $S_{g,n}$ with
an $n$-valent vertex decorated by an integer $g$ of the stable graph.
Bottom: Identification of a decomposition of a bordered surface with
an edge and half-edges of a~stable graph.}\label{fig:stable_graph_components}
\end{figure}

\begin{itemize}\itemsep=0pt
\item
Vertex $v_a$ ($a=1,\dots,N$):
an $n_a$-valent vertex decorated by an integer $g_a$ associated with~$S_{g_a,n_a}$ ;
\item
Internal edge $e_i$ ($i=1,\dots,k$):
an edge associated with the simple closed curve $\gamma_i$;
\item
Incidence relation:
two vertices $v_a$, $v_b$ are connected by an edge $e_i$ if
$S_{g_a,n_a}$ and $S_{g_b,n_b}$ are adjacent to a simple closed curve $\gamma_i$ in $S_{g,n}$ s.t.\
$S_{g_a,n_a}\sqcup_{\gamma_i}S_{g_b,n_b}\subset S_{g,n}$,
where the edge $e_i$ forms a loop in the stable graph for the case of $a=b$;
\item
Half-edge $b_p$ ($p=1,\dots,n$): a half-edge associated with bordered boundaries in $S_{g,n}$.
\end{itemize}
We denote by $G_{g,n}$ the set of stable graphs associated with $S_{g,n}$,
and by $\mathsf{V}_{\Gamma}$ and $\mathsf{E}_{\Gamma}$
the sets of edges and vertices in $\Gamma \in G_{g,n}$, respectively.

\begin{Def}[twisted volume polynomials \cite{Andersen_GR}]\label{def:tw_abo_amp}
Let $\sff\colon {\IR}_+ \to {\IC}$ be an \emph{admissible test function},
i.e., a Riemann-integrable function on $\IR_+$
such that
\begin{align*}
\mathop{\sup}\limits_{\ell>0} (1+\ell)^s |\sff(\ell)|<\infty
\qquad
\textrm{for any } s>0.
\end{align*}
The \emph{twisted volume polynomials} $\VO_{g,n}[\sff]$ are defined as combinations of the volume polynomials~$\VO_{g,n}$ on the basis of the basic data of stable graphs by
(see \cite[Lemma 7.4]{Andersen_GR}),
\begin{align}
\VO_{g,n}[\sff](L_1,\dots,L_n)={}&
\sum_{\Gamma\in G_{g,n}} \frac{1}{|\mathrm{Aut}(\Gamma)|}
\int_{\IR_+^{|\mathsf{E}_{\Gamma}|}}
\prod_{v\in \mathsf{V}_{\Gamma}}
\VO_{h(v),k(v)}((\ell_e)_{e\in\mathsf{E}(v)},
(L_{\lambda})_{\lambda\in\Lambda(v)})\nonumber
\\
&\times
\prod_{e\in \mathsf{E}_{\Gamma}} \sff(\ell_e) \ell_e {\rm d}\ell_e,\label{eq:twist_volume}
\end{align}
where $\mathsf{E}(v)$ and $\Lambda(v)$ denote
the sets of edges and half-edges emanating from a vertex
$v\in \mathsf{V}_{\Gamma}$, respectively.
The expansion \eqref{gr_exp} for $\VO_{g,n}[\sff]$ defines
the \emph{twisted volume coefficients} $F^{(g)}[\sff]_{a_1,\dots,a_n}$:
\begin{align}
V_{g,n}[\sff](L_1,\dots,L_n)=
\sum_{a_1,\dots,a_n\ge 0} F^{(g)}[\sff]_{a_1,\dots,a_n}
\prod_{i=1}^n e_{a_i}(L_i),
\label{eq:expansion_volume}
\end{align}
where $e_{a}(L)=L^{2a}/(2a+1)!$.
\end{Def}

\begin{prop}[twisted initial data \cite{Andersen_GR}]\label{def:twist}
The twisted volume polynomials $\VO_{g,n}[\sff]$ are
obtained from the ABO topological recursion with the following
four twisted initial data $(\sfA[\sff], \sfB[\sff], \allowbreak\sfC[\sff], \sfD[\sff])$:
\begin{align}
&\sfA[\sff](L_1, L_2, L_3)=\sfA(L_1, L_2, L_3),
\qquad
\sfB[\sff](L_1, L_2, \ell)=\sfB(L_1, L_2, \ell) + \sfA(L_1, L_2, \ell) \sff(\ell),\nonumber
\\
&\sfC[\sff](L_1, \ell, \ell')=\sfC(L_1, \ell, \ell')
+ \sfB(L_1, \ell, \ell') \sff(\ell) + \sfB(L_1, \ell', \ell) \sff(\ell')
+ \sfA(L_1, \ell, \ell') \sff(\ell) \sff(\ell'),\nonumber
\\
&\sfD[\sff](\sigma)=\sfD(\sigma)
+\frac12 \sum_{\gamma \in S_T^{\circ}}
\sfA(\ell_{\sigma}(\partial T), \ell_{\sigma}(\gamma),
\ell_{\sigma}(\gamma)) \sff(\ell_{\sigma}(\gamma)),
\label{abcd_twist}
\end{align}
where $S_T^{\circ}$ is the set of simple closed curves in a torus with one boundary $T$.
$\ell_{\sigma}(\partial T)$ and $\ell_{\sigma}(\gamma)$ denote the lengths of the boundary $\partial T$
and the shortest geodesic in the homotopy class of $\gamma$ with respect to a hyperbolic metric $\sigma$ on $T$, respectively.
\end{prop}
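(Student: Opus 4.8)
The plan is to prove the proposition by showing that the stable-graph sum of Definition~\ref{def:tw_abo_amp} satisfies the ABO topological recursion \eqref{geom_rec} with the modified initial data \eqref{abcd_twist}; since the recursion determines every $\VO_{g,n}$ with $2g-2+n\ge 2$ uniquely from the data at complexity $2g-2+n=1$, this is enough. I would argue by induction on $2g-2+n$. For the base cases I evaluate the stable-graph sum directly. The pair of pants $S_{0,3}$ carries no essential non-peripheral simple closed curve, so the only stable graph is the one-vertex graph and $\VO_{0,3}[\sff]=\VO_{0,3}=\sfA[\sff]$, matching $\sfA[\sff]=\sfA$. The once-punctured torus $S_{1,1}$ admits, besides the trivial graph, the graph with a single $(0,3)$-vertex and one self-loop; its automorphism group has order $2$, so its contribution is $\tfrac12\int_{\IR_+}\VO_{0,3}(\ell,\ell,L_1)\,\sff(\ell)\,\ell d\ell$, and unfolding the sum over $\gamma\in S_T^{\circ}$ against the Weil-Petersson measure via Mirzakhani's integration formula identifies this with the extra term of $\sfD[\sff]$ in \eqref{abcd_twist}.

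For the inductive step I open the pair of pants $\mathcal{P}$ adjacent to the first boundary $\partial_1$, exactly as in Mirzakhani's derivation of \eqref{eq:Mirzakhani_rec}, but applied to the multicurve-weighted volume. The two boundaries $\beta,\beta'$ of $\mathcal{P}$ other than $\partial_1$ each fall into one of three types: a genuine boundary $\partial_m$ of $S_{g,n}$, an \emph{ordinary} interior gluing curve (absorbed into the complementary vertex and integrated with the Weil-Petersson weight), or an \emph{edge} of the reduced multicurve (carrying the twist factor $\sff$). The key observation is that these choices reproduce precisely the term structure of \eqref{abcd_twist}. In the case that $\partial_1$ is paired with a genuine boundary $\partial_m$, the third curve $\ell$ of $\mathcal{P}$ is either an ordinary gluing curve, contributing the kernel $\sfB(L_1,L_m,\ell)$ and gluing to the twisted complement $\VO_{g,n-1}[\sff](\ell,\dots)$, or a multicurve edge, in which case $\mathcal{P}$ becomes a genuine $(0,3)$-vertex contributing $\sfA(L_1,L_m,\ell)$ with the edge weight $\sff(\ell)$; these are exactly the two summands of $\sfB[\sff]=\sfB+\sfA\,\sff$.

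When $\partial_1$ bounds $\mathcal{P}$ together with two interior curves $\ell,\ell'$, each is independently ordinary or an edge, giving a $2\times 2$ table. Both ordinary yields $\sfC(L_1,\ell,\ell')$; if exactly one (say $\ell$) is an edge, the local geometry is that of a $\sfB$-type peeling with $\ell$ in the role of the paired boundary, giving $\sfB(L_1,\ell,\ell')\,\sff(\ell)$ (and symmetrically for $\ell'$); if both are edges, $\mathcal{P}$ is a $(0,3)$-vertex giving $\sfA(L_1,\ell,\ell')\,\sff(\ell)\sff(\ell')$. These four options are precisely the four summands of $\sfC[\sff]$. In every case the complement, resummed over its own stable graphs, assembles into the twisted volumes appearing in $P_{g,n}[\sff]$ (or in $\VO_{g,n-1}[\sff]$), which is where the inductive hypothesis and the definition of the twisted volumes enter.

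The main obstacle I expect is the bookkeeping of symmetry weights: one must match the factor $\tfrac12$ in the $\sfC$-term and the $\tfrac12$ in $P_{g,n}$ against the orders $|\mathrm{Aut}(\Gamma)|$, so that each stable graph in $G_{g,n}$ is produced exactly once. Particular care is needed when the peeled pants creates a nonseparating curve — the $\VO_{g-1,n+1}$ summand of $P_{g,n}$, corresponding to an edge that is a loop of $\Gamma$ — where the extra factor of $2$ from the two labellings of the loop's half-edges must be reconciled with $|\mathrm{Aut}(\Gamma)|$, and one must correctly distinguish loops from edges joining two distinct vertices (which interacts with the connected-versus-split alternative in $P_{g,n}[\sff]$). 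One must also check that promoting a gluing curve to a multicurve edge and leaving it ordinary are genuinely mutually exclusive and exhaustive, so that the ordinary contributions reassemble into the untwisted kernels $\sfB,\sfC$ without spurious double counting; this exhaustiveness is exactly what the geometric recursion underlying \eqref{geom_rec} guarantees.
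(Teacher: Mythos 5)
Your overall strategy --- induct on $2g-2+n$ and show that the stable-graph sum of Definition~\ref{def:tw_abo_amp} satisfies the recursion \eqref{geom_rec} with the data \eqref{abcd_twist} --- is the right one, and your term-matching (leaf vs.\ edge for the partner of $\partial_1$, the $2\times 2$ table for the two interior curves) does reproduce the four summands of $\sfC[\sff]$ and the two of $\sfB[\sff]$. Be aware, though, that the paper itself offers no proof of this proposition: it is imported from \cite{Andersen_GR}, and the paper's own machinery in Section~\ref{sec:twisting_vir} gives a much shorter algebraic route --- the stable-graph sum is exactly the Wick expansion of $\widehat{U}[\sff]\,Z$ with edge propagator $u[\sff]_{a,b}$ from \eqref{twist_u}, and conjugating the operators \eqref{airy_gen} by $\widehat{U}[\sff]$ (Proposition~\ref{prop:givental_act}) replaces $t_a$ by $t_a+\hbar^2\sum_b u_{a,b}\partial_b$ and lands directly on \eqref{vir_abcd_twist}, the coefficient form of \eqref{abcd_twist}. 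That computation sidesteps all of the symmetry-factor bookkeeping you rightly worry about.

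The one step in your argument that would fail as literally stated is the geometric trichotomy in the inductive step. You peel an embedded pair of pants $\mathcal{P}\ni\partial_1$ and assert that each component of the reduced multicurve either coincides with one of $\beta,\beta'$ or lives in the complement. For actual simple closed geodesics on $S_{g,n}$ this is false: a multicurve component disjoint from $\partial_1$ can intersect $\beta$ or $\beta'$ transversally, so the double sum over $(\mathcal{P},\gamma_{\mathrm{red}})$ does not split the way your case analysis assumes, and "mutually exclusive and exhaustive" is precisely what is not guaranteed at the level of curves on the surface. The repair is to run the induction combinatorially on the stable-graph sum itself: single out the vertex $v$ of $\Gamma$ carrying the first leaf, apply the \emph{untwisted} recursion \eqref{geom_rec} to $\VO_{h(v),k(v)}$, and observe that the partner of $\partial_1$ is tautologically either a leaf of $\Gamma$, an edge of $\Gamma$ (carrying $\sff$), or a newly created curve --- there is no intersection issue because the graph already encodes a cut surface. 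With that substitution, and with the loop-versus-separating-edge and $|\mathrm{Aut}(\Gamma)|$ accounting you flag but do not carry out (including the $(1,1)$ base case, where matching $\tfrac12\sum_{\gamma\in S_T^\circ}$ against the self-loop graph with $|\mathrm{Aut}|=2$ depends on the paper's normalization of $V_{1,1}$), your outline does become a proof.
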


From equations~\eqref{gr_coeff_bc} and \eqref{geom_rec_coeff_ini},
the twisted initial data \eqref{abcd_twist} imply
\begin{align}
&\sfA[\sff]^{a_1}_{a_2,a_3}=\sfA^{a_1}_{a_2,a_3},
\qquad
\sfB[\sff]^{a_1}_{a_2,a_3}=\sfB^{a_1}_{a_2,a_3}
+ \sum_{a \ge 0} \sfA^{a_1}_{a_2,a} u_{a, a_3},\nonumber
\\
&\sfC[\sff]^{a_1}_{a_2,a_3}=\sfC^{a_1}_{a_2,a_3}
+ \sum_{a \ge 0}
\left(\sfB^{a_1}_{a,a_2} u_{a, a_3} + \sfB^{a_1}_{a,a_3} u_{a, a_2}\right)
+ \sum_{a, b \ge 0} \sfA^{a_1}_{a,b} u_{a, a_2} u_{b, a_3},\nonumber
\\
&\sfD[\sff]^{a_1}=\sfD^{a_1}
+\frac12 \sum_{a, b \ge 0} \sfA^{a_1}_{a,b} u_{a,b},
\label{vir_abcd_twist}
\end{align}
where
\begin{align}
u_{a, b}=
u[\sff]_{a, b}=
\int_{\IR_+} \frac{\ell^{2a + 2b +1}}
{\left(2a+1\right)!\left(2b+1\right)!} \sff(\ell) {\rm d}\ell.
\label{twist_u}
\end{align}
In particular, we consider the Masur--Veech type twist,
\begin{align}
\sfm(\ell)=\frac{1}{\e^{\ell}-1},
\label{mv_twist}
\end{align}
as an admissible test function, and then the twist function \eqref{twist_u} is
\begin{align}
u_{a, b}^{\mathrm{MV}}=
u\big[\sfm\big]_{a, b}=
\frac{\left(2a+2b+1\right)!}{\left(2a+1\right)!\left(2b+1\right)!}
\zeta(2a+2b+2).\label{mv_twist_u}
\end{align}

\begin{rem}\label{rem:ge_mv_tw}
We can consider the following one-parameter generalization of the Masur--Veech type twist function \cite{Giacchetto:2021cbo}:
\begin{align}
\label{eq:gen_MV_twist}
\mathsf{f}^{\mathrm{g\mathchar`-MV}}(\ell;s):=\frac{1}{\mathrm{e}^{s\ell}-1}.
\end{align}
The volume polynomial \eqref{eq:twist_volume} twisted by $\mathsf{f}^{\mathrm{g\mathchar`-MV}}(\ell;s)$ depends on the parameter $s$ such that
\begin{align*}
V_{g,n}\big[\mathsf{f}^{\mathrm{g\mathchar`-MV}}\big](L_1,\dots,L_n;s).
\end{align*}
The factor of twist functions in equation~\eqref{eq:twist_volume} is expanded for $\mathsf{f}^{\mathrm{g\mathchar`-MV}}$:
\begin{align*}
\prod_{e\in\mathsf{E}_{\Gamma}}\mathsf{f}^{\mathrm{g\mathchar`-MV}}(\ell_e;s)
=\sum_{\substack{k_e\ge 1\\ e\in\mathsf{E}_{\Gamma}}}\mathrm{e}^{-s\sum_{e\in\mathsf{E}_{\Gamma}} k_e\ell_e}.
\end{align*}
Since the inverse Laplace transform of $\big(\mathrm{e}^{-s\sum_e k_e\ell_e}\big)/s$ with respect to the parameter $s$
gives a~Heaviside step function $\theta(\mathfrak{l}-\sum_e k_e\ell_e)$ with the dual parameter $\mathfrak{l}$,
the inverse Laplace transform of the twisted Weil--Petersson volumes
$V^{\mathrm{WP}}_{g,n}\big[\mathsf{f}^{\mathrm{g\mathchar`-MV}}\big]$
with respect to the parameter $s$
gives the average number of multicurves
whose geodesic lengths are
bounded by $\mathfrak{l}$ on the moduli space of bordered hyperbolic Riemann surfaces \cite{Mirz_MV}.
\end{rem}

\subsection{Masur--Veech type twist}\label{sec:examples_gr_tw}

Here we focus on the Masur--Veech type twist of
the Kontsevich--Witten symplectic volumes in Section \ref{subsec:gr_kw},
the volume polynomials for the $(2,p)$ minimal string in Section \ref{subsec:gr_mg},
the super symplectic volumes in Section \ref{subsec:gr_be} and
the volume polynomials for the $(2,2p-2)$ minimal superstring in Section \ref{subsec:gr_mg_super}.
In the following, we summarize the computational results of the twisted volume polynomials.

\subsubsection{Masur--Veech polynomials}\label{subsec:gr_mvp}

In \cite{Andersen_MV}, it is shown that the constant term of
the twisted Kontsevich--Witten symplectic volume
\begin{align}
V^{\mathrm{MV}}_{g,n}(L_1,\dots,L_n):=
V^{\mathrm{A}}_{g,n}\big[\sfm\big](L_1,\dots,L_n),
\label{mv_tw_kw}
\end{align}
which is referred to as the \emph{Masur--Veech polynomial},
gives the Masur--Veech volume
$\mathrm{Vol}\mathcal{Q}_{g,n}$ reviewed in Section \ref{sec_MV_stable}
(see equation~\eqref{eq:Masur--Veech_twisted}).
Some computational results of the Masur--Veech polynomials are
listed in~\eqref{lst_mv}. 

\subsubsection[Twisted volume polynomials for (2,p) minimal string]{Twisted volume polynomials for $\boldsymbol{(2,p)}$ minimal string}\label{subsec:gr_mg_twisted}

The twisted volume polynomial $\VO^{\mathrm{M}(p)}_{g,n}\big[\sfm\big]$ for the $(2,p)$ minimal string interpolates the Masur--Veech polynomial $V^{\mathrm{MV}}_{g,n}$ in equation~\eqref{mv_tw_kw} at $p=1$ and the twisted Weil--Petersson volume
$\VO^{\mathrm{WP}}_{g,n}\big[\sfm\big]$ at $p=\infty$:
\begin{gather*}
\VO^{\mathrm{M}(1)}_{g,n}\big[\sfm\big](L_1,\dots,L_n)= 
\VO^{\mathrm{A}}_{g,n}\big[\sfm\big](L_1,\dots,L_n)=
\VO^{\mathrm{MV}}_{g,n}(L_1,\dots,L_n),
\\
\VO^{\mathrm{M}(\infty)}_{g,n}\big[\sfm\big](L_1,\dots,L_n)= 
\VO^{\mathrm{WP}}_{g,n}\big[\sfm\big](L_1,\dots,L_n).
\end{gather*}
Some of the twisted Weil--Petersson volumes
$\VO^{\mathrm{WP}}_{g,n}\big[\sfm\big]$ and
the twisted volume polynomial \smash{$\VO^{\mathrm{M}(p)}_{g,n}\big[\sfm\big]$} are
listed in \eqref{lst_wp_tw_s} and~\eqref{lst_mst_twist_s}, 
respectively,
where a deformation parameter $s$ is introduced by replacing
$\pi$ with $\pi \sqrt{s}$, as in Remark~\ref{rem:fugacity_kappa},
before the twist.
A~combinatorial formula for the constant term of \smash{$\VO^{\mathrm{M}(p)}_{g,n}\big[\sfm\big]$} is provided in Proposition~\ref{prop:minimal_combi} of Section~\ref{sec:combi_minimal}.

\subsubsection{Super Masur--Veech polynomials}\label{subsec:gr_supermvp}

As a supersymmetric analogue of the Masur--Veech polynomial \eqref{mv_tw_kw},
we define the \emph{super Masur--Veech polynomial} by
\begin{align}
V^{\mathrm{SMV}}_{g,n}(L_1,\dots,L_n):=
V^{\mathrm{B}}_{g,n}\big[\sfm\big](L_1,\dots,L_n).
\label{smv_tw_bessel}
\end{align}
Some computational results are listed in \eqref{lst_smv}, 
or
found from the twisted~volume polynomials
\smash{$\VO^{\mathrm{SM}(p)}_{g,n}\big[\sfm\big]$} for the $(2,2p-2)$ minimal superstring
below in equation~\eqref{vol_minimal_superstring_twist} by a specialization
\smash{$V^{\mathrm{SMV}}_{g,n}=\VO^{\mathrm{SM}(1)}_{g,n}\big[\sfm\big]$}.

\subsubsection[Twisted volume polynomials for (2,2p-2) minimal superstring]{Twisted volume polynomials for $\boldsymbol{(2,2p-2)}$ minimal superstring}\label{subsec:gr_mg_super_twisted}

For the twisted volume polynomial
\smash{$\VO^{\mathrm{SM}(p)}_{g,n}\big[\sfm\big]$} for the $(2,2p-2)$ minimal superstring,
the twisted ABO topological recursion is solved iteratively,
and the general form of \smash{$\VO^{\mathrm{SM}(p)}_{g,n}\big[\sfm\big]$} for any odd positive integers $p$ is obtained for $g=0,1,2,3$ as follows:
\begin{align}
&
\VO^{\mathrm{SM}(p)}_{0,n}\big[\sfm\big](L_1,\dots,L_n)=0,\qquad
\VO^{\mathrm{SM}(p)}_{1,n}\big[\sfm\big](L_1,\dots,L_n)=\frac{(n-1)!}{8},\nonumber
\\
&
\VO^{\mathrm{SM}(p)}_{2,n}\big[\sfm\big](L_1,\dots,L_n)=
\frac{3(n+1)!}{128}
\Bigg[(n+2)\left(1-\frac{1}{p^2}+\frac12\right)\pi^2
+ \frac14 \sum_{i=1}^n L_i^2\Bigg],\nonumber
\\
&
\VO^{\mathrm{SM}(p)}_{3,n}\big[\sfm\big](L_1,\dots,L_n)\nonumber
\\
&\qquad=
\frac{(n+3)!}{2^{16}\cdot 5}
\bigg[\bigg\{16(n+4)\left(42n\left(1-\frac{1}{p^2}\right)+\frac{455}{2}+\frac{15}{p^2}\right)
\left(1-\frac{1}{p^2}\right)+\frac{23\cdot 40}{3}\bigg\}\pi^4\nonumber
\\
&\phantom{\qquad=}{}
+\bigg\{336(n+4)\left(1-\frac{1}{p^2}\right)+170\bigg\}\pi^2 \sum_{i=1}^nL_i^2
+84\sum_{i \ne j}^n L_i^2 L_j^2
+25\sum_{i=1}^n L_i^4\Bigg].
\label{vol_minimal_superstring_twist}
\end{align}
More computational results are listed in~\eqref{lst_super_mst_twist_s},
where we introduce a deformation parameter~$s$
by replacing $\pi$ with $\pi \sqrt{s}$ before the twist.
The twisted volume polynomial~\smash{$\VO^{\mathrm{SM}(p)}_{g,n}\big[\sfm\big]$} interpolates
the super Masur--Veech polynomial $V^{\mathrm{SMV}}_{g,n}$ in equation~\eqref{smv_tw_bessel} at~$p=1$ and the twisted super Weil--Petersson volume
\smash{$\VO^{\mathrm{SWP}}_{g,n}\big[\sfm\big]$} at $p=\infty$ summarized in~\eqref{lst_super_wp_tw_s}
such that
\begin{align*}
&\VO^{\mathrm{SM}(1)}_{g,n}\big[\sfm\big](L_1,\dots,L_n)=
\VO^{\mathrm{B}}_{g,n}\big[\sfm\big](L_1,\dots,L_n)=
\VO^{\mathrm{SMV}}_{g,n}(L_1,\dots,L_n),
\\
&\VO^{\mathrm{SM}(\infty)}_{g,n}\big[\sfm\big](L_1,\dots,L_n)=
\VO^{\mathrm{SWP}}_{g,n}\big[\sfm\big](L_1,\dots,L_n).
\end{align*}
In Proposition \ref{prop:minimal_super_combi} of Section \ref{sec:combi_super},
we find a combinatorial formula for the constant term of~\smash{$\VO^{\mathrm{SM}(p)}_{g,n}\big[\sfm\big]$}.

\section{The Masur--Veech volume and its generalizations}\label{sect:MasurVeech}

In this section, we will discuss generalizations of the Masur--Veech volume of quadratic differentials on a complex curve with marked points, and compute them for some examples on the basis of Mirzakhani's combinatorial reformulation.

\subsection{Combinatorial formula for the Masur--Veech volume}\label{sec_MV_stable}

To begin with, we summarize essential ingredients on the Masur--Veech volume of quadratic differentials discussed in \cite{DGZZ_latest} shortly.
Let $\mathcal{M}_{g,n}$ be the moduli space of complex curves of genus~$g$ with $n$ distinct labeled marked points.
On a smooth complex curve $C\in \mathcal{M}_{g,n}$,
consider a~meromorphic quadratic differential $q$ which would have
at most simple poles only at the marked points
and is not equal to the square of an Abelian differential.
The moduli space of pairs $(q,C)$ on $\mathcal{M}_{g,n}$ defines the cotangent bundle over $\mathcal{M}_{g,n}$,
and the \emph{moduli space of quadratic differentials}~$\mathcal{Q}_{g,n}$ is identified with
the total space of the cotangent bundle over $\mathcal{M}_{g,n}$ endowed with the canonical symplectic structure.
The induced volume element on $\mathcal{Q}_{g,n}$ is called the \emph{Masur--Veech volume element}.%
\footnote{
The finiteness of the Masur--Veech volume element for a subset
$\mathcal{Q}^{\mathrm{Area}(C,q)\le a}_{g,n}$ of $\mathcal{Q}_{g,n}$:
\[
\mathcal{Q}^{\mathrm{Area}(C,q)\le a}_{g,n}=\bigg\{
(C,q)\in \mathcal{Q}_{g,n}\; \bigg|\; \mathrm{Area}(C,q)=\int_C|q|\le a
\bigg\}
\subset \mathcal{Q}_{g,n},
\]
with the total area $\mathrm{Area}(C,q)$ smaller than $a>0$ is confirmed by the independent results of Masur \cite{Masur} and Veech~\cite{Veech}.
}

\begin{figure}[t]\centering
 \includegraphics[width=60mm]{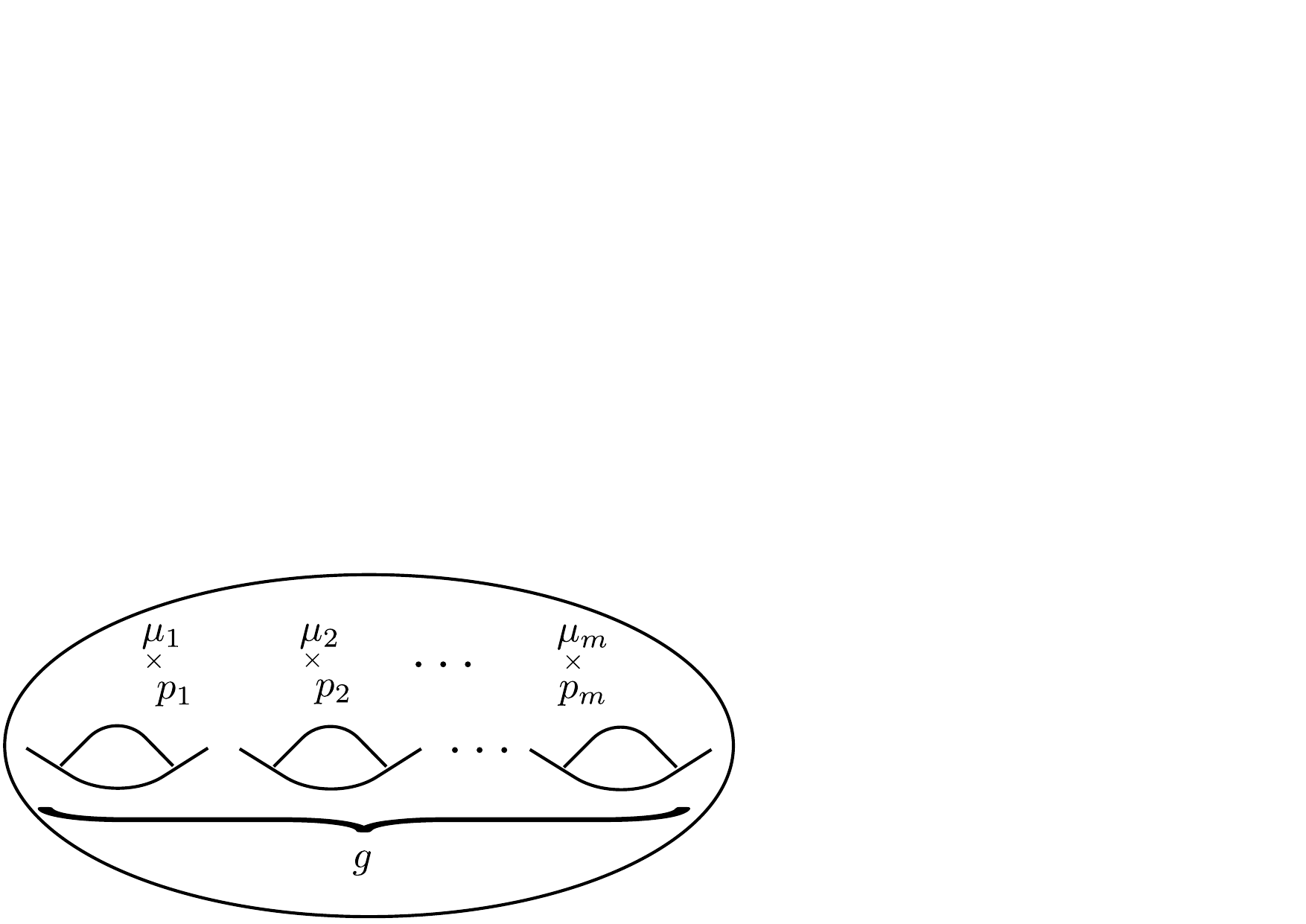}

\caption{The label set $\mu=(\mu_1,\dots,\mu_m)$ in the stratum $\mathcal{Q}(\mu)$ of
quadratic differentials on a complex curve of genus $g$ with $m$ distinct labeled marked points.}\label{fig:stratification}
\end{figure}

The moduli space $\mathcal{Q}_{g,n}$ is naturally stratified by the multiplicities of zeros and poles of quadratic differentials.

\begin{Def}[stratum of quadratic differentials]
The stratum $\mathcal{Q}(\mu)$ of quadratic differentials is the set of equivalence classes of pairs:
a smooth complex curve $C$ of genus $g$ with $m$ marked points $p_i$ $(i=1,\dots,m)$ and a quadratic differential $q$ with divisor $D=\sum_{i=1}^m\mu_ip_i$.
Here $\mu=(\mu_1,\dots,\mu_m)$ is a label set satisfying
(see Figure \ref{fig:stratification}),
\begin{align*}
\mu_i\ge -1\quad (i=1,\dots,m),
\qquad \sum_{i=1}^{m}\mu_i=4g-4,
\end{align*}
where $\mu_i=-1$ implies a simple pole.
The stratum $\mathcal{Q}(\mu)$ is a complex orbifold of dimension~${\dim_{\IC}\mathcal{Q}(\mu)=2g-2+m}$ \cite{Veech_mod}.
\end{Def}

In particular, we consider the principal stratum $\mathcal{Q}\big(1^{4g-4+n},-1^n\big)$ of meromorphic quadratic differentials on
$C \in \mathcal{M}_{g,4g-4+2n}$.
The fibers of $\mathcal{Q}\big(1^{4g-4+n},-1^n\big) \to \mathcal{Q}_{g,n}$ are discrete,
and the forgetful morphism $\mathcal{Q}_{g,4g-4+2n} \to \mathcal{Q}_{g,n}$ of
the $4g-4+n$ marked points gives a bijection between
the principal stratum $\mathcal{Q}\big(1^{4g-4+n},-1^n\big)$ modulo the choice of
$4g-4+n$ marked points and the moduli space $\mathcal{Q}_{g,n}$.
The dimension of the principal stratum coincides with that of $\mathcal{Q}_{g,n}$: $\dim_{\mathbb{C}}\mathcal{Q}\big(1^{4g-4+n},-1^n\big)=\dim_{\mathbb{C}}\mathcal{Q}_{g,n}=6g-6+2n$. We denote the Masur--Veech volume of the principal stratum $\mathcal{Q}\big(1^{4g-4+n},-1^n\big)$ as $\mathrm{Vol}\mathcal{Q}_{g,n}$.

In the work by Mirzakhani \cite{Mirz_MV}, the Masur--Veech volume $\mathrm{Vol}\mathcal{Q}_{g,n}$ is reformulated as an enumerative problem of simple closed curves in a connected bordered (hyperbolic) Riemann surface,
and a connection with the Weil--Petersson volume is found.
This reformulation is established further on the basis of the enumerative problem of square-tiled surfaces in the work by Delecroix, Goujard, Zograf and Zorich \cite{DGZZ_latest}.
By such reformulations, a relation between the Masur--Veech volume of the moduli space of quadratic differentials and the intersection numbers on the (compactified) moduli space $\overline{\mathcal{M}}_{g,n}$ is unveiled.
The stable graphs $\Gamma \in G_{g,n}$ associated with a bordered surface $S_{g,n}$ and a reduced multicurve $\gamma_{\mathrm{red}}$
in Section \ref{sec:twisting_gr} are used in the combinatorial computation of
the Masur--Veech volume $\mathrm{Vol}\mathcal{Q}_{g,n}$.

\begin{thm}[combinatorial formula for the Masur--Veech volume \cite{DGZZ_latest,Mirz_MV}]\label{thm_combi_MV}
Consider the decomposition
$S_{g,n}\setminus\gamma_{\mathrm{red}}=\cup_{a=1}^N S_{g_a,n_a}$ in~\eqref{eq:cut_Riemann_surface} represented by a stable graph $\Gamma$
associated with a bordered Riemann surface $S_{g,n}$ and
a reduced multicurve $\gamma_{\mathrm{red}}$.
Define a polynomial
\begin{align}
\label{eq:polynomial_WP}
\mathrm{Vol}^{\mathrm{WP}}_{\Gamma}(\ell_1,\dots,\ell_k):=
\prod_{a=1}^N V^{\mathrm{WP}}_{g_a,n_a}(\ell_{k_1},\dots,\ell_{k_{n_a}})
\Big|_{L_p=0 (p=1,\dots,n)},
\end{align}
where $V^{\mathrm{WP}}_{g_a,n_a}(\ell_{k_1},\dots,\ell_{k_{n_a}})$ are the Weil--Petersson volumes of moduli spaces of connected pieces $S_{g_a,n_a}$,
and all boundary lengths $L_p$ $(p=1,\dots n)$ of the original bordered Riemann surface set to zero in each connected piece.
Let $(2d_1,\dots,2d_k)^{\mathrm{WP}}_{\Gamma}$ denote the coefficient of $\ell_1^{2d_1}\cdots \ell_k^{2d_k}$ in~$\mathrm{Vol}^{\mathrm{WP}}_{\Gamma}(\ell_1,\dots,\ell_k)$.
Then, from these combinatorial data,
the Masur--Veech volume $\mathrm{Vol}\mathcal{Q}_{g,n}$ of the moduli space of quadratic differentials is given by\footnote{In \cite{DGZZ_latest,Mirz_MV}, an extra factor $2^{-M(\gamma)}$ appears in equation~\eqref{comb_formula_mv} originated from the normalization for the Weil--Petersson volume $V_{1,1}^{\mathrm{WP}}(L)$ in \cite[Table~1]{Mirz3}.
In this article, we employ another normalization $V_{1,1}^{\mathrm{WP}}(L)$ obtained from the CEO topological recursion
which is different by a factor $2$ from Mirzakhani's original computation.}
\begin{align}
\mathrm{Vol}\mathcal{Q}_{g,n}
={}&\sum_{\Gamma\in G_{g,n}}
\frac{\alpha_{g,n}}{|\mathrm{Aut}\left(\Gamma\right)|}\nonumber
 \\
&\times
\sum_{|\mathbf{d}|= 3g-3+n-k}(2d_1,\dots,2d_k)^{\mathrm{WP}}_{\Gamma}
\frac{\prod_{i=1}^k(2d_i+1)!}{(6g-6+2n)!}
\prod_{i=1}^k\zeta(2d_i+2),\label{comb_formula_mv}
\end{align}
where $|\mathbf{d}|=\sum_{i=1}^kd_i$, and the normalization constant $\alpha_{g,n}$ is
\begin{align}
\label{eq:normalization_MV}
\alpha_{g,n}:=2\cdot(6g-6+2n)\cdot(4g-4+n)!\cdot 2^{4g-3+n}.
\end{align}
\end{thm}

\begin{rem}
The normalization constant $\alpha_{g,n}$ in equation~\eqref{eq:normalization_MV} is obtained in a series of works \cite{AH1,ErSo,MoT,Wr}.
\end{rem}

\begin{rem}
Let $\mathrm{Vol}^{\mathrm{A}}_{\Gamma}(\ell_1,\dots,\ell_k)$
be a polynomial defined by replacing the Weil--Petersson volumes
in equation~\eqref{eq:polynomial_WP} with the Kontsevich--Witten symplectic volumes
of moduli spaces of stable curves in equation~\eqref{kw_volume},
and $(2d_1,\dots,2d_k)^{\mathrm{A}}_{\Gamma}$ be
the coefficient of $\ell_1^{2d_1}\cdots \ell_k^{2d_k}$ in $\mathrm{Vol}^{\mathrm{A}}_{\Gamma}(\ell_1,\dots,\ell_k)$.
The combinatorial formula \eqref{comb_formula_mv} also holds if $(2d_1,\dots,2d_k)^{\mathrm{WP}}_{\Gamma}$ is replaced with $(2d_1,\dots,2d_k)^{\mathrm{A}}_{\Gamma}$.
\end{rem}

The following proposition is shown in \cite{Andersen_MV}.

\begin{prop}[\cite{Andersen_MV}]\label{prop:mv_twisted_abo}
The Masur--Veech volumes are given by the constant terms in the Kontsevich--Witten symplectic volumes with the Masur--Veech type twist in equation~\eqref{mv_twist}:
\begin{align}
\label{eq:Masur--Veech_twisted}
\mathrm{Vol}\mathcal{Q}_{g,n}
=\beta_{g,n}
V^{\mathrm{A}}_{g,n}\big[\sfm\big](0,\dots,0)
=\beta_{g,n}
V^{\mathrm{MV}}_{g,n}(0,\dots,0),
\end{align}
where the normalization constants $\beta_{g,n}$ are
\begin{align}
&\label{eq:volume_factor}
\beta_{g,n}:=\frac{2^{4g-2+n} (4g-4+n)!}{(6g-7+2n)!}.
\end{align}
\end{prop}
\begin{proof}
The integrals of $\ell_e$ in equation~\eqref{eq:twist_volume}
give the factors involving $\zeta(2d_i+2)$
in the combinatorial formula \eqref{comb_formula_mv} for the Masur--Veech volumes.
As a result, the collection of monomials with degree $|\mathbf{d}|= 3g-3+n-k$ in $V^{\mathrm{WP}}_{g,n}$ agrees with that of $V^{\mathrm{A}}_{g,n}$.
\end{proof}

For example, the Masur--Veech volumes $\mathrm{Vol}\mathcal{Q}_{1,1}$ and $\mathrm{Vol}\mathcal{Q}_{0,4}$ are computed as follows.
(A~huge table of the Masur--Veech volumes is obtained in \cite{Goujard15}.)

\begin{ex}[$g=1$, $n=1$]
One finds a decomposition \eqref{eq:cut_Riemann_surface} which splits $S_{1,1}$ into one $S_{0,3}$, and obtains a stable graph \smash{$\Gamma_{1,1}^{(1)}$} as described in Figure \ref{fig:stable_graph_MV} (top).
\begin{figure}[t]\centering
 \includegraphics[width=110mm]{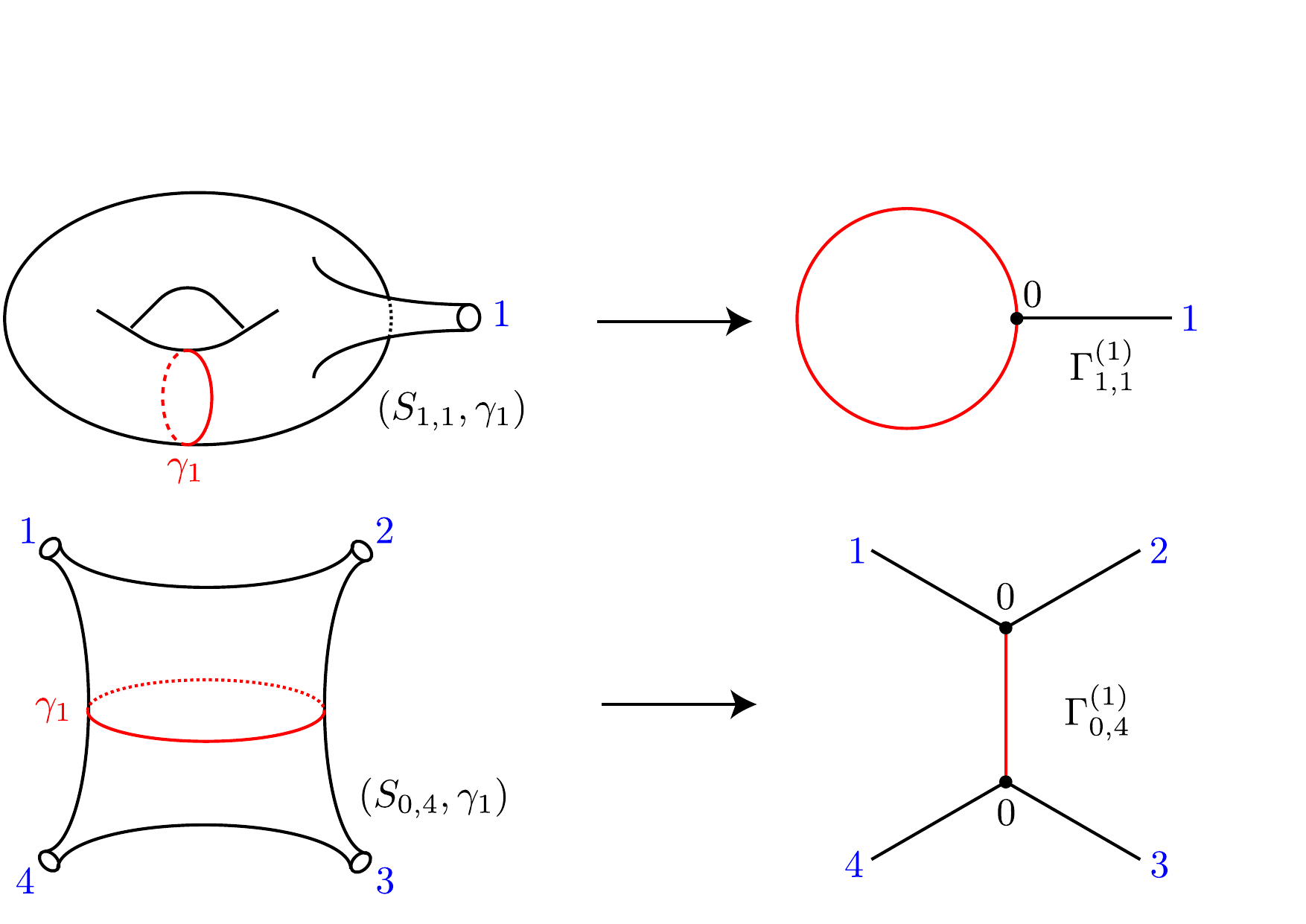}
 \vspace{0cm}
\caption{Top: A stable graph $\Gamma_{1,1}^{(1)}$ from $(S_{1,1},\gamma_1)$.
Bottom: A stable graph $\Gamma_{0,4}^{(1)}$ from $(S_{0,4},\gamma_1)$.}\label{fig:stable_graph_MV}
\end{figure}
A discrete $\mathbb{Z}/2\mathbb{Z}$ symmetry of the loop
in \smash{$\Gamma_{1,1}^{(1)}$} is found, and
\smash{$\big|\mathrm{Aut}\big(\Gamma_{1,1}^{(1)}\big)\big|=2$}.
The polynomial \eqref{eq:polynomial_WP} for this decomposition is \smash{$\mathrm{Vol}^{\mathrm{WP}}_{\Gamma_{1,1}^{(1)}}(x)
=V^{\mathrm{WP}}_{0,3}(x,x,0)=1$}.
Therefore, $k=1$, $d_1=0$ and \smash{$(2d_1)^{\mathrm{WP}}_{\Gamma_{1,1}^{(1)}}=1$} for this stable graph contributes to the sum in equation~\eqref{comb_formula_mv},
and one finds
\begin{align}
\label{MV_volume_11}
\mathrm{Vol}\mathcal{Q}_{1,1}
=\frac{\alpha_{1,1}}{\big|\mathrm{Aut}\big(\Gamma_{1,1}^{(1)}\big)\big|}
(2d_1)^{\mathrm{WP}}_{\Gamma_{1,1}^{(1)}} \frac{(2d_1+1)!}{2!}
\zeta(2d_1+2)
=\frac{2\pi^2}{3}.
\end{align}
This agrees with
$\beta_{1,1}V^{\mathrm{MV}}_{1,1}(0)=2^3\cdot \pi^2/12$ in
equation~\eqref{eq:Masur--Veech_twisted}
(see \eqref{lst_mv}). 
\end{ex}

\begin{ex}[$g=0$, $n=4$]
One finds a decomposition \eqref{eq:cut_Riemann_surface} which splits $S_{0,4}$ into two~$S_{0,3}$'s, and obtain a stable graph \smash{$\Gamma_{0,4}^{(1)}$} as described in Figure \ref{fig:stable_graph_MV} (bottom).
There are $\binom{4}{2}=6$ choices of the distribution of four labeled external legs into two $S_{0,3}$'s in this case.
A discrete $\mathbb{Z}/2\mathbb{Z}$ symmetry of the stable graph
\smash{$\Gamma_{0,4}^{(1)}$} leads to \smash{$\big|\mathrm{Aut}\big(\Gamma_{0,4}^{(1)}\big)\big|=2$}.
The polynomial \eqref{eq:polynomial_WP} for this decomposition~is
\[\mathrm{Vol}^{\mathrm{WP}}_{\Gamma_{0,4}^{(1)}}(x)=V^{\mathrm{WP}}_{0,3}(x,0,0)\cdot V^{\mathrm{WP}}_{0,3}(x,0,0)=1.\]
Therefore, $k=1$, $d_1=0$ and \smash{$(2d_1)^{\mathrm{WP}}_{\Gamma_{0,4}^{(1)}}=1$} for this stable graph contributes to the sum in equation~\eqref{comb_formula_mv}, and one finds
\begin{align}
\label{MV_volume_04}
\mathrm{Vol}\mathcal{Q}_{0,4}=
\frac{\alpha_{0,4}}{\big|\mathrm{Aut}\big(\Gamma_{0,4}^{(1)}\big)\big|}
\binom{4}{2}
(2d_1)^{\mathrm{WP}}_{\Gamma^{(1)}} \frac{(2d_1+1)!}{2!}
\zeta(2d_1+2)
=2\pi^2.
\end{align}
This agrees with
$\beta_{0,4}V^{\mathrm{MV}}_{0,4}(0,0,0,0)=2^2\cdot \pi^2/2$ in
equation~\eqref{eq:Masur--Veech_twisted}
(see \eqref{lst_mv}). 
\end{ex}

\subsection[Generalization to the (2,p) minimal string]{Generalization to the $\boldsymbol{(2,p)}$ minimal string}\label{sec:combi_minimal}

Now we consider a generalization of the combinatorial formula \eqref{comb_formula_mv} to the $(2,p)$ minimal string
whose volume polynomials interpolate the Kontsevich--Witten symplectic volumes (for $p=1$) and
the Weil--Petersson volumes (for $p=\infty$).
For this purpose, we consider a stable graph $\Gamma \in G_{g,n}$ associated to
a pair $(S_{g,n},\gamma_{\mathrm{red}})$ of the decomposition \eqref{eq:cut_Riemann_surface} in Section \ref{sec:twisting_gr},
and define%
\begin{align}
(2d_1,\dots,2d_k)_{\Gamma}={}&
\textrm{the coefficient of}\ \ell_1^{2d_1}\cdots \ell_k^{2d_k}\
\textrm{in}\nonumber\\
&
\prod_{a=1}^N V_{g_a,n_a}(\ell_{k_1},\dots,\ell_{k_{n_a}})
\Big|_{L_p=0 (p=1,\dots,n)}.
\label{vol_component_comb}
\end{align}
Here $V_{g_a,n_a}\!(\ell_{k_1},\dots,\ell_{k_{n_a}})|_{L_p=0 (p=1,\dots,n)}$ are volume polynomials associated with the stable graph~$\Gamma$
and with zero boundary lengths $L_p=0$ ($p=1,\dots n$) for the bordered boundaries in the decomposed Riemann surfaces
specified by the univalent vertices in $\Gamma$. Then, the following proposition is proved.

\begin{prop}
\label{prop:minimal_combi}
The constant term in the twisted volume polynomial
$V^{\mathrm{M}(p)}_{g,n}\big[\mathsf{f}^{\mathrm{MV}}\big]$
of the $(2,p)$ minimal string normalized by $\beta_{g,n}$ in equation~\eqref{eq:volume_factor},
\begin{align}
\label{eq:minimal_twisted}
\mathrm{Vol}\mathcal{Q}^{\mathrm{M}(p)}_{g,n}:=
\beta_{g,n}
V^{\mathrm{M}(p)}_{g,n}\big[\mathsf{f}^{\mathrm{MV}}\big](0,\dots,0),
\end{align}
is obtained, as a sum over stable graphs $\Gamma\in G_{g,n}$, by
\begin{align}
\label{eq:normalization_minimal_volume}
\mathrm{Vol}\mathcal{Q}^{\mathrm{M}(p)}_{g,n}
=&\sum_{\Gamma\in G_{g,n}}
\frac{\alpha_{g,n}}{|\mathrm{Aut}\left(\Gamma\right)|}
\nonumber \\
&
\times
\sum_{|\mathbf{d}|\le 3g-3+n-k}(2d_1,\dots,2d_k)_{\Gamma}^{\mathrm{M}(p)}
\frac{\prod_{i=1}^k(2d_i+1)!}{(6g-6+2n)!}
\prod_{i=1}^k\zeta(2d_i+2),
\end{align}
in terms of $(2d_1,\dots,2d_k)_{\Gamma}^{\mathrm{M}(p)}$ defined by equation~\eqref{vol_component_comb} for the $(2,p)$ minimal string,
where the normalization factor $\alpha_{g,n}$ is the same as equation~\eqref{eq:normalization_MV}.
\end{prop}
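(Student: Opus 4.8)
The plan is to unwind the stable-graph representation of the twisted volume polynomial and carry out the edge integrals, in exact parallel to the proof of Proposition~\ref{prop:mv_twisted_abo}. The point is that the definition \eqref{eq:twist_volume} of $V_{g,n}[\sff]$ as a sum over stable graphs (Lemma~7.4 of \cite{Andersen_GR}) holds for \emph{any} choice of initial data $(\sfA,\sfB,\sfC,\sfD)$ feeding the ABO recursion, and in particular for the $(2,p)$ minimal-string data of Section~\ref{subsec:gr_mg} together with the twisted data of Proposition~\ref{def:twist}. First I would specialize \eqref{eq:twist_volume} to $\sff=\sfm$ and set all external lengths to zero, writing
\[
V^{\mathrm{M}(p)}_{g,n}[\sfm](0,\ldots,0)=\sum_{\Gamma\in G_{g,n}}\frac{1}{|\mathrm{Aut}(\Gamma)|}\int_{\IR_+^{|\mathsf{E}_\Gamma|}}\prod_{v\in\mathsf{V}_\Gamma}V^{\mathrm{M}(p)}_{h(v),k(v)}\bigl((\ell_e)_{e\in\mathsf{E}(v)},0\bigr)\prod_{e\in\mathsf{E}_\Gamma}\sfm(\ell_e)\,\ell_e\,d\ell_e .
\]
By construction the integrand is the generating polynomial of \eqref{vol_component_comb}, whose coefficients are exactly $(2d_1,\ldots,2d_k)^{\mathrm{M}(p)}_\Gamma$.

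Next I would interchange the finite monomial sum with the edge integrals---legitimate because each $V^{\mathrm{M}(p)}_{h(v),k(v)}$ is a polynomial and $\sfm$ is admissible---and evaluate each edge factor by the Mellin integral
\[
\int_{\IR_+}\ell^{2d_i+1}\,\sfm(\ell)\,d\ell=\int_0^{\infty}\frac{\ell^{2d_i+1}}{\e^{\ell}-1}\,d\ell=(2d_i+1)!\,\zeta(2d_i+2),
\]
which is the $a+b=d_i$ specialization of \eqref{mv_twist_u}. This produces the factor $\prod_{i=1}^k(2d_i+1)!\,\zeta(2d_i+2)$ attached to each graph, so that $V^{\mathrm{M}(p)}_{g,n}[\sfm](0,\ldots,0)$ becomes $\sum_\Gamma|\mathrm{Aut}(\Gamma)|^{-1}\sum_{\mathbf d}(2d_1,\ldots,2d_k)^{\mathrm{M}(p)}_\Gamma\prod_i(2d_i+1)!\,\zeta(2d_i+2)$.

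It then remains to reconcile the prefactor with \eqref{eq:normalization_minimal_volume} and to pin down the range of summation. Multiplying by $\beta_{g,n}$ and using the elementary identity $\beta_{g,n}=\alpha_{g,n}/(6g-6+2n)!$, which one verifies from \eqref{eq:normalization_MV} and \eqref{eq:volume_factor} via $(6g-6+2n)/(6g-6+2n)!=1/(6g-7+2n)!$, turns the prefactor into $\alpha_{g,n}/\bigl((6g-6+2n)!\,|\mathrm{Aut}(\Gamma)|\bigr)$ exactly as claimed. For the range of $\mathbf d$, Proposition~\ref{prop:mst_vol_poly} bounds each vertex contribution by $|\mathbf a|\le 3g_a-3+n_a$; combining this with the stable-graph identities $\sum_a g_a=g-(k-N+1)$ and $\sum_a n_a=n+2k$ (so that $\sum_a(3g_a-3+n_a)=3g-3+n-k$) confines the nonvanishing coefficients to $|\mathbf d|\le 3g-3+n-k$. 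The main thing to get right is that this argument is entirely model-independent: nothing beyond polynomiality and the degree bound of the $V^{\mathrm{M}(p)}_{g,n}$ enters, so the formula is obtained from the Weil-Petersson statement of Theorem~\ref{thm_combi_MV} by substituting the minimal-string volume data, with the equality $|\mathbf d|=3g-3+n-k$ of \eqref{comb_formula_mv} relaxed to an inequality because the $(2,p)$ volumes carry the subleading $\psi$-power contributions of \eqref{feg_mg}.
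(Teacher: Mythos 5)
Your argument is correct and is essentially the proof the paper intends: the paper leaves the proof of Proposition \ref{prop:minimal_combi} implicit, deriving it exactly as in the proof of Proposition \ref{prop:mv_twisted_abo} by performing the edge integrals $\int_{\IR_+}\ell^{2d_i+1}\,\sfm(\ell)\,d\ell=(2d_i+1)!\,\zeta(2d_i+2)$ in the stable-graph expansion \eqref{eq:twist_volume} specialized to the $(2,p)$ minimal-string volume data. Your explicit verification of the prefactor identity $\beta_{g,n}=\alpha_{g,n}/(6g-6+2n)!$ and of the degree range $|\mathbf{d}|\le 3g-3+n-k$ via $\sum_a(3g_a-3+n_a)=3g-3+n-k$ supplies details the paper omits, and both checks are accurate.
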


\begin{rem}
We refer to $\mathrm{Vol}\mathcal{Q}^{\mathrm{M}(p)}_{g,n}$ as
a twisted volume of the $(2,p)$ minimal string, since this is a natural combinatorial analogue of the Masur--Veech volume, although the geometric derivation of this volume is missing in the direct study of the quantum moduli space of the Liouville gravity.
A crucial difference between the combinatorial formulae of Theorem \ref{thm_combi_MV} and Proposition \ref{prop:minimal_combi} is the degree constraint in the sum.
In the computation of the Masur--Veech volume, the imposed degree constraint is $|\mathbf{d}|=3g-3+n-k$. On the other hand for the $(2,p)$ minimal string, the weaker degree constraint $|\mathbf{d}|\le 3g-3+n-k$ is imposed.
If the degree constraint for the Masur--Veech volume is instead imposed for the volume formula of the minimal string,
then the twisted volumes \smash{$\mathrm{Vol}\mathcal{Q}^{\mathrm{M}(p)}_{g,n}$} reduce to the Masur--Veech volumes $\mathrm{Vol}\mathcal{Q}_{g,n}$.
\end{rem}

Here we show combinatorial computations for \smash{$\mathrm{Vol}\mathcal{Q}^{\mathrm{M}(p)}_{1,1}$} and \smash{$\mathrm{Vol}\mathcal{Q}^{\mathrm{M}(p)}_{0,4}$}.

\begin{ex}[$g=1$, $n=1$]
By the weaker constraint $|\mathbf{d}|\le 3g-3+n-k$ in the sum in equation~\eqref{eq:normalization_minimal_volume},
an extra contribution to the sum in equation~\eqref{eq:normalization_minimal_volume} is found for $k=0$.
Namely for this contribution, there are no multicurves in $S_{1,1}$, and the factor $\prod_{i=1}^k(2d_i+1)!\zeta(2d_i+2)$ is absent.
The corresponding stable graph \smash{$\Gamma_{1,1}^{(0)}$} has one vertex labeled by $g=1$ and one half-edge as described in Figure \ref{fig:stable_graph_minimal} (top).
\begin{figure}[t]\centering
 \includegraphics[width=100mm]{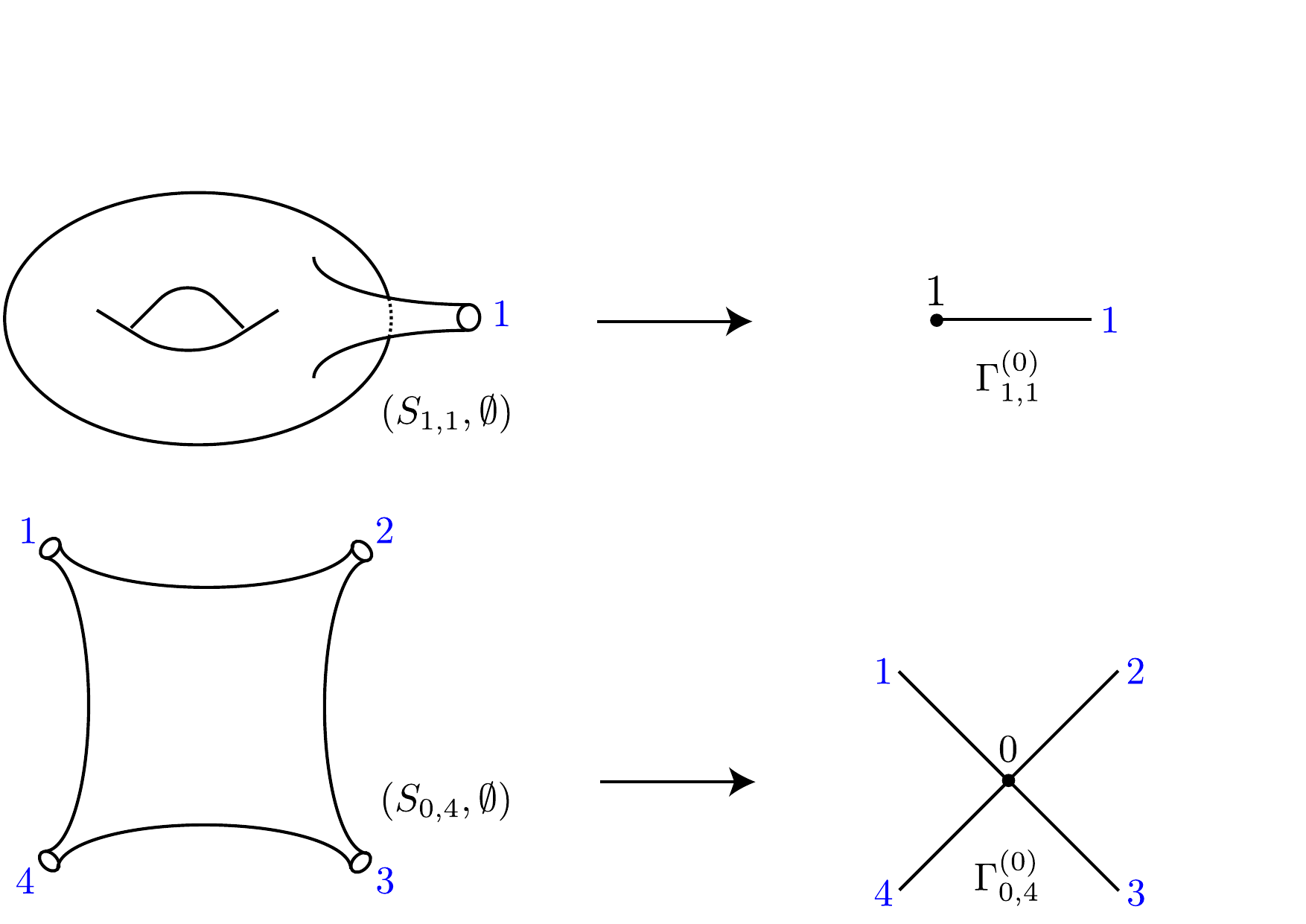}
\caption{Top: A stable graph $\Gamma_{1,1}^{(0)}$ from $(S_{1,1},\varnothing)$.
Bottom: A stable graph $\Gamma_{0,4}^{(0)}$ from $(S_{0,4},\varnothing)$.}\label{fig:stable_graph_minimal}
\end{figure}
The polynomial factor obeys
\[(\varnothing)_{\Gamma_{1,1}^{(0)}}^{\mathrm{M}(p)}
=V^{\mathrm{M}(p)}_{1,1}(0)=\pi^2s\big(1-1/p^2\big)/12\] for this stable graph,
where a deformation parameter $s$ is introduced by $\pi^2 \to \pi^2 s$.
The contributions to \smash{$\mathrm{Vol}\mathcal{Q}^{\mathrm{M}(p)}_{1,1}$} from the stable graph \smash{$\Gamma_{1,1}^{(0)}$} is
\begin{align*}
\frac{\alpha_{1,1}}{\big|\mathrm{Aut}\big(\Gamma_{1,1}^{(0)}\big)\big|}
(\varnothing)_{\Gamma_{1,1}^{(0)}}^{\mathrm{M}(p)} \frac{1}{2!}
=\frac{2\pi^2s}{3}\left(1-\frac{1}{p^2}\right).
\end{align*}
Combining the contribution \eqref{MV_volume_11} from the stable graph $\Gamma_{1,1}^{(1)}$, one finds
\begin{align*}
\mathrm{Vol}\mathcal{Q}^{\mathrm{M}(p)}_{1,1}=\frac{2\pi^2s}{3}\left(1-\frac{1}{p^2}\right)+\frac{2\pi^2}{3}.
\end{align*}
This agrees with
\[\beta_{1,1}V^{\mathrm{M}(p)}_{1,1}\big[\sfm\big](0)=2^3\cdot \pi^2\big(s-s/p^2+1\big)/12\] in
equation~\eqref{eq:minimal_twisted}
(see \eqref{lst_mst_twist_s}). 
\end{ex}

\begin{ex}[$g=0$, $n=4$]
We compute an extra contribution of $k=0$ which comes
from the weaker constraint $|\mathbf{d}|\le 3g-3+n-k$ in the sum in equation~\eqref{eq:normalization_minimal_volume}.
The corresponding stable graph \smash{$\Gamma_{0,4}^{(0)}$} has one vertex labeled by $g=0$ and four half-edges as described in Figure~\ref{fig:stable_graph_minimal} (bottom).
The polynomial factor obeys
\[(\varnothing)_{\Gamma_{0,4}^{(0)}}^{\mathrm{M}(p)}
=V^{\mathrm{M}(p)}_{0,4}(0,0,0,0)=2\pi^2s\big(1-1/p^2\big)\]
for this stable graph.
The contributions to \smash{$\mathrm{Vol}\mathcal{Q}^{\mathrm{M}(p)}_{0,4}$} from the stable graph \smash{$\Gamma_{0,4}^{(0)}$} is
\begin{align*}
\frac{\alpha_{0,4}}{\big|\mathrm{Aut}\big(\Gamma_{0,4}^{(0)}\big)\big|}
(\varnothing)_{\Gamma_{0,4}^{(0)}}^{\mathrm{M}(p)} \frac{1}{2!}
=8\pi^2s \left(1-\frac{1}{p^2}\right).
\end{align*}
Combining the contribution \eqref{MV_volume_04} from the stable graph $\Gamma_{0,4}^{(1)}$, one finds
\begin{align*}
\mathrm{Vol}\mathcal{Q}^{\mathrm{M}(p)}_{0,4}=
8\pi^2s \left(1-\frac{1}{p^2}\right)+2\pi^2.
\end{align*}
This agrees with
\[\beta_{0,4}V^{\mathrm{M}(p)}_{0,4}\big[\sfm\big](0,0,0,0)=2^2\cdot \pi^2\big(4s-4s/p^2+1\big)/2\] in
equation~\eqref{eq:minimal_twisted}
(see \eqref{lst_mst_twist_s}). 
\end{ex}

\subsection[Generalization to the (2,2p-2) minimal superstring]{Generalization to the $\boldsymbol{(2,2p-2)}$ minimal superstring}\label{sec:combi_super}

The combinatorial formula in Theorem \ref{thm_combi_MV} is also generalized to the volume polynomials for the supersymmetric models
which are associated with the BGW tau function \cite{Alexandrov:2016kjl}.
We consider the~$(2,2p-2)$ minimal superstring whose volume polynomials interpolate
the volume polynomials~$V^{\mathrm{B}}_{g,n}(L_1,\dots,L_n)$ for $p=1$ and the super Weil--Petersson volumes $V^{\mathrm{SWP}}_{g,n}(L_1,\dots,L_n)$ for~$p=\infty$,
and find the following proposition.

\begin{prop}
\label{prop:minimal_super_combi}
Let $\overline{G}_{g,n}$ be the set of stable graphs which does not contain any vertices associated to connected bordered Riemann surfaces of
genus zero.%
\footnote{
The volume polynomial $V^{\mathrm{SM}(p)}_{0,n}(0,\dots,0)$ is zero for any $n$.
}
The constant term in the twisted volume polynomial
\smash{$V^{\mathrm{SM}(p)}_{g,n}\big[\mathsf{f}^{\mathrm{MV}}\big]$}
for the $(2,2p-2)$ minimal superstring
is obtained, as a~sum over stable graphs $\Gamma\in \overline{G}_{g,n}$, by
\begin{align}
V^{\mathrm{SM}(p)}_{g,n}\big[\sfm\big](0,\dots,0)
={}&\sum_{\Gamma\in \overline{G}_{g,n}}
\frac{1}{|\mathrm{Aut}\left(\Gamma\right)|}
\sum_{|\mathbf{d}|\le g-1-k}(2d_1,\dots,2d_k)_{\Gamma}^{\mathrm{SM}(p)}\nonumber
\\
&\times
\prod_{i=1}^k(2d_i+1)! \zeta(2d_i+2),\label{eq:normalization_minimal_super_volume}
\end{align}
where $(2d_1,\dots,2d_k)_{\Gamma}^{\mathrm{SM}(p)}$ is
defined from the twisted volume polynomial
$V^{\mathrm{SM}(p)}_{g,n}\big[\mathsf{f}^{\mathrm{MV}}\big]$
by equation~\eqref{vol_component_comb}.
\end{prop}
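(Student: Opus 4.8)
The plan is to evaluate $V^{\mathrm{SM}(p)}_{g,n}[\sfm](0,\ldots,0)$ directly from the defining formula \eqref{eq:twist_volume} for the twisted volume polynomial, specialized to the twist $\sff=\sfm$ and to vanishing external boundary lengths $L_1=\cdots=L_n=0$. After this specialization each vertex factor in \eqref{eq:twist_volume} becomes a volume polynomial $\VO^{\mathrm{SM}(p)}_{g_a,n_a}$ evaluated on its incident edge-length variables with the external arguments set to zero, so the integrand factorizes as $\prod_{a=1}^N \widetilde{V}^{\mathrm{SM}(p)}_{g_a,n_a}(\ell_1,\ldots,\ell_{k_a})$ times $\prod_{e\in\mathsf{E}_{\Gamma}}\sfm(\ell_e)\,\ell_e d\ell_e$. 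Expanding this product into monomials $\ell_1^{2d_1}\cdots\ell_k^{2d_k}$ exposes precisely the coefficients $(2d_1,\ldots,2d_k)^{\mathrm{SM}(p)}_{\Gamma}$ of \eqref{vol_component_comb}; only even powers occur, since each $\VO^{\mathrm{SM}(p)}_{g_a,n_a}$ is a polynomial in the squares of its arguments.

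The edge integrals then factorize over the $k$ internal edges, and each one evaluates to a zeta value through the single-variable identity
\begin{align}
\int_{\IR_+}\frac{\ell^{2d+1}}{\e^{\ell}-1}\,d\ell=(2d+1)!\,\zeta(2d+2),
\end{align}
which is the one-edge specialization underlying the twist integral \eqref{mv_twist_u}. This reproduces the factor $\prod_{i=1}^k(2d_i+1)!\,\zeta(2d_i+2)$ and hence the right-hand side of \eqref{eq:normalization_minimal_super_volume}, except that a priori the outer sum runs over all $\Gamma\in G_{g,n}$ and the inner sum over all $d_i\ge 0$.

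It then remains to justify the two restrictions in the statement. First, any stable graph carrying a genus-zero vertex contributes nothing: by Proposition \ref{prop:mst_vol_poly_super} the polynomial $\VO^{\mathrm{SM}(p)}_{0,n_a}$ vanishes identically (its degree bound $|\mathbf{a}|\le g_a-1=-1$ is vacuous), so the corresponding factor kills the whole graph and the outer sum collapses to $\overline{G}_{g,n}$. Second, the truncation $|\mathbf{d}|\le g-1-k$ is automatic: by the same proposition each vertex factor has total $\ell$-degree at most $2(g_a-1)$, so $2|\mathbf{d}|\le 2\sum_{a}(g_a-1)=2\bigl(\sum_a g_a-N\bigr)$; combining with the genus relation $\sum_a g_a=g-k+N-1$ for a connected stable graph, whose underlying graph has first Betti number $k-N+1$, gives exactly $|\mathbf{d}|\le g-1-k$. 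This is the same truncation mechanism as in Proposition \ref{prop:minimal_combi}, with $g-1$ in place of $3g-3+n$ reflecting the sharper degree bound of the supersymmetric volumes.

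I expect no serious obstacle beyond this accounting. The one step that deserves care is the degree bookkeeping, where the per-vertex bound of Proposition \ref{prop:mst_vol_poly_super} must be paired correctly with the topological genus formula, and where self-edges (loops, for which $\ell_e$ decorates two half-edges of a single vertex yet carries a single integration variable) must be tracked consistently in both the monomial expansion and the edge integration. The absence here of the normalization constants $\alpha_{g,n}$ and $(6g-6+2n)!$ that appear in Proposition \ref{prop:minimal_combi} simply reflects that we evaluate the twisted volume itself, rather than a geometrically normalized Masur-Veech-type volume.
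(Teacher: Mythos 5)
Your argument is correct and follows the same route the paper takes (the paper leaves this proof essentially implicit, relying on the expansion of Definition \ref{def:tw_abo_amp} and the one--edge zeta integral exactly as in the proof of Proposition \ref{prop:mv_twisted_abo}, and then checking examples). Your degree bookkeeping --- pairing the per-vertex bound $\sum_j a_{a,j}\le g_a-1$ from Proposition \ref{prop:mst_vol_poly_super} with the genus relation $\sum_a g_a = g-k+N-1$ to obtain $|\mathbf{d}|\le g-1-k$, and noting that $V^{\mathrm{SM}(p)}_{0,n_a}=0$ collapses the sum to $\overline{G}_{g,n}$ --- correctly supplies the details the paper omits.
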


\begin{rem}\label{rem:super_minimal_volume}
If the degree constraint $|\mathbf{d}|\le g-1-k$ in equation~\eqref{eq:normalization_minimal_super_volume} is replaced by the stronger condition $|\mathbf{d}|=g-1-k$,
then the twisted volume \smash{$V^{\mathrm{SM}(p)}_{g,n}\big[\sfm\big](0,\dots,0)$} for an odd positive integer $p$
reduces to $V^{\mathrm{B}}_{g,n}\big[\sfm\big](0,\dots,0)$.
\end{rem}

Here we show combinatorial computations for $(g,n)=(2,1)$ and $(3,1)$.

\begin{ex}[$g=2$, $n=1$]
One finds three stable graphs which correspond to decompositions \eqref{eq:cut_Riemann_surface} of $S_{2,1}$ without $g=0$ components as described in Figure~\ref{fig:stable_graph_super_21}.
\begin{figure}[t]\centering
 \includegraphics[width=100mm]{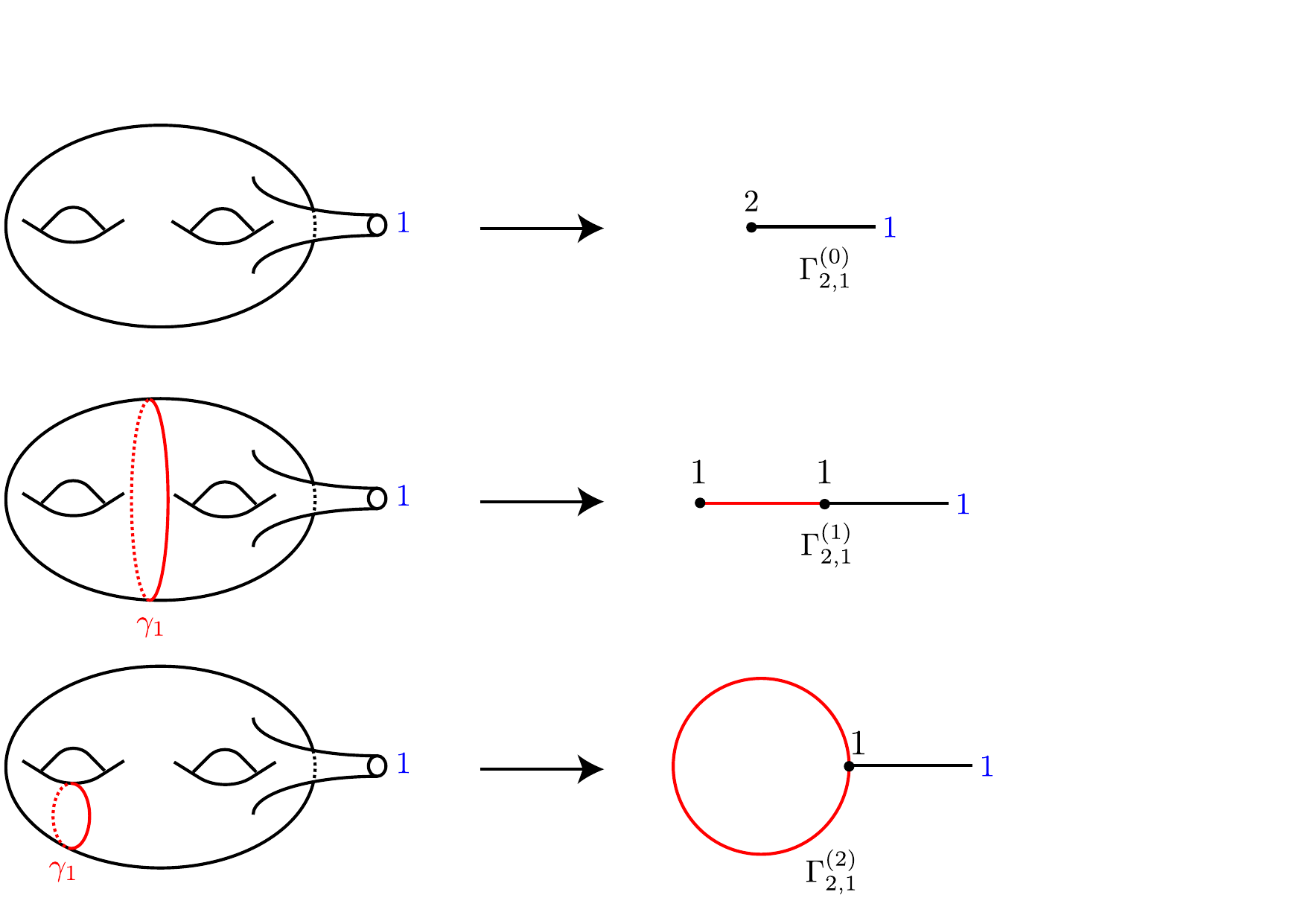}

\caption{Stable graphs $\Gamma_{2,1}^{(0)},\Gamma_{2,1}^{(1)},\Gamma_{2,1}^{(2)}\in\overline{G}_{2,1}$ obtained from the
decompositions of $S_{2,1}$ which do not involve $S_{0,n}$'s.}\label{fig:stable_graph_super_21}
\end{figure}
For the graph \smash{$\Gamma_{2,1}^{(0)}$}, a~contribution
in equation~\eqref{eq:normalization_minimal_super_volume} is
\begin{align*}
\mathrm{vol}^{(0)}_{2,1}=
\frac{1}{\big|\mathrm{Aut}\big(\Gamma_{2,1}^{(0)}\big)\big|}
(\varnothing)_{\Gamma^{(0)}_{2,1}}^{\mathrm{SM}(p)}
=\frac{9\pi^2s}{64}\left(1-\frac{1}{p^2}\right),
\qquad \text{where} \quad (\varnothing)_{\Gamma^{(0)}_{2,1}}^{\mathrm{SM}(p)}=V^{\mathrm{SM}(p)}_{2,1}(0),
\end{align*}
and a deformation parameter $s$ is introduced by $\pi^2 \to \pi^2 s$.
For~the graph \smash{$\Gamma_{2,1}^{(1)}$}, a contribution
in equation~\eqref{eq:normalization_minimal_super_volume} is
\begin{align*}
\mathrm{vol}^{(1)}_{2,1}=
\frac{1}{\big|\mathrm{Aut}\big(\Gamma_{2,1}^{(1)}\big)\big|}
(0)_{\Gamma^{(1)}_{2,1}}^{\mathrm{SM}(p)} \zeta(2)
=\frac{\pi^2}{384},
\end{align*}
where \smash{$(0)_{\Gamma^{(1)}_{2,1}}^{\mathrm{SM}(p)}$} is
the coefficient of $\ell^0$ in
\smash{$V^{\mathrm{SM}(p)}_{1,2}(\ell,0)\cdot V^{\mathrm{SM}(p)}_{1,1}(\ell)=1/64$}.
For the graph \smash{$\Gamma_{2,1}^{(2)}$}, a~contribution
in equation~\eqref{eq:normalization_minimal_super_volume} is
\begin{align*}
\mathrm{vol}^{(2)}_{2,1}=
\frac{1}{\big|\mathrm{Aut}\big(\Gamma_{2,1}^{(2)}\big)\big|} (0)_{\Gamma^{(2)}_{2,1}}^{\mathrm{SM}(p)} \zeta(2)
=\frac{\pi^2}{48},
\end{align*}
where $(0)_{\Gamma^{(2)}_{2,1}}^{\mathrm{SM}(p)}$ is
the coefficient of $\ell^0$ in
$V^{\mathrm{SM}(p)}_{1,3}(\ell,\ell,0)=1/4$, and $\mathrm{Aut}\big(\Gamma_{2,1}^{(2)}\big)=\mathbb{Z}/2\mathbb{Z}$.
Summing these three contributions, one obtains
\begin{align*}
&
V^{\mathrm{SM}(p)}_{2,1}\big[\mathsf{f}^{\mathrm{MV}}\big](0)
=\mathrm{vol}^{(0)}_{2,1}+\mathrm{vol}^{(1)}_{2,1}+\mathrm{vol}^{(2)}_{2,1}
=\frac{\pi^2}{128}
\left(
18s\left(1-\frac{1}{p^2}\right)+3
\right),
\end{align*}
which agrees with the constant term of the twisted volume polynomial $V^{\mathrm{SM}(p)}_{2,1}\big[\sfm\big]$
(see \eqref{lst_super_mst_twist_s}). 
\end{ex}

\begin{ex}[$g=3$, $n=1$]
One finds ten multicurves which decompose $S_{3,1}$ without $g=0$ components as described in Figure \ref{fig:stable_graph_super_31}.
\begin{figure}[t]\centering
 \includegraphics[width=120mm]{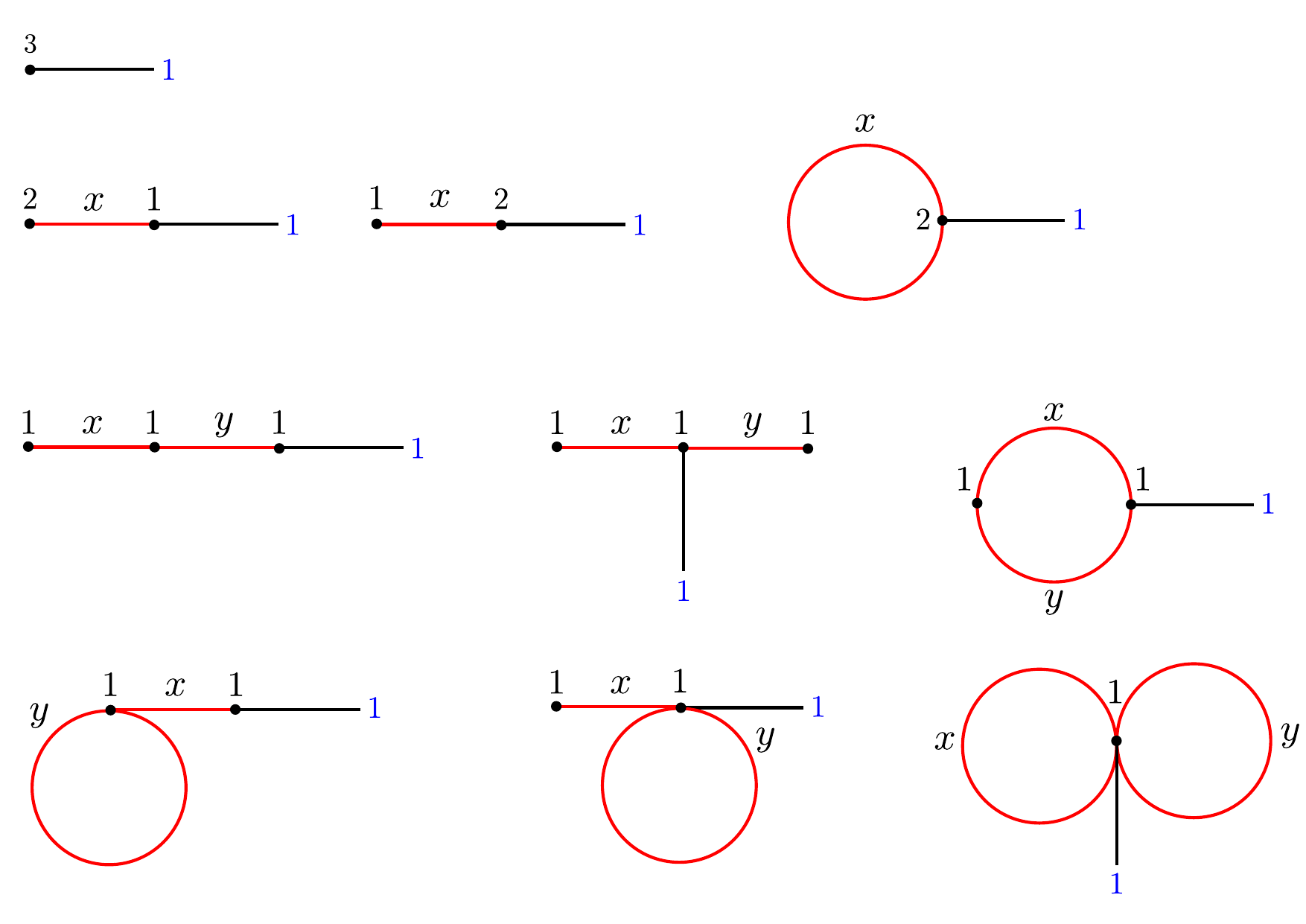}

\caption{Ten stable graphs obtained from the decompositions of $S_{3,1}$ which do not involve $S_{0,n}$'s.}\label{fig:stable_graph_super_31}
\end{figure}
Summing these ten contributions, one obtains
\begin{align*}
&V^{\mathrm{SM}(p)}_{3,1}\big[\mathsf{f}^{\mathrm{MV}}\big](0)
\\
&\qquad=V^{\mathrm{SM}(p)}_{3,1}(0)
+\left(
V^{\mathrm{SM}(p)}_{2,1}(x)\cdot V^{\mathrm{SM}(p)}_{1,2}(x,0)
+V^{\mathrm{SM}(p)}_{1,1}(x)\cdot V^{\mathrm{SM}(p)}_{2,2}(x,0)\right.\\
&\left.\qquad\quad{}
+\frac{1}{2}V^{\mathrm{SM}(p)}_{2,3}(x,x,0)
\right)\bigg|_{x^0}\zeta(2)
+\left(
V^{\mathrm{SM}(p)}_{2,1}(x)\cdot V^{\mathrm{SM}(p)}_{1,2}(x,0)
+V^{\mathrm{SM}(p)}_{1,1}(x)\cdot V^{\mathrm{SM}(p)}_{2,2}(x,0)\!
\right.\\
&\left.\qquad\quad{}+\frac{1}{2}V^{\mathrm{SM}(p)}_{2,3}(x,x,0)
\right)\bigg|_{x^2}3!\zeta(4)+\biggl(
V^{\mathrm{SM}(p)}_{1,1}(x)\cdot V^{\mathrm{SM}(p)}_{1,2}(x,y)\cdot V^{\mathrm{SM}(p)}_{1,2}(y,0)
\\
&\qquad\quad{}+\frac{1}{2}V^{\mathrm{SM}(p)}_{1,1}(x)V^{\mathrm{SM}(p)}_{1,3}(x,y,0)\cdot V^{\mathrm{SM}(p)}_{1,1}(y)
\\
&\qquad\quad{}
+\frac{1}{2}V^{\mathrm{SM}(p)}_{1,2}(x,y)V^{\mathrm{SM}(p)}_{1,3}(x,y,0)
+\frac{1}{2} V^{\mathrm{SM}(p)}_{1,3}(x,y,y)\cdot V^{\mathrm{SM}(p)}_{1,2}(x,0)
\\
&\qquad\quad{}
+\frac{1}{2}V^{\mathrm{SM}(p)}_{1,1}(x)\cdot V^{\mathrm{SM}(p)}_{1,4}(x,y,y,0)
+\frac{1}{2^3}V^{\mathrm{SM}(p)}_{1,5}(x,x,y,y,0)
\biggr)\bigg|_{x^0y^0}\zeta(2)^2
 \\
&\qquad=\frac{\pi^4}{1024}\left(
\left(1-\frac{1}{p^2}\right)
\left(6s^2\left(227-\frac{27}{p^2}\right)+255s\right)
+23\right),
\end{align*}
where $|_{x^ky^l}$ picks up coefficients of $x^ky^l$ in the polynomials.
This result agrees with the constant term of the twisted volume polynomial \smash{$V^{\mathrm{SM}(p)}_{3,1}\big[\sfm\big]$}
(see \eqref{lst_super_mst_twist_s}). 
\end{ex}

\section{CEO topological recursion}\label{sec:topological_recursion}

In this section, we apply the CEO topological recursion \cite{CE,EO2007}, which is a Laplace dual formulation of the ABO topological recursion,
to the 2D gravity models in Table \ref{tab:spectral_curve}.
In particular, in Section \ref{sec:twisting_tr} we provide a direct proof of
Theorem \ref{thm:Laplace_dual_GR} on the Laplace dual relation between
the ABO topological recursion with the Masur--Veech type twist
and the CEO topological recursion with the Masur--Veech type twist for the $(2,p)$ minimal string and the $(2,2p-2)$ minimal superstring.

\subsection{Formulation}\label{sec:def_tr}

We briefly review the formulation of the CEO topological recursion, and
describe a Laplace dual relation with the ABO topological recursion.

\begin{Def}\label{def:sp_curve}
A \emph{spectral curve} $\cC=(\Sigma;\sfx,\sfy,B)$ consists of a Riemann surface $\Sigma$, meromorphic functions $\sfx, \sfy\colon \Sigma \to {\IC}$ such that the zeros of ${\rm d}\sfx$
are different from the zeros of ${\rm d}\sfy$, and a~bidifferential $B$ on
$\Sigma^{\otimes 2}$.
\end{Def}

\begin{Def}[CEO topological recursion \cite{EO2007}]\label{def:top_rec}
For a spectral curve $\cC=(\Sigma;\sfx,\sfy,B)$ such that the zeros of ${\rm d}\sfx$ are simple,
the meromorphic multidifferentials $\omega_{g,n}(z_1, \dots, z_n)$,
$z_i \in \Sigma$, labeled by $g \ge 0$, $n \ge 1$ satisfying $2g-2+n \ge 0$,
are defined by the CEO topological recursion
\begin{align*}
\omega_{g,n}(z_1, \dots, z_n) ={}&
\sum_{\alpha \in \mathrm{Ram}}
\mathop{\mathrm{Res}}\limits_{w=\alpha}
K(z_1,w) \mathcal{R}\omega_{g,n}(w, z_K),
\end{align*}
where $K=\{2,\dots,n\}$,
$\mathrm{Ram}$ is the set of zeros of ${\rm d}\sfx$, and
$K(z,w)$ is the \emph{recursion kernel} defined~by
\begin{align}
K(z,w)=\frac{\int_{w}^{\overline{w}} B(\cdot, z)}
{2 \left(\sfy(w) {\rm d}\sfx(w)-\sfy(\overline{w}) {\rm d}\sfx(\overline{w})\right)},\label{rec_kernel}
\end{align}
and
\begin{align}
\mathcal{R}\omega_{g,n}(w, z_K)=
\omega_{g-1,n+1}(w, \overline{w}, z_K)
+ \mathop{\sum_{h+h'=g}}
\limits_{J \cup J'=K}^{\mathrm{no (0,1)}}
\omega_{h,1+|J|}(w, z_J) \omega_{h',1+|J'|}(\overline{w}, z_{J'}).
\label{rec_omega}
\end{align}
Here $\overline{w}$ is the conjugate point of $w$ near $\alpha \in \mathrm{Ram}$
such that $\overline{w}\ne w$ and $x(\overline{w})=x(w)$.
The sum in equation~\eqref{rec_omega} does not include $(g,n)=(0,1)$ part and
contains the bidifferential
\begin{align*}
\omega_{0,2}(z_1, z_2) = B(z_1, z_2),
\end{align*}
and for $J=\{i_1, \dots, i_{|J|}\} \subseteq K$,
$z_J=\{z_{i_1}, \dots, z_{i_{|J|}}\}$ and
$z_{J'}=\{z_{i_{|J|+1}}, \dots, z_{i_{n-1}}\}$.
\end{Def}

In this paper, we focus on a class of spectral curves $\cC=\big({\IP}^1;\sfx,\sfy,B\big)$ with coordinate functions
\begin{align}
\sfx=\sfx(z)=\frac12 z^2,
\qquad
\sfy=\sfy(z),
\qquad
z \in {\IP}^1,
\label{sp_curve}
\end{align}
which
has a simple ramification point only at $z=0$ (the solution to ${\rm d}\sfx(z)=0$)
and so ${\mathrm{Ram}=\{0\}}$.
$\sfy(z) \in {\IC}$ is a meromorphic function of $z$,
and $\cC$ admits the global Galois covering by the conjugation $\overline{z}=-z$.
The bidifferential $B(z_1,z_2)$ has a double pole at the diagonal locus $z_1=z_2$ such that
\begin{align}
B(z_1, z_2)=\frac{{\rm d}z_1 \otimes {\rm d}z_2}{(z_1-z_2)^2},
\qquad
z_1, z_2 \in {\IP}^1.\label{bergman}
\end{align}
For this class of spectral curves,
the recursion kernel for the CEO topological recursion is
\begin{align*}
K(z,w)=\frac{\int_{w}^{-w} B(\cdot, z)}{2w (\sfy(w)-\sfy(-w)) {\rm d}w}=\frac{(-1) {\rm d}z}{\big(z^2-w^2\big) (\sfy(w)-\sfy(-w)) {\rm d}w}.
\end{align*}
And we introduce correlation functions.
\begin{Def}\label{def:top_rec_correl}
The correlation functions $W_{g,n}(z_1, \dots, z_n)$ for $2g-2+n \ge 0$ are
\begin{align*}
W_{g,n}(z_1, \dots, z_n) \otimes_{i=1}^n {\rm d}z_i =
\omega_{g,n}(z_1, \dots, z_n).
\end{align*}
\end{Def}

From the general argument in \cite{Andersen:2017vyk,Andersen_GR},
one finds the following claim.

\begin{thm}[Laplace transform of the volume polynomial \cite{Andersen:2017vyk,Andersen_GR}]\label{thm:laplace_abo_ceo}
For the physical $2$D gravity models in Table {\rm \ref{tab:kernel}},
the correlation functions $W_{g,n}$ for $2g-2+n > 0$ are related to the volume polynomials $\VO_{g,n}$ in Definition {\rm\ref{def:geom_rec}} by Laplace transform
\begin{align}
W_{g,n}(z_1, \dots, z_n) =
\mathcal{L}\{\VO_{g,n}\}(z_1,\dots,z_n),
\label{laplace}
\end{align}
where the operator $\mathcal{L}$ acts on a function $f(z_1,\dots,z_n)$ as
\begin{align}
\mathcal{L}\{f\}(z_1,\dots,z_n):=
\int_{\IR_+^n} f(L_1,\dots,L_n)
\e^{-\sum_{i=1}^n z_iL_i} \prod_{i=1}^n L_i {\rm d}L_i.
\label{eq:Laplace}
\end{align}
In particular, for the expansion \eqref{gr_exp}, we find
\begin{align}
W_{g,n}(z_1, \dots, z_n) =
\sum_{a_1,\dots,a_n \ge 0} F^{(g)}_{a_1,\dots,a_n}
\prod_{i=1}^n \frac{1}{z_i^{2a_i+2}}.
\label{laplace_exp}
\end{align}
\end{thm}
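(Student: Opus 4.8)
The plan is to prove \eqref{laplace} by induction on $2g-2+n>0$, and then to obtain \eqref{laplace_exp} as an immediate corollary. The computation underlying the whole dictionary is the single-variable Laplace integral with measure $L\,dL$:
\begin{align}
\int_{\IR_+} e_{a}(L)\, \e^{-zL}\, L\,dL
=\frac{1}{(2a+1)!}\int_{\IR_+} L^{2a+1}\,\e^{-zL}\,dL
=\frac{1}{z^{2a+2}},
\nonumber
\end{align}
by the Gamma integral and \eqref{e_base}. Applying this factor by factor to the expansion \eqref{gr_exp} shows that $\mathcal{L}\{\VO_{g,n}\}(z_1,\ldots,z_n)=\sum_{a_1,\ldots,a_n\ge 0}F^{(g)}_{a_1,\ldots,a_n}\prod_{i=1}^n z_i^{-2a_i-2}$, so that \eqref{laplace_exp} follows from \eqref{laplace}. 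It therefore suffices to identify $W_{g,n}$ with this double sum. My strategy is to show that the CEO recursion \eqref{top_rec} for $W_{g,n}$, read off in the basis $\{z_i^{-2a_i-2}\}$, produces exactly the coefficient recursion \eqref{geom_rec_coeff} satisfied by the $F^{(g)}_{a_1,\ldots,a_n}$; the two families then coincide term by term.

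For the base cases I would compute $\omega_{0,3}$ and $\omega_{1,1}$ directly from \eqref{top_rec}, using the kernel \eqref{sym_diff_no_tw} and $\omega_{0,2}=B$ from \eqref{top_rec_ini}, \eqref{bergman}. Since $\omega_{g,n}$ carries poles only at the ramification point, I expand the resulting differentials in their principal parts at $z_i=0$ and read off the coefficients via \eqref{correl_tr}; these reproduce $\sfA^{a_1}_{a_2,a_3}$ and $\sfD^{a_1}$, which by \eqref{geom_rec_coeff_ini} are precisely $F^{(0)}_{a_1,a_2,a_3}$ and $F^{(1)}_{a_1}$. This matches the initial data of \eqref{geom_rec_coeff}.

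For the inductive step, assume the claim for all $(g',n')$ with $2g'-2+n'<2g-2+n$. The recursion \eqref{top_rec} is a residue at $w=0$ of $K(z_1,w)\,\mathcal{R}\omega_{g,n}(w,z_K)$, and by the inductive hypothesis each factor in $\mathcal{R}\omega_{g,n}$ in \eqref{rec_omega} is already written in the basis $\{w^{-2a-2}\}$ with coefficients built from the lower $F$'s. Extracting the residue then expresses the coefficient of $\prod_i z_i^{-2a_i-2}$ in $W_{g,n}$ as a linear combination of these lower coefficients, weighted by the Laurent coefficients of $K(z_1,w)$ in $w$; here the two summands of \eqref{rec_omega} feed respectively the $\sfB$-type and $\sfC$-type terms of \eqref{geom_rec_coeff}. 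The task is to verify that these residue weights are exactly the numbers $\sfB^{a_1}_{a_m,b}$ and $\sfC^{a_1}_{a,b}$ defined in \eqref{gr_coeff_bc}.

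The main obstacle is this kernel correspondence. Concretely, one must show that the residue extraction from $K(z_1,w)=\frac{(-1)\,dz_1}{(z_1^2-w^2)(\sfy(w)-\sfy(-w))\,dw}$ agrees with the Laplace transforms of the Mirzakhani kernels $R$ and $D$ reconstructed from $H(x,y)$ through \eqref{eq:DR_H_bosonic} in the bosonic case, or \eqref{eq:DR_H_super} in the supersymmetric case, together with \eqref{mirz_abo_data}. This reduces to a model-dependent identity relating the expansion of $1/(\sfy(w)-\sfy(-w))$ to the Laplace dual of the kernel function $H$. For the Weil-Petersson and Airy curves this identity is established in \cite{EO2007,Eynard:2007fi}, and for the $(2,p)$ minimal string and the $(2,2p-2)$ minimal superstring it is the content of Appendix \ref{sec:Derivation_Mirz}. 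Granting the kernel correspondence, the CEO recursion for $W_{g,n}$ and the coefficient recursion \eqref{geom_rec_coeff} for $\mathcal{L}\{\VO_{g,n}\}$ become identical, so induction on $2g-2+n$ with the matched base cases completes the proof of \eqref{laplace}, and hence of \eqref{laplace_exp}.
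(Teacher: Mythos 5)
Your proposal is correct and follows essentially the route the paper takes: the paper itself defers to the general CEO--ABO equivalence of \cite{Andersen:2017vyk,Andersen_GR} (matching the two coefficient recursions in the basis $\prod_i z_i^{-2a_i-2}$, with the base cases read off from the residue computation) and establishes the model-specific kernel correspondence for the $(2,p)$ minimal string and the $(2,2p-2)$ minimal superstring in Appendix \ref{sec:Derivation_Mirz}, which are exactly the two ingredients you isolate. One small imprecision to fix: in the inductive step it is not the two summands of \eqref{rec_omega} that separately produce the $\sfB$- and $\sfC$-type terms of \eqref{geom_rec_coeff}; rather, the $\omega_{0,2}(w,z_m)$ factors inside the product sum---which are \emph{not} covered by your inductive hypothesis, since they are not of the form $\sum_a F\, w^{-2a-2}$, and must instead be symmetrized and expanded in nonnegative powers of $w$ as in \eqref{w02_sym}---supply the $\sfB$-type terms, while $\omega_{g-1,n+1}$ together with the stable products supplies the $\sfC$-type terms.
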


\begin{rem}
In Appendix \ref{sec:Derivation_Mirz},
based on the Laplace dual relation \eqref{laplace}
we derive the kernel functions \eqref{eq:H_minimal} and \eqref{eq:H_super_minimal}, which are not known before in the literature,
of the $(2,p)$ minimal string and the $(2,2p-2)$ minimal superstring from
the spectral curves in equations~\eqref{sp_curve_mg}
and~\eqref{sp_curve_mg_super}.
Note that the ``local'' initial data $(\sfA^{a_1}_{a_2, a_3}, \sfB^{a_1}_{a_2, a_3}, \sfC^{a_1}_{a_2, a_3}, \sfD^{a_1})$ in equations~\eqref{geom_rec_coeff} and~\eqref{geom_rec_coeff_ini} are found in \cite{Andersen:2017vyk,Ooms:2019}.
\end{rem}

In Sections \ref{sec:examples_tr_boson} and \ref{sec:examples_tr_super},
we will discuss the correlation functions $W_{g,n}$ defined by the CEO topological recursion for the physical 2D gravity models
and show some computational results of $W_{g,n}$ explicitly for our use in other sections and our checks of consistency of computations.

\subsection{Bosonic models}\label{sec:examples_tr_boson}

Here we consider the bosonic models in Table \ref{tab:spectral_curve}.

\subsubsection{Airy and KdV}\label{subsec:tr_kw}

For the Airy spectral curve $\cC^{\mathrm{A}}=\big({\IP}^1;\sfx,\sfy^{\mathrm{A}},B\big)$ with
coordinate functions
\begin{align}
\sfx(z)=\frac12 z^2,
\qquad
\sfy^{\mathrm{A}}(z)=z,
\label{sp_curve_kw}
\end{align}
and the bidifferential $B$ in equation~\eqref{bergman},
the CEO topological recursion defines the meromorphic multidifferentials
\begin{align}
\omega_{g,n}^{\mathrm{A}}(z_1, \dots, z_n)=
\mathop{\mathrm{Res}}\limits_{w=0}
K^{\mathrm{A}}(z_1,w)
\mathcal{R}\omega_{g,n}^{\mathrm{A}}(w, z_K)
=
\sum_{a_1, \dots, a_n \ge 0} F^{\mathrm{A}(g)}_{a_1,\dots,a_n}
\otimes_{i=1}^n \frac{{\rm d}z_i}{z_i^{2a_i+2}},
\label{airy_mdiff}
\end{align}
which give the Airy volume coefficients $F^{\mathrm{A}(g)}_{a_1,\dots,a_n}$ in equation~\eqref{feg_kw}, where
\begin{align}
K^{\mathrm{A}}(z_1,w)=
\frac{(-1) {\rm d}z_1}{2\left(z_1^2-w^2\right)w{\rm d}w},
\label{rec_k_airy}
\end{align}
is the recursion kernel for $\cC^{\mathrm{A}}$.
From equation~\eqref{airy_mdiff}, some of the correlation functions are
\begin{align*}
&
W_{0,3}^{\mathrm{A}}(z_1, z_2, z_3)=
\prod_{i=1}^3 \frac{1}{z_i^2},
\qquad
W_{1,1}^{\mathrm{A}}(z_1)=
\frac{1}{8z_1^4},
\\
&
W_{0,4}^{\mathrm{A}}(z_1, \dots, z_4)=
\left(\sum_{i=1}^4 \frac{3}{z_i^2}\right)
\prod_{i=1}^4 \frac{1}{z_i^2},
\qquad
W_{1,2}^{\mathrm{A}}(z_1, z_2)=
\left(
\sum_{i=1}^2 \frac{5}{8 z_i^4}
+ \frac{3}{8 z_1^2 z_2^2}
\right)
\prod_{i=1}^2 \frac{1}{z_i^2},
\\
&
W_{0,5}^{\mathrm{A}}(z_1, \dots, z_5)=
\Bigg(
\sum_{i=1}^5 \frac{15}{z_i^4}
+ \sum_{1\le i<j \le 5} \frac{18}{z_i^2 z_j^2}
\Bigg)\prod_{i=1}^5 \frac{1}{z_i^2},
\\
&
W_{1,3}^{\mathrm{A}}(z_1, z_2, z_3)=
\Bigg(
\sum_{i=1}^3 \frac{35}{8 z_i^6}
+ \sum_{1\le i,j \le 3} \frac{15}{4 z_i^2 z_j^4}
+ \frac{9}{4 z_1^2 z_2^2 z_3^2}
\Bigg)
\prod_{i=1}^3 \frac{1}{z_i^2}, \qquad
W_{2,1}^{\mathrm{A}}(z_1)=
\frac{105}{128z_1^{10}}.
\end{align*}

We now introduce the KdV spectral curve which deforms
the Airy spectral curve.

\begin{Def}[KdV spectral curve]\label{def:kdv}
The KdV spectral curve $\cC^{\mathrm{KdV}}=\big({\IP}^1;\sfx,\sfy^{\mathrm{KdV}},B\big)$ is defined~by
\begin{align}
\sfx(z)=\frac12 z^2,
\qquad
\sfy^{\mathrm{KdV}}(z)=z + \sum_{a \ge 2}\sfu_{a} z^{a},
\label{sp_curve_kdv}
\end{align}
and the bidifferential $B$ in equation~\eqref{bergman},
where $\sfu_{a}$ are time variables.
The KdV spectral curve~$\cC^{\mathrm{KdV}}$ yields the Airy spectral curve $\cC^{\mathrm{A}}$ when $\sfu_{a}=0$.
\end{Def}

The correlation functions
\begin{align}
W_{g,n}^{\mathrm{KdV}}(z_1, \dots, z_n)=
\sum_{a_1, \dots, a_n \ge 0} F^{\mathrm{KdV}(g)}_{a_1,\dots,a_n}
\prod_{i=1}^n \frac{1}{z_i^{2a_i+2}},
\label{kdv_mdiff}
\end{align}
obtained from the CEO topological recursion for $\cC^{\mathrm{KdV}}$
obey the following proposition \cite{Dunin-Barkowski:2012kbi}.

\begin{prop}[\cite{Dunin-Barkowski:2012kbi}]\label{prop:kdv}
The coefficients \smash{$F^{\mathrm{KdV}(g)}_{a_1,\dots,a_n}$} in equation~\eqref{kdv_mdiff} are written in terms of
the~volume coefficients \smash{$F^{\mathrm{A}(g)}_{a_1,\dots,a_n}$} in equation~\eqref{airy_mdiff}
$($or equation~\eqref{feg_kw}$)$ as
\begin{align}
F^{\mathrm{KdV}(g)}_{a_1,\dots,a_n}=
\sum_{m \ge 0}\frac{(-1)^m}{m!}
\sum_{b_1,\dots,b_m \ge 2}
\Bigg(\prod_{j=1}^m \frac{\sfu_{2b_j-1}}{2b_j+1}\Bigg)
F^{\mathrm{A}(g)}_{a_1,\dots,a_n, b_1, \dots, b_m},
\label{kdv_airy}
\end{align}
where the sum over $m$ and $b_j$ satisfies
\begin{align}
\sum_{i=1}^n a_i = 3g-3+n + m - \sum_{j=1}^m b_j,
\label{hom_kdv}
\end{align}
by the homogeneity condition \eqref{hom_kw} for the volume coefficients.
\end{prop}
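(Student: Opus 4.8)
The plan is to exploit that the spectral curves $\cC^{\mathrm{A}}$ in \eqref{sp_curve_kw} and $\cC^{\mathrm{KdV}}$ in \eqref{sp_curve_kdv} share the same $\sfx=z^2/2$ and the same bidifferential $B$ in \eqref{bergman}, so that for $2g-2+n>0$ the two instances of the recursion \eqref{top_rec} differ only through the recursion kernel \eqref{sym_diff_no_tw}. First I would antisymmetrize $\sfy^{\mathrm{KdV}}$: since $w^a-(-w)^a$ vanishes for even $a$ and equals $2w^a$ for odd $a$, only odd times survive and
\[
\sfy^{\mathrm{KdV}}(w)-\sfy^{\mathrm{KdV}}(-w)=2w\Bigl(1+\sum_{b\ge 2}\sfu_{2b-1}\,w^{2b-2}\Bigr),
\]
whence $K^{\mathrm{KdV}}(z,w)=K^{\mathrm{A}}(z,w)\,\bigl(1+\sum_{b\ge 2}\sfu_{2b-1}w^{2b-2}\bigr)^{-1}$. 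This already explains why only the odd time variables $\sfu_{2b-1}$ ($b\ge 2$) can enter \eqref{kdv_airy}.

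The engine of the proof is a loop-insertion (variational) identity for the $\sfu$-dependence. Differentiating $\sfy^{\mathrm{KdV}}\,d\sfx$ gives $\partial_{\sfu_{2b-1}}(\sfy^{\mathrm{KdV}}d\sfx)=z^{2b}\,dz$, a meromorphic form with primitive $z^{2b+1}/(2b+1)$; differentiating the recursion \eqref{top_rec} in $\sfu_{2b-1}$ and resumming the resulting inhomogeneous recursion, I expect the standard Eynard--Orantin relation
\[
\frac{\partial}{\partial \sfu_{2b-1}}\,\omega_{g,n}^{\mathrm{KdV}}(z_1,\dots,z_n)=-\mathop{\mathrm{Res}}_{z_0=0}\frac{z_0^{2b+1}}{2b+1}\,\omega_{g,n+1}^{\mathrm{KdV}}(z_1,\dots,z_n,z_0),
\]
the overall sign being pinned down by a one-line check using $W_{1,1}^{\mathrm{A}}$ and $W_{1,2}^{\mathrm{A}}$ in \eqref{airy_tr_ex} (the $\sfu_3$-coefficient of $\omega_{1,1}^{\mathrm{KdV}}$ must equal $-\tfrac{1}{8}z_1^{-2}dz_1$). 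Because $\sfx$ and $B$ are held fixed, the insertion operator $\mathcal{I}_b[\cdot]:=-\mathop{\mathrm{Res}}_{z_0=0}\tfrac{z_0^{2b+1}}{2b+1}(\cdot)$ is $\sfu$-independent, and the $\mathcal{I}_b$ mutually commute since they act on distinct symmetric arguments of $\omega$; integrating the flow from the Airy point $\sfu=0$ then yields the exponentiated form $\omega_{g,n}^{\mathrm{KdV}}=\exp\bigl(\sum_{b\ge 2}\sfu_{2b-1}\mathcal{I}_b\bigr)\omega_{g,n}^{\mathrm{A}}$, that is, a sum over $m\ge 0$ with the factor $1/m!$, the sign $(-1)^m$, and the weights $1/(2b_j+1)$.

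To extract the coefficient form I would evaluate the residues with the elementary identity $\mathop{\mathrm{Res}}_{z_0=0}\tfrac{z_0^{2b+1}}{2b+1}\,\tfrac{dz_0}{z_0^{2c+2}}=\tfrac{1}{2b+1}\delta_{b,c}$, which turns each $\mathcal{I}_{b_j}$ into the replacement of the $(n+j)$-th expansion index by $b_j$ in the Airy coefficient \eqref{feg_kw}, producing exactly $F^{\mathrm{A}(g)}_{a_1,\dots,a_n,b_1,\dots,b_m}$ as in \eqref{kdv_airy}. The homogeneity condition \eqref{hom_kw} for the Airy coefficients then propagates to \eqref{hom_kdv}, and, crucially, forces $m\le 3g-3+n-\sum_i a_i$ because each $b_j\ge 2$; hence the series is a finite sum and all manipulations are legitimate.

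The main obstacle is establishing the insertion identity rigorously with the correct sign. If one prefers a self-contained route avoiding the variational formula, I would instead argue by induction on $2g-2+n$: substitute the candidate \eqref{kdv_airy} into $\mathcal{R}\omega_{g,n}$ in \eqref{rec_omega}, expand $K^{\mathrm{KdV}}=K^{\mathrm{A}}\sum_{l\ge 0}(-1)^l\phi(w)^l$ with $\phi(w)=\sum_{b\ge2}\sfu_{2b-1}w^{2b-2}$, and verify that distributing the inserted indices $b_1,\dots,b_m$ among the kernel factors and the lower-order correlators—together with the Airy recursion already satisfied by the $F^{\mathrm{A}(g)}$—reproduces the symmetric sum with its $1/m!$ and $(-1)^m$. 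Each surviving $\sfu$-insertion then traces back to one factor of $-\phi$ from some kernel in the recursion tree, which makes the sign $(-1)^m$ transparent; the genuinely delicate part is the bookkeeping of how the insertions split across the two factors of $\mathcal{R}$ and reassemble into the single symmetric coefficient.
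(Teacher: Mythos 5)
Your proposal is correct in substance, but it follows a genuinely different route from the paper's proof. The paper absorbs the ratio of recursion kernels into the even regular function $Y^{\mathrm{KdV}}(w)=2w/(\sfy^{\mathrm{KdV}}(w)-\sfy^{\mathrm{KdV}}(-w))$, uses the residue-exchange formula \eqref{key_f} to peel this off into a chain of ``Airy dilaton leaf'' operators as in \eqref{kdv_ceo_ind}, reinterprets each leaf as a weighted $(0,2)$-decoration at an extra indistinguishable marked point, and then runs an induction on $2g-2+n$ in which the leaves precisely compensate the unstable $(h,J)=(0,\emptyset),(g,K)$ terms; this is essentially your ``backup'' induction, and it is where the $1/m!$ and the distribution of insertions across the two factors of $\mathcal{R}\omega_{g,n}$ are handled. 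Your primary argument instead differentiates in the times $\sfu_{2b-1}$ and invokes the Eynard--Orantin deformation (loop-insertion) formula, then Taylor-expands around the Airy point; since $\sfx$ and $B$ are fixed and each $b_j\ge 2$ strictly lowers the degree allowed by \eqref{hom_kw}, the series terminates and the coefficient extraction via $\mathop{\mathrm{Res}}_{z_0=0}\frac{z_0^{2b+1}}{2b+1}\frac{dz_0}{z_0^{2c+2}}=\frac{1}{2b+1}\delta_{b,c}$ reproduces \eqref{kdv_airy} with the correct $(-1)^m$ (your check against $W^{\mathrm{A}}_{1,2}$ in \eqref{airy_tr_ex} is the right one and does give $-\frac18 z_1^{-2}dz_1$). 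What your route buys is brevity and transparency of the combinatorial factors; what it costs is that the entire analytic burden is shifted onto the variational formula, which you assert rather than prove. To close that gap you should either cite it precisely (it is Eynard--Orantin's variation-of-the-spectral-curve theorem, with the sign arising because the dual contour sits at $z=\infty$ and is moved to the unique branch point $z=0$ where $\omega_{g,n+1}$ has all its poles) or derive it from the recursion, and you should note that commutativity of the insertions rests on the symmetry of $\omega_{g,n+m}$ in its arguments, not merely on their acting ``on distinct arguments''; also the expression $\exp(\sum_b \sfu_{2b-1}\mathcal{I}_b)\omega^{\mathrm{A}}_{g,n}$ is an abuse of notation, since the $m$-th term acts on $\omega^{\mathrm{A}}_{g,n+m}$, though your subsequent use of it is correct.
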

\begin{proof}
The CEO topological recursion for the KdV spectral curve $\cC^{\mathrm{KdV}}$ gives
\begin{align}
\omega_{g,n}^{\mathrm{KdV}}(z_1, \dots, z_n)={}&
\mathop{\mathrm{Res}}\limits_{w=0}
\frac{(-1) {\rm d}z_1}{\big(z_1^2-w^2\big)
\big(\sfy^{\mathrm{KdV}}(w)-\sfy^{\mathrm{KdV}}(-w)\big){\rm d}w}
\mathcal{R}\omega_{g,n}^{\mathrm{KdV}}(w, z_K)\nonumber
\\
={}&
\mathop{\mathrm{Res}}\limits_{w=0}
K^{\mathrm{A}}(z_1,w)
Y^{\mathrm{KdV}}(w)
\mathcal{R}\omega_{g,n}^{\mathrm{KdV}}(w, z_K),
\label{kdv_ceo}
\end{align}
where $K^{\mathrm{A}}(z_1,w)$ is the recursion kernel \eqref{rec_k_airy} for the Airy spectral curve $\cC^{\mathrm{A}}$, and
\begin{align*}
Y^{\mathrm{KdV}}(w)
=\frac{2w}{\sfy^{\mathrm{KdV}}(w)-\sfy^{\mathrm{KdV}}(-w)}
=\sum_{m \ge 0} (-1)^m
\Bigg(\sum_{b \ge 2}\sfu_{2b-1} w^{2b-2}\Bigg)^m
=1 + \mathcal{O}\big(w^2\big)
\end{align*}
is a regular even function of $w$ at $w=0$.

A key formula to prove the proposition is \cite{Dunin-Barkowski:2012kbi},
\begin{align}
\mathop{\mathrm{Res}}\limits_{w_1=0}
\frac{(-1) Y(w_1) R(w_1)}{2\big(w_0^2-w_1^2\big)w_1}
& = 
\mathop{\mathrm{Res}}\limits_{w_1=0}
\big(\mathop{\mathrm{Res}}\limits_{w_2=w_1}
+\mathop{\mathrm{Res}}\limits_{w_2=- w_1}\big)
\frac{(-1) {\rm d}w_2}{2\big(w_0^2-w_2^2\big)w_2}
\frac{w_2^2 Y(w_2) R(w_1)}{\big(w_2^2-w_1^2\big)w_1}\nonumber
\\
& = 
\mathop{\mathrm{Res}}\limits_{w_2=0}
\frac{w_2^2 Y(w_2){\rm d}w_2}{\big(w_0^2-w_2^2\big)w_2}
\mathop{\mathrm{Res}}\limits_{w_1=0}
\frac{(-1) R(w_1)}{2\big(w_2^2-w_1^2\big)w_1},
\label{key_f}
\end{align}
where $Y(w)$ is a regular function of $w$ at $w=0$ and
$R(w)$ is a meromorphic differential of $w$.
Using this formula recursively, we rewrite equation~\eqref{kdv_ceo} as
\begin{align}
\omega_{g,n}^{\mathrm{KdV}}(z_1, \dots, z_n)={}&
\sum_{m \ge 0}(-1)^m
\mathop{\mathrm{Res}}\limits_{w_m=0}
\frac{\sfu(w_m) {\rm d}z_1}{\big(z_1^2-w_m^2\big)w_m}
\mathop{\mathrm{Res}}\limits_{w_{m-1}=0}
\frac{\sfu(w_{m-1}) {\rm d}w_m}{\big(w_m^2-w_{m-1}^2\big)w_{m-1}}\cdots\nonumber
\\
&
\times
\mathop{\mathrm{Res}}\limits_{w_1=0}
\frac{\sfu(w_1) {\rm d}w_2}{\big(w_2^2-w_1^2\big)w_1}
\mathop{\mathrm{Res}}\limits_{w=0}
K^{\mathrm{A}}(w_1,w)
\mathcal{R}\omega_{g,n}^{\mathrm{KdV}}(w, z_K),\label{kdv_ceo_ind}
\end{align}
where
\begin{align*}
\sfu(w)=
\sum_{b \ge 2}\sfu_{2b-1} w^{2b},
\end{align*}
and the residue operators
\begin{align*}
\mathop{\mathrm{Res}}\limits_{w_j=0}
\frac{\sfu(w_j) {\rm d}w}{\big(w^2-w_j^2\big)w_j},
\end{align*}
acting on meromorphic differentials of $w_j$ are referred to as the \emph{Airy dilaton leaves}.
Therefore, the KdV meromorphic multidifferentials $\omega_{g,n}^{\mathrm{KdV}}$
are regarded as
the Airy meromorphic multidifferentials $\omega_{g,n}^{\mathrm{A}}$
decorated by the Airy dilaton leaves.

To find such decorations,
we compare a meromorphic even differential
\[R(w)=\sum_{a} R_a {\rm d}w/w^{2a}\] of $w$
decorated by an Airy dilaton leaf,
\begin{align}
\mathop{\mathrm{Res}}\limits_{w_j=0}
\frac{\sfu(w_j) {\rm d}w R(w_j)}{\big(w^2-w_j^2\big)w_j}
=
\sum_{d \ge 0}\frac{{\rm d}w}{w^{2d+2}}
\sum_{b \ge 2}\sfu_{2b-1} R_{d+b},
\label{kdv_decoration_1}
\end{align}
with the $(g,n)=(0,2)$ part in the CEO topological recursion for $\cC^{\mathrm{A}}$,
\begin{align}
&
\mathop{\mathrm{Res}}\limits_{w_j=0}
K^{\mathrm{A}}(w,w_j)
\left(\frac{B(w_j,v)}{dv} R(-w_j)+\frac{B(-w_j,v)}{dv} R(w_j)\right)\nonumber
\\
&\qquad=
\mathop{\mathrm{Res}}\limits_{w_j=0}
\frac{R(w_j) {\rm d}w}{2\big(w^2-w_j^2\big)w_j}
\left(\frac{1}{(v-w_j)^2}+\frac{1}{(v+w_j)^2}\right)
=
\sum_{d \ge 0}\frac{{\rm d}w}{w^{2d+2}}
\sum_{b \ge 0} \frac{2b+1}{v^{2b+2}} R_{d+b}.
\label{kdv_decoration_2}
\end{align}
By this comparison, it is found that the decoration \eqref{kdv_decoration_1} of the Airy dilaton leaf is translated into a decorated $(g,n)=(0,2)$ part in the CEO topological recursion for $\cC^{\mathrm{A}}$.
In this translation, the Airy meromorphic differential acquires an extra marked point $v=1$,
and we find the sum in equation~\eqref{kdv_decoration_1} by replacing the sum in equation~\eqref{kdv_decoration_2} by a weighted sum such that
\begin{align*}
\sum_{b \ge 0} \frac{2b+1}{v^{2b+2}} R_{d+b}
\;\to\;
\sum_{b \ge 0} \frac{2b+1}{v^{2b+2}} \mathsf{wt}_{b}(\sfu) R_{d+b}
=
\sum_{b \ge 2}\sfu_{2b-1} R_{d+b},
\end{align*}
with weight factors
\begin{align}
\mathsf{wt}_{b}(\sfu)=
\begin{cases}
0
& \textrm{for}\ b = 0,1,
\\
\dfrac{\sfu_{2b-1}}{2b+1}
& \textrm{for}\ b \ge 2.
\label{dilaton_leaf_eff}
\end{cases}
\end{align}
In the following, we refer to this translation for the Airy dilaton leaf which decorates the ${(g,n)=(0,2)}$ part in the multidifferential
as the \emph{Airy translation}. In the following discussions,
we adopt the Airy translation to prove the formula \eqref{kdv_airy}
by the mathematical induction on $2g-2+n \ge 1$.

For $(g,n)=(0,3)$, equation~\eqref{kdv_ceo_ind} gives
\begin{align*}
\omega_{0,3}^{\mathrm{KdV}}(z_1, z_2, z_3)={}&
\sum_{m \ge 0}(-1)^m
\mathop{\mathrm{Res}}\limits_{w_m=0}
\frac{\sfu(w_m) {\rm d}z_1}{\big(z_1^2-w_m^2\big)w_m} \cdots
\mathop{\mathrm{Res}}\limits_{w_1=0}
\frac{\sfu(w_1) {\rm d}w_2}{\big(w_2^2-w_1^2\big)w_1}
\omega_{0,3}^{\mathrm{A}}(w_1, z_2, z_3)
\\
={}&
\omega_{0,3}^{\mathrm{A}}(z_1, z_2, z_3),
\end{align*}
where the Airy dilaton leaves do not contribute in this case.
For $(g,n)=(1,1)$, equation~\eqref{kdv_ceo_ind} gives
\begin{align*}
\omega_{1,1}^{\mathrm{KdV}}(z_1)& = 
\sum_{m \ge 0}(-1)^m
\mathop{\mathrm{Res}}\limits_{w_m=0}
\frac{\sfu(w_m) {\rm d}z_1}{\big(z_1^2-w_m^2\big)w_m} \cdots
\mathop{\mathrm{Res}}\limits_{w_1=0}
\frac{\sfu(w_1) {\rm d}w_2}{\big(w_2^2-w_1^2\big)w_1}
\omega_{1,1}^{\mathrm{A}}(w_1)
\\
& = 
\omega_{1,1}^{\mathrm{A}}(z_1)+
\sum_{a_1 \ge 0, b_1 \ge 2}
\frac{\sfu_{2b_1-1}}{2b_1+1}
F^{\mathrm{A}(1)}_{a_1,b_1} \frac{{\rm d}z_1}{z_1^{2a_1+2}},
\end{align*}
where only one Airy dilaton leaf contributes and the Airy translation is adopted.
Thus, the equations \eqref{kdv_airy} for $(g,n)=(0,3), (1,1)$ are obtained.

Next, under the assumption that the formula \eqref{kdv_airy} is correct for any $(g,n)$ with
$2g-2+n \le k$,
we consider equation~\eqref{kdv_ceo_ind} for $(g,n)$ with $2g-2+n=k+1$.
The coefficients of \[\otimes_{i=2}^n {\rm d}z_i/z_i^{2a_i+2}\]
in the factor $\mathcal{R}\omega_{g,n}^{\mathrm{KdV}}(w, z_K)$ in
equation~\eqref{kdv_ceo_ind} are rewritten
under this assumption as
\begin{align}
&
\sum_{a, b \ge 0}
\frac{(-1) {\rm d}w \otimes {\rm d}w}{w^{2a+2b+4}}
\Bigg(F^{\mathrm{KdV}(g-1)}_{a,b,a_2,\dots,a_n}
+
\mathop{\sum_{h+h'=g}'}
\limits_{J \subseteq K}
F^{\mathrm{KdV} (h)}_{a,a_{i_1},\dots,a_{i_{|J|}}}
F^{\mathrm{KdV} (h')}_{b,a_{i_{|J|+1}},\dots,a_{i_{n-1}}}\Bigg)\nonumber
\\
&\qquad=
\sum_{a, b \ge 0}
\frac{(-1) {\rm d}w \otimes {\rm d}w}{w^{2a+2b+4}}
\sum_{m \ge 0}\frac{(-1)^m}{m!}\sum_{b_1,\dots,b_m \ge 2}
\Bigg(\prod_{j=1}^m\mathsf{wt}_{b_j}(\sfu)\Bigg)\Bigg(F^{\mathrm{A}(g-1)}_{a,b,a_2,\dots,a_n, b_1, \dots, b_m}
\nonumber
\\
&\qquad\quad
+
\mathop{\sum_{h+h'=g}'}
\limits_{J \subseteq K}
\sum_{\ell=0}^m
\binom{m}{\ell}
F^{\mathrm{A} (h)}_{a,a_{i_1},\dots,a_{i_{|J|}}, b_1, \dots, b_{\ell}}
F^{\mathrm{A} (h')}_{b,a_{i_{|J|+1}},\dots,a_{i_{n-1}}, b_{\ell+1}, \dots, b_m}\Bigg),
\label{kdv_airy_step2}
\end{align}
where the sum $\sum_{h+h'=g}'$ does not include
$(h, J)=(0,\varnothing)$, $(g,K)$.
The symmetric factor $m!$ for the insertion of $m$ Airy dilaton leaves arises, since extra marked points at $v=1$ are indistinguishable.

Plugging equation~\eqref{kdv_airy_step2} into the right-hand side of equation~\eqref{kdv_ceo_ind}, we see that the Airy dilaton leaves in equation~\eqref{kdv_ceo_ind} compensate
the $(h, h', J)=(0,g,\varnothing)$, $(g,0,K)$ parts in
equation~\eqref{kdv_airy_step2} by Airy translations.
As a result, we obtain
\begin{align*}
&
\omega_{g,n}^{\mathrm{KdV}}(z_1, \dots, z_n)
\\
&
\qquad=
\sum_{a_1, \dots, a_n \ge 0} \otimes_{i=1}^n \frac{{\rm d}z_i}{z_i^{2a_i+2}}
\sum_{m \ge 0}\frac{(-1)^m}{m!}\sum_{b_1,\dots,b_m \ge 2}
\Bigg(\prod_{j=1}^m\mathsf{wt}_{b_j}(\sfu)\Bigg)
\\
&\
\qquad\quad{}\times
\Bigg(F^{\mathrm{A}(g-1)}_{a,b,a_2,\dots,a_n, b_1, \dots, b_m}
+
\sum_{\substack{h+h'=g\\ J \subseteq K\\ 0 \le \ell \le m}}''
\binom{m}{\ell}
F^{\mathrm{A} (h)}_{a,a_{i_1},\dots,a_{i_{|J|}}, b_1, \dots, b_{\ell}}
F^{\mathrm{A} (h')}_{b,a_{i_{|J|+1}},\dots,a_{i_{n-1}}, b_{\ell+1}, \dots, b_m}\Bigg)
\\
&
\qquad=
\sum_{a_1, \dots, a_n \ge 0} \otimes_{i=1}^n \frac{{\rm d}z_i}{z_i^{2a_i+2}}
\sum_{m \ge 0}\frac{(-1)^m}{m!}\sum_{b_1,\dots,b_m \ge 2}
\Bigg(\prod_{j=1}^m \frac{\sfu_{2b_j-1}}{2b_j+1}\Bigg)
F^{\mathrm{A}(g)}_{a_1,\dots,a_n, b_1, \dots, b_m},
\end{align*}
where the sum $\sum_{h+h'=g}''$ does not include
$(h, J, \ell)=(0,\varnothing, 0)$, $(g,K, m)$.
Thus, the induction is completed and the claim is proved.
\end{proof}

\subsubsection{Weil--Petersson volumes}\label{subsec:tr_mir}

For the Weil--Petersson spectral curve $\cC^{\mathrm{WP}}=\big({\IP}^1;\sfx,\sfy^{\mathrm{WP}},B\big)$
defined by \cite{Eynard:2007fi},
\begin{align}
\sfx(z)=\frac12 z^2,
\qquad
\sfy^{\mathrm{WP}}(z)=\frac{1}{2\pi} \sin(2\pi z)
= \sum_{a \ge 1} \frac{\bigl(-2\pi^2\bigr)^{a-1}}{(2a-1)!! (a-1)!} z^{2a-1},
\label{sp_curve_mir}
\end{align}
and the bidifferential $B$ in equation~\eqref{bergman},
the CEO topological recursion computes
the Weil--Petersson volume coefficients \eqref{feg_mir},
which give the Weil--Petersson volumes $\VO_{g,n}^{\mathrm{WP}}$ in equation~\eqref{mir_volume}, by
\begin{align}
W_{g,n}^{\mathrm{WP}}(z_1, \dots, z_n)=
\sum_{a_1, \dots, a_n \ge 0} F^{\mathrm{WP}(g)}_{a_1,\dots,a_n}
\prod_{i=1}^n \frac{1}{z_i^{2a_i+2}}.
\label{wp_mdiff}
\end{align}
The coordinate function $\sfy^{\mathrm{WP}}(z)$ in equation~\eqref{sp_curve_mir} is
found from a specialization\footnote{This specialization is also found in the physics literatures \cite{Dijkgraaf:2018vnm,Okuyama:2019xbv}.}
of the coordinate function $\sfy^{\mathrm{KdV}}(z)$ in equation~\eqref{sp_curve_kdv} of the KdV spectral curve as
\begin{align}
\sfu_{2a}=0,\qquad
\sfu_{2a+1}=\frac{\bigl(-2\pi^2\bigr)^a}{(2a+1)!! a!},
\label{sp_kdv_wp}
\end{align}
and Proposition \ref{prop:kdv} implies a formula
\begin{align*}
F^{\mathrm{WP}(g)}_{a_1,\dots,a_n}=
\sum_{m \ge 0}\frac{(-1)^m}{m!}
\sum_{b_1,\dots,b_m \ge 2}
\left(\prod_{j=1}^m \frac{\bigl(-2\pi^2\bigr)^{b_j-1}}{(2b_j+1)!! (b_j-1)!}\right)
F^{\mathrm{A}(g)}_{a_1,\dots,a_n, b_1, \dots, b_m}
\end{align*}
with the condition \eqref{hom_kdv}.

\begin{rem}\label{rem:fugacity_tr_wp}
The deformation parameter $s$ in Remark \ref{rem:fugacity_kappa}
is implemented to the spectral curve~$\cC^{\mathrm{WP}}$ by
changing $\pi \to \pi \sqrt{s}$ for the coordinate function $\sfy^{\mathrm{KdV}}(z)$ in equation~\eqref{sp_curve_mir}.
\end{rem}

\subsubsection[(2,p) minimal string]{$\boldsymbol{(2,p)}$ minimal string}\label{subsec:tr_mg}

The spectral curve for the $(2,p)$ minimal string is found from the disk partition function with the FZZT boundary condition \cite{Fateev:2000ik,Teschner:2000md}.
For an odd positive integer $p$,
the $(2,p)$ minimal string spectral curve
$\cC^{\mathrm{M}(p)}=\big({\IP}^1;\sfx,\sfy^{\mathrm{M}(p)},B\big)$ is defined by
\begin{align}
&\sfx(z)=\frac12 z^2,\nonumber
\\
&\sfy^{\mathrm{M}(p)}(z)= \frac{(-1)^{\frac{p-1}{2}}}{2\pi} T_p\biggl(\frac{2\pi}{p}z\biggr)
=\frac{1}{2\pi} \sin\left(\frac{p}{2} \arccos\left(1-\frac{8\pi^2z^2}{p^2}\right)\right)\nonumber
\\
&\phantom{\sfy^{\mathrm{M}(p)}(z)}{}= \sum_{a=1}^{\frac{p+1}{2}} \frac{\bigl(-2\pi^2\bigr)^{a-1}}{(2a-1)!! (a-1)!}
\prod_{i=1}^{a-1}\left(1-\frac{(2i-1)^2}{p^2}\right) z^{2a-1},
\label{sp_curve_mg}
\end{align}
and the bidifferential $B$ in equation~\eqref{bergman},
where $T_p(z)$ denotes the Chebyshev polynomial of the first kind defined by $
T_p(\cos \theta)=\cos(p \theta)$.
The minimal string spectral curve $\cC^{\mathrm{M}(p)}$
interpolates the Airy spectral curve $\cC^{\mathrm{A}}$ and
the Weil--Petersson spectral curve $\cC^{\mathrm{WP}}$ by
\begin{align*}
\sfy^{\mathrm{M}(1)} (z) = \sfy^{\mathrm{A}}(z)\
\textrm{in equation~\eqref{sp_curve_kw}},
\qquad
\sfy^{\mathrm{M}(\infty)}(z) = \sfy^{\mathrm{WP}}(z)\
\textrm{in equation~\eqref{sp_curve_mir}}.
\end{align*}
The coordinate function $\sfy^{\mathrm{M}(p)}(z)$ in equation~\eqref{sp_curve_mg} is
found from a specialization of the coordinate function
$\sfy^{\mathrm{KdV}(p)}(z)$ in equation~\eqref{sp_curve_kdv} of the KdV spectral curve as
\begin{align}
\begin{cases}
\sfu_{2a}=0\
& \textrm{for}\ a \ge 1,
\\
\displaystyle \sfu_{2a+1}=
\frac{(-2\pi^2)^a}{(2a+1)!! a!}
\prod_{i=1}^{a}\left(1-\frac{(2i-1)^2}{p^2}\right)\
& \textrm{for}\ 1 \le a \le \dfrac{p-1}{2},
\\
\sfu_{2a+1}=0\
& \textrm{for}\ a \ge \dfrac{p+1}{2}.
\label{sp_kdv_ms}
\end{cases}
\end{align}
Proposition \ref{prop:kdv} then implies that
the $(2,p)$ minimal string volume coefficients in the correlation functions
\begin{align}
W_{g,n}^{\mathrm{M}(p)}(z_1, \dots, z_n)=
\sum_{a_1, \dots, a_n \ge 0} F^{\mathrm{M}(p) (g)}_{a_1,\dots,a_n}
\prod_{i=1}^n \frac{1}{z_i^{2a_i+2}},
\label{mst_mdiff}
\end{align}
obey a formula
\begin{align}
F^{\mathrm{M}(p) (g)}_{a_1,\dots,a_n}
={}&
\sum_{m \ge 0}\frac{(-1)^m}{m!}
\sum_{b_1,\dots,b_m \ge 2}
\Bigg(\prod_{j=1}^m
\frac{\bigl(-2\pi^2\bigr)^{b_j-1}}{(2b_j+1)!! (b_j-1)!}
\prod_{i=1}^{b_j-1}\left(1-\frac{(2i-1)^2}{p^2}\right)
\Bigg)\nonumber\\&\times
F^{\mathrm{A}(g)}_{a_1,\dots,a_n, b_1, \dots, b_m},\label{mst_airy}
\end{align}
with the condition \eqref{hom_kdv}.

\subsection{Supersymmetric models}\label{sec:examples_tr_super}

Here we consider the supersymmetric models in Table \ref{tab:spectral_curve}.

\subsubsection{Bessel and BGW}\label{subsec:tr_be}

For the Bessel spectral curve $\cC^{\mathrm{B}}=\big({\IP}^1;\sfx,\sfy^{\mathrm{B}},B\big)$ with coordinate functions \cite{Do:2016odu},
\begin{align}
\sfx(z)=\frac12 z^2,
\qquad
\sfy^{\mathrm{B}}(z)=\frac{1}{z},
\label{sp_curve_be}
\end{align}
and the bidifferential $B$ in equation~\eqref{bergman},
the CEO topological recursion defines the meromorphic multidifferentials
\begin{align}
\omega_{g,n}^{\mathrm{B}}(z_1, \dots, z_n)=
\mathop{\mathrm{Res}}\limits_{w=0}
K^{\mathrm{B}}(z_1,w)
\mathcal{R}\omega_{g,n}^{\mathrm{B}}(w, z_K)=
\sum_{a_1, \dots, a_n \ge 0} F^{\mathrm{B}(g)}_{a_1,\dots,a_n}
\otimes_{i=1}^n \frac{{\rm d}z_i}{z_i^{2a_i+2}},
\label{bessel_mdiff}
\end{align}
and the Bessel volume coefficients $F^{\mathrm{B}(g)}_{a_1,\dots,a_n}$ in
equation~\eqref{feg_be} are obtained,
where
\begin{align}
K^{\mathrm{B}}(z_1,w)=
\frac{(-1) w {\rm d}z_1}{2\left(z_1^2-w^2\right){\rm d}w},
\label{rec_k_bessel}
\end{align}
is the recursion kernel for the CEO topological recursion on the Bessel spectral curve $\cC^{\mathrm{B}}$.
Here, note the relation $\sfy^{\mathrm{B}}(z)=\partial_{\sfx} \sfy^{\mathrm{A}}(z)$
with the coordinate function $\sfy^{\mathrm{A}}(z)$ in equation~\eqref{sp_curve_kw} of
the Airy spectral curve.
From equation~\eqref{bessel_mdiff}, some of the correlation functions are
\begin{align*}
&
W_{0,n}^{\mathrm{B}}(z_1, \dots, z_n)=0,
\qquad
W_{1,n}^{\mathrm{B}}(z_1, \dots, z_n)=
\frac{(n-1)!}{8}
\prod_{i=1}^n \frac{1}{z_i^2},
\\
&
W_{2,1}^{\mathrm{B}}(z_1)=
\frac{9}{128 z_1^4},
\qquad
W_{2,2}^{\mathrm{B}}(z_1,z_2)=
\left(
\sum_{i=1}^2 \frac{27}{128 z_i^2}
\right)
\prod_{i=1}^2 \frac{1}{z_i^2},
\\
&
W_{2,3}^{\mathrm{B}}(z_1,z_2,z_3)=
\left(
\sum_{i=1}^3 \frac{27}{32 z_i^2}
\right)
\prod_{i=1}^3 \frac{1}{z_i^2},
\qquad
W_{3,1}^{\mathrm{B}}(z_1)=
\frac{225}{1024 z_1^6},
\\
&
W_{2,4}^{\mathrm{B}}(z_1,\dots,z_4)=
\left(
\sum_{i=1}^4 \frac{135}{32 z_i^2}
\right)
\prod_{i=1}^4 \frac{1}{z_i^2},
\\
&
W_{3,2}^{\mathrm{B}}(z_1,z_2)=
\left(
\sum_{i=1}^2 \frac{1125}{1024 z_i^4}
+ \frac{567}{512 z_1^2 z_2^2}
\right)
\prod_{i=1}^2 \frac{1}{z_i^2}.
\end{align*}

Let us introduce the BGW spectral curve which deforms the Bessel spectral curve.

\begin{Def}[BGW spectral curve]\label{def:bgw}
The BGW spectral curve $\cC^{\mathrm{BGW}}=\big({\IP}^1;\sfx,\sfy^{\mathrm{BGW}},B\big)$ is
defined by
\begin{align}
\sfx(z)=\frac12 z^2,
\qquad
\sfy^{\mathrm{BGW}}(z)= \frac{1}{z} + \sum_{a \ge 0}\sfv_{a} z^{a},
\label{sp_curve_bgw}
\end{align}
and the bidifferential $B$ in equation~\eqref{bergman},
where $\sfv_{a}$ are time variables.
The BGW spectral curve~$\cC^{\mathrm{BGW}}$ for $\sfv_{a}=0$ yields the Bessel spectral curve~$\cC^{\mathrm{B}}$.
\end{Def}

Similar to Proposition \ref{prop:kdv}, the correlation functions
\begin{align}
W_{g,n}^{\mathrm{BGW}}(z_1, \dots, z_n)=
\sum_{a_1, \dots, a_n \ge 0} F^{\mathrm{BGW}(g)}_{a_1,\dots,a_n}
\prod_{i=1}^n \frac{1}{z_i^{2a_i+2}},
\label{bgw_mdiff}
\end{align}
obtained from the CEO topological recursion for $\cC^{\mathrm{BGW}}$
obey the following proposition.

\begin{prop}\label{prop:bgw}
The coefficients \smash{$F^{\mathrm{BGW}(g)}_{a_1,\dots,a_n}$} in equation~\eqref{bgw_mdiff} are written in terms of
the volume coefficients \smash{$F^{\mathrm{B}(g)}_{a_1,\dots,a_n}$} in equation~\eqref{bessel_mdiff} $($or equation~\eqref{feg_be}$)$ as
\begin{align}
F^{\mathrm{BGW}(g)}_{a_1,\dots,a_n}=
\sum_{m \ge 0}\frac{(-1)^m}{m!}
\sum_{b_1,\dots,b_m \ge 1}
\Bigg(\prod_{j=1}^m \frac{\sfv_{2b_j-1}}{2b_j+1}\Bigg)
F^{\mathrm{B}(g)}_{a_1,\dots,a_n, b_1, \dots, b_m},
\label{bgw_bessel}
\end{align}
where the sum over $m$ and $b_j$ satisfies
\begin{align}
\sum_{i=1}^n a_i = g-1 - \sum_{j=1}^m b_j,
\label{hom_bgw}
\end{align}
by the homogeneity condition \eqref{hom_be}.
\end{prop}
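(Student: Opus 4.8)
The plan is to follow the proof of Proposition \ref{prop:kdv} step by step, substituting the Bessel data for the Airy data everywhere. The starting point is to factor the BGW recursion kernel through the Bessel one. Comparing \eqref{sp_curve_bgw} with \eqref{rec_k_bessel} and using $\sfy^{\mathrm{BGW}}(w)-\sfy^{\mathrm{BGW}}(-w)=\tfrac{2}{w}+2\sum_{b\ge1}\sfv_{2b-1}w^{2b-1}$, one rewrites the CEO recursion for $\cC^{\mathrm{BGW}}$ as
\begin{align}
\omega_{g,n}^{\mathrm{BGW}}(z_1,\ldots,z_n)=\mathop{\mathrm{Res}}\limits_{w=0}K^{\mathrm{B}}(z_1,w)\,Y^{\mathrm{BGW}}(w)\,\mathcal{R}\omega_{g,n}^{\mathrm{BGW}}(w,z_K),\qquad Y^{\mathrm{BGW}}(w)=\frac{1}{1+\sum_{b\ge1}\sfv_{2b-1}w^{2b}},
\end{align}
where $Y^{\mathrm{BGW}}(w)=1+\mathcal{O}(w^2)$ is a regular even function of $w$ at $w=0$. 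This is the exact analogue of eq.\eqref{kdv_ceo}, the only structural change being that the pure recursion kernel is now $K^{\mathrm{B}}$ in \eqref{rec_k_bessel} rather than $K^{\mathrm{A}}$ in \eqref{rec_k_airy}, i.e.\ the two differ by a factor $w^2$.

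Next I would prove a Bessel version of the key formula \eqref{key_f}, obtained by the substitution dictated by $K^{\mathrm{B}}=w^2K^{\mathrm{A}}$, and apply it recursively to expand $Y^{\mathrm{BGW}}$ into a product of \emph{Bessel dilaton leaves}
\begin{align}
\mathop{\mathrm{Res}}\limits_{w_j=0}\frac{\sfv(w_j)\,w_j\,dw}{w^2-w_j^2},\qquad \sfv(w)=\sum_{b\ge1}\sfv_{2b-1}w^{2b},
\end{align}
which decorate the Bessel multidifferentials $\omega_{g,n}^{\mathrm{B}}$ of \eqref{bessel_mdiff}. In parallel with eqs.\eqref{kdv_decoration_1}--\eqref{kdv_decoration_2}, the heart of the argument is the \emph{Bessel translation}: acting on an even differential $R(w)=\sum_a R_a\,dw/w^{2a}$, one such leaf produces $\sum_{d\ge0}\frac{dw}{w^{2d+2}}\sum_{b\ge1}\sfv_{2b-1}R_{d+b+1}$, while the $(0,2)$ part of the Bessel recursion (built from the same Bergman kernel $B$, with an extra marked point specialized to $v=1$) produces $\sum_{d\ge0}\frac{dw}{w^{2d+2}}\sum_{b\ge0}\frac{2b+1}{v^{2b+2}}R_{d+b+1}$. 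Matching these two gives the effective weight $\mathsf{wt}_b^{\mathrm{BGW}}(\sfv)=\sfv_{2b-1}/(2b+1)$ for $b\ge1$ and $0$ for $b=0$, the Bessel counterpart of \eqref{dilaton_leaf_eff}.

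The main obstacle is precisely this bookkeeping of the index shift. The extra factor $w^2$ in $K^{\mathrm{B}}$ shifts both the leaf and the $(0,2)$ decoration so that the matched coefficient is $R_{d+b+1}$ in each place, rather than $R_{d+b}$ as in the Airy case; one must check that the denominator $(2b+1)$ survives unchanged and that the weight becomes nonvanishing already at $b=1$ instead of $b=2$. This single unit shift is the reflection of the fact that the Bessel base $\sfy^{\mathrm{B}}=1/z$ sits one order below the Airy base $\sfy^{\mathrm{A}}=z$, and it is the only place where the calculation genuinely departs from that of Proposition \ref{prop:kdv}.

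With the Bessel translation in hand, I would close the argument by induction on $2g-2+n\ge1$, exactly as in \eqref{kdv_ceo_ind}--\eqref{KdV_leave}: the base case is $(g,n)=(1,1)$, where $\VO_{0,n}^{\mathrm{B}}=0$ from \eqref{bessel_tr_ex} ensures that no $(0,\bullet)$ terms complicate the splitting of $\mathcal{R}$; the Bessel dilaton leaves compensate the excluded $(h,J)=(0,\emptyset),(g,K)$ contributions by the Bessel translations; and the symmetric factor $m!$ arises because the $m$ extra marked points are indistinguishable. The homogeneity constraint \eqref{hom_bgw} then follows immediately from \eqref{hom_be} applied to $F^{\mathrm{B}(g)}_{a_1,\ldots,a_n,b_1,\ldots,b_m}$, namely $\sum_i a_i+\sum_j b_j=g-1$.
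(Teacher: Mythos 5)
Your proposal is correct and follows essentially the same route as the paper: factor the BGW kernel as $K^{\mathrm{B}}\cdot Y^{\mathrm{BGW}}$ with $Y^{\mathrm{BGW}}(w)=1+\mathcal{O}(w^2)$ even and regular, expand into Bessel dilaton leaves via the analogue of \eqref{key_f}, translate each leaf into a weighted $(0,2)$-decoration, and induct on $2g-2+n$. The only difference is one of explicitness — you work out the index shift $R_{d+b}\to R_{d+b+1}$ and verify that the weight $\sfv_{2b-1}/(2b+1)$ (nonvanishing from $b=1$) survives it, a detail the paper compresses into ``repeating the same analysis as the proof of Proposition \ref{prop:kdv}'' — and your computation of that shift is right.
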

\begin{proof}
The statement can be shown in the parallel way as the proof of Proposition \ref{prop:kdv}.
The CEO topological recursion for the BGW spectral curve $\cC^{\mathrm{BGW}}$ gives
\begin{align}
\omega_{g,n}^{\mathrm{BGW}}(z_1, \dots, z_n)={}&
\mathop{\mathrm{Res}}\limits_{w=0}
\frac{(-1) {\rm d}z_1}{\big(z_1^2-w^2\big)
\left(\sfy^{\mathrm{BGW}}(w)-\sfy^{\mathrm{BGW}}(-w)\right){\rm d}w}
\mathcal{R}\omega_{g,n}^{\mathrm{BGW}}(w, z_K)\nonumber
\\
={}&
\mathop{\mathrm{Res}}\limits_{w=0}
K^{\mathrm{B}}(z_1,w)
Y^{\mathrm{BGW}}(w)
\mathcal{R}\omega_{g,n}^{\mathrm{BGW}}(w, z_K),
\label{bgw_ceo}
\end{align}
where $K^{\mathrm{B}}(z_1,w)$ is the recursion kernel \eqref{rec_k_bessel}
for the Bessel spectral curve $\cC^{\mathrm{B}}$, and
\begin{align*}
Y^{\mathrm{BGW}}(w)
=\frac{2}{w \big(\sfy^{\mathrm{BGW}}(w)-\sfy^{\mathrm{BGW}}(-w)\big)}
=\sum_{m \ge 0} (-1)^m
\Bigg(\sum_{b \ge 1}\sfv_{2b-1} w^{2b}\Bigg)^m
=1 + \mathcal{O}\big(w^2\big),
\end{align*}
is a regular even function of $w$ around $w=0$.
Using the formula \eqref{key_f} recursively, we rewrite equation~\eqref{bgw_ceo} as
\begin{align*}
\omega_{g,n}^{\mathrm{BGW}}(z_1, \dots, z_n)={}&
\sum_{m \ge 0}(-1)^m
\mathop{\mathrm{Res}}\limits_{w_m=0}
\frac{\sfv(w_m) w_m {\rm d}z_1}{\big(z_1^2-w_m^2\big)}
\mathop{\mathrm{Res}}\limits_{w_{m-1}=0}
\frac{\sfv(w_{m-1}) w_{m-1} {\rm d}w_m}{\big(w_m^2-w_{m-1}^2\big)}\cdots
\\
&
 \times
\mathop{\mathrm{Res}}\limits_{w_1=0}
\frac{\sfv(w_1) w_1 {\rm d}w_2}{\big(w_2^2-w_1^2\big)}
\mathop{\mathrm{Res}}\limits_{w=0}
K^{\mathrm{B}}(w_1,w) \mathcal{R}\omega_{g,n}^{\mathrm{BGW}}(w, z_K),
\end{align*}
where
\begin{align*}
\sfv(w)=
\sum_{b \ge 1}\sfv_{2b-1} w^{2b},
\end{align*}
and the residue operators
\begin{align*}
\mathop{\mathrm{Res}}\limits_{w_j=0}
\frac{\sfv(w_j) w_j {\rm d}w}{\big(w^2-w_j^2\big)},
\end{align*}
acting on meromorphic differentials of $w_j$ are
referred to as the \emph{Bessel dilaton leaves}.
In such a way, the BGW meromorphic multidifferentials $\omega_{g,n}^{\mathrm{BGW}}$
are obtained as
the Bessel meromorphic multidifferentials $\omega_{g,n}^{\mathrm{B}}$
decorated by the Bessel dilaton leaves.
Just like the Airy translation for an Airy dilaton leaf which decorates $(g,n)=(0,2)$ part in
the CEO topological recursion for~$\cC^{\mathrm{A}}$ with weights \eqref{dilaton_leaf_eff},
a Bessel dilaton leaf is translated into a decoration of the $(g,n)=(0,2)$ part in
the CEO topological recursion for $\cC^{\mathrm{B}}$ with weights
\begin{align*}
\begin{cases}
0
& \textrm{for}\ b = 0,
\\
\dfrac{\sfv_{2b-1}}{2b+1}
& \textrm{for}\ b \ge 1.
\end{cases}
\end{align*}
Repeating the same analysis as the proof of Proposition \ref{prop:kdv},
we find that equation~\eqref{bgw_bessel} holds.\looseness=1
\end{proof}

\subsubsection{Super Weil--Petersson volumes}\label{subsec:tr_sw}

For the super Weil--Petersson spectral curve $\cC^{\mathrm{SWP}}=\big({\IP}^1;\sfx,\sfy^{\mathrm{SWP}},B\big)$
defined by \cite{Norbury:2020vyi,Stanford:2019vob},
\begin{align}
\sfx(z)=\frac12 z^2,
\qquad
\sfy^{\mathrm{SWP}}(z)=\frac{1}{z} \cos(2\pi z)
= \sum_{a \ge 0} \frac{\bigl(-2\pi^2\bigr)^{a}}{(2a-1)!! a!} z^{2a-1},
\label{sp_curve_sw}
\end{align}
and the bidifferential $B$ in equation~\eqref{bergman},
the CEO topological recursion computes the super Weil--Petersson volume coefficients \eqref{feg_sw}, which give the super Weil--Petersson volumes, by
\begin{align}
W_{g,n}^{\mathrm{SWP}}(z_1, \dots, z_n)=
\sum_{a_1, \dots, a_n \ge 0} F^{\mathrm{SWP}(g)}_{a_1,\dots,a_n}
\prod_{i=1}^n \frac{1}{z_i^{2a_i+2}}.
\label{super_wp_mdiff}
\end{align}
Here, note the relation $\sfy^{\mathrm{SWP}}(z)=\partial_{\sfx} \sfy^{\mathrm{WP}}(z)$
with the coordinate function $\sfy^{\mathrm{WP}}(z)$ in equation~\eqref{sp_curve_mir} of the Weil--Petersson spectral curve.
The coordinate function $\sfy^{\mathrm{SWP}}(z)$ in equation~\eqref{sp_curve_sw} is found from a specialization of
the coordinate function $\sfy^{\mathrm{BGW}}(z)$ in equation~\eqref{sp_curve_bgw} of the BGW spectral curve as
\begin{align}
\sfv_{2a}=0,\qquad
\sfv_{2a+1}=
\frac{\bigl(-2\pi^2\bigr)^{a+1}}{(2a+1)!! (a+1)!},
\label{sp_bgw_swp}
\end{align}
and Proposition \ref{prop:bgw} implies a formula
\begin{align*}
F^{\mathrm{SWP}(g)}_{a_1,\dots,a_n}=
\sum_{m \ge 0}\frac{(-1)^m}{m!}
\sum_{b_1,\dots,b_m \ge 1}
\left(\prod_{j=1}^m
\frac{\bigl(-2\pi^2\bigr)^{b_j}}{(2b_j+1)!! b_j!}\right)
F^{\mathrm{B}(g)}_{a_1,\dots,a_n, b_1, \dots, b_m}
\end{align*}
with the condition \eqref{hom_bgw}.

\subsubsection[(2,2p-2) minimal superstring]{$\boldsymbol{(2,2p-2)}$ minimal superstring}\label{subsec:tr_mg_super}

A spectral curve
$\cC^{\mathrm{SM}(p)}=\big({\IP}^1;\sfx,\sfy^{\mathrm{SM}(p)},B\big)$
for the $(2,2p-2)$ minimal superstring
with an odd positive integer $p$, which is heuristically introduced by
\smash{$\sfy^{\mathrm{SM}(p)}(z)=\partial_{\sfx} \sfy^{\mathrm{M}(p)}(z)$}
from the coordinate function \smash{$\sfy^{\mathrm{M}(p)}(z)$}
in equation~\eqref{sp_curve_mg} of the $(2,p)$ minimal string spectral curve
(see Appendix~\ref{sec:type0A}), consists of
\begin{align}
&\sfx(z)=\frac12 z^2,\nonumber
\\
&\sfy^{\mathrm{SM}(p)}(z)= \frac{(-1)^{\frac{p-1}{2}}}{z} U_{p-1}\biggl(\frac{2\pi}{p}z\biggr)
=\frac{4\pi}{p \sqrt{1-\left(1-\frac{8\pi^2z^2}{p^2}\right)^2}}
\cos\left(\frac{p}{2} \arccos\left(1-\frac{8\pi^2z^2}{p^2}\right)\right)\nonumber
\\
&\phantom{\sfy^{\mathrm{SM}(p)}(z)}{}= \sum_{a=0}^{\frac{p-1}{2}} \frac{\bigl(-2\pi^2\bigr)^{a}}{(2a-1)!! a!}
\prod_{i=1}^{a}\left(1-\frac{(2i-1)^2}{p^2}\right) z^{2a-1},
\label{sp_curve_mg_super}
\end{align}
and the bidifferential $B$ in equation~\eqref{bergman},
where $U_p(z)$ is the Chebyshev polynomial of the second kind defined by
\begin{align*}
U_{p-1}(\cos \theta)=\frac{\sin (p \theta)}{\sin \theta}.
\end{align*}
Notice that the spectral curve $\cC^{\mathrm{SM}(p)}$
interpolates the Bessel spectral curve $\cC^{\mathrm{B}}$ and
the super Weil--Petersson spectral curve $\cC^{\mathrm{SWP}}$ by
\begin{align*}
\sfy^{\mathrm{SM}(1)}(z) = \sfy^{\mathrm{B}}(z)\
\textrm{in equation~\eqref{sp_curve_be}},
\qquad
\sfy^{\mathrm{SM}(\infty)}(z) = \sfy^{\mathrm{SWP}}(z)\
\textrm{in equation~\eqref{sp_curve_sw}}.
\end{align*}
The coordinate function $\sfy^{\mathrm{SM}(p)}(z)$ in equation~\eqref{sp_curve_mg_super} is found from
a specialization of the coordinate function $\sfy^{\mathrm{BGW}(p)}(z)$ in
equation~\eqref{sp_curve_bgw} of the BGW spectral curve as
\begin{align}
\begin{cases}
\sfv_{2a}=0\
& \textrm{for}\ a \ge 0,
\\
\displaystyle \sfv_{2a+1}=
\frac{(-2\pi^2)^{a+1}}{(2a+1)!! (a+1)!}
\prod_{i=1}^{a+1}\left(1-\frac{(2i-1)^2}{p^2}\right)\
& \textrm{for}\ 0 \le a \le \dfrac{p-3}{2},
\\
\sfv_{2a+1}=0\
& \textrm{for}\ a \ge \dfrac{p-1}{2}.
\label{sp_bgw_ms_super}
\end{cases}
\end{align}
Adopting this specialization to Proposition \ref{prop:bgw}, we find that
the $(2,2p-2)$ minimal superstring volume coefficients
\smash{$F^{\mathrm{SM}(p) (g)}_{a_1,\dots,a_n}$} in the correlation functions
\begin{align}
W_{g,n}^{\mathrm{SM}(p)}(z_1, \dots, z_n)=
\sum_{a_1, \dots, a_n \ge 0} F^{\mathrm{SM}(p) (g)}_{a_1,\dots,a_n}
\prod_{i=1}^n \frac{1}{z_i^{2a_i+2}},
\label{super_mst_mdiff}
\end{align}
obey a formula:
\begin{align}
F^{\mathrm{SM}(p) (g)}_{a_1,\dots,a_n}={}&
\sum_{m \ge 0}\frac{(-1)^m}{m!}\nonumber\\&\times
\sum_{b_1,\dots,b_m \ge 1}
\Bigg(
\prod_{j=1}^m
\frac{\bigl(-2\pi^2\bigr)^{b_j}}{(2b_j+1)!! b_j!}
\prod_{i=1}^{b_j}\left(1-\frac{(2i-1)^2}{p^2}\right)
\Bigg)
F^{\mathrm{B}(g)}_{a_1,\dots,a_n, b_1, \dots, b_m},
\label{msst_bessel}
\end{align}
with the condition \eqref{hom_bgw}.

\subsection{Twisting}\label{sec:twisting_tr}

\begin{Def}\label{def:sp_curve_tw}
We refer to a \emph{twisted spectral curve} $\cC[\sff]=(\Sigma;\sfx,\sfy,B[\sff])$
as a spectral curve with a twisted bidifferential $B[\sff]$ on $C^{\otimes 2}$ with the admissible test function $\sff\colon\mathbb{R}_+\to\mathbb{C}$.
\end{Def}

The twisted volume polynomial $V_{g,n}[\mathsf{f}](L_1,\dots,L_n)$ with
an admissible test function $\mathsf{f}$ is equivalent to
the multidifferential $\omega_{g,n}[\mathsf{f}](z_1,\dots,z_n)$
which satisfies the CEO topological recursion
for a twisted spectral curve \cite{Andersen_MV,Andersen_GR}.
These are related by the Laplace transform involving an action of
\emph{twist-elimination} explained below.

Here we consider a spectral curve $\cC=(\Sigma;\sfx,\sfy,B)$ such that
the zeros of ${\rm d}\sfx$ are simple, and a~local coordinate $p$ near
a branch point $\alpha$ obeying
\[\sfx=p^2/2+\sfx(\alpha).\]
For this set-up, we introduce a globally defined 1-form $\xi_{\alpha,a}(z)$ on the spectral curve $\cC$ with $a \in \mathbb{Z}_{\ge 0}$ and a branch point $\alpha$~by~\cite{Andersen:2017vyk},
\[
\xi_{\alpha,a}(z)=\underset{w=\alpha}{\mathrm{Res}} \frac{{\rm d}p(w)}{p(w)^{2a+2}}\left(\int_{\alpha}^wB(\cdot,z)\right).
\]
In the following discussion, we will focus on the spectral curve $\mathcal{C}$ only with a single branch point~$\alpha_0$, and
define $\xi_{a}(z)$ by\footnote{For a spectral curve with the coordinate function $\sfx=z^2/2$, $z \in \IP^1$ in equation~\eqref{sp_curve} and
the bidifferential~${B(z, w)={\rm d}z \otimes {\rm d}w/(z-w)^2}$, $z, w \in \IP^1$ in equation~\eqref{bergman}, $\xi_a(z)={\rm d}z/z^{2a+2}$.} $\xi_{a}(z):=\xi_{\alpha_0,a}(z)$.

The solutions $\omega_{g,n}(z_1,\dots,z_n)$ $(2g-2+n > 0)$ of the CEO topological recursion are represented by the 1-forms $\xi_{a_i}(z_i)$
($i=1,\dots,n$) as the basis of the multidifferentials:
\begin{align*}
\omega_{g,n}(z_1,\dots,z_n)=
\sum_{a_1,\dots,a_n\ge 0}
F^{(g)}_{a_1,\dots,a_n} \otimes_{i=1}^n \xi_{a_i}(z_i).
\end{align*}
For a twisted spectral curve $\cC[\sff]=(\Sigma;\sfx,\sfy,B[\sff])$,
the solutions $\omega_{g,n}[\mathsf{f}](z_1,\dots,z_n)$ of
the CEO topological recursion are also given on basis of
\begin{align}
\xi_a[\mathsf{f}](z)=
\underset{w=\alpha_0}{\mathrm{Res}}\;\frac{{\rm d}p(w)}{p(w)^{2a+2}}
\left(
\int_{\alpha_0}^w B[\sff](\cdot,z)
\right),
\label{eq:twisted_1-form}
\end{align}
by
\begin{align}
\omega_{g,n}[\mathsf{f}](z_1,\dots,z_n)=
\sum_{a_1,\dots,a_n\ge 0}
F^{(g)}[\mathsf{f}]_{a_1,\dots,a_n} \otimes_{i=1}^n\xi_{a_i}[\mathsf{f}](z_i).
\label{eq:twisted_multidifferential}
\end{align}

Now we introduce twisted correlation functions and a twist-elimination map below.

\begin{Def}\label{def:top_rec_correl_tw}
The twisted correlation functions $W_{g,n}[\sff](z_1, \dots, z_n)$ for $2g-2+n \ge 0$ are defined by
\begin{align}
W_{g,n}[\mathsf{f}](z_1,\dots,z_n) \otimes_{i=1}^n {\rm d}z_i =
\omega_{g,n}[\mathsf{f}](z_1,\dots,z_n),
\label{eq:TR_correlator}
\end{align}
and the twisted recursion kernel $K[\mathsf{f}]$ is defined by
\begin{align*}
K[\mathsf{f}](z,w)=\frac{\int_{w}^{\overline{w}} B[\sff](\cdot, z)}
{2 \left(\sfy(w) {\rm d}\sfx(w)-\sfy(\overline{w}) {\rm d}\sfx(\overline{w})\right)}.
\end{align*}
\end{Def}

\begin{Def}[twist-elimination map]\label{def:twist_elimination}
Let $\Omega(\mathcal{C})$ be the space of meromorphic multidifferentials on a twisted spectral curve $\cC[\sff]=(\Sigma;\sfx,\sfy,B[\sff])$, spanned by the symmetric tensors of the basis $\xi_{a_i}[\mathsf{f}](z_i)$ ($a_i\in\mathbb{Z}_{\ge 0}$, $i=1,\dots,n$),
the twisted bidifferential $B[\sff]$ and
the twisted recursion kernel $K[\mathsf{f}]$.%
\footnote{
In this definition, the twisted recursion kernel $K[\mathsf{f}](z,w)$ is regarded as
a meromorphic differential of the variable $z$.
}
The \emph{twist-elimination map} \smash{$\widehat{\mathcal{E}}[z_I]$} with an index set $I=\{i_1,\dots,i_{\ell}\}$ is a map acting on $\Omega(\mathcal{C})$
and prescribed by the following four properties:
\begin{itemize}\itemsep=0pt
\item[(1)]
for a 1-form $c_{a_i} \xi_{a_i}[\sff](z_i) \in \Omega(\cC)$ with
coefficient $c_{a_i}$,
\begin{align*}
\widehat{\cE}[z_I]\left(c_{a_i} \xi_{a_i}[\mathsf{f}](z_i)\right)=
\begin{cases}
c_{a_i} \xi_{a_i}(z_i)
&\textrm{if}\ i\in I\cap\{1,\dots,n\},
\\
c_{a_i} \xi_{a_i}[\mathsf{f}](z_i)
&\textrm{if}\ i\notin I\cap\{1,\dots,n\},
\end{cases}
\end{align*}
\item[(2)]
for the twisted bidifferential $B[\sff]$ and the twisted recursion kernel $K[\mathsf{f}]$,
\begin{align}
&
\widehat{\cE}[z]\left(B[\sff](z,w)\right)=B(z,w),
\qquad
\widehat{\cE}[z]\left(K[\mathsf{f}](z,w)\right)=K(z,w),
\label{eq:elim_kernel}
\end{align}
\item[(3)]
for a sum of multidifferentials $\omega^{(1)}[\mathsf{f}]$, $\omega^{(2)}[\mathsf{f}]\in \Omega(\mathcal{C})$,
\begin{align}\label{eq:property1_elim}
&\widehat{\mathcal{E}}[z_I]\big(\omega^{(1)}[\mathsf{f}]+\omega^{(2)}[\mathsf{f}]\big)
=\widehat{\mathcal{E}}[z_I]\big(\omega^{(1)}[\mathsf{f}]\big)+
\widehat{\mathcal{E}}[z_{I}]\big(\omega^{(2)}[\mathsf{f}]\big),
\end{align}
\item[(4)]
for a tensor product of multidifferentials $\omega^{(1)}[\mathsf{f}]$, $\omega^{(2)}[\mathsf{f}]\in \Omega(\mathcal{C})$,
\begin{align}
\label{eq:property2_elim}
&\widehat{\mathcal{E}}[z_I]\big(\omega^{(1)}[\mathsf{f}]\otimes\omega^{(2)}[\mathsf{f}]\big)
=\widehat{\mathcal{E}}[z_I]\big(\omega^{(1)}[\mathsf{f}]\big)\otimes
\widehat{\mathcal{E}}[z_{I}]\big(\omega^{(2)}[\mathsf{f}]\big).
\end{align}
\end{itemize}
The twist-elimination map $\mathcal{E}[z_I]$ acting
on the twisted correlation functions \eqref{eq:TR_correlator} is induced by
\begin{align*}
\mathcal{E}[z_{I}]
\left\{W_{g,n}[\mathsf{f}]\right\}(z_1,\dots,z_n) \otimes_{i=1}^n {\rm d}z_i=
\widehat{\mathcal{E}}[z_I]\left(\omega_{g,n}[\mathsf{f}](z_1,\dots,z_n)\right).
\end{align*}
\end{Def}

For the case $I=\{i_1,\dots,i_{\ell}\}\subseteq\{1,\dots,n\}$,
we obtain partially untwisted correlation functions for $2g-2+n>0$,
\begin{align}
&
\mathcal{E}[z_{i_1},\dots,z_{i_\ell}]
\left\{W_{g,n}[\mathsf{f}]\right\}(z_1,\dots,z_n) \otimes_{i=1}^n {\rm d}z_i\nonumber
\\
&\qquad=
\sum_{a_1,\dots,a_n\ge 0}
F^{(g)}[\mathsf{f}]_{a_1,\dots,a_n}
\otimes_{m=1}^{\ell} \xi_{a_{i_m}}(z_{i_m})
\otimes_{k \in\{1,\dots,n\}\setminus \{i_1,\dots,i_{\ell}\}}
\xi_{a_k}[\sff](z_k).
\label{eq:partial_untwisted}
\end{align}
More generally, if $\{i_1,\dots,i_\ell\}\nsubseteq \{1,\dots,n\}$,
the partially untwisted correlation functions are
given by
the maximal subset $\{j_1,\dots,j_{\ell'}\}\subseteq\{i_1,\dots,i_\ell\}$
which obeys
$\{j_1,\dots,j_{\ell'}\}\subseteq\{1,\dots,n\}$,
\begin{align*}
&
\mathcal{E}[z_{i_1},\dots,z_{i_\ell}]
\left\{W_{g,n}[\mathsf{f}]\right\}(z_1,\dots,z_n) \otimes_{i=1}^n {\rm d}z_i
\\
&\qquad=
\sum_{a_1,\dots,a_n\ge 0}
F^{(g)}[\mathsf{f}]_{a_1,\dots,a_n}
\otimes_{m=1}^{\ell'} \xi_{a_{j_m}}(z_{j_m})
\otimes_{k \in\{1,\dots,n\}\setminus \{j_1,\dots,j_{\ell'}\}}
\xi_{a_k}[\sff](z_k).
\end{align*}

\begin{figure}[t]\centering
 \includegraphics[width=70mm]{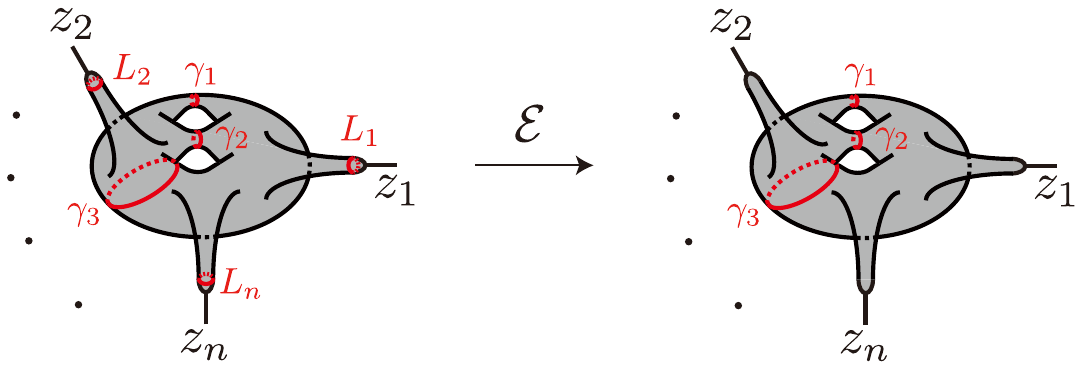}
\caption{In the geometrical interpretation,
the action of the twist-elimination map $\mathcal{E}[z_1,\dots,z_n]$ on the correlation function $W_{g,n}[\sff^{\mathrm{MV}}](z_1,\dots,z_n)$
is depicted as eliminations of the closed geodesics on $n$ boundaries of the bordered Riemann surface,
and recovers the combinatorial aspects of twisted volumes in Section~\ref{sect:MasurVeech}.}\label{fig:multi_boundaries_elimination}
\end{figure}

\begin{rem}\label{rel:Picture_elimination}
In Figure \ref{fig:multi_boundaries_elimination}, a geometrical interpretation of the action of the twist-elimination map $\mathcal{E}[z_1,\dots,z_n]$ on the correlation function $W_{g,n}[\sff^{\mathrm{MV}}](z_1,\dots,z_n)$
with the Masur--Veech type twist is depicted.
In this picture, the correlation function is described by a capped Riemann surface where caps are glued along all boundaries in a bordered Riemann surface.
In this interpretation, the closed geodesics on the bordered boundaries and bulk in the Riemann surface
represent the Masur--Veech type twists on the bases $\xi_a[\sff^{\mathrm{MV}}]$ and
coefficients \smash{$F^{(g)}\big[\mathsf{f}^{\mathrm{MV}}\big]_{a_1,\dots,a_n}$} in the multidifferential $\omega_{g,n}[\sff^{\mathrm{MV}}]$ of equation~\eqref{eq:partial_untwisted}, respectively.
In the computation of twisted volumes by the combinatorial method discussed in Section \ref{sect:MasurVeech},
we only enumerate the multicurves wrapping around the closed geodesics in the bulk%
\footnote{The ``closed geodesics in the bulk'' means the ``non-boundary closed geodesics''.} of bordered Riemann surfaces.
To get the twisted volume polynomial $V_{g,n}[\sff^{\mathrm{MV}}](L_1,\dots,L_n)$ from the correlation function $W_{g,n}[\sff^{\mathrm{MV}}](z_1,\dots,z_n)$,
we need to eliminate the effects of twists coming from the boundaries, while keep those from the closed geodesics in the bulk.
This geometrical interpretation of the twist is used in the physical interpretation discussed in Appendix \ref{sec:Wilson_loop}.
\end{rem}

In the following, we consider a class of twisted spectral curves with
the coordinate functions in equation~\eqref{sp_curve} and
twisted bidifferential
\cite{Andersen_MV},
\begin{align}
B\big[\sfm\big](z_1, z_2)=\frac{{\rm d}z_1 \otimes {\rm d}z_2}{(z_1-z_2)^2}
+ \frac12 \sum_{\mfm \in \IZ^*} \frac{{\rm d}z_1 \otimes {\rm d}z_2}{(z_1-z_2+\mfm)^2}
=\zeta_{\mathrm{H}}(2;z_1-z_2) {\rm d}z_1 \otimes {\rm d}z_2,
\label{bergman_twist}
\end{align}
with the Masur--Veech type twist function $\sfm$
in equation~\eqref{mv_twist}, where
\begin{align}
\zeta_{\mathrm{H}}(2d;z)={}&
\frac{1}{z^{2d}} + \frac12 \sum_{\mfm \in \IZ^*} \frac{1}{(z+\mfm)^{2d}}=
\frac{1}{z^{2d}}+
\sum_{k \ge 0}\binom{2k+2d-1}{2k} \zeta(2k+2d) z^{2k},
\label{hurwitz_def}
\end{align}
is the Hurwitz zeta function.
For this twisted bidifferential $B\big[\sfm\big]$,
the recursion kernel \eqref{rec_kernel} of the CEO topological recursion yields
\begin{align*}
K[\sff^{\mathrm{MV}}](z,w)={}&
\frac{(-1) {\rm d}z}{\left(\sfy(w)-\sfy(-w)\right){\rm d}w}
\left(\frac{1}{z^2-w^2} + \frac12 \sum_{\mfm \in \IZ^*} \frac{1}{(z+\mfm)^2-w^2}
\right)
\\
={}&
\frac{(-1) {\rm d}z}{\left(\sfy(w)-\sfy(-w)\right){\rm d}w}
\sum_{d\ge 0}\zeta_{\mathrm{H}}(2d+2;z) w^{2d},
\end{align*}
and the twisted 1-form \eqref{eq:twisted_1-form} is
\begin{align}
\xi_a\big[\sfm\big](z)=\zeta_{\mathrm{H}}(2a+2;z) {\rm d}z.
\label{twisted_1-form_sp}
\end{align}
By acting the twist-elimination map $\cE$ involved with the properties of equations~\eqref{eq:elim_kernel}, \eqref{eq:property1_elim} and \eqref{eq:property2_elim} on the CEO topological recursion for the twisted spectral curve, we obtain
\begin{align}
&
\cE[z_1,z_K]\left\{W_{g,n}\big[\mathsf{f}^\mathrm{MV}\big]\right\}(z_1,z_K)\nonumber
\\
&\qquad=\big[w^0\big]
\sum_{d\ge 0} \frac{w^{2d+1}}{z_1^{2d+2}\left(\sfy(w)-\sfy(-w)\right)}
\Biggl(
\sum_{m=2}^n \cW_{0,2}(w,z_m)
\cE[z_{K\setminus \{m\}}]\nonumber
\\
&
\qquad\quad {}\times\left\{W_{g,n-1}\big[\sfm\big]\right\}(w,z_{K\setminus \{m\}})+\cE[z_K]\left\{Q_{g,n}\big[\sfm\big]\right\}(w,-w,z_K)
\biggr),
\label{eq:top_rec_one_branch_untwisted}
\end{align}
where $K=\{2,\dots,n\}$, and
$\big[w^0\big]$ implies to pick up all zeroth order terms in the expansion around $w=0$. Here
\begin{align}
\begin{split}
\cW_{0,2}(w,z_m)
={}&W_{0,2}(w,z_m)+W_{0,2}(-w,z_m)
=2\sum_{d\ge 0}(2d+1)\frac{w^{2d}}{z_m^{2d+2}},
\label{w02_sym}
\end{split}
\end{align}
and
\begin{align*}
Q_{g,n}\big[\sfm\big](w,-w,z_K)={}&
W_{g-1,n+1}\big[\sfm\big](w,-w,z_K)
\\
&
+\sum_{\substack{h+h'=g\\ J \sqcup J'=K}}^{\mathrm{stable}}
W_{h,1+|J|}\big[\sfm\big](w,z_{J}) W_{h',1+|J'|}\big[\sfm\big](-w,z_{J'}).
\end{align*}
We refer to the recursion \eqref{eq:top_rec_one_branch_untwisted}
as the \emph{partially twist-eliminated CEO topological recursion}.

In the rest of this section, we provide a direct proof of the following claim.

\begin{thm}[Laplace transform of the twisted volume polynomial \cite{Andersen_MV,Andersen_GR}]\label{thm:Laplace_dual_GR}
For the $(2,p)$ minimal string and the $(2,2p-2)$ minimal superstring,
the twisted correlation function $W_{g,n}\big[\sfm\big](z_1,\allowbreak\dots,z_n)$ with the action of the twist-elimination map $\mathcal{E}$
agrees with the Laplace transform $\mathcal{L}$
of the twisted volume polynomial $V_{g,n}\big[\sfm\big](L_1,\dots,L_n)$:
\begin{align}
\mathcal{L}\left\{V_{g,n}\big[\sfm\big]\right\}(z_1,\dots,z_n)=
\cE[z_1,\dots,z_n]\left\{W_{g,n}\big[\sfm\big]\right\}(z_1,\dots,z_n),
\label{eq:Laplace_dual_GR}
\end{align}
where the operator $\mathcal{L}$ is defined by equation~\eqref{eq:Laplace}.%
\footnote{This relation is derived formally in the general set-up in \cite{Andersen_MV}.
In this article, we give a direct proof specialized for
the physical 2D gravity models.
}
\end{thm}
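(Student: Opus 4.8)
The plan is to prove eq.\eqref{eq:Laplace_dual_GR} by matching the partially twist-eliminated CEO topological recursion \eqref{eq:top_rec_one_branch_untwisted} against the Laplace transform of the twisted Mirzakhani-type ABO recursion \eqref{eq:Twisted_rec}. The key conceptual point is that both sides are recursions in the Euler characteristic $2g-2+n$, so it suffices to (i) check the base cases, and (ii) show that the two recursion operators coincide term by term under $\mathcal{L}$. I would begin by recording the elementary Laplace identity
\begin{align}
\mathcal{L}\{e_a\}(z)=\int_{\IR_+}\frac{L^{2a}}{(2a+1)!}\,\e^{-zL}\,LdL
=\frac{1}{z^{2a+2}},
\nonumber
\end{align}
which shows that the Laplace transform sends the monomial basis $e_{a}(L)$ used in \eqref{eq:expansion_volume} precisely to the untwisted basis $\xi_a(z)=dz/z^{2a+2}$ after stripping the differentials. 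This identity already establishes the base cases $\VO_{0,3}$ and $\VO_{1,1}$, since for these the twist \eqref{abcd_twist} only modifies higher $\sfB,\sfC$ data and not the polynomials themselves in the way needed, and the correlators $W_{0,3},W_{1,1}$ are read off directly.

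\textbf{Inductive step.} Assuming \eqref{eq:Laplace_dual_GR} for all $(g',n')$ with $2g'-2+n'<2g-2+n$, I would compute $\mathcal{L}\{V_{g,n}[\sfm]\}$ by applying $\mathcal{L}$ to the right-hand side of \eqref{eq:Twisted_rec}. The crucial technical lemma is that the Laplace transform of the twisted kernel data $D[\sfm]$ and $R[\sfm]$ in \eqref{eq:twisted_DR} reproduces exactly the $w$-expansion ingredients on the right of \eqref{eq:top_rec_one_branch_untwisted}: the Laplace transform of the integral operator built from $R(L_1,L_m,\ell)$ should yield $\cE[z_m]\{\cW_{0,2}[\sfm]\}$ contracted against the lower correlator, while the $D[\sfm]$-term produces $\cE[z_K]\{Q_{g,n}[\sfm]\}$. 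Here the defining formulae \eqref{eq:DR_H_bosonic} and \eqref{eq:DR_H_super} for $D,R$ in terms of the kernel $H$ must be Laplace-transformed explicitly, and the $\sfm$-dependent shift terms in \eqref{eq:twisted_DR} must be shown to correspond precisely to the Hurwitz-zeta shifts encoded in $\cW_{0,2}[\sfm]$ and in the twisted 1-form \eqref{twisted_1-form_sp}. The twist-elimination map is what permits this matching: by eq.\eqref{eq:property2_elim} the product $P_{g,n}$ factorizes correctly under $\cE$, and by eq.\eqref{eq:elim_kernel} the kernel untwists to the plain CEO kernel so that the $[w^0]$ residue extraction reproduces the combinatorial sum over $h+h'=g$, $J\sqcup J'=K$.

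\textbf{The main obstacle.} The hard part will be verifying that the extra $\sfm$-dependent terms in \eqref{eq:twisted_DR}—namely $R(L_1,L_2,L_3)\sfm(L_2)$, the symmetric partner, the $L_1\sfm(L_2)\sfm(L_3)$ term in $D[\sfm]$, and the $L_1\sfm(L_3)$ term in $R[\sfm]$—match the boundary-twist contributions generated by $\cW_{0,2}[\sfm]$ and by the twisted basis $\xi_a[\sfm]$ rather than the untwisted $\xi_a$. Concretely, one must track which propagator legs carry the Masur-Veech twist: the $\cE$ map untwists exactly the external legs $z_1,\dots,z_n$ while leaving the internal gluing leg $w$ twisted, so the asymmetry between $D[\sfm]$ (two internal legs, hence a $\sfm\sfm$ term) and $R[\sfm]$ (one internal, one external leg, hence a single $\sfm$ term) must be reproduced by the differing twist-structure of the two terms in \eqref{eq:top_rec_one_branch_untwisted}. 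I expect this bookkeeping, handled separately for the bosonic kernel \eqref{eq:DR_H_bosonic} and the supersymmetric kernel \eqref{eq:DR_H_super} of the two families under consideration, to be where all the real work lies; once the leg-by-leg correspondence is established, the induction closes by the same residue manipulation \eqref{key_f} already used in the proof of Proposition \ref{prop:kdv}.
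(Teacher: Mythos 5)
Your proposal follows essentially the same route as the paper: Propositions \ref{prop:Laplace_dual_untwisted} and \ref{prop:Laplace_dual_untwisted_super} carry out exactly the term-by-term Laplace matching you describe, with Lemmas \ref{lem:Laplace_B-term}--\ref{lem:Laplace_C-term_super} doing the leg-by-leg bookkeeping of the $\sfm$-dependent terms in $D[\sfm]$ and $R[\sfm]$ against $\cW_{0,2}[\sfm]$ and the Hurwitz-zeta factors. The only correction is that the computation is closed not by the residue formula \eqref{key_f} but by the moment identities \eqref{eq:formula1} and \eqref{eq:formula1_super}, whose coefficients $s_\ell$, $s'_\ell$ are matched to the partial-fraction expansions \eqref{eq:1/y} and \eqref{eq:1/y_super} of $1/\sfy^{\mathrm{M}(p)}$ and $1/\sfy^{\mathrm{SM}(p)}$, together with the zeta integral \eqref{eq:twist_zeta}.
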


\subsubsection[Laplace dual relation for the (2,p) minimal string]{Laplace dual relation for the $\boldsymbol{(2,p)}$ minimal string}\label{sec:Laplace_bosonic}

We will show that the Laplace dual relation \eqref{eq:Laplace_dual_GR} holds for the $(2,p)$ minimal string manifestly.
Accordingly, by the specializations $p=\infty$ and $1$,
we also find that the Laplace dual relation holds for the Weil--Petersson volumes and the Kontsevich--Witten symplectic volumes, respectively.

\begin{prop}\label{prop:Laplace_dual_untwisted}
For the $(2,p)$ minimal string, the Laplace transform of
the Mirzakhani type ABO topological recursion \eqref{eq:Mirzakhani's}
with the Masur--Veech type twist,
\begin{align}
&
\int_{\IR_+^n}
\Biggl[
\sum_{m=2}^n \int_{\IR_+} \left(R(L_1,L_m,x) + L_1 \sfm(x)\right)
xV_{g,n-1}\big[\sfm\big](x,L_{K\setminus \{m\}}) {\rm d}x\nonumber
\\
&
\qquad
{}+\frac{1}{2}\int_{\IR_+^2}
\left(D(L_1,x,y)+R(L_1,x,y) \sfm(y) + R(L_1,y,x) \sfm(x)
+
L_1 \sfm(x) \sfm(y)\right)\nonumber
\\
&\qquad\phantom{+}{}
\times
 xyP_{g,n}\big[\sfm\big](x,y,L_K) {\rm d}x{\rm d}y
\Biggr]
\e^{-z_1L_1} {\rm d}L_1
\prod_{i=2}^n \e^{-z_iL_i} L_i{\rm d}L_i,\label{eq:Laplace_dual_untwisted}
\end{align}
where $P_{g,n}\big[\sfm\big](x,y,L_K)$ is given by equation~\eqref{geo_rec_p}
for the twisted volume polynomials,
agrees with the partially twist-eliminated CEO topological recursion \eqref{eq:top_rec_one_branch_untwisted} for $2g-2+n > 1$.
\end{prop}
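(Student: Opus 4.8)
The plan is to follow the Laplace-transform method of \cite{Eynard:2007fi} employed in Appendix \ref{sec:Derivation_Mirz}, now carrying the Masur-Veech twist through each step. The integrand of \eqref{eq:Laplace_dual_untwisted} is exactly the right-hand side of the twisted Mirzakhani-type recursion with $D[\sfm]$ and $R[\sfm]$ from \eqref{eq:twisted_DR} inserted, so the task is to transform it into \eqref{eq:top_rec_one_branch_untwisted}. I would organise the proof by first isolating the dependence on the external variables, then reducing the internal integrations to a residue, and finally matching the twist terms.

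First I would Laplace-transform the external variables $L_2,\ldots,L_n$. Expanding $V_{g,n-1}[\sfm]$ and $P_{g,n}[\sfm]$ in the basis $e_a$ via \eqref{eq:expansion_volume} and using $\mathcal{L}\{e_a\}(z)=1/z^{2a+2}$ from \eqref{eq:Laplace}, the integration over each $L_j$ ($j\in K$) replaces $e_{a_j}(L_j)$ by $1/z_j^{2a_j+2}$. By the definition \eqref{eq:partial_untwisted} of the twist-elimination map, the resulting factors $1/z_j^{2a_j+2}$ (rather than $\xi_{a_j}[\sfm](z_j)$) are precisely the external legs of $\cE[z_{K\setminus\{m\}}]\{W_{g,n-1}[\sfm]\}$ and $\cE[z_K]\{Q_{g,n}[\sfm]\}$. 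This leaves the transform in $L_1$ and the integrations over the internal lengths $x$ (and $y$), while the internal legs are still carried in the length-space basis $e_a(x)$.

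The core step is the $L_1$-transform together with the internal integration. For the $(2,p)$ minimal string the untwisted kernels $R$ and $D$ are built from $H^{\mathrm{M}(p)}$ via \eqref{eq:DR_H_bosonic}, and the computation at the heart of Appendix \ref{sec:Derivation_Mirz} shows that $\int_0^\infty dL_1\,\e^{-z_1L_1}$ of these kernels, followed by $\int_0^\infty x\,dx\,(\cdots)e_a(x)$, collapses into the spectral recursion kernel: the prefactor $\frac{w^{2d+1}}{z_1^{2d+2}(\sfy(w)-\sfy(-w))}$ with the $[w^0]$ projection of \eqref{eq:top_rec_one_branch_untwisted}. Here the Chebyshev form of $\sfy^{\mathrm{M}(p)}$ in \eqref{sp_curve_mg} enters through $H^{\mathrm{M}(p)}$ exactly so that the combination $\sfy(w)-\sfy(-w)$ of the CEO kernel \eqref{sym_diff_no_tw} is reproduced, and the untwisted internal leg contributes $1/w^{2a+2}$; the external $L_m$-transform produces the $1/z_m^{2d+2}$ appearing in $\cE[z_m]\{\cW_{0,2}[\sfm]\}$ of \eqref{w02_sym}.

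Finally I would match the twist contributions, which I expect to be the main obstacle. The role of $L_1\,\sfm(x)$ added to $R$ in the first line of \eqref{eq:Laplace_dual_untwisted} is to upgrade the internal leg from the untwisted $1/w^{2a+2}$ to the full Hurwitz zeta $\zeta_{\mathrm{H}}(2a+2;w)$ carried by the twisted correlator $W_{g,n-1}[\sfm]$: writing $\sfm(\ell)=\sum_{H\ge 1}\e^{-H\ell}$, its moment integrals against $e_a(x)$ generate exactly the coefficients $\zeta(2k+2a+2)$ of \eqref{hurwitz_def} through the identity \eqref{mv_twist_u}. In the $D$-channel the four terms $D$, $R(L_1,x,y)\sfm(y)$, $R(L_1,y,x)\sfm(x)$ and $L_1\,\sfm(x)\,\sfm(y)$ correspond to the four ways of twisting or not each of the two internal legs $w,\overline{w}=-w$, and together they assemble the untwisted pair into the doubly twisted $Q_{g,n}[\sfm]$ of \eqref{ceo_rec_Q_tw}. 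The delicate points are to verify that these distinct twist terms recombine, after the internal residue and the $[w^0]$ projection, into the single twisted structure of \eqref{eq:top_rec_one_branch_untwisted}---with the twist retained on the internal legs but eliminated on the external $z_K$---and to use the $x\leftrightarrow y$ symmetry of $P_{g,n}[\sfm]$ to combine the two asymmetric single-twist terms (so that the ordering of $\sfm$-arguments in \eqref{eq:twisted_DR} and \eqref{eq:Laplace_dual_untwisted} is immaterial under the symmetric integration). The specializations $p\to\infty$ and $p=1$ then yield the Weil-Petersson and Kontsevich-Witten cases.
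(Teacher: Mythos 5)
Your proposal is correct and follows essentially the same route as the paper's proof: the paper also proceeds monomial by monomial, using the moment formula $\int_{\IR_+}x^{2k+1}H^{\mathrm{M}(p)}(x,t)\,dx/(2k+1)!=\sum_\ell s_\ell\,t^{2k+2-2\ell}/(2k+2-2\ell)!$ whose coefficients $s_\ell$ are exactly those of the expansion of $1/\sfy^{\mathrm{M}(p)}$, together with $\int_{\IR_+}\sfm(x)\,x^{2k+1}dx=(2k+1)!\,\zeta(2k+2)$, and it organizes the computation into the same two channels you describe (Lemmas \ref{lem:Laplace_B-term} and \ref{lem:Laplace_C-term}), with the four $D$-channel terms matching the four pieces of the product $\zeta_{\mathrm{H}}(2a+2;w)\,\zeta_{\mathrm{H}}(2b+2;w)$ exactly as you anticipate. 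The only minor imprecision is that after the $[w^0]$ projection only the constant term of the Hurwitz-zeta correction survives in the $R$-channel, but this does not affect the argument.
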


To prove this proposition,
we prepare some key integration formulae involving the kernel function $H^{\mathrm{M}(p)}(x,y)$ in equation~\eqref{eq:H_minimal}
for the $(2,p)$ minimal string. (See \cite[equations~(2.2) and (2.3)]{MulSaf} for analogous formulae of Mirzakhani's recursion for the Weil--Petersson volumes.)
For $k\in\mathbb{Z}_{\ge 0}$, one finds
\begin{align}
h_{2k+1}^{\mathrm{M}(p)}(t):={}&
\int_{\IR_+} \frac{x^{2k+1}}{(2k+1)!} H^{\mathrm{M}(p)}(x,t) {\rm d}x\nonumber
\\
={}&
\frac{t^{2k+2}}{(2k+2)!}
-2\sum_{j=1}^{(p-1)/2}
(-1)^{j}\cos\left(\frac{\pi}{p}j\right)
\sum_{\ell=0}^{k}u_j^{2\ell-2k-2}\frac{t^{2\ell}}{(2\ell)!}\nonumber
\\
={}&
\sum_{\ell=0}^{k+1}s_{\ell} \frac{t^{2k+2-2\ell}}{(2k+2-2\ell)!},
\label{eq:formula1}
\end{align}
where $u_j=(p/2\pi)\sin(j\pi/p)$ in equation~\eqref{eq:uj_bosonic}, and
\begin{align}
\int_{\IR_+^2}\frac{x^{2a+1}y^{2b+1}}{(2a+1)!(2b+1)!} H^{\mathrm{M}(p)}(x+y,t) {\rm d}x{\rm d}y=h_{2a+2b+3}^{\mathrm{M}(p)}(t).
\label{eq:formula2}
\end{align}
The coefficients $s_{\ell}$'s in equation~\eqref{eq:formula1} agree with
those in the following expansion for $1/\sfy^{\mathrm{M}(p)}(z)$
of the $(2,p)$ minimal string spectral curve:
\begin{align}
\frac{1}{\sfy^{\mathrm{M}(p)}(z)}={}&
(-1)^{\frac{p-1}{2}}\frac{2\pi}{T_p\big(\frac{2\pi}{p}z\big)}
=\frac{1}{z}+\sum_{j=1}^{(p-1)/2}
(-1)^j\cos\left(\frac{\pi}{p}j\right)\left(
\frac{1}{z-u_j}+\frac{1}{z+u_j}
\right)\nonumber
\\
={}&\sum_{\ell \ge 0}s_\ell z^{2\ell-1}.
\label{eq:1/y}
\end{align}
In addition, we will use an integration formula involving the Masur--Veech type twist function,
\begin{align}
\int_{\IR_+}\mathsf{f}^{\mathrm{MV}}(x) x^{2k+1} {\rm d}x
=(2k+1)! \zeta(2k+2).
\label{eq:twist_zeta}
\end{align}

\begin{lem}\label{lem:Laplace_B-term}
In equation~\eqref{eq:Laplace_dual_untwisted},
the $x^{2k+1}$ term in $xV_{g,n-1}\big[\mathsf{f}^{\mathrm{MV}}\big](x,L_{K\setminus\{m\}})$ obeys
\begin{align}
&
\int_{\IR_+^3}
\big(R(L_1,L_m,x)+L_1\mathsf{f}^{\mathrm{MV}}(x)\big)x^{2k+1}
\mathrm{e}^{-z_1L_1-z_mL_m} {\rm d}x {\rm d}L_1 L_m{\rm d}L_m\nonumber
\\
&
\qquad=\frac{1}{2}\big[w^0\big]\sum_{d\ge 0} \frac{1}{z_1^{2d+2}}\frac{w^{2d+1}}{\sfy^{\mathrm{M}(p)}(w)}
\cW_{0,2}(w,z_m)
\zeta_{\mathrm{H}}(2k+2;w) (2k+1)!.
\label{eq:twist_B-term_bosonic}
\end{align}
\end{lem}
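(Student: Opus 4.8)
The plan is to evaluate both sides explicitly and match them, splitting the left-hand side according to the two pieces $R(L_1,L_m,x)$ and $L_1\mathsf{f}^{\mathrm{MV}}(x)$ of the twisted kernel in \eqref{eq:twisted_DR}. First I would dispose of the inner $x$-integral. For the $R$-term, inserting the definition \eqref{eq:DR_H_bosonic} of $R$ in terms of $H^{\mathrm{M}(p)}$ and applying the integration formula \eqref{eq:formula1} turns $\int_{\IR_+}R(L_1,L_m,x)\,x^{2k+1}\,dx$ into $\tfrac{(2k+1)!}{2}\int_0^{L_1}\bigl(h^{\mathrm{M}(p)}_{2k+1}(t+L_m)+h^{\mathrm{M}(p)}_{2k+1}(t-L_m)\bigr)\,dt$, where $h^{\mathrm{M}(p)}_{2k+1}$ is the explicit even polynomial with coefficients $s_\ell$. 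For the twist term, the formula \eqref{eq:twist_zeta} gives $\int_{\IR_+}\mathsf{f}^{\mathrm{MV}}(x)\,x^{2k+1}\,dx=(2k+1)!\,\zeta(2k+2)$ immediately.

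Next I would carry out the remaining Laplace integrals. For the $R$-term I use the Fubini identity $\int_0^\infty e^{-z_1L_1}\bigl(\int_0^{L_1}\psi(t)\,dt\bigr)dL_1=\tfrac1{z_1}\int_0^\infty \psi(t)\,e^{-z_1t}\,dt$, then expand $(t\pm L_m)^{2k+2-2\ell}$ binomially; the odd powers of $L_m$ cancel between the two shifts, and the surviving $\Gamma$-integrals $\int_0^\infty t^n e^{-z_1t}\,dt=n!/z_1^{n+1}$ and $\int_0^\infty L_m^{r+1}e^{-z_mL_m}\,dL_m=(r+1)!/z_m^{r+2}$ collapse the binomial coefficients. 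The upshot is that the $R$-term equals $(2k+1)!\sum_{\ell+d'\le k+1}s_\ell\,(2d'+1)\,z_m^{-(2d'+2)}\,z_1^{-(2k+4-2\ell-2d')}$, while the twist term equals $(2k+1)!\,\zeta(2k+2)\,z_1^{-2}z_m^{-2}$.

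It then remains to expand the right-hand side in $w$ and extract $[w^0]$. Here I substitute the series \eqref{eq:1/y} for $1/\sfy^{\mathrm{M}(p)}$, the symmetrized two-point expansion \eqref{w02_sym} with its $\zeta_{\mathrm{H}}(2d'+2;z_m)$ untwisted to $z_m^{-(2d'+2)}$ under $\mathcal{E}[z_m]$ (using $\xi_a[\mathsf{f}^{\mathrm{MV}}]=\zeta_{\mathrm{H}}(2a+2;\cdot)\,dz$ as in \eqref{twisted_1-form_sp}), and the Hurwitz expansion \eqref{hurwitz_def} of $\zeta_{\mathrm{H}}(2k+2;w)$. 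The $w^{-(2k+2)}$ pole of $\zeta_{\mathrm{H}}$ forces $d+\ell+d'=k+1$ in the $[w^0]$-extraction and reproduces exactly the $R$-term above, whereas the regular part of $\zeta_{\mathrm{H}}$ (a nonnegative even power series) forces $d=\ell=d'=j=0$, leaving only $s_0=1$, $\binom{2k+1}{0}=1$ and $\zeta(2k+2)$, and thus reproducing the twist term. The crucial structural input — and the only nontrivial point beyond bookkeeping — is that the same coefficients $s_\ell$ govern both the polynomial $h^{\mathrm{M}(p)}_{2k+1}$ in \eqref{eq:formula1} and the partial-fraction expansion \eqref{eq:1/y} of $1/\sfy^{\mathrm{M}(p)}$; this coincidence is what makes the pole part of the right-hand side match the $R$-term term by term. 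The main obstacle is therefore purely organizational: keeping the indices $\ell$, $d'$ and the outer index $d$ aligned so that the constraint $d=k+1-\ell-d'$ from the $[w^0]$-extraction matches the power $2k+4-2\ell-2d'$ of $z_1$ produced by the Laplace transform.
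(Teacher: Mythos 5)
Your proposal is correct and follows essentially the same route as the paper's proof: evaluate the left-hand side explicitly via eqs.\eqref{eq:DR_H_bosonic}, \eqref{eq:formula1}, \eqref{eq:twist_zeta} and \eqref{eq:Laplace_basic}, expand the right-hand side using eqs.\eqref{w02_sym}, \eqref{hurwitz_def} and \eqref{eq:1/y}, and match the pole part of $\zeta_{\mathrm{H}}(2k+2;w)$ to the $R$-term and the regular part to the twist term. Your index $d'$ is related to the paper's $d$ by $d'=k+1-\ell-d$, so the two final expressions coincide term by term, and you correctly isolate the key coincidence that the same coefficients $s_\ell$ appear in both \eqref{eq:formula1} and \eqref{eq:1/y}.
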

\begin{proof}
We perform the integrations on the left-hand side of equation~\eqref{eq:twist_B-term_bosonic} using~\eqref{eq:DR_H_bosonic}, \eqref{eq:formula1} and \eqref{eq:twist_zeta}:
\begin{align}
&
\int_{\IR_+^3}
\left[
\frac12 \int_{0}^{L_1}
\big(H^{\mathrm{M}(p)}(x,t+L_m)+H^{\mathrm{M}(p)}(x,t-L_m)\big){\rm d}t
+ L_1 \sfm(x)\right]
x^{2k+1}\nonumber
\\
&\qquad{}\times
\mathrm{e}^{-z_1L_1-z_mL_m} {\rm d}x {\rm d}L_1 L_m{\rm d}L_m
\nonumber\\
&
\qquad\phantom{\times}{}=(2k+1)!\left(\sum_{\ell=0}^{k+1}\sum_{d=0}^{k+1-\ell}s_{\ell}
\cdot(2k+3-2\ell-2d) \frac{1}{z_1^{2d+2}}\frac{1}{z_m^{2k+4-2\ell-2d}}\right.\nonumber\\
&\left.\qquad\phantom{\times=}{}+\frac{1}{z_1^2}\frac{1}{z_m^2} \zeta(2k+2)\right),\label{eq:twist_B-term_bosonic_LHS}
\end{align}
where we also used
a formula of the Laplace transform:
\begin{align}
\int_{\IR_+} L^{n} \mathrm{e}^{-zL} {\rm d}L=\frac{n!}{z^{n+1}}.
\label{eq:Laplace_basic}
\end{align}
On the other hand, by equation~\eqref{w02_sym} the right-hand side of equation~\eqref{eq:twist_B-term_bosonic} is
\begin{align}
\frac{1}{2}\big[w^0\big]\sum_{d\ge 0}\frac{1}{z_1^{2d+2}}\frac{w^{2d+1}}{\sfy^{\mathrm{M}(p)}(w)}
\left(2\sum_{d'\ge 0}(2d'+1)\frac{w^{2d'}}{z_m^{2d'+2}}\right)
\zeta_{\mathrm{H}}(2k+2;w) (2k+1)!.
\label{eq:twist_B-term_bosonic_RHS}
\end{align}
Adopting the formula \eqref{hurwitz_def} for $\zeta_{\mathrm{H}}(2k+2;w)$
and the expansion \eqref{eq:1/y} for $1/\sfy^{\mathrm{M}(p)}(w)$ to this expression,
we find the agreement between equations~\eqref{eq:twist_B-term_bosonic_LHS} and \eqref{eq:twist_B-term_bosonic_RHS}.
\end{proof}

\begin{lem}\label{lem:Laplace_C-term}
In equation~\eqref{eq:Laplace_dual_untwisted},
the $x^{2a+1}y^{2b+1}$ term in $xyP_{g,n}\big[\sfm\big](x,y,L_K)$ obeys
\begin{align}
&
\frac{1}{2}\int_{\IR_+^3}
\big(D(L_1,x,y)+R(L_1,x,y) \mathsf{f}^{\mathrm{MV}}(y)
+R(L_1,y,x) \mathsf{f}^{\mathrm{MV}}(x)
+L_1\mathsf{f}^{\mathrm{MV}}(x) \mathsf{f}^{\mathrm{MV}}(y)\big)\nonumber
\\
&\qquad{}\times
x^{2a+1}y^{2b+1} \e^{-z_1L_1} {\rm d}x{\rm d}y{\rm d}L_1\nonumber
\\
&\qquad\phantom{\times}{}
=\frac{1}{2}\big[w^0\big]
\sum_{d\ge 0}\frac{1}{z_1^{2d+2}}
\frac{w^{2d+1}}{\sfy^{\mathrm{M}(p)}(w)}
\left(\zeta_{\mathrm{H}}(2a+2;w) (2a+1)!\right)
\left(\zeta_{\mathrm{H}}(2b+2;w) (2b+1)!\right).\!\!\!
\label{eq:twist_C-term_bosonic}
\end{align}
\end{lem}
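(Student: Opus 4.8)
The plan is to read \eqref{eq:twist_C-term_bosonic} as the statement that the bracketed kernel is $D[\sfm](L_1,x,y)$ in the form \eqref{eq:twisted_DR}, written out through \eqref{eq:DR_H_bosonic}, and tested against $x^{2a+1}y^{2b+1}\e^{-z_1L_1}$. Since the volume polynomials, and hence $P_{g,n}[\sfm]$, are symmetric in their internal arguments, I may present each twist factor as multiplying the argument of $R$ that enters $H$ through its second slot $t\pm(\cdot)$, so that the untwisted \emph{new geodesic} always occupies the first slot of $H$; that is, I take the two cross terms as $R(L_1,y,x)\,\sfm(y)$ and $R(L_1,x,y)\,\sfm(x)$. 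On the right-hand side I would expand each factor by \eqref{hurwitz_def}, $\zeta_{\mathrm H}(2a+2;w)=w^{-2a-2}+\sum_{k\ge0}\binom{2k+2a+1}{2k}\zeta(2k+2a+2)\,w^{2k}$, so that $\zeta_{\mathrm H}(2a+2;w)\,\zeta_{\mathrm H}(2b+2;w)$ splits into four pieces $(\mathrm{pole}_a+\mathrm{zeta}_a)(\mathrm{pole}_b+\mathrm{zeta}_b)$. The goal is to match these four pieces, one by one, with the four summands $D$, $R\sfm$, $R\sfm$, $L_1\sfm\sfm$ of $D[\sfm]$.

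For the $D(L_1,x,y)$ summand I would collapse the $x,y$-integral by the convolution identity \eqref{eq:formula2} to $h^{\mathrm{M}(p)}_{2a+2b+3}(t)$, then integrate $\int_0^{L_1}dt$ and apply \eqref{eq:Laplace_basic}; this produces the doubly-polar piece $\mathrm{pole}_a\cdot\mathrm{pole}_b=w^{-2a-2b-4}$. The $L_1\sfm(x)\sfm(y)$ summand is immediate: two applications of \eqref{eq:twist_zeta} and one of \eqref{eq:Laplace_basic} give $\frac12 z_1^{-2}(2a+1)!\,\zeta(2a+2)\,(2b+1)!\,\zeta(2b+2)$, which the $[w^0]$-projection isolates as the arithmetic piece $\mathrm{zeta}_a\cdot\mathrm{zeta}_b$. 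The crux is the two cross terms, which I would model directly on the already-proven Lemma \ref{lem:Laplace_B-term}. In $R(L_1,y,x)\sfm(y)$ the untwisted $x$ passes through the first slot of $H$ via \eqref{eq:formula1}, producing $h^{\mathrm{M}(p)}_{2a+1}(t\pm y)$, while the twisted $y$ sits in the second slot; expanding $h^{\mathrm{M}(p)}_{2a+1}(t\pm y)$ in even powers of $y$ and integrating each with \eqref{eq:twist_zeta}, the Taylor weight $1/(2k)!$ combines with the factor $(2b+2k+1)!/(2b+1)!$ to reconstruct exactly the binomial $\binom{2b+2k+1}{2k}$ and the value $\zeta(2b+2k+2)$, i.e. the $\mathrm{zeta}_b$ series. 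Thus this term contributes $\mathrm{pole}_a\cdot\mathrm{zeta}_b$, and by the $x\leftrightarrow y$ symmetry the companion term contributes $\mathrm{zeta}_a\cdot\mathrm{pole}_b$.

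In every case the coefficients $s_\ell$ of the polynomials $h^{\mathrm{M}(p)}$ supplied by \eqref{eq:formula1}--\eqref{eq:formula2}, together with the $\int_0^{L_1}dt$ and the Laplace transform in $L_1$, are precisely the data that reconstitute the common prefactor $z_1^{-2d-2}\,w^{2d+1}/\sfy^{\mathrm{M}(p)}(w)$ once the expansion \eqref{eq:1/y} of $1/\sfy^{\mathrm{M}(p)}$ is invoked; extracting $[w^0]$ then reproduces \eqref{eq:twist_C-term_bosonic}. This last identification is entirely parallel to Lemma \ref{lem:Laplace_B-term}, the only change being that the surviving boundary factor $\cE[z_m]\{\cW_{0,2}[\sfm]\}(w,z_m)$ there is replaced here by a second, twisted, internal integration that yields a Hurwitz factor. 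The step I expect to be the main obstacle is exactly this assembly for the cross terms: one must check that the pole parts coming from the $D$ summand and from the first-slot integration in the cross terms carry identical $s_\ell$-dressing, so that after summation the four contributions factorize cleanly as $\zeta_{\mathrm H}(2a+2;w)\,\zeta_{\mathrm H}(2b+2;w)$ and leave no unfactorized remainder. Tracking the binomial coefficients and the $1/\sfy^{\mathrm{M}(p)}$-reconstruction simultaneously is where the genuine bookkeeping lies.
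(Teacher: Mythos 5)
Your proposal follows essentially the same route as the paper's proof: each of the four summands in the kernel is integrated using \eqref{eq:formula1}, \eqref{eq:formula2}, \eqref{eq:twist_zeta} and \eqref{eq:Laplace_basic}, and the results are matched one-to-one with the four pieces of $\zeta_{\mathrm{H}}(2a+2;w)\,\zeta_{\mathrm{H}}(2b+2;w)$ obtained from \eqref{hurwitz_def} and \eqref{eq:1/y}. The bookkeeping you flag as the potential obstacle does close term by term (the $s_\ell$-dressing of the pole parts is identical in the $D$ term and in the cross terms), exactly parallel to Lemma \ref{lem:Laplace_B-term}, so the argument is complete.
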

\begin{proof}
We perform the integrations on the left-hand side of equation~\eqref{eq:twist_C-term_bosonic} using~\eqref{eq:DR_H_bosonic},
\eqref{eq:formula1}, \eqref{eq:formula2} and \eqref{eq:twist_zeta}.
The term involving $D(L_1,x,y)$ yields
\begin{align}
&
\frac{1}{2}\int_{\IR_+^3}
\left(\int_0^{L_1} H^{\mathrm{M}(p)}(x+y,t) {\rm d}t\right)
x^{2a+1}y^{2b+1} \e^{-z_1L_1} {\rm d}x{\rm d}y{\rm d}L_1\nonumber
\\
&\qquad=\frac{1}{2}(2a+1)!(2b+1)!
\sum_{\ell=0}^{a+b+2} \frac{s_{\ell}}{z_1^{2a+2b-2\ell+4}}.\label{eq:Laplace_C-term_1}
\end{align}
The term involving $R(L_1,x,y)\mathsf{f}^{\mathrm{MV}}(y)$ yields:
\begin{align}
&
\frac{1}{4}\int_{\IR_+^3}
\left(\int_0^{L_1}\big(H^{\mathrm{M}(p)}(x,t+y)+H^{\mathrm{M}(p)}(x,t-y)\big){\rm d}t\right)
x^{2a+1}y^{2b+1} \mathsf{f}^{\mathrm{MV}}(y) \e^{-z_1L_1} {\rm d}x{\rm d}y{\rm d}L_1\nonumber
\\
&\quad
=\frac{1}{2}(2a+1)!
\sum_{\ell=0}^{a+1}\sum_{d=0}^{a+1-\ell}s_{\ell} \frac{(2a+2b+3-2\ell-2d)!}{(2a+2-2\ell-2d)!}
\zeta(2a+2b+4-2\ell-2d) \frac{1}{z_1^{2d+2}},\label{eq:Laplace_C-term_2}
\end{align}
and we find the term involving $R(L_1,y,x)\mathsf{f}^{\mathrm{MV}}(x)$ by replacing the role of parameters $a$ and $b$ in equation~\eqref{eq:Laplace_C-term_2}:
\begin{align*}
\frac{1}{2}(2b+1)!
\sum_{\ell=0}^{b+1}\sum_{d=0}^{b+1-\ell}s_{\ell} \frac{(2a+2b+3-2\ell-2d)!}{(2b+2-2\ell-2d)!}
\zeta(2a+2b+4-2\ell-2d) \frac{1}{z_1^{2d+2}}.
\end{align*}
Finally the term involving $\mathsf{f}^{\mathrm{MV}}(x)\mathsf{f}^{\mathrm{MV}}(y)$ yields
\begin{align}
\label{eq:Laplace_C-term_4}
\frac{1}{2z_1^2} (2a+1)! (2b+1)! \zeta(2a+2) \zeta(2b+2).
\end{align}
On the other hand,
adopting equations~\eqref{hurwitz_def} and \eqref{eq:1/y} on the right-hand side of equation~\eqref{eq:twist_C-term_bosonic},
we correctly recover the sum of four terms \eqref{eq:Laplace_C-term_1} -- \eqref{eq:Laplace_C-term_4}.
\end{proof}

Combining the claims in Lemmas \ref{lem:Laplace_B-term} and \ref{lem:Laplace_C-term} as well as
the expansions \eqref{eq:expansion_volume} and \eqref{eq:twisted_multidifferential} with \eqref{twisted_1-form_sp},
we find the claim of Proposition \ref{prop:Laplace_dual_untwisted}.

\subsubsection[Laplace dual relation for the (2,2p-2) minimal superstring]{Laplace dual relation for the $\boldsymbol{(2,2p-2)}$ minimal superstring}\label{sec:Laplace_super}

We will show that the Laplace dual relation \eqref{eq:Laplace_dual_GR} holds for the $(2,2p-2)$ minimal superstring.
Accordingly, by the specializations $p=\infty$ and $1$,
we also find that the Laplace dual relation holds for the super Weil--Petersson volumes and the supersymmetric analogue of the Kontsevich--Witten symplectic volumes, respectively.

\begin{prop}\label{prop:Laplace_dual_untwisted_super}
For the $(2,2p-2)$ minimal superstring, the Laplace transform of
the Mirzakhani type ABO topological recursion \eqref{eq:Mirzakhani's}
with the Masur--Veech type twist,
\begin{align}
&
\int_{\IR_+^n}
\Biggl[
\sum_{m=2}^n \int_{\IR_+} R(L_1,L_m,x)
xV_{g,n-1}\big[\sfm\big](x,L_{K\setminus \{m\}}) {\rm d}x\nonumber
\\
&\qquad{}
+\frac{1}{2}\int_{\IR_+^2}
\left(D(L_1,x,y)+R(L_1,x,y) \sfm(y) + R(L_1,y,x) \sfm(x)\right)\nonumber
\\
&\qquad\phantom{+}{}
\times xyP_{g,n}\big[\sfm\big](x,y,L_K) {\rm d}x{\rm d}y
\biggr]
\e^{-z_1L_1} {\rm d}L_1
\prod_{i=2}^n \e^{-z_iL_i} L_i{\rm d}L_i,\label{eq:Laplace_dual_untwisted_super}
\end{align}
agrees with the partially twist-eliminated CEO topological recursion \eqref{eq:top_rec_one_branch_untwisted} for $2g-2+n > 1$.
\end{prop}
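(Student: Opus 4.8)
The plan is to prove Proposition~\ref{prop:Laplace_dual_untwisted_super} by transposing the argument of Section~\ref{sec:Laplace_bosonic} to the supersymmetric kernel, taking advantage of the simplification produced by $\sfA^{\mathrm{SM}(p)}=0$. In the twisted initial data \eqref{abcd_twist} the function $\sfA$ is the coefficient of every genuinely new twist contribution, so $\sfA=0$ forces $\sfB[\sfm]=\sfB$ and reduces $\sfC[\sfm]$ to $\sfC(L_1,\ell,\ell')+\sfB(L_1,\ell,\ell')\,\sfm(\ell)+\sfB(L_1,\ell',\ell)\,\sfm(\ell')$. Through \eqref{mirz_abo_data} this is exactly the pair of kernels appearing in \eqref{eq:Laplace_dual_untwisted_super}, with no $L_1\sfm(x)$ term in the $B$-block and no $L_1\sfm(x)\sfm(y)$ term in the $C$-block; hence the identity to be proved is simply the Laplace image of the twisted Mirzakhani recursion with these two surviving blocks, to be matched term by term against the partially twist-eliminated recursion \eqref{eq:top_rec_one_branch_untwisted} with $\sfy=\sfy^{\mathrm{SM}(p)}$.

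First I would establish the moment formulae for $H^{\mathrm{SM}(p)}$ that replace \eqref{eq:formula1}--\eqref{eq:formula2}. Since the supersymmetric data \eqref{eq:DR_H_super} evaluate $H$ at a single point rather than integrating it over $[0,x]$, the basic object is the bare moment $g_{2k+1}^{\mathrm{SM}(p)}(s):=\int_{\IR_+}\frac{x^{2k+1}}{(2k+1)!}H^{\mathrm{SM}(p)}(x,s)\,dx$, together with its convolution avatar $\int_{\IR_+^2}\frac{x^{2a+1}y^{2b+1}}{(2a+1)!\,(2b+1)!}H^{\mathrm{SM}(p)}(x+y,s)\,dx\,dy=g_{2a+2b+3}^{\mathrm{SM}(p)}(s)$. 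Carrying out the $x$-integral against each exponential $\e^{\mp u'_j x}$ in \eqref{eq:H_super_minimal} produces the poles $(u'_j)^{-(2k+2)}$, and I expect the $\e^{\pm u'_j s}$ pieces to cancel exactly as in the bosonic case, leaving a polynomial in $s$ whose coefficients are the expansion coefficients $\tilde s_\ell$ of $1/\sfy^{\mathrm{SM}(p)}(z)=(-1)^{(p-1)/2}z/U_{p-1}(2\pi z/p)$. The crucial structural check is that the zeros of $U_{p-1}(2\pi z/p)$ sit precisely at $z=\pm u'_j$: for $p=3$ one has $U_2(x)=4x^2-1$ with zeros $x=\pm\tfrac12$, i.e.\ $z=\pm\tfrac{3}{4\pi}=\pm u'_1$, and in general $\cos(j\pi/p)=\sin\!\bigl((j'-\tfrac12)\pi/p\bigr)$ under the reindexing $j'=(p-2j+1)/2$. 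This provides the exact analogue of \eqref{eq:1/y} and ties the kernel data to $\sfy^{\mathrm{SM}(p)}$.

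With these formulae in hand I would prove two lemmas paralleling Lemmas~\ref{lem:Laplace_B-term} and~\ref{lem:Laplace_C-term}. For the $B$-block I would integrate $R(L_1,L_m,x)=\tfrac12\bigl(H^{\mathrm{SM}(p)}(x,L_1+L_m)+H^{\mathrm{SM}(p)}(x,L_1-L_m)\bigr)$ against $x^{2k+1}\e^{-z_1L_1-z_mL_m}$, doing the $x$-integral first via $g_{2k+1}^{\mathrm{SM}(p)}$ and then the Laplace transforms through \eqref{eq:Laplace_basic}, and match the output to $\tfrac12[w^0]\sum_{d}z_1^{-2d-2}w^{2d+1}\bigl(\sfy^{\mathrm{SM}(p)}(w)\bigr)^{-1}\,\cE[z_m]\{\cW_{0,2}[\sfm]\}(w,z_m)\,\zeta_{\mathrm{H}}(2k+2;w)\,(2k+1)!$, where $\sfy^{\mathrm{SM}(p)}(-w)=-\sfy^{\mathrm{SM}(p)}(w)$ collapses $\sfy(w)-\sfy(-w)$ to $2\sfy^{\mathrm{SM}(p)}(w)$ in the twisted kernel \eqref{tw_rec_kernel}. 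For the $C$-block I would compute separately the three surviving terms $D$, $R\,\sfm(y)$, $R\,\sfm(x)$ of \eqref{eq:twisted_DR} at $\sfA=0$: the $D$-term reduces to $g_{2a+2b+3}^{\mathrm{SM}(p)}(L_1)$ via the convolution formula and Laplace in $L_1$, while each $R\,\sfm$-term produces a $\zeta$-value through \eqref{eq:twist_zeta}; the three contributions are then repackaged into $\zeta_{\mathrm{H}}(2a+2;w)\,\zeta_{\mathrm{H}}(2b+2;w)$ using the Hurwitz expansion \eqref{hurwitz_def}. Combining the two lemmas with the expansions \eqref{eq:expansion_volume} and \eqref{eq:twisted_multidifferential} and the twisted one-form \eqref{twisted_1-form_sp} then yields the proposition.

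The main obstacle will be bookkeeping rather than conceptual. One must verify that the bare moments of $H^{\mathrm{SM}(p)}$ reorganise into the Hurwitz-zeta/partial-fraction form on the CEO side with no leftover polynomial remainder, and in particular that the degree shift between the bosonic and supersymmetric moment formulae---reflecting $\sfy^{\mathrm{SM}(p)}=\partial_{\sfx}\sfy^{\mathrm{M}(p)}$---deposits the $C$-block exactly on the product of two Hurwitz zetas. A secondary subtlety is the limit $p=1$: there $H^{\mathrm{SM}(1)}=H^{\mathrm{B}}$ is built from the $\delta$-functions carried by the $\delta_{p,1}$ term in \eqref{eq:H_super_minimal} and $1/\sfy^{\mathrm{SM}(1)}(z)=z$ has no poles, so I would check directly that the exponential-sum computation degenerates to the Bessel recursion of Section~\ref{subsec:tr_be}.
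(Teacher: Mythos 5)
Your proposal is correct and follows essentially the same route as the paper: the paper also reduces the claim to two lemmas (the analogues of Lemmas \ref{lem:Laplace_B-term} and \ref{lem:Laplace_C-term}) built on the moment formula \eqref{eq:formula1_super} for $H^{\mathrm{SM}(p)}$, whose coefficients $s'_\ell$ are identified with the Taylor coefficients of $1/\sfy^{\mathrm{SM}(p)}(z)$ via the partial-fraction expansion \eqref{eq:1/y_super} over the poles $\pm u'_j$ (the zeros of $U_{p-1}(2\pi z/p)$ you verified). Your opening observation that $\sfA^{\mathrm{SM}(p)}=0$ is what removes the $L_1\sfm(x)$ and $L_1\sfm(x)\sfm(y)$ terms relative to the bosonic Proposition \ref{prop:Laplace_dual_untwisted} is exactly the structural reason the supersymmetric statement takes the simpler form it does.
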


To prove Proposition \ref{prop:Laplace_dual_untwisted_super}, we use
an integration formula
\begin{align}
h_{2k+1}^{\mathrm{SM}(p)}(t):={}&
\int_{\IR_+} \frac{x^{2k+1}}{(2k+1)!} H^{\mathrm{SM}(p)}(x,t) {\rm d}x\nonumber
\\
={}&
\frac{t^{2k+1} \delta_{p,1}}{(2k+1)!}
-\frac{1}{\pi}\sum_{j=1}^{(p-1)/2}
(-1)^{j}\cos^2\left(\frac{\pi}{p}\left(j-\frac{1}{2}\right)\right)
\sum_{\ell=0}^{k}(u'_j)^{2\ell-2k-1}\frac{t^{2\ell+1}}{(2\ell+1)!}\nonumber
\\
={}&
\sum_{\ell=0}^{k} s'_{\ell} \frac{t^{2k+1-2\ell}}{(2k+1-2\ell)!},
\label{eq:formula1_super}
\end{align}
involving the kernel function $H^{\mathrm{SM}(p)}(x,y)$ in equation~\eqref{eq:H_super_minimal} for the $(2,2p-2)$ minimal superstring,
where $u_j'=(p/2\pi)\sin((j-1/2)\pi/p)$ in equation~\eqref{eq:uj_super}.
(See \cite[Section 5.4]{Norbury:2020vyi} for an analogous formula of Stanford--Witten's recursion for the super Weil--Petersson volumes.)
The following expansion for the $(2,2p-2)$ minimal superstring spectral curve
also gives the coefficients $s'_{\ell}$'s in equation~\eqref{eq:formula1_super}:
\begin{align}
\frac{1}{\sfy^{\mathrm{SM}(p)}(z)}={}&
(-1)^{\frac{p-1}{2}}\frac{z}{U_{p-1}\left(\frac{2\pi}{p}z\right)}\nonumber
 \\
={}&z \delta_{p,1}+\frac{z}{2\pi}\sum_{j=1}^{(p-1)/2}
(-1)^j \cos^2\left(\frac{\pi}{p}\left(j-\frac{1}{2}\right)\right)\left(
\frac{1}{z-u'_j}-\frac{1}{z+u'_j}
\right)\nonumber
\\
={}&\sum_{\ell \ge 0}s'_{\ell} z^{2\ell+1}.
\label{eq:1/y_super}
\end{align}

\begin{lem}\label{lem:Laplace_B-term_super}
In equation~\eqref{eq:Laplace_dual_untwisted_super},
the $x^{2k+1}$ term in $xV_{g,n-1}\big[\mathsf{f}^{\mathrm{MV}}\big](x,L_{K \setminus\{m\}})$ obeys
\begin{align}
&
\int_{\IR_+^3} R(L_1,L_m,x) x^{2k+1} \e^{-z_1L_1-z_mL_m}
{\rm d}x {\rm d}L_1 L_m{\rm d}L_m\nonumber
\\
&
\qquad=\frac{1}{2}\big[w^0\big]\sum_{d\ge 0}\frac{1}{z_1^{2d+2}}
\frac{w^{2d+1}}{\sfy^{\mathrm{SM}(p)}(w)}
\cW_{0,2}(w,z_m)
\zeta_{\mathrm{H}}(2k+2;w) (2k+1)!.
\label{eq:twist_B-term_super}
\end{align}
\end{lem}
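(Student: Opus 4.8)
The plan is to mirror the proof of Lemma~\ref{lem:Laplace_B-term} for the $(2,p)$ minimal string, replacing the bosonic data by the supersymmetric data. The essential simplification is that $\sfA^{\mathrm{SM}(p)}=0$ (eq.\eqref{gr_superminimal_ad}), so by the twisting rule \eqref{abcd_twist} the twisted datum $\sfB[\sfm]^{\mathrm{SM}(p)}$ coincides with the untwisted $\sfB^{\mathrm{SM}(p)}$; this is precisely why the left-hand side of \eqref{eq:twist_B-term_super} contains only $R(L_1,L_m,x)$ and no additional $L_1\,\sfm(x)$ term, in contrast to the bosonic Lemma~\ref{lem:Laplace_B-term}. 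I would therefore only need to match a single term on each side, using the supersymmetric form $R(L_1,L_m,x)=\tfrac{1}{2}\bigl(H^{\mathrm{SM}(p)}(x,L_1+L_m)+H^{\mathrm{SM}(p)}(x,L_1-L_m)\bigr)$ from \eqref{eq:DR_H_super}.

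First I would evaluate the left-hand side. Integrating over $x$ with the master formula \eqref{eq:formula1_super} turns $\int_{\IR_+}x^{2k+1}H^{\mathrm{SM}(p)}(x,t)\,dx$ into $(2k+1)!\sum_{\ell=0}^{k}s'_\ell\,t^{2k+1-2\ell}/(2k+1-2\ell)!$, evaluated at $t=L_1\pm L_m$. The decisive parity point, distinguishing this case from the bosonic one, is that the exponents $2k+1-2\ell$ are \emph{odd}: hence in the symmetric combination $(L_1+L_m)^{2k+1-2\ell}+(L_1-L_m)^{2k+1-2\ell}$ only even powers of $L_m$, and correspondingly odd powers of $L_1$, survive. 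Performing the two Laplace integrals \eqref{eq:Laplace_basic}, once in $L_1$ against $dL_1$ and once in $L_m$ against $L_m\,dL_m$, and writing $j=2d'$ for the surviving power of $L_m$, the binomial coefficients telescope and I expect to obtain
\begin{align}
(2k+1)!\sum_{\substack{d,d'\ge 0\\ d+d'\le k}}
s'_{k-d-d'}\,(2d'+1)\,\frac{1}{z_1^{2d+2}}\,\frac{1}{z_m^{2d'+2}}.
\nonumber
\end{align}

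Next I would expand the right-hand side. By \eqref{w02_sym} and the definition of the twist-elimination map one has $\cE[z_m]\{\cW_{0,2}[\sfm]\}(w,z_m)=2\sum_{d'\ge 0}(2d'+1)w^{2d'}/z_m^{2d'+2}$, where the replacement $\zeta_{\mathrm H}(2d'+2;z_m)\mapsto z_m^{-(2d'+2)}$ implements $\cE[z_m]$ on the basis $\xi_{d'}[\sfm]$. Substituting the expansions $1/\sfy^{\mathrm{SM}(p)}(w)=\sum_{\ell\ge 0}s'_\ell\,w^{2\ell+1}$ from \eqref{eq:1/y_super} and $\zeta_{\mathrm H}(2k+2;w)$ from \eqref{hurwitz_def}, the coefficient extraction $[w^0]$ forces the singular part $w^{-(2k+2)}$ of the Hurwitz zeta to cancel the positive powers $w^{2d+2d'+2\ell+2}$, selecting $\ell=k-d-d'$ and hence the coefficient $s'_{k-d-d'}$; the regular part of $\zeta_{\mathrm H}$ cannot contribute to $[w^0]$. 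This reproduces exactly the double sum displayed above, completing the identification.

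The main obstacle is the index and parity bookkeeping rather than any analytic difficulty: I must check that the odd-power structure of the supersymmetric $h$-function \eqref{eq:formula1_super} and the shifted expansion $1/\sfy^{\mathrm{SM}(p)}=\sum_\ell s'_\ell\,w^{2\ell+1}$ (versus $\sum_\ell s_\ell\,z^{2\ell-1}$ in the bosonic case \eqref{eq:1/y}) conspire so that the $[w^0]$ projection picks out precisely $s'_{k-d-d'}$, with the factor $(2d'+1)$ supplied by $\cW_{0,2}$. Verifying that these two parity shifts cancel, so that the same constraint $d+d'\le k$ emerges on both sides, is the one place where the supersymmetric computation genuinely differs from the template of Lemma~\ref{lem:Laplace_B-term}.
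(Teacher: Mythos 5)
Your proposal is correct and follows essentially the same route as the paper's proof: evaluate the left-hand side via eq.\eqref{eq:DR_H_super}, the integration formula \eqref{eq:formula1_super} and the Laplace transform \eqref{eq:Laplace_basic}, and match it to the right-hand side expanded through \eqref{hurwitz_def} and \eqref{eq:1/y_super}, with $[w^0]$ pairing the singular part of $\zeta_{\mathrm H}$ against the positive powers of $w$. Your double sum $(2k+1)!\sum_{d+d'\le k}s'_{k-d-d'}(2d'+1)z_1^{-2d-2}z_m^{-2d'-2}$ is exactly the paper's intermediate expression after the substitution $d'=k-\ell-d$, and your observation that $\sfA^{\mathrm{SM}(p)}=0$ removes the $L_1\sfm(x)$ term correctly explains the structural difference from Lemma \ref{lem:Laplace_B-term}.
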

\begin{proof}
The equation \eqref{eq:twist_B-term_super} is verified in the parallel way as equation~\eqref{eq:twist_B-term_bosonic}.
We rewrite the left-hand side of equation~\eqref{eq:twist_B-term_super}
by equations~\eqref{eq:DR_H_super}, \eqref{eq:Laplace_basic}, and \eqref{eq:formula1_super}:
\begin{align*}
&
\frac12\int_{\IR_+^3}
\big(H^{\mathrm{SM}(p)}(x,L_1+L_m)+H^{\mathrm{SM}(p)}(x,L_1-L_m)\big)
x^{2k+1} \e^{-z_1L_1-z_mL_m}
{\rm d}x {\rm d}L_1 L_m{\rm d}L_m
\\
&
\qquad=(2k+1)!\sum_{\ell=0}^{k}\sum_{d=0}^{k-\ell}s'_{\ell}\cdot(2k-2\ell-2d+1)
\frac{1}{z_1^{2d+2}}\frac{1}{z_m^{2k-2\ell-2d+2}}.
\end{align*}
The right-hand side of equation~\eqref{eq:twist_B-term_super} is in the same form \eqref{eq:twist_B-term_bosonic_RHS} as the $(2,p)$ minimal string.
By the formula \eqref{hurwitz_def} for $\zeta_{\mathrm{H}}(2k+2;w)$
and the expansion \eqref{eq:1/y_super} for $1/\sfy^{\mathrm{SM}(p)}(z)$,
the claim follows.
\end{proof}

\begin{lem}\label{lem:Laplace_C-term_super}
In equation~\eqref{eq:Laplace_dual_untwisted_super},
the $x^{2a+1}y^{2b+1}$ term in $xyP_{g,n}\big[\sfm\big](x,y,L_K)$ obeys
\begin{align}
&
\frac{1}{2}\int_{\IR_+^3}
\big(D(L_1,x,y)+R(L_1,x,y) \mathsf{f}^{\mathrm{MV}}(y)
+R(L_1,y,x) \mathsf{f}^{\mathrm{MV}}(x)\big)
x^{2a+1}y^{2b+1} \e^{-z_1L_1} {\rm d}x{\rm d}y{\rm d}L_1\nonumber
\\
&
\qquad=\frac{1}{2}\big[w^0\big]
\sum_{d\ge 0}\frac{1}{z_1^{2d+2}}
\frac{w^{2d+1}}{\sfy^{\mathrm{SM}(p)}(w)}
(\zeta_{\mathrm{H}}(2a+2;w) (2a+1)!)
(\zeta_{\mathrm{H}}(2b+2;w) (2b+1)!).
\label{eq:twist_C-term_super}
\end{align}
\end{lem}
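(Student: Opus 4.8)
The plan is to follow the proof of Lemma~\ref{lem:Laplace_C-term} line by line, replacing the bosonic data by their supersymmetric counterparts: the kernel relations eq.\eqref{eq:DR_H_super}, the integration formula eq.\eqref{eq:formula1_super}, and the expansion eq.\eqref{eq:1/y_super}. The one structural novelty is that the left-hand side now has three summands instead of four; this is forced by $\sfA^{\mathrm{SM}(p)}=0$ in eq.\eqref{gr_superminimal_ad}, which through Proposition~\ref{def:twist} kills every contribution proportional to $\sfA$ and in particular removes the term $L_1\,\sfm(x)\,\sfm(y)$ appearing in eq.\eqref{eq:twist_C-term_bosonic}. I would therefore treat $D(L_1,x,y)$, $R(L_1,x,y)\sfm(y)$ and $R(L_1,y,x)\sfm(x)$ separately and match each against a distinct part of the $[w^0]$-extraction on the right-hand side.

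First I would dispose of the $D$-term. Writing $D(L_1,x,y)=H^{\mathrm{SM}(p)}(x+y,L_1)$ by eq.\eqref{eq:DR_H_super} and collapsing the $x,y$ integrals through the beta-type identity $\int_0^s x^{2a+1}(s-x)^{2b+1}\,dx=(2a+1)!(2b+1)!\,s^{2a+2b+3}/(2a+2b+3)!$ (the super analogue of eq.\eqref{eq:formula2}), this reduces to $\tfrac12(2a+1)!(2b+1)!\int_{\IR_+}h^{\mathrm{SM}(p)}_{2a+2b+3}(L_1)\,\e^{-z_1L_1}\,dL_1$. Expanding $h^{\mathrm{SM}(p)}_{2a+2b+3}$ via eq.\eqref{eq:formula1_super} and applying eq.\eqref{eq:Laplace_basic} gives $\tfrac12(2a+1)!(2b+1)!\sum_{\ell=0}^{a+b+1}s'_\ell/z_1^{2a+2b+4-2\ell}$. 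On the right-hand side this is exactly the ``pole $\times$ pole'' contribution: in $\frac{w^{2d+1}}{\sfy^{\mathrm{SM}(p)}(w)}=\sum_\ell s'_\ell\,w^{2d+2\ell+2}$, multiplied by the double pole $w^{-2a-2b-4}$ of $\zeta_{\mathrm H}(2a+2;w)\,\zeta_{\mathrm H}(2b+2;w)$, the constraint $[w^0]$ forces $d=a+b+1-\ell$, reproducing the same sum.

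Next I would compute the two $R\sfm$-terms. Inserting $R$ from eq.\eqref{eq:DR_H_super}, integrating the untwisted variable against $H^{\mathrm{SM}(p)}$ by eq.\eqref{eq:formula1_super} and the twisted variable against $\sfm$ by eq.\eqref{eq:twist_zeta}, one obtains $\zeta$-weighted double sums that are the super analogues of eqs.\eqref{eq:Laplace_C-term_2} and \eqref{eq:Laplace_C-term_3}. On the right-hand side these correspond precisely to the two ``pole $\times$ regular'' cross terms, where one Hurwitz zeta contributes its leading pole and the other its Taylor part (supplying the $\zeta$-factors via eq.\eqref{hurwitz_def}). Crucially, the ``regular $\times$ regular'' part of $\zeta_{\mathrm H}(2a+2;w)\,\zeta_{\mathrm H}(2b+2;w)$ contributes nothing to $[w^0]$ here, since $\frac{w^{2d+1}}{\sfy^{\mathrm{SM}(p)}(w)}=O(w^{2})$ for every $d\ge0$ (because $\sfy^{\mathrm{SM}(p)}(w)\sim 1/w$ near $w=0$); this is the analytic shadow of the vanishing $\sfm\sfm$-term and distinguishes the super computation from the bosonic one, where $\frac{w^{2d+1}}{\sfy^{\mathrm{M}(p)}(w)}=O(w^{0})$ and the regular part does survive as eq.\eqref{eq:Laplace_C-term_4}.

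The main obstacle will be the bookkeeping of this last matching step: one must check that, after substituting eq.\eqref{hurwitz_def} and eq.\eqref{eq:1/y_super}, the coefficient of $w^0$ in the full product $\frac{w^{2d+1}}{\sfy^{\mathrm{SM}(p)}(w)}\,\zeta_{\mathrm H}(2a+2;w)\,\zeta_{\mathrm H}(2b+2;w)$, summed against $1/z_1^{2d+2}$, reassembles exactly the three sums produced from the left-hand side, with the $s'_\ell$ read off eq.\eqref{eq:1/y_super} matching those in eq.\eqref{eq:formula1_super}. This is the same binomial/$\zeta$ identity underlying Lemma~\ref{lem:Laplace_C-term}, so I expect it to close by the identical manipulation; once it does, combining Lemma~\ref{lem:Laplace_C-term_super} with Lemma~\ref{lem:Laplace_B-term_super} and the expansions eq.\eqref{eq:expansion_volume} and eq.\eqref{eq:twisted_multidifferential} completes Proposition~\ref{prop:Laplace_dual_untwisted_super}.
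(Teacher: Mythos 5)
Your proposal is correct and follows exactly the route the paper takes: its own proof of this lemma is the single sentence that eq.\eqref{eq:twist_C-term_super} is ``verified in the parallel way as eq.\eqref{eq:twist_C-term_bosonic}'', and you carry out precisely that parallel computation with eqs.\eqref{eq:DR_H_super}, \eqref{eq:formula1_super}, \eqref{eq:1/y_super} in place of their bosonic counterparts. Your observation that the missing $L_1\sfm(x)\sfm(y)$ term is mirrored on the right-hand side by $w^{2d+1}/\sfy^{\mathrm{SM}(p)}(w)=O(w^2)$ killing the regular-times-regular part of $\zeta_{\mathrm H}\zeta_{\mathrm H}$ is exactly the point that makes the super case close, and it checks out.
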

\begin{proof}
The equation \eqref{eq:twist_C-term_super} is verified in the parallel way as equation~\eqref{eq:twist_C-term_bosonic}.
\end{proof}

Combining the claims in Lemmas \ref{lem:Laplace_B-term_super} and \ref{lem:Laplace_C-term_super}, we find the claim of Proposition \ref{prop:Laplace_dual_untwisted_super}.

\section{Virasoro constraints}\label{sec:virasoro_constraint}

In this section, we first overview an algebraic formulation, called the quantum Airy structures~\cite{Andersen:2017vyk,Kontsevich:2017vdc},
of the ABO topological recursion and the CEO topological recursion.
In particular, we will see the equivalence between
the quantum Airy structures and the Virasoro constraints
for the physical 2D gravity models in Table \ref{tab:spectral_curve}.
We then discuss explicit computation of the volume coefficients
$F^{(g)}_{a_1,\dots,a_n}$ as well as the twisted volume coefficients
\smash{$F^{(g)}\big[\sfm\big]_{a_1,\dots,a_n}$} using
the cut-and-join equations in \cite{Alexandrov:2010bn,Alexandrov:2016kjl}
derived from the Virasoro constraints and homogeneity conditions,
and a group action in \cite{Andersen:2017vyk} which is associated with the twist action.

\subsection{Formulation}\label{sec:def_vir}

Consider the generating function of the volume coefficients
$F^{(g)}_{a_1,\dots,a_n}$ in equation~\eqref{gr_exp} of the ABO topological recursion
or equation~\eqref{laplace_exp} of the CEO topological recursion for $2g-2+n > 0$:
\begin{align}
& Z(\hbar; \textbf{t})= \e^{F(\hbar; \textbf{t})},\nonumber
\\
& F(\hbar; \textbf{t})
= 
\sum_{g \ge 0} \hbar^{2g-2} F_{g}(\textbf{t})
=
\sum_{g \ge 0, n\ge 1} \hbar^{2g-2}
\sum_{a_1,\dots,a_n \ge 0} F^{(g)}_{a_1,\dots,a_n}
\frac{t_{a_1}\cdots t_{a_n}}{n!}.
\label{gen_fn}
\end{align}
Here $\textbf{t}=\{t_0, t_1, t_2, \dots \}$ is the set of variables $t_a$
which are related to the length variables $L_i$ in equation~\eqref{gr_exp}
and the spectral curve variables $z_i$ in equation~\eqref{laplace_exp} by
\begin{align*}
\frac{t_{a_1}\cdots t_{a_n}}{n!}
\ \ \longleftrightarrow\ \
\frac{L_1^{2a_1}}{(2a_1+1)!} \cdots \frac{L_n^{2a_n}}{(2a_n+1)!}
\ \ \longleftrightarrow\ \
\frac{1}{z_1^{2a_1+2}} \cdots \frac{1}{z_n^{2a_n+2}}.
\end{align*}
In this set of variables, the recursion \eqref{geom_rec_coeff} leads to
\begin{align*}
\partial_{k} F(\hbar; \textbf{t})={}&
\sum_{a, b \ge 0} \sfB^{k}_{a, b} t_a \partial_b F(\hbar; \textbf{t})
+ \frac{\hbar^2}{2} \sum_{a, b \ge 0} \sfC^{k}_{a, b}
(\partial_a \partial_b F(\hbar; \textbf{t})
+ \partial_a F(\hbar; \textbf{t}) \partial_b F(\hbar; \textbf{t}) )
\\
&
+ \frac{1}{2\hbar^2} \sum_{a, b \ge 0} \sfA^{k}_{a, b} t_a t_b
+ \sfD^{k},
\end{align*}
where $\partial_a = \partial/\partial t_a$,
and the following proposition is obtained.

\begin{prop}[\cite{Andersen:2017vyk}]\label{prop:viraroso_const}
The generating function \eqref{gen_fn} satisfies constraint equations,
\begin{align}
\widehat{L}_k Z(\hbar; \textbf{t})=0,\qquad
k \ge -1,
\label{airy_constraint}
\end{align}
where
\begin{align}
\widehat{L}_k=
-\frac12 \partial_{k+1}
+ \frac{1}{4\hbar^2} \sum_{a, b \ge 0} \sfA^{k+1}_{a, b} t_a t_b
+ \frac12 \sum_{a, b \ge 0} \sfB^{k+1}_{a, b} t_a \partial_b
+ \frac{\hbar^2}{4} \sum_{a, b \ge 0} \sfC^{k+1}_{a, b} \partial_a \partial_b
+ \frac12 \sfD^{k+1}.
\label{airy_gen}
\end{align}
\end{prop}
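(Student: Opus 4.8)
The plan is to treat the displayed identity for $\partial_k F(\hbar;\textbf{t})$ preceding the statement as the essential ingredient and to deduce the constraints $\widehat{L}_k Z=0$ from it by the elementary substitution $Z=\e^{F}$. First I would record the three identities that follow from $Z=\e^{F}$: $\partial_{k+1}Z=(\partial_{k+1}F)\,Z$, $t_a\partial_b Z=t_a(\partial_b F)\,Z$, and $\partial_a\partial_b Z=(\partial_a\partial_b F+\partial_a F\,\partial_b F)\,Z$. Substituting these into the operator $\widehat{L}_k$ of \eqref{airy_gen} and factoring out the invertible formal series $Z$ gives $\widehat{L}_k Z=-\tfrac12\,\Phi_{k+1}\,Z$, where $\Phi_{k+1}$ is the difference between the two sides of the $\partial_{k+1}F$-equation. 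Since that equation asserts exactly $\Phi_{k+1}=0$, the constraint $\widehat{L}_k Z=0$ follows for every $k+1\ge 0$, i.e.\ for all $k\ge -1$; the shift $k\mapsto k+1$ simply matches the labelling in \eqref{airy_gen}.

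Because the genuine content lies in the $\partial_k F$-identity itself, I would also reconstruct its derivation from the coefficient recursion \eqref{geom_rec_coeff}. Using the symmetry of the volume coefficients $F^{(g)}_{a_1,\ldots,a_n}$ in all of their indices, one has $\partial_k F=\sum_{g,n}\hbar^{2g-2}\sum_{a_2,\ldots,a_n}F^{(g)}_{k,a_2,\ldots,a_n}\,\frac{t_{a_2}\cdots t_{a_n}}{(n-1)!}$, and the task is to substitute \eqref{geom_rec_coeff} for $F^{(g)}_{k,a_2,\ldots,a_n}$ and resum term by term. In the $\sfB$-term the symmetry lets me replace the sum over $m$ by the factor $n-1$, which together with $\tfrac{1}{(n-1)!}$ produces $\tfrac{1}{(n-2)!}$ and collapses the contribution to $\sum_{a,b}\sfB^{k}_{a,b}\,t_a\,\partial_b F$. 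The $\sfC$-term splits into a connected piece $F^{(g-1)}_{a,b,a_2,\ldots,a_n}$, which reassembles into $\tfrac{\hbar^2}{2}\sum_{a,b}\sfC^{k}_{a,b}\,\partial_a\partial_b F$ after the genus shift $\hbar^{2g-2}=\hbar^2\,\hbar^{2(g-1)-2}$, and a disconnected stable sum over $K=J\sqcup J'$; here the identity $\binom{n-1}{|J|}/(n-1)!=1/(|J|!\,|J'|!)$ matches the normalisations of $\partial_a F$ and $\partial_b F$, while $\hbar^{2g-2}=\hbar^2\,\hbar^{2h-2}\,\hbar^{2h'-2}$ supplies the extra $\hbar^2$, yielding $\tfrac{\hbar^2}{2}\sum_{a,b}\sfC^{k}_{a,b}\,\partial_a F\,\partial_b F$.

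The remaining two terms come from the base cases excluded by the recursion. Differentiating the genus-zero three-point contribution $\hbar^{-2}\sum_{a_1,a_2,a_3}\sfA^{a_1}_{a_2,a_3}\,t_{a_1}t_{a_2}t_{a_3}/3!$ gives $\tfrac{1}{2\hbar^2}\sum_{a,b}\sfA^{k}_{a,b}\,t_a t_b$, and differentiating the $(1,1)$ contribution $\sum_{a_1}\sfD^{a_1}t_{a_1}$ gives $\sfD^{k}$, in accordance with \eqref{geom_rec_coeff_ini}. Assembling these four pieces reproduces the $\partial_k F$-identity, after which the exponential rewriting of the first paragraph finishes the proof. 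I expect the main obstacle to be precisely this bookkeeping of symmetry and combinatorial factors: one must verify that the restriction to \emph{stable} pairs in \eqref{geom_rec_coeff} matches exactly the terms actually present in $\partial_a F\,\partial_b F$, so that no unstable $(0,1)$ or $(0,2)$ contributions are spuriously generated, and that the base-case monomials are neither omitted nor double-counted by the resummed $\sfB$- and $\sfC$-terms. Once these points are checked, the passage from the $\partial_k F$-identity to $\widehat{L}_k Z=0$ is immediate.
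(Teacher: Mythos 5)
Your proposal is correct and follows essentially the same route as the paper: the paper derives the displayed identity for $\partial_k F$ from the coefficient recursion \eqref{geom_rec_coeff} together with the base cases \eqref{geom_rec_coeff_ini}, and then observes that under $Z=\e^{F}$ this identity is exactly $\widehat{L}_{k-1}Z=0$ after the index shift. Your bookkeeping of the symmetry factors, the genus/$\hbar$ shifts, the stability restriction, and the $\sfA$- and $\sfD$-base cases is the correct filling-in of the details the paper leaves implicit.
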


When the differential operators $\widehat{L}_k$ satisfy
\begin{align}
\big[\widehat{L}_k, \widehat{L}_{\ell} \big]
= \sum_{a \ge -1} f^{a}_{k,\ell} \widehat{L}_{a},
\qquad
k, \ell \ge -1,
\label{q_airy_rel}
\end{align}
where $f^{a}_{k,\ell}$ are scalars, the operators $\widehat{L}_k$ define
a so called \emph{quantum Airy structure} on the space of the variables $t_a$ \cite{Andersen:2017vyk,Kontsevich:2017vdc}.
The quantum Airy structure is shown to be a sufficient condition for
the existence of the solution to the constraint equations \eqref{airy_constraint} \cite{Andersen:2017vyk}.
In particular, when the differential operators $\widehat{L}_k$ satisfy
the \emph{Virasoro relations}
\begin{align}
\big[\widehat{L}_k, \widehat{L}_{\ell}\big]
= \left(k-\ell \right) \widehat{L}_{k+\ell},
\qquad
k, \ell \ge -1,
\label{virasoro_rel}
\end{align}
the constraint equations \eqref{airy_constraint} are referred to as
the \emph{Virasoro constraints}.
As we will show below, the generating functions
of the volume coefficients for the physical 2D gravity models in Table~\ref{tab:spectral_curve}, with or without twist,
satisfy the Virasoro constraints.

\subsection{Bosonic models}\label{sec:examples_vir_boson}

Here we discuss the bosonic models in Table \ref{tab:spectral_curve}.

\subsubsection{Airy and KdV}\label{subsec:vir_kw}

From equation~\eqref{gr_kw_abcd},
the Airy initial data for the Kontsevich--Witten symplectic volumes of moduli spaces of stable curves are
\begin{alignat}{3}
&\sfA^{a_1}_{a_2,a_3}= \delta_{a_1, a_2, a_3, 0},
\qquad&&
\sfB^{a_1}_{a_2,a_3}=(2a_2+1) \delta_{a_1+a_2, a_3+1},&\nonumber
\\
&\sfC^{a_1}_{a_2,a_3}= \delta_{a_1, a_2+a_3+2},
\qquad&&
\sfD^{a_1}=\frac{1}{8} \delta_{a_1, 1},&\label{vir_kw_abcd}
\end{alignat}
and the differential operators for $k \ge -1$ in equation~\eqref{airy_gen},
\begin{align}
\widehat{L}_k^{\mathrm{A}}=
-\frac12 \partial_{k+1}
+ \sum_{a \ge 0} \left( a+\frac12 \right) t_a \partial_{a+k}
+ \frac{\hbar^2}{4}
\mathop{\sum_{a, b \ge 0}}\limits_{a+b=k-1} \partial_a \partial_b
+ \frac{1}{4\hbar^2} t_0^2 \delta_{k, -1}
+ \frac{1}{16} \delta_{k, 0},
\label{vir_kw}
\end{align}
satisfy the Virasoro relations
\begin{align*}
\big[\widehat{L}_k^{\mathrm{A}}, \widehat{L}_{\ell}^{\mathrm{A}}\big]
= (k-\ell) \widehat{L}_{k+\ell}^{\mathrm{A}},
\qquad
k, \ell \ge -1.
\end{align*}
Then, the constraint equations \eqref{airy_constraint} provide
the Virasoro constraints \cite{Dijkgraaf:1990rs,Fukuma:1990jw},
\begin{align}
\widehat{L}_k^{\mathrm{A}} Z^{\mathrm{A}}(\hbar; \textbf{t})=0,\qquad
k \ge -1,
\label{vir_const_kw}
\end{align}
for the generating function of the Airy volume coefficients \eqref{feg_kw}:
\begin{align}
\log Z^{\mathrm{A}}(\hbar; \textbf{t})=
\sum_{g \ge 0} \hbar^{2g-2} F^{\mathrm{A}}_{g}(\textbf{t})
=
\sum_{g \ge 0, n\ge 1} \hbar^{2g-2}
\mathop{\sum_{a_1,\dots,a_n \ge 0}}\limits_{|\mathbf{a}|=3g-3+n}
F^{\mathrm{A}(g)}_{a_1,\dots,a_n}
\frac{t_{a_1}\cdots t_{a_n}}{n!},
\label{kw_pf}
\end{align}
which satisfies the homogeneity condition \eqref{hom_kw}.

\begin{rem}
The Virasoro operators $\widehat{L}_k^{\mathrm{A}}$ in equation~\eqref{vir_kw} admits
the free field realization \cite{Dijkgraaf:1990rs,Fukuma:1990jw}:
\begin{align*}
T^{\mathrm{A}}(x)=: 
  \partial\phi^{\mathrm{A}}(x)\partial\phi^{\mathrm{A}}(x) \!:
+ \frac{1}{16x^2}
= \sum_{n \in {\IZ}} \widehat{L}_n^{\mathrm{A}} x^{-n-2},
\end{align*}
by a chiral bosonic field with the anti-periodic boundary condition:
\begin{align*}
\partial \phi^{\mathrm{A}}(x)=
\frac{1}{\hbar} \sum_{n \ge 0}
\left(n+\frac12\right)\left(t_n-\frac13 \delta_{n,1}\right) x^{n-\frac12}
+ \frac{\hbar}{2} \sum_{n \ge 0} \partial_n x^{-n-\frac32}
= \sum_{n \in {\IZ}} \alpha_{n+\frac12}^{\mathrm{A}} x^{-n-\frac32}.
\end{align*}
\end{rem}

From the Virasoro constraints \eqref{vir_const_kw} with
the homogeneity condition \eqref{hom_kw},
a cut-and-join representation of the Airy generating function \eqref{kw_pf}
is derived in \cite{Alexandrov:2010bn} following \cite{Morozov:2009xk}.

\begin{prop}[\cite{Alexandrov:2010bn}]\label{prop:cut_join_kw}
The Airy generating function \eqref{kw_pf}, which is a solution of the Virasoro constraints \eqref{vir_const_kw},
is given by
\begin{align}
Z^{\mathrm{A}}(x\hbar; x\textbf{t})
= \sum_{k \ge 0} x^k Z^{\mathrm{A}}_k(\hbar; \textbf{t})
= \e^{x \widehat{W}^{\mathrm{A}}}\cdot 1,
\label{cj_eq_kw}
\end{align}
where the cut-and-join operator $\widehat{W}^{\mathrm{A}}$ is
\begin{align*}
\widehat{W}^{\mathrm{A}}= &
\frac13 \sum_{a, b \ge 0}\left(2a+1\right)\left(2b+1\right)
t_a t_b \partial_{a+b-1}
+ \frac{\hbar^2}{6} \sum_{a, b \ge 0}\left(2a+2b+5\right)
t_{a+b+2}\partial_a \partial_b
+ \frac{t_0^3}{6\hbar^2} + \frac{t_1}{8},
\end{align*}
where $\partial_{-1}=0$.
\end{prop}
\begin{proof}
From the Virasoro constraints \eqref{vir_const_kw}, we find
\begin{align}
0=\frac23 \sum_{k \ge 0} (2k+1) t_k
\widehat{L}_{k-1}^{\mathrm{A}} Z^{\mathrm{A}}(\hbar; \textbf{t})
=
-\widehat{D}^{\mathrm{A}} Z^{\mathrm{A}}(\hbar; \textbf{t})
+ \widehat{W}^{\mathrm{A}} Z^{\mathrm{A}}(\hbar; \textbf{t}),
\label{cj_kw_proof}
\end{align}
where $\widehat{D}^{\mathrm{A}}$ denotes the Euler operator
\begin{align*}
\widehat{D}^{\mathrm{A}}=\frac13 \sum_{k \ge 0}
(2k+1) t_k \partial_k.
\end{align*}
The homogeneity condition \eqref{hom_kw} leads to
the action of the Euler operator on $Z^{\mathrm{A}}$ such that
\begin{align*}
\widehat{D}^{\mathrm{A}} Z^{\mathrm{A}}_k(\hbar; \textbf{t})=
k Z^{\mathrm{A}}_k(\hbar; \textbf{t}),
\end{align*}
and equation~\eqref{cj_kw_proof} gives
\begin{align}
\sum_{k \ge 0} x^{k+1}
\widehat{W}^{\mathrm{A}} Z^{\mathrm{A}}_k(\hbar; \textbf{t})
= \sum_{k \ge 1} x^k k Z^{\mathrm{A}}_k(\hbar; \textbf{t}).
\label{eq:cut_join_airy0}
\end{align}
From $x^k$ terms in equation~\eqref{eq:cut_join_airy0},
a recursion relation
\smash{$\widehat{W}^{\mathrm{A}} Z^{\mathrm{A}}_{k-1}(\hbar; \textbf{t})
=kZ^{\mathrm{A}}_k(\hbar; \textbf{t})$}
is found, and we obtain a solution $Z^{\mathrm{A}}_k(\hbar; \textbf{t})$ by adopting
the recursion relation iteratively,
\begin{align}
\label{eq:solution_Zk_A}
Z^{\mathrm{A}}_k(\hbar; \textbf{t})
= \frac{1}{k} \widehat{W}^{\mathrm{A}} Z^{\mathrm{A}}_{k-1}(\hbar; \textbf{t})
= \cdots
= \frac{1}{k!} \big(\widehat{W}^{\mathrm{A}}\big)^k \cdot 1,
\end{align}
where $Z^{\mathrm{A}}_{0}(\hbar; \textbf{t})=1$.
And finally, we find the cut-and-join representation \eqref{cj_eq_kw} by taking a~generating series of $Z^{\mathrm{A}}_k(\hbar; \textbf{t})$ in equation~\eqref{eq:solution_Zk_A}.
\end{proof}

By the iterative use of the cut-and-join equation \eqref{cj_eq_kw},
we obtain
the first few volume coefficients in equation~\eqref{kw_pf} as:
\begin{align}
& F^{\mathrm{A}}_{0}(\textbf{t})= 
\frac{t_{0}^{3}}{6} + \frac{t_{0}^{3} t_{1}}{2} +
\left(\frac{5}{8} t_{0}^{4} t_{2} + \frac{3}{2} t_{0}^{3} t_{1}^{2}\right)
+
\left(\frac{7}{8} t_{0}^{5} t_{3}+\frac{45}{8} t_{0}^{4} t_{1} t_{2}+\frac{9}{2} t_{0}^{3} t_{1}^{3}\right) + \cdots,\nonumber
\\
& F^{\mathrm{A}}_{1}(\textbf{t})= 
\frac{t_{1}}{8} + \left(\frac{5 t_{0} t_{2}}{8}+\frac{3 t_{1}^{2}}{16}\right)
+ \left(\frac{35}{16} t_{0}^{2} t_{3}+\frac{15}{4} t_{0} t_{1} t_{2}+\frac{3}{8} t_{1}^{3}\right) + \cdots,\nonumber
\\
& F^{\mathrm{A}}_{2}(\textbf{t})= 
\frac{105 t_{4}}{128}
+ \left(\frac{1155 t_{0} t_{5}}{128}+\frac{945 t_{1} t_{4}}{128}+\frac{1015 t_{2} t_{3}}{128}\right) + \cdots,\nonumber
\\
& F^{\mathrm{A}}_{3}(\textbf{t})= 
\frac{25025 t_{7}}{1024}
+ \left(\frac{425425 t_{8} t_{0}}{1024}+\frac{375375 t_{1} t_{7}}{1024}+\frac{385385 t_{2} t_{6}}{1024}+\frac{193655 t_{3} t_{5}}{512}+\frac{191205 t_{4}^{2}}{1024}\right)\nonumber
\\
& \hphantom{F^{\mathrm{A}}_{3}(\textbf{t})= }{}
+ \cdots.
\label{feg_comp_kw}
\end{align}

The Airy generating function \eqref{kw_pf} generates the coefficients \smash{$F^{\mathrm{KdV}(g)}_{a_1,\dots,a_n}$} in equation~\eqref{kdv_mdiff} by the following proposition.

\begin{prop}\label{prop:shift_kdv}
The generating function of the coefficients \smash{$F^{\mathrm{KdV}(g)}_{a_1,\dots,a_n}$} in equation~\eqref{kdv_mdiff},
\begin{align}
\log Z^{\mathrm{KdV}}(\hbar; \textbf{t})=
\sum_{g \ge 0, n\ge 1} \hbar^{2g-2}
\mathop{\sum_{a_1,\dots,a_n \ge 0}}\limits_{|\mathbf{a}| \le 3g-3+n}
F^{\mathrm{KdV}(g)}_{a_1,\dots,a_n}
\frac{t_{a_1}\cdots t_{a_n}}{n!},
\label{kdv_pf}
\end{align}
is obtained by shifting the variables in the Airy generating function \eqref{kw_pf} as
\begin{align}
t_a
 \to
t_a - \frac{\sfu_{2a-1}}{2a+1}
\qquad
\textrm{for}\quad a \ge 2.
\label{shift_kdv}
\end{align}
\end{prop}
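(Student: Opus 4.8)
The plan is to substitute the shift \eqref{shift_kdv} directly into the Airy generating function \eqref{kw_pf} and reorganize the resulting expansion into \eqref{kdv_pf}, the only real input being the explicit formula \eqref{kdv_airy} of Proposition \ref{prop:kdv} together with the symmetry of the Airy volume coefficients. Write $t'_a=t_a-c_a$ with $c_a=\sfu_{2a-1}/(2a+1)$ for $a\ge 2$ and $c_a=0$ for $a=0,1$, so that the assertion is $\log Z^{\mathrm{KdV}}(\hbar;\mathbf{t})=\log Z^{\mathrm{A}}(\hbar;\mathbf{t}')$. Since the shift is $\hbar$-independent, the genus grading is preserved and it suffices to match each $\hbar^{2g-2}F^{(g)}$ separately.

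First I would expand, at fixed $g$,
\begin{align*}
\sum_{n\ge 1}\sum_{a_1,\ldots,a_n}F^{\mathrm{A}(g)}_{a_1,\ldots,a_n}\frac{\prod_{i=1}^n(t_{a_i}-c_{a_i})}{n!}
=\sum_{n\ge 1}\sum_{a_1,\ldots,a_n}F^{\mathrm{A}(g)}_{a_1,\ldots,a_n}\frac{1}{n!}\sum_{S\subseteq\{1,\ldots,n\}}\prod_{i\in S}(-c_{a_i})\prod_{i\notin S}t_{a_i}.
\end{align*}
Using the full symmetry of $F^{\mathrm{A}(g)}_{a_1,\ldots,a_n}$ under permutations of its arguments, every subset $S$ of cardinality $m$ contributes equally, so I may fix $S$ to be the last $m$ slots and multiply by $\binom{n}{m}$. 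Relabelling the surviving indices as $a_1,\ldots,a_r$ (with $r=n-m$) and the contracted ones as $b_1,\ldots,b_m$, and using $\binom{r+m}{m}/(r+m)!=1/(r!\,m!)$ together with $c_b=0$ for $b<2$, the sum becomes
\begin{align*}
\sum_{r\ge 0}\frac{1}{r!}\sum_{a_1,\ldots,a_r}\Bigg[\sum_{m\ge 0}\frac{(-1)^m}{m!}\sum_{b_1,\ldots,b_m\ge 2}\Big(\prod_{j=1}^m\frac{\sfu_{2b_j-1}}{2b_j+1}\Big)F^{\mathrm{A}(g)}_{a_1,\ldots,a_r,b_1,\ldots,b_m}\Bigg]\prod_{i=1}^r t_{a_i}.
\end{align*}
The bracket is precisely the right-hand side of \eqref{kdv_airy}, i.e.\ it equals $F^{\mathrm{KdV}(g)}_{a_1,\ldots,a_r}$, which identifies the $r\ge 1$ part with \eqref{kdv_pf} termwise.

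What remains is the boundary bookkeeping, which I expect to be the only delicate point. For stable $(g,r)$ every Airy coefficient $F^{\mathrm{A}(g)}_{a_1,\ldots,a_r,b_1,\ldots,b_m}$ occurring is itself stable, so the entire series \eqref{kdv_airy} is genuinely reproduced. For the would-be unstable contributions $(g,r)\in\{(0,1),(0,2)\}$ the homogeneity condition \eqref{hom_kdv} gives $\sum_{i}a_i=3g-3+r+m-\sum_j b_j$ with $b_j\ge 2$, whence $\sum_i a_i\le 3g-3+r<0$; these coefficients therefore vanish and produce no spurious terms. The only genuine discrepancy is a collection of $\mathbf{t}$-independent constants arising from $r=0$, which do not affect any coefficient $F^{\mathrm{KdV}(g)}_{a_1,\ldots,a_n}$ with $n\ge 1$ and so may be dropped. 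The combinatorial reorganization above is the core of the argument, and the substantive analytic content is entirely contained in Proposition \ref{prop:kdv}.
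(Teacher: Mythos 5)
Your proposal is correct and follows essentially the same route as the paper's proof: substitute the shift into the Airy generating function, expand the product using the full symmetry of $F^{\mathrm{A}(g)}$ to collect the $\binom{n+m}{m}$ identical subset contributions, and recognize the bracketed sum as the right-hand side of eq.\eqref{kdv_airy} from Proposition \ref{prop:kdv}. Your treatment of the unstable $(g,r)$ cases and the $\mathbf{t}$-independent $r=0$ constants is in fact slightly more careful than the paper, which absorbs these into an ellipsis, but the core argument is identical.
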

\begin{proof}
The shift \eqref{shift_kdv} for the Airy generating function \eqref{kw_pf} gives
\begin{align*}
&
\mathop{\sum_{a_1,\dots,a_{n+m} \ge 0}}\limits_{|\mathbf{a}|=3g-3+n+m}
\frac{1}{(n+m)!}
F^{\mathrm{A}(g)}_{a_1,\dots,a_{n+m}}
\prod_{i=1}^{n+m}\left(t_{a_i} - \frac{\sfu_{2a_i-1}}{2a_i+1}\right)
\\
&\qquad=
\mathop{\sum_{a_1,\dots,a_{n+m} \ge 0}}\limits_{|\mathbf{a}|=3g-3+n+m}
\frac{1}{(n+m)!}
\binom{n+m}{m}
F^{\mathrm{A}(g)}_{a_1,\dots,a_{n+m}}
\Bigg(\prod_{i=1}^{n}t_{a_i}\Bigg)
\Bigg(\prod_{j=1}^{m} (-1) \frac{\sfu_{2a_{n+j}-1}}{2a_{n+j}+1}\Bigg)
+
\cdots,
\end{align*}
where $\sfu_{-1}=\sfu_{1}=0$.
This leads to the right-hand side of the formula \eqref{kdv_airy} in
Proposition \ref{prop:kdv}, and the claim is proved.
\end{proof}

Proposition \ref{prop:shift_kdv} implies that
the KdV generating function \eqref{kdv_pf} also satisfies the Virasoro constraints and can be computed by the cut-and-join equation \eqref{cj_eq_kw}
with the shift \eqref{shift_kdv}.

\subsubsection{Weil--Petersson volumes}\label{subsec:vir_mir}

From Proposition \ref{prop:shift_kdv},
the generating function of the Weil--Petersson volume coefficients
$F^{\mathrm{WP}(g)}_{a_1,\dots,a_n}$ in equations~\eqref{feg_mir} and \eqref{wp_mdiff},
\begin{align}
\log Z^{\mathrm{WP}}(\hbar; \textbf{t})=
\sum_{g \ge 0} \hbar^{2g-2} F^{\mathrm{WP}}_{g}(\textbf{t})
=
\sum_{g \ge 0, n\ge 1} \hbar^{2g-2}
\mathop{\sum_{a_1,\dots,a_n \ge 0}}\limits_{|\mathbf{a}| \le 3g-3+n}
F^{\mathrm{WP}(g)}_{a_1,\dots,a_n}
\frac{t_{a_1}\cdots t_{a_n}}{n!},
\label{mir_pf}
\end{align}
is given by the shift of variables in the Airy generating function \eqref{kw_pf} \cite{ManZog2000}
\begin{align}
t_a
 \to
t_a - \frac{\bigl(-2\pi^2\bigr)^{a-1}}{(2a+1)!! (a-1)!}
\qquad
\textrm{for}\quad a \ge 2.
\label{shift_mir}
\end{align}
This shift is found from the specialization \eqref{sp_kdv_wp} of the time variables $\sfu_{a}$ of the KdV spectral curve to obtain the Weil--Petersson spectral curve $\cC^{\mathrm{WP}}$.
The cut-and-join description of
the generating function \eqref{mir_pf} is obtained in \cite{Alexandrov:2021mjv}.

Adopting the shift \eqref{shift_mir} to equation~\eqref{feg_comp_kw},
we find the first few volume coefficients in equation~\eqref{mir_pf} such that
\begin{align*}
F^{\mathrm{WP}}_{0}(\textbf{t})={}&
\frac{t_{0}^{3}}{6}
+ \left(\frac{\pi^{2}}{12} t_{0}^{4} +\frac{1}{2} t_{1} t_{0}^{3}\right)
+ \left(\frac{\pi^{4}}{12} t_{0}^{5} +\frac{3}{4} \pi^{2} t_{0}^{4} t_{1}+\frac{5}{8} t_{0}^{4} t_{2}+\frac{3}{2} t_{0}^{3} t_{1}^{2}\right)
+ \cdots,
\\
F^{\mathrm{WP}}_{1}(\textbf{t})={}&
\left(\frac{t_{0} \pi^{2}}{12}+\frac{t_{1}}{8}\right)
+ \left(\frac{\pi^{4}}{8} t_{0}^{2} +\frac{\pi^{2}}{2} t_{0} t_{1} +\frac{5}{8} t_{2} t_{0}+\frac{3}{16} t_{1}^{2}\right)+ \left(\frac{7}{27}\pi^{6} t_{0}^{3} +\frac{13}{8}\pi^{4} t_{0}^{2} t_{1} \right.
\\
&\left.
{}+\frac{5}{2}\pi^{2} t_{0}^{2} t_{2}+\frac{9}{4}\pi^{2} t_{0} t_{1}^{2} +\frac{35}{16} t_{3} t_{0}^{2}+\frac{15}{4} t_{2} t_{1} t_{0}+\frac{3}{8} t_{1}^{3}\right) + \cdots,
\\
F^{\mathrm{WP}}_{2}(\textbf{t})={}&
\left(\frac{29}{192}\pi^{8} t_{0} +\frac{169}{480}\pi^{6} t_{1} +\frac{139}{192}\pi^{4} t_{2} +\frac{203}{192}\pi^{2} t_{3}+\frac{105}{128} t_{4}
\right) + \cdots.
\end{align*}

\subsubsection[(2,p) minimal string]{$\boldsymbol{(2,p)}$ minimal string}\label{subsec:vir_mg}

From Proposition \ref{prop:shift_kdv},
the generating function of the $(2,p)$ minimal string volume coefficients \smash{$F^{\mathrm{M}(p) (g)}_{a_1,\dots,a_n}$} in equation~\eqref{mst_mdiff},
\begin{align}
\log Z^{\mathrm{M}(p)}(\hbar; \textbf{t})=
\sum_{g \ge 0} \hbar^{2g-2} F^{\mathrm{M}(p)}_{g}(\textbf{t})
=
\sum_{g \ge 0, n\ge 1} \hbar^{2g-2}
\mathop{\sum_{a_1,\dots,a_n \ge 0}}\limits_{|\mathbf{a}| \le 3g-3+n}
F^{\mathrm{M}(p) (g)}_{a_1,\dots,a_n}
\frac{t_{a_1}\cdots t_{a_n}}{n!},
\label{mg_pf}
\end{align}
is given by the shift of variables in the Airy generating function \eqref{kw_pf},
\begin{align}
t_a
 \to
t_a - \frac{\bigl(-2\pi^2\bigr)^{a-1}}{(2a+1)!! (a-1)!}
\prod_{i=1}^{a-1}\left(1-\frac{(2i-1)^2}{p^2}\right)
\qquad
\textrm{for}\quad2 \le a \le \frac{p+1}{2}.
\label{shift_minimal_grav}
\end{align}
This shift is found from the specialization \eqref{sp_kdv_ms} of the time variables $\sfu_{a}$
of the KdV spectral curve to find the $(2,p)$ minimal string spectral curve $\cC^{\mathrm{M}(p)}$.

By the iterative use of the cut-and-join equation \eqref{cj_eq_kw} and the shift of variables \eqref{shift_minimal_grav}, we find
the first few volume coefficients in equation~\eqref{mg_pf} such that
\begin{align*}
&F^{\mathrm{M}(p)}_{0}(\textbf{t})= 
\frac{t_{0}^{3}}{6}
+ \left(\left(\frac{t_{0}^{4}}{12}-\frac{t_{0}^{4}}{12 p^{2}}\right) \pi^{2}+\frac{t_{1} t_{0}^{3}}{2}\right)
\\
& \hphantom{F^{\mathrm{M}(p)}_{0}(\textbf{t})=}{}
+ \left(\left(\frac{t_{0}^{5}}{12}-\frac{t_{0}^{5}}{30 p^{2}}-\frac{t_{0}^{5}}{20 p^{4}}\right) \pi^{4}+\left(\frac{3 t_{0}^{4} t_{1}}{4}-\frac{3 t_{0}^{4} t_{1}}{4 p^{2}}\right) \pi^{2}+\frac{5 t_{0}^{4} t_{2}}{8}+\frac{3 t_{0}^{3} t_{1}^{2}}{2}\right) + \cdots,
\\
&F^{\mathrm{M}(p)}_{1}(\textbf{t})= 
\left(\left(\frac{t_{0}}{12}-\frac{t_{0}}{12 p^{2}}\right) \pi^{2}+\frac{t_{1}}{8}\right)
\\
&\hphantom{F^{\mathrm{M}(p)}_{1}(\textbf{t})=}{}
+ \left(\left(\frac{t_{0}^{2}}{8}+\frac{t_{0}^{2}}{12 p^{2}}-\frac{5 t_{0}^{2}}{24 p^{4}}\right) \pi^{4}+\left(\frac{t_{0} t_{1}}{2}-\frac{t_{0} t_{1}}{2 p^{2}}\right) \pi^{2}+\frac{5 t_{2} t_{0}}{8}+\frac{3 t_{1}^{2}}{16}\right)
\\
&\hphantom{F^{\mathrm{M}(p)}_{1}(\textbf{t})=}{}
+ \left(\left(\frac{7 t_{0}^{3}}{27}+\frac{13 t_{0}^{3}}{27 p^{2}}-\frac{7 t_{0}^{3}}{9 p^{6}}+\frac{t_{0}^{3}}{27 p^{4}}\right) \pi^{6}+\left(-\frac{t_{0}^{2} t_{1}}{4 p^{2}}-\frac{11 t_{0}^{2} t_{1}}{8 p^{4}}+\frac{13 t_{0}^{2} t_{1}}{8}\right) \pi^{4}\right.
\\
&
\left.\hphantom{F^{\mathrm{M}(p)}_{1}(\textbf{t})=}{}\!
+\left(-\frac{5 t_{0}^{2} t_{2}}{2 p^{2}}-\frac{9 t_{0} t_{1}^{2}}{4 p^{2}}+\frac{5 t_{0}^{2} t_{2}}{2}+\frac{9 t_{0} t_{1}^{2}}{4}\right) \pi^{2}+\frac{35 t_{3} t_{0}^{2}}{16}+\frac{15 t_{2} t_{1} t_{0}}{4}+\frac{3 t_{1}^{3}}{8}\right)
+ \cdots,
\\
&F^{\mathrm{M}(p)}_{2}(\textbf{t})= 
\left(\left(\frac{29 t_{0}}{192}-\frac{6557 t_{0}}{2880 p^{8}}+\frac{497 t_{0}}{720 p^{2}}+\frac{587 t_{0}}{480 p^{4}}+\frac{17 t_{0}}{80 p^{6}}\right) \pi^{8}\right.\\
&\left.\hphantom{F^{\mathrm{M}(p)}_{2}(\textbf{t})= }{}
+\left(\frac{169 t_{1}}{480}-\frac{361 t_{1}}{480 p^{6}}+\frac{87 t_{1}}{160 p^{2}}-\frac{23 t_{1}}{160 p^{4}}\right) \pi^{6}\right.
\\
&
\left.\hphantom{F^{\mathrm{M}(p)}_{2}(\textbf{t})= }{}
+\left(\frac{139 t_{2}}{192}-\frac{23 t_{2}}{96 p^{2}}-\frac{31 t_{2}}{64 p^{4}}\right) \pi^{4}+\left(\frac{203 t_{3}}{192}-\frac{203 t_{3}}{192 p^{2}}\right) \pi^{2}+\frac{105 t_{4}}{128}\right)
+ \cdots.
\end{align*}
Note that the minimal string volume coefficients interpolate
the Airy volume coefficients at $p=1$ and
the Weil--Petersson volume coefficients at $p=\infty$:
\begin{align*}
F^{\mathrm{M}(1)}_{g}(\textbf{t})=F^{\mathrm{A}}_{g}(\textbf{t}),
\qquad
F^{\mathrm{M}(\infty)}_{g}(\textbf{t})=F^{\mathrm{WP}}_{g}(\textbf{t}).
\end{align*}

\subsection{Supersymmetric models}\label{sec:examples_vir_super}

Here we discuss the supersymmetric models in Table \ref{tab:spectral_curve}.

\subsubsection{Bessel and BGW}\label{subsec:vir_be}

From equation~\eqref{gr_be_abcd}, the Bessel initial data for the supersymmetric analogue of the symplectic volumes of moduli spaces of stable curves are
\begin{align}
&\sfA^{a_1}_{a_2,a_3}=0,
\qquad
\sfB^{a_1}_{a_2,a_3}=(2a_2+1) \delta_{a_1+a_2, a_3},\nonumber
\\
&\sfC^{a_1}_{a_2,a_3}=\delta_{a_1, a_2+a_3+1},
\qquad
\sfD^{a_1}=\frac{1}{8} \delta_{a_1, 0},
\label{vir_be_abcd}
\end{align}
and the differential operators in equation~\eqref{airy_gen},%
\footnote{Note that
$\widehat{L}_k^{\mathrm{B}}+\frac12 \partial_{k}=
\widehat{L}_k^{\mathrm{A}}+\frac12 \partial_{k+1}$ for $k \ge 0$.}
\begin{align}
\widehat{L}_k^{\mathrm{B}}=
-\frac12 \partial_{k}
+ \sum_{a \ge 0} \left( a+\frac12 \right) t_a \partial_{a+k}
+ \frac{\hbar^2}{4}
\mathop{\sum_{a, b \ge 0}}\limits_{a+b=k-1} \partial_a \partial_b
+ \frac{1}{16} \delta_{k, 0},
\qquad
k \ge 0,
\label{vir_be}
\end{align}
satisfy the Virasoro relations
\begin{align*}
\big[\widehat{L}_k^{\mathrm{B}}, \widehat{L}_{\ell}^{\mathrm{B}}\big]
= \left(k-\ell \right) \widehat{L}_{k+\ell}^{\mathrm{B}},
\qquad
k, \ell \ge 0.
\end{align*}
Then, the constraint equations \eqref{airy_constraint} provide
the Virasoro constraints \cite{Gross:1991ji,Mironov:1994mv},
\begin{align}
\widehat{L}_k^{\mathrm{B}} Z^{\mathrm{B}}(\hbar; \textbf{t})=0,\qquad
k \ge 0,
\label{vir_const_be}
\end{align}
for the generating function
of the Bessel volume coefficients \eqref{feg_be}
\begin{align}
\log Z^{\mathrm{B}}(\hbar; \textbf{t})=
\sum_{g \ge 0} \hbar^{2g-2} F^{\mathrm{B}}_{g}(\textbf{t})
=
\sum_{g \ge 0, n\ge 1} \hbar^{2g-2}
\mathop{\sum_{a_1,\dots,a_n \ge 0}}\limits_{|\mathbf{a}|=g-1}
F^{\mathrm{B}(g)}_{a_1,\dots,a_n}
\frac{t_{a_1}\cdots t_{a_n}}{n!},
\label{be_pf}
\end{align}
which satisfies the homogeneity condition \eqref{hom_be}.

From the Virasoro constraints \eqref{vir_const_be} with
the homogeneity condition \eqref{hom_be},
the following claim is proved in \cite{Alexandrov:2016kjl}.

\begin{prop}[\cite{Alexandrov:2016kjl}]\label{prop:cut_join_be}
The Bessel generating function \eqref{be_pf},
which is a solution of the Virasoro constraints \eqref{vir_const_be},
is given by
\begin{align}
Z^{\mathrm{B}}(x\hbar; x\textbf{t})
=\sum_{k \ge 0} x^k Z^{\mathrm{B}}_k(\hbar; \textbf{t})
=\e^{x \widehat{W}^{\mathrm{B}}}\cdot 1,
\label{cj_eq_be}
\end{align}
where the cut-and-join operator $\widehat{W}^{\mathrm{B}}$ is
\begin{align*}
\widehat{W}^{\mathrm{B}}=
\sum_{a, b \ge 0}(2a+1)(2b+1)
t_a t_b \partial_{a+b}
+ \frac{\hbar^2}{2} \sum_{a, b \ge 0}(2a+2b+3)
t_{a+b+1}\partial_a \partial_b
+ \frac{t_0}{8}.
\end{align*}
\end{prop}
\begin{proof}
From the Virasoro constraints \eqref{vir_const_be}, we find
\begin{align}
0=2 \sum_{k \ge 0} (2k+1) t_k
\widehat{L}_{k}^{\mathrm{B}} Z^{\mathrm{B}}(\hbar; \textbf{t})
=
-\widehat{D}^{\mathrm{B}} Z^{\mathrm{B}}(\hbar; \textbf{t})
+ \widehat{W}^{\mathrm{B}} Z^{\mathrm{B}}(\hbar; \textbf{t}),
\label{cj_be_proof}
\end{align}
where $\widehat{D}^{\mathrm{B}}$ denotes the Euler operator
\begin{align*}
\widehat{D}^{\mathrm{B}}=
\sum_{k \ge 0} (2k+1) t_k \partial_k.
\end{align*}
The homogeneity condition \eqref{hom_be} lead to the action of
the Euler operator acting on $Z^{\mathrm{B}}$ such that
\begin{align*}
\widehat{D}^{\mathrm{B}} Z^{\mathrm{B}}_k(\hbar; \textbf{t})=
k Z^{\mathrm{B}}_k(\hbar; \textbf{t}).
\end{align*}
Repeating the same analysis as the proof of Proposition \ref{prop:cut_join_kw},
we see that equation~\eqref{cj_be_proof} leads to equation~\eqref{cj_eq_be}.
\end{proof}

Due to the homogeneity condition \eqref{hom_be},
the $t_0$-dependence of \smash{$F^{\mathrm{B}}_{g}$} in
equation~\eqref{be_pf} are irrelevant to the genus growth,
and in fact, \smash{$F^{\mathrm{B}}_{g}$} for $g \ge 2$
are found from the following simple replacements of variables for
\smash{$F^{\mathrm{B}}_{g}$} with $t_0=0$
\cite{Alexandrov:2016kjl,Alexandrov:2009gn,Gross:1991aj,Okuyama:2020qpm}:
\begin{align*}
\hbar
\to
\frac{\hbar}{1-t_0},
\qquad
t_a
\to
\frac{t_a}{1-t_0}.
\end{align*}
The first few volume coefficients are found by the iterative use of the cut-and-join equation~\eqref{cj_eq_be} such that
\begin{align}
F^{\mathrm{B}}_{0}(\textbf{t})={}&0,
\qquad
F^{\mathrm{B}}_{1}(\textbf{t})
= -\frac18 \log (1-t_0)
= \frac{t_{0}}{8} + \frac{t_{0}^{2}}{16} + \frac{t_{0}^{3}}{24} +
\frac{t_{0}^{4}}{32} + \frac{t_{0}^{5}}{40} + \frac{t_{0}^{6}}{48} + \cdots,\nonumber
\\
F^{\mathrm{B}}_{2}(\textbf{t})
={}& \frac{9t_1}{128(1-t_0)^3}
= \frac{9 t_{1}}{128} + \frac{27 t_{0} t_{1}}{128}
+ \frac{27 t_{0}^{2} t_{1}}{64} + \frac{45 t_{0}^{3} t_{1}}{64}
+ \frac{135 t_{0}^{4} t_{1}}{128} + \cdots,\nonumber
\\
F^{\mathrm{B}}_{3}(\textbf{t})
={}& \frac{225t_2}{1024(1-t_0)^5} + \frac{567t_1^2}{1024(1-t_0)^6}
= \frac{225 t_{2}}{1024} +
\left(\frac{1125 t_{0} t_{2}}{1024}+\frac{567 t_{1}^{2}}{1024}\right)
+ \cdots,\nonumber
\\
F^{\mathrm{B}}_{4}(\textbf{t})
={}& \frac{55125 t_{3}}{32768 (1-t_0)^7} + \frac{388125 t_{1} t_{2}}{32768 (1-t_0)^8}
+ \frac{64989 t_1^3}{4096 (1-t_0)^9}\nonumber
\\
={}&\frac{55125 t_{3}}{32768}
+ \left(\frac{385875 t_{3} t_{0}}{32768}+\frac{388125 t_{2} t_{1}}{32768}\right)
\nonumber
\\
&+ \left(\frac{64989}{4096} t_{1}^{3}+\frac{385875}{8192} t_{3} t_{0}^{2}+\frac{388125}{4096} t_{2} t_{1} t_{0}\right)
+ \cdots.
\label{feg_comp_be}
\end{align}

Similar to Proposition \ref{prop:shift_kdv}, the Bessel generating function \eqref{be_pf} generates the coefficients \smash{$F^{\mathrm{BGW}(g)}_{a_1,\dots,a_n}$} in equation~\eqref{bgw_mdiff} by the following proposition which also implies the Virasoro constraints for the BGW generating function.

\begin{prop}\label{prop:shift_bgw}
The generating function of the coefficients
$F^{\mathrm{BGW} (g)}_{a_1,\dots,a_n}$ in equation~\eqref{bgw_mdiff},
\begin{align*}
\log Z^{\mathrm{BGW}}(\hbar; \textbf{t})=
\sum_{g \ge 0, n\ge 1} \hbar^{2g-2}
\mathop{\sum_{a_1,\dots,a_n \ge 0}}\limits_{|\mathbf{a}| \le g-1}
F^{\mathrm{BGW}(g)}_{a_1,\dots,a_n}
\frac{t_{a_1}\cdots t_{a_n}}{n!},
\end{align*}
is obtained by shifting the variables in the Bessel generating function \eqref{be_pf} as
\begin{align}
t_a
 \to
t_a - \frac{\sfv_{2a-1}}{2a+1}
\qquad
\textrm{for}\quad
 a \ge 1.
\label{shift_bgw}
\end{align}
\end{prop}

\subsubsection{Super Weil--Petersson volumes}\label{subsec:vir_sw}

From Proposition \ref{prop:shift_bgw},
the generating function of the super Weil--Petersson volume coefficients \smash{$F^{\mathrm{SWP}(g)}_{a_1,\dots,a_n}$} in equations~\eqref{feg_sw} and \eqref{super_wp_mdiff},
\begin{align}
\log Z^{\mathrm{SWP}}(\hbar; \textbf{t})=
\sum_{g \ge 0} \hbar^{2g-2} F^{\mathrm{SWP}}_{g}(\textbf{t})
=
\sum_{g \ge 0, n\ge 1} \hbar^{2g-2}
\mathop{\sum_{a_1,\dots,a_n \ge 0}}\limits_{|\mathbf{a}| \le g-1}
F^{\mathrm{SWP}(g)}_{a_1,\dots,a_n}
\frac{t_{a_1}\cdots t_{a_n}}{n!},
\label{sw_pf}
\end{align}
is given by the shift of variables in the Bessel generating function \eqref{be_pf} \cite{Norbury:2020vyi}:
\begin{align}
t_a
 \to
t_a - \frac{\bigl(-2\pi^2\bigr)^a}{(2a+1)!! a!}
\qquad
\textrm{for}\quad
 a \ge 1.
\label{shift_sw}
\end{align}
This shift is found from the specialization \eqref{sp_bgw_swp} of
the time variables $\sfv_{a}$ of the BGW spectral curve
to obtain the super Weil--Petersson spectral curve $\cC^{\mathrm{SWP}}$.
The cut-and-join description of
the generating function \eqref{sw_pf} is also obtained in \cite{Alexandrov:2021mjv}.

Adopting the shift \eqref{shift_sw} to equation~\eqref{feg_comp_be}, we find the first few volume coefficients in equation~\eqref{sw_pf} such that
\begin{align*}
F^{\mathrm{SWP}}_{0}(\textbf{t})={}&F^{\mathrm{B}}_{0}(\textbf{t})=0,
\qquad
F^{\mathrm{SWP}}_{1}(\textbf{t})=F^{\mathrm{B}}_{1}(\textbf{t})
=-\frac18 \log (1-t_0),
\\
F^{\mathrm{SWP}}_{2}(\textbf{t})
={}& \frac{9t_1}{128(1-t_0)^3}
+\frac{3\pi^2}{64}\left(\frac{1}{(1-t_0)^3}-1\right)
\\
={}&\left(\frac{9 \pi^{2} t_{0}}{64}+\frac{9 t_{1}}{128}\right)
+ \left(\frac{9}{32} \pi^{2} t_{0}^{2}+\frac{27}{128} t_{1} t_{0}\right)
+ \left(\frac{15}{32} \pi^{2} t_{0}^{3}+\frac{27}{64} t_{1} t_{0}^{2}\right)
+ \cdots,
\\
F^{\mathrm{SWP}}_{3}(\textbf{t})
={}& \frac{225t_2}{1024(1-t_0)^5} + \frac{567t_1^2}{1024(1-t_0)^6}
+ \frac{189 \pi^2 t_{1}}{256(1-t_{0})^{6}}
+ \frac{3\pi^4 (37+5 t_{0})}{512(1-t_{0})^{6}}
- \frac{111 \pi^{4}}{512}
\\
={}& \left(\frac{681}{512} \pi^{4} t_{0}+\frac{189}{256} \pi^{2} t_{1}+\frac{225}{1024} t_{2}\right)
\\
&+ \left(\frac{2421}{512} \pi^{4} t_{0}^{2}+\frac{567}{128} \pi^{2} t_{1} t_{0}+\frac{1125}{1024} t_{2} t_{0}+\frac{567}{1024} t_{1}^{2}\right)
+ \cdots.
\end{align*}

\subsubsection[(2,2p-2) minimal superstring]{$\boldsymbol{(2,2p-2)}$ minimal superstring}\label{subsec:vir_mg_super}

From Proposition \ref{prop:shift_bgw},
the generating function of the $(2,2p-2)$ minimal superstring free energies \smash{$F^{\mathrm{SM}(p) (g)}_{a_1,\dots,a_n}$} in equation~\eqref{super_mst_mdiff}
labeled by an odd positive integer $p$,
\begin{align}
\log Z^{\mathrm{SM}(p)}(\hbar; \textbf{t})=
\sum_{g \ge 0} \hbar^{2g-2} F^{\mathrm{SM}(p)}_{g}(\textbf{t})
=
\sum_{g \ge 0, n\ge 1} \hbar^{2g-2}
\mathop{\sum_{a_1,\dots,a_n \ge 0}}\limits_{|\mathbf{a}| \le g-1}
F^{\mathrm{SM}(p) (g)}_{a_1,\dots,a_n}
\frac{t_{a_1}\cdots t_{a_n}}{n!},
\label{super_mg_pf}
\end{align}
is given by the shift of variables in the Bessel generating function \eqref{be_pf},
\begin{align}
t_a
 \to
t_a - \frac{\bigl(-2\pi^2\bigr)^a}{(2a+1)!! a!}
\prod_{i=1}^{a}\left(1-\frac{(2i-1)^2}{p^2}\right)
\qquad
\textrm{for}\quad 1 \le a \le \frac{p-1}{2}.
\label{shift_minimal_grav_super}
\end{align}
This shift is found from the specialization
\eqref{sp_bgw_ms_super} of the time variables $\sfv_{a}$ of the BGW spectral curve
to obtain the $(2,2p-2)$ minimal superstring spectral curve $\cC^{\mathrm{SM}(p)}$.

Adopting the shift \eqref{shift_minimal_grav_super} to equation~\eqref{feg_comp_be},
we find the first few volume coefficients in equation~\eqref{super_mg_pf} such that
\begin{align*}
F^{\mathrm{SM}(p)}_{0}(\textbf{t})={}&F^{\mathrm{B}}_{0}(\textbf{t})=0,
\qquad
F^{\mathrm{SM}(p)}_{1}(\textbf{t})=F^{\mathrm{B}}_{1}(\textbf{t})
=-\frac18 \log (1-t_0),
\\
F^{\mathrm{SM}(p)}_{2}(\textbf{t})
={}& \frac{9t_1}{128(1-t_0)^3}
+ \left(\frac{3\pi^2}{64(1-t_0)^3}-\frac{3\pi^2}{64}\right)P_1
\\
={}& \left(\left(\frac{9}{64}-\frac{9}{64 p^{2}}\right)t_0 \pi^{2}
+\frac{9 t_{1}}{128}\right)
+ \left(\left(\frac{9}{32}-\frac{9}{32 p^{2}}\right)t_{0}^{2} \pi^{2}+\frac{27 t_{1} t_{0}}{128}\right)
+\cdots,
\\
F^{\mathrm{SM}(p)}_{3}(\textbf{t})
={}& \frac{225t_2}{1024(1-t_0)^5} + \frac{567t_1^2}{1024(1-t_0)^6}
+ \frac{189 \pi^2 t_{1} P_1}{256(1-t_{0})^{6}}
\\
&
+ \frac{3\pi^4 P_1}{512}
\left(
\frac{42P_1-5P_2+5 t_{0} P_2}{(1-t_{0})^{6}}
- 42P_1+5P_2 \right)
\\
={}& \left(\left(\frac{681 t_{0}}{512}-\frac{381 t_{0}}{256 p^{2}}+\frac{81 t_{0}}{512 p^{4}}\right) \pi^{4}+\left(\frac{189 t_{1}}{256}-\frac{189 t_{1}}{256 p^{2}}\right) \pi^{2}+\frac{225 t_{2}}{1024}\right)
\\
&
+ \left(\left(\frac{2421}{512}-\frac{1521}{256 p^{2}}+\frac{621}{512 p^{4}}\right) t_0^2 \pi^{4}
+\left(\frac{567}{128}-\frac{567}{128 p^{2}}\right) t_{0} t_{1} \pi^{2}\right.
\\
&\left.+\frac{1125 t_{2} t_{0}}{1024}+\frac{567 t_{1}^{2}}{1024}\right)
+ \cdots,
\end{align*}
where $P_1= 1-1/p^2$ and $P_2= 1-9/p^2$.
Note that the minimal superstring volume coefficients interpolate
the Bessel volume coefficients at $p=1$ and
the super Weil--Petersson volume coefficients at $p=\infty$,
\begin{align*}
F^{\mathrm{SM}(1)}_{g}(\textbf{t})=F^{\mathrm{B}}_{g}(\textbf{t}),
\qquad
F^{\mathrm{SM}(\infty)}_{g}(\textbf{t})=F^{\mathrm{SWP}}_{g}(\textbf{t}).
\end{align*}

\subsection{Twisting}\label{sec:twisting_vir}

The twisted initial data \eqref{vir_abcd_twist} of the ABO topological recursion
defines the generating function~$Z[\sff]$ of
the twisted volume coefficients $F^{(g)}[\sff]_{a_1,\dots,a_n}$,
\begin{align*}
Z[\sff](\hbar; \textbf{t})=
\exp\bigg(
\sum_{g \ge 0, n\ge 1} \hbar^{2g-2}
\sum_{a_1,\dots,a_n \ge 0} F^{(g)}[\sff]_{a_1,\dots,a_n}
\frac{t_{a_1}\cdots t_{a_n}}{n!}\bigg),
\end{align*}
which satisfies constraint equations
\begin{align*}
\widehat{L}[\sff]_k Z[\sff](\hbar; \textbf{t})=0,\qquad
k \ge -1,
\end{align*}
where $\widehat{L}[\sff]_k$ are twisted differential operators
\begin{align*}
\widehat{L}[\sff]_k={}&
-\frac12 \partial_{k+1}
+ \frac{1}{4\hbar^2} \sum_{a, b \ge 0} \sfA[\sff]^{k+1}_{a, b} t_a t_b
+ \frac12 \sum_{a, b \ge 0} \sfB[\sff]^{k+1}_{a, b} t_a \partial_b
\\
&
+ \frac{\hbar^2}{4} \sum_{a, b \ge 0} \sfC[\sff]^{k+1}_{a, b} \partial_a \partial_b
+ \frac12 \sfD[\sff]^{k+1}.
\end{align*}
The following proposition is then established.

\begin{prop}[\cite{Andersen:2017vyk}]\label{prop:givental_act}
The generating functions $Z$ and $Z[\sff]$, and
the differential operators $\widehat{L}_k$ and $\widehat{L}[\sff]_k$
are related by the group action of
\begin{align}
\widehat{U}[\sff]=
\exp\bigg(
\frac{\hbar^2}{2}\sum_{a, b \ge 0} u[\sff]_{a,b} \partial_a\partial_b
\bigg),
\label{giv_op}
\end{align}
defined by the twist function $u[\sff]_{a,b}$ in equation~\eqref{twist_u}, as
\begin{align}
Z[\sff](\hbar; \textbf{t})=\widehat{U}[\sff] Z(\hbar; \textbf{t}),
\qquad
\widehat{L}[\sff]_k = \widehat{U}[\sff] \widehat{L}_k \widehat{U}[\sff]^{-1}.
\label{givental_act}
\end{align}
\end{prop}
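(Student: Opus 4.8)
The plan is to establish the operator conjugation relation $\widehat{L}[\sff]_k=\widehat{U}[\sff]\,\widehat{L}_k\,\widehat{U}[\sff]^{-1}$ first, and then to deduce the partition-function identity $Z[\sff]=\widehat{U}[\sff]\,Z$ from it together with the uniqueness of the solution to the quantum Airy structure \cite{Andersen:2017vyk}. Writing $\widehat{U}[\sff]=\e^{\widehat{T}}$ with $\widehat{T}=\frac{\hbar^2}{2}\sum_{a,b\ge 0}u[\sff]_{a,b}\,\partial_a\partial_b$ and $\partial_a=\partial/\partial t_a$, the single computation on which everything rests is the adjoint action of $\widehat{U}[\sff]$ on the generators $t_c$ and $\partial_a$.

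Since $\widehat{T}$ is quadratic in the $\partial_a$'s alone, one has $[\widehat{T},\partial_a]=0$, hence $\widehat{U}[\sff]\,\partial_a\,\widehat{U}[\sff]^{-1}=\partial_a$. Using $[\partial_a,t_b]=\delta_{a,b}$ and the symmetry $u[\sff]_{a,b}=u[\sff]_{b,a}$ from eq.\eqref{twist_u}, one finds $[\widehat{T},t_c]=\hbar^2\sum_{a\ge 0}u[\sff]_{c,a}\,\partial_a$, and as this commutes with $\widehat{T}$ the adjoint series truncates:
\begin{align}
\widehat{U}[\sff]\,t_c\,\widehat{U}[\sff]^{-1}=t_c+\hbar^2\sum_{a\ge 0}u[\sff]_{c,a}\,\partial_a.
\nonumber
\end{align}
Thus conjugation by $\widehat{U}[\sff]$ acts on $\widehat{L}_k$ in eq.\eqref{airy_gen} simply by the substitution $t_c\mapsto t_c+\hbar^2\sum_a u[\sff]_{c,a}\partial_a$, leaving every $\partial$ untouched.

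Carrying out this substitution term by term is the main computational step. The first--order term $-\frac12\partial_{k+1}$ and the pure second--order term $\frac{\hbar^2}{4}\sum\sfC^{k+1}_{a,b}\partial_a\partial_b$ are invariant, while the quadratic piece $\frac{1}{4\hbar^2}\sum\sfA^{k+1}_{a,b}t_at_b$ and the mixed piece $\frac12\sum\sfB^{k+1}_{a,b}t_a\partial_b$ generate all the new contributions. After using $\sfA^{k+1}_{a,b}=\sfA^{k+1}_{b,a}$ and $\sfB^{k+1}_{a,b}=\sfB^{k+1}_{b,a}$ to symmetrize, the freshly produced $t\partial$ terms assemble into the coefficient $\sfB^{k+1}_{a,c}+\sum_b\sfA^{k+1}_{a,b}u[\sff]_{b,c}=\sfB[\sff]^{k+1}_{a,c}$, and the freshly produced $\partial\partial$ terms assemble into $\sfC[\sff]^{k+1}$, in exact agreement with eq.\eqref{vir_abcd_twist}. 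The only delicate point is operator ordering in the quadratic piece: the reordering $\partial_c t_b=t_b\partial_c+\delta_{c,b}$ produces precisely the constant $\frac14\sum_{a,b}\sfA^{k+1}_{a,b}u[\sff]_{a,b}$, which combines with $\frac12\sfD^{k+1}$ to give $\frac12\sfD[\sff]^{k+1}$. Since $\sfA[\sff]=\sfA$, this establishes $\widehat{U}[\sff]\,\widehat{L}_k\,\widehat{U}[\sff]^{-1}=\widehat{L}[\sff]_k$.

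From the conjugation relation one immediately gets $\widehat{L}[\sff]_k\big(\widehat{U}[\sff]\,Z\big)=\widehat{U}[\sff]\,\widehat{L}_k\,Z=0$, so $\widehat{U}[\sff]\,Z$ solves the twisted constraints \eqref{tw_airy_constraint}. I expect the main obstacle to lie here rather than in the algebra: one must check that $\widehat{U}[\sff]\,Z$ is again of the admissible exponential form $\exp\!\big(\sum_{g,n}\hbar^{2g-2}\,F^{(g)}[\sff]_{a_1,\ldots,a_n}\,t_{a_1}\cdots t_{a_n}/n!\big)$ — this holds because each application of $\widehat{T}$ sends $(g,n)\mapsto(g+1,n-2)$ and thus preserves $2g-2+n$, consistently with the genus expansion — and then to invoke the uniqueness of the solution to the quantum Airy structure in \cite{Andersen:2017vyk} to conclude $\widehat{U}[\sff]\,Z=Z[\sff]$ after matching the common normalization.
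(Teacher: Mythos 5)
Your argument is correct. The paper itself offers no proof of this proposition (it is simply cited to \cite{Andersen:2017vyk}), so there is nothing to compare against line by line; what you give is a valid self-contained verification. The core computation is right: since $\widehat{T}$ is quadratic in the $\partial_a$ alone, $\mathrm{Ad}_{\widehat{U}[\sff]}$ fixes each $\partial_a$ and sends $t_c\mapsto t_c+\hbar^2\sum_a u[\sff]_{c,a}\partial_a$ with a truncating adjoint series, and substituting into eq.\eqref{airy_gen} and normal-ordering reproduces exactly the shifts of $\sfB$, $\sfC$ and $\sfD$ in eq.\eqref{vir_abcd_twist}, including the constant $\frac14\sum_{a,b}\sfA^{k+1}_{a,b}u[\sff]_{a,b}$ from the single reordering $\partial_c t_b=t_b\partial_c+\delta_{c,b}$. (Your appeal to the symmetry of $\sfB$ in its lower indices is not actually needed — and for the Airy data \eqref{vir_kw_abcd} that symmetry fails — but nothing in your computation depends on it, since the $\partial\partial$ coefficient is symmetrized by hand in $\sfC[\sff]$.) Two small points worth tightening in the final step: first, when checking that $\log(\widehat{U}[\sff]Z)$ contains no $(0,1)$ or $(0,2)$ terms, the operator $\widehat{T}$ acts on $\e^F$ both by a self-contraction on a single $F_{g,n}$ (your $(g,n)\mapsto(g+1,n-2)$) and by a bridge joining two factors $F_{g_1,n_1}$, $F_{g_2,n_2}$, in which case $2g-2+n=(2g_1-2+n_1)+(2g_2-2+n_2)\ge 2$; both cases keep $2g-2+n>0$, so the conclusion stands. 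Second, the uniqueness you invoke is just the observation that the constraints \eqref{tw_airy_constraint} are equivalent to the recursion \eqref{geom_rec_coeff} for the twisted data, which determines all $F^{(g)}[\sff]_{a_1,\ldots,a_n}$ from $(\sfA[\sff],\sfB[\sff],\sfC[\sff],\sfD[\sff])$; stating this explicitly (together with the matching normalization $\widehat{U}[\sff]\cdot 1=1$) closes the argument. An alternative, more combinatorial route would be to expand $\widehat{U}[\sff]\,\e^{F}$ by Wick's theorem with propagator $u[\sff]_{a,b}$ and match it term by term against the stable-graph definition \eqref{eq:twist_volume}, which would tie the identity directly to Definition \ref{def:tw_abo_amp} rather than to the twisted initial data; your route is shorter given the paper's setup in Section \ref{sec:twisting_vir}.
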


When the differential operators $\widehat{L}_k$ satisfy the constraint equations \eqref{q_airy_rel}
of the quantum Airy structure,
it is found from the group actions \eqref{givental_act} that the twisted differential operators $\widehat{L}[\sff]_k$ also
satisfy
\begin{align*}
\big[\widehat{L}[\sff]_k, \widehat{L}[\sff]_{\ell} \big]
= \sum_{a \ge -1} f^{a}_{k,\ell} \widehat{L}[\sff]_{a},
\qquad
k, \ell \ge -1.
\end{align*}
In particular, when the differential operators $\widehat{L}_k$ satisfy
the Virasoro relations \eqref{virasoro_rel}, the twisted operators $\widehat{L}[\sff]_k$ also satisfy the Virasoro relations
\begin{align*}
\big[\widehat{L}[\sff]_k, \widehat{L}[\sff]_{\ell}\big]
= (k-\ell ) \widehat{L}[\sff]_{k+\ell},
\qquad
k, \ell \ge -1.
\end{align*}

In the case of the Masur--Veech type twist,
$u_{a, b}^{\mathrm{MV}}$ in equation~\eqref{mv_twist_u},
the operator \eqref{giv_op} of the group action yields
\begin{align}
\widehat{U}^{\mathrm{MV}}=
\widehat{U}\big[\sfm\big]=
\exp\bigg(
\frac{\hbar^2}{2}\sum_{a, b \ge 0}
\frac{(2a+2b+1)!}{(2a+1)!(2b+1)!}
\zeta(2a+2b+2) \partial_a\partial_b
\bigg).
\label{mv_twist_vir_u}
\end{align}

\begin{rem}The twist action $\widehat{U}[\sff]$ is regarded as an exponentiated quadratic operator of Bogoliubov's transformation type.
It preserves Virasoro algebra of differential operators acting on a partition function, but can be harmful for integrable hierarchy equations.\footnote{This aspect of the twist action is pointed out by an anonymous referee.}
\end{rem}

\subsection{Masur--Veech type twist}\label{sec:examples_vir_tw}

In the following, we discuss the Virasoro constraints with the Masur--Veech type twist for the physical 2D gravity models in Table \ref{tab:spectral_curve}.

\subsubsection{Masur--Veech polynomials}\label{subsec:vir_mvp}

The initial data for the Masur--Veech polynomials are found from
the Airy initial data \eqref{vir_kw_abcd} twisted by
equations~\eqref{vir_abcd_twist} and \eqref{mv_twist_u}:
\begin{align*}
&\sfA\big[\sfm\big]^{a_1}_{a_2,a_3}=\delta_{a_1, a_2, a_3, 0},
\\
&\sfB\big[\sfm\big]^{a_1}_{a_2,a_3}=(2a_2+1) \delta_{a_1+a_2, a_3+1}
+ \zeta(2a_3+2) \delta_{a_1, a_2, 0},
\\
&\sfC\big[\sfm\big]^{a_1}_{a_2,a_3}=\delta_{a_1, a_2+a_3+2}
+ \zeta(2a_2+2) \zeta(2a_3+2) \delta_{a_1, 0}
\\
&\phantom{\sfC\big[\sfm\big]^{a_1}_{a_2,a_3}=}{}
+ \left(\binom{2a_2+2a_3-2a_1+3}{2a_2+1}
+ \binom{2a_2+2a_3-2a_1+3}{2a_3+1}\right) \\
&\phantom{\sfC\big[\sfm\big]^{a_1}_{a_2,a_3}=+}{}\times
\zeta(2a_2+2a_3-2a_1+4),
\\
&\sfD\big[\sfm\big]^{a_1}=\frac{1}{8} \delta_{a_1, 1}
+ \frac{\zeta(2)}{2} \delta_{a_1, 1}.
\end{align*}
By Proposition \ref{prop:givental_act},
the twisted differential operators
\begin{align*}
\widehat{L}_k^{\mathrm{MV}}=
\widehat{L}^{\mathrm{A}}\big[\sfm\big]_k=
\widehat{U}^{\mathrm{MV}}
\widehat{L}_k^{\mathrm{A}}
\big(\widehat{U}^{\mathrm{MV}}\big)^{-1},
\end{align*}
of the Airy Virasoro operators $\widehat{L}_k^{\mathrm{A}}$ in equation~\eqref{vir_kw}
by the twist \eqref{mv_twist_vir_u}
also satisfy the Virasoro relations\footnote{These Virasoro relations can also be checked directly by
verifying the commutation relations or
using the free field realization
in Remark~\ref{rem:free_mv} below.}
\begin{align}
\big[\widehat{L}_k^{\mathrm{MV}}, \widehat{L}_{\ell}^{\mathrm{MV}}\big]
= \left(k-\ell \right) \widehat{L}_{k+\ell}^{\mathrm{MV}},
\qquad
k, \ell \ge -1,
\label{vir_rel_mv}
\end{align}
and provide the Virasoro constraints
\begin{align*}
\widehat{L}_k^{\mathrm{MV}} Z^{\mathrm{MV}}(\hbar; \textbf{t})=0,\qquad
k \ge -1,
\end{align*}
for the generating function of the Masur--Veech polynomials
\begin{align}
\log Z^{\mathrm{MV}}(\hbar; \textbf{t})={}&
\log Z^{\mathrm{A}}\big[\sfm\big](\hbar; \textbf{t})=
\log \big(\widehat{U}^{\mathrm{MV}} Z^{\mathrm{A}}(\hbar; \textbf{t})\big)\nonumber
\\
={}&
\sum_{g \ge 0} \hbar^{2g-2} F^{\mathrm{MV}}_{g}(\textbf{t})
=
\sum_{g \ge 0, n\ge 1} \hbar^{2g-2}
\mathop{\sum_{a_1,\dots,a_n \ge 0}}\limits_{|\mathbf{a}| \le 3g-3+n}
F^{\mathrm{MV}(g)}_{a_1,\dots,a_n}
\frac{t_{a_1}\cdots t_{a_n}}{n!}.
\label{mv_pf}
\end{align}

\begin{rem}\label{rem:free_mv}
The twisted Virasoro operators $\widehat{L}_k^{\mathrm{MV}}$ in equation~\eqref{vir_rel_mv} admit
the free field realization
\begin{align*}
T^{\mathrm{MV}}(x)={}
{:} \partial\phi^{\mathrm{MV}}(x)\partial\phi^{\mathrm{MV}}(x) {:}
+ \frac{1}{16x^2} + \frac{\zeta(2)}{4x}
= \sum_{n \in {\IZ}} \widehat{L}_n^{\mathrm{MV}} x^{-n-2},
\end{align*}
by a ``twisted'' chiral bosonic field with the anti-periodic boundary condition
\begin{align*}
\partial \phi^{\mathrm{MV}}(x)={}&
\frac{1}{\hbar} \sum_{n \ge 0}
\left(n+\frac12\right)
\Bigg(\left(t_n-\frac13 \delta_{n,1}\right)
+ \hbar^2 \sum_{a \ge 0} u_{a, n}^{\mathrm{MV}} \partial_a \Bigg)
x^{n-\frac12}
+ \frac{\hbar}{2} \sum_{n \ge 0} \partial_n x^{-n-\frac32}
\\
={}& \sum_{n \in {\IZ}} \alpha_{n+\frac12}^{\mathrm{MV}} x^{-n-\frac32}.
\end{align*}
Note that the Virasoro relation \eqref{vir_rel_mv} follows
from this free field realization immediately.
\end{rem}

By Proposition \ref{prop:cut_join_kw},
the Masur--Veech generating function \eqref{mv_pf} is computed by
\begin{align*}
Z^{\mathrm{MV}}(x \hbar; x \textbf{t})={}&
\widehat{U}^{\mathrm{MV}} Z^{\mathrm{A}}(x \hbar; x \textbf{t})=
\widehat{U}^{\mathrm{MV}} \e^{x \widehat{W}^{\mathrm{A}}}\cdot 1,
\end{align*}
and we obtain the first few volume coefficients such that
\begin{align}
F^{\mathrm{MV}}_{0}(\textbf{t})={}&
\frac{t_{0}^{3}}{6} + \left(\frac{\pi^{2}}{48} t_{0}^{4}+\frac{1}{2} t_{1} t_{0}^{3}\right) + \left(\frac{\pi^{4}}{160} t_{0}^{5}+\frac{\pi^{2}}{8} t_{0}^{4} t_{1}+\frac{5}{8} t_{0}^{4} t_{2}+\frac{3}{2} t_{1}^{2} t_{0}^{3}\right) + \cdots,\nonumber
\\
F^{\mathrm{MV}}_{1}(\textbf{t})={}&
\left(\frac{\pi^{2} t_{0}}{12}+\frac{t_{1}}{8}\right)
+
\left(\frac{\pi^{4}}{32} t_{0}^{2}+\frac{\pi^{2}}{4} t_{0} t_{1}+\frac{5}{8} t_{0} t_{2}+\frac{3}{16} t_{1}^{2}\right)+
\left(\frac{11}{576}\pi^{6} t_{0}^{3} +\frac{3}{16} \pi^{4} t_{0}^{2} t_{1}\right.\nonumber
\\
&\left.
+\frac{65}{96} \pi^{2} t_{0}^{2} t_{2}+\frac{3}{4} \pi^{2} t_{0} t_{1}^{2}+\frac{35}{16} t_{0}^{2} t_{3}+\frac{15}{4} t_{0} t_{1} t_{2}+\frac{3}{8} t_{1}^{3}\right) + \cdots,\nonumber
\\
F^{\mathrm{MV}}_{2}(\textbf{t})={}&
\left(\frac{29}{2560} \pi^{8} t_{0}+\frac{1}{32} \pi^{6} t_{1}+\frac{119}{1152} \pi^{4} t_{2}+\frac{35}{96} \pi^{2} t_{3}+\frac{105}{128} t_{4}\right) + \cdots.
\label{feg_comp_mv}
\end{align}

Proposition \ref{prop:shift_kdv} leads to the following proposition.

\begin{prop}\label{prop:shift_kdv_tw}
The generating function of the twisted KdV volume coefficients,
\begin{align*}
\log Z^{\mathrm{KdV}}\big[\sfm\big](\hbar; \textbf{t})=
\sum_{g \ge 0, n\ge 1} \hbar^{2g-2}
\mathop{\sum_{a_1,\dots,a_n \ge 0}}\limits_{|\mathbf{a}| \le 3g-3+n}
F^{\mathrm{KdV}(g)}\big[\sfm\big]_{a_1,\dots,a_n}
\frac{t_{a_1}\cdots t_{a_n}}{n!},
\end{align*}
is obtained by shifting the variables $t_a$ as equation~\eqref{shift_kdv}
in the Masur--Veech $($i.e., twisted Airy$)$ generating function \eqref{mv_pf}.
\end{prop}

\subsubsection[Twisted (2,p) minimal string volume polynomials]{Twisted $\boldsymbol{(2,p)}$ minimal string volume polynomials}\label{subsec:vir_mg_twisted}

The generating function of the twisted volume coefficients for
the $(2,p)$ minimal string:
\begin{align*}
\log Z^{\mathrm{M}(p)}\big[\sfm\big](\hbar; \textbf{t})={}&
\sum_{g \ge 0} \hbar^{2g-2} F^{\mathrm{M}(p)}_{g}\big[\sfm\big](\textbf{t})
\\
={}&
\sum_{g \ge 0, n\ge 1} \hbar^{2g-2}
\mathop{\sum_{a_1,\dots,a_n \ge 0}}\limits_{|\mathbf{a}| \le 3g-3+n}
F^{\mathrm{M}(p) (g)}\big[\sfm\big]_{a_1,\dots,a_n}
\frac{t_{a_1}\cdots t_{a_n}}{n!},
\end{align*}
is obtained by shifting the variables $t_a$ as equation~\eqref{shift_minimal_grav} in the Masur--Veech generating function~\eqref{mv_pf}.
Here we introduce a deformation parameter $s$ as
\begin{align}
t_a
 \to
t_a - \frac{\bigl(-2\pi^2 s\bigr)^{a-1}}{(2a+1)!! (a-1)!}
\prod_{i=1}^{a-1}\left(1-\frac{(2i-1)^2}{p^2}\right)
\qquad
\textrm{for}\quad 2 \le a \le \frac{p+1}{2},
\label{shift_minimal_grav_s}
\end{align}
which yields the shift \eqref{shift_minimal_grav} at $s=1$.

Adopting the shift \eqref{shift_minimal_grav_s} to the Masur--Veech generating functions \eqref{feg_comp_mv},
we find the first few volume coefficients such that
\begin{align*}
F^{\mathrm{M}(p)}_{0}\big[\sfm\big](\textbf{t})={}&
\frac{t_{0}^{3}}{6}
+ \left(\left(\frac{t_{0}^{4} s}{12}-\frac{t_{0}^{4} s}{12 p^{2}}+\frac{t_{0}^{4}}{48}\right) \pi^{2}+\frac{t_{1} t_{0}^{3}}{2}\right)
\\
&
+ \left(\left(\frac{t_{0}^{5} s^{2}}{12}-\frac{t_{0}^{5} s^{2}}{30 p^{2}}-\frac{t_{0}^{5} s^{2}}{20 p^{4}}+\frac{t_{0}^{5} s}{36}-\frac{t_{0}^{5} s}{36 p^{2}}+\frac{t_{0}^{5}}{160}\right) \pi^{4}\right.
\\
&
\left. + \left(\frac{3 t_{0}^{4} t_{1} s}{4}-\frac{3 t_{0}^{4} t_{1} s}{4 p^{2}}+\frac{t_{0}^{4} t_{1}}{8}\right) \pi^{2}+\frac{5 t_{0}^{4} t_{2}}{8}+\frac{3 t_{0}^{3} t_{1}^{2}}{2}\right)
+ \cdots,
\\
F^{\mathrm{M}(p)}_{1}\big[\sfm\big](\textbf{t})={}&
\left(\left(\frac{t_{0} s}{12}-\frac{t_{0} s}{12 p^{2}}+\frac{t_{0}}{12}\right) \pi^{2}+\frac{t_{1}}{8}\right)
+ \left(\left(\frac{t_{0}^{2} s^{2}}{8}+\frac{t_{0}^{2} s^{2}}{12 p^{2}}-\frac{5 t_{0}^{2} s^{2}}{24 p^{4}}+\frac{13 t_{0}^{2} s}{144} \right. \right.
\\
&
\left.\left. -\frac{13 t_{0}^{2} s}{144 p^{2}}+\frac{t_{0}^{2}}{32}\right) \pi^{4} +\left(\frac{t_{0} t_{1} s}{2}-\frac{t_{0} t_{1} s}{2 p^{2}}+\frac{t_{0} t_{1}}{4}\right) \pi^{2}+\frac{5 t_{2} t_{0}}{8}+\frac{3 t_{1}^{2}}{16}\right)
\\
&
+ \left(\left(\frac{7 t_{0}^{3} s^{3}}{27}+\frac{71 t_{0}^{3} s^{2}}{432}+\frac{61 t_{0}^{3} s}{864}+\frac{11 t_{0}^{3}}{576}
+ \frac{13 t_{0}^{3} s^{3}}{27 p^{2}}+\frac{t_{0}^{3} s^{3}}{27 p^{4}}-\frac{7 t_{0}^{3} s^{3}}{9 p^{6}}
\right.\right.
\\
&
\left.\left.-\frac{61 t_{0}^{3} s}{864 p^{2}}-\frac{11 t_{0}^{3} s^{2}}{216 p^{2}}-\frac{49 t_{0}^{3} s^{2}}{432 p^{4}}\right) \pi^{6}+\left(\frac{3 t_{0}^{2} t_{1}}{16}-\frac{t_{0}^{2} t_{1} s^{2}}{4 p^{2}}
-\frac{11 t_{0}^{2} t_{1} s^{2}}{8 p^{4}}
\right.\right.
\\
&
\left.\left.- \frac{13 t_{0}^{2} t_{1} s}{16 p^{2}}+\frac{13 t_{0}^{2} t_{1} s^{2}}{8}+\frac{13 t_{0}^{2} t_{1} s}{16}\right) \pi^{4}
+\left(\frac{3 t_{0} t_{1}^{2}}{4}+\frac{65 t_{0}^{2} t_{2}}{96}-\frac{9 t_{0} t_{1}^{2} s}{4 p^{2}}\right.\right.
\\
&
\left.\left.-\frac{5 t_{0}^{2} s t_{2}}{2 p^{2}}
+\frac{9 t_{0} t_{1}^{2} s}{4}+\frac{5 t_{0}^{2} s t_{2}}{2}\right) \pi^{2}+\frac{15 t_{2} t_{1} t_{0}}{4}+\frac{3 t_{1}^{3}}{8}+\frac{35 t_{3} t_{0}^{2}}{16}\right)
+ \cdots,
\\
F^{\mathrm{M}(p)}_{2}\big[\sfm\big](\textbf{t})={}&
\left(
\left(\frac{497 t_{0} s^{4}}{720 p^{2}}+\frac{587 t_{0} s^{4}}{480 p^{4}}+\frac{17 t_{0} s^{4}}{80 p^{6}}-\frac{6557 t_{0} s^{4}}{2880 p^{8}}-\frac{5 t_{0} s}{128 p^{2}}-\frac{527 t_{0} s^{2}}{25920 p^{2}}
\right.\right.
\\
&
\left.\left.-\frac{629 t_{0} s^{2}}{10368 p^{4}}+\frac{23 t_{0} s^{3}}{96 p^{2}}
+\frac{t_{0} s^{3}}{96 p^{4}}-\frac{331 t_{0} s^{3}}{864 p^{6}}+\frac{115 t_{0} s^{3}}{864}+\frac{5 t_{0} s}{128}+\frac{29 t_{0} s^{4}}{192}
\right.\right.
\\
&
\left.\left.+\frac{4199 t_{0} s^{2}}{51840}+\frac{29 t_{0}}{2560}\right) \pi^{8}
+\left(\frac{87 t_{1} s^{3}}{160 p^{2}}
-\frac{23 t_{1} s^{3}}{160 p^{4}}-\frac{361 t_{1} s^{3}}{480 p^{6}}-\frac{119 t_{1} s}{960 p^{2}}
\right.\right.
\\
&
\left.\left.-\frac{t_{1} s^{2}}{18 p^{2}}-\frac{2 t_{1} s^{2}}{9 p^{4}}+\frac{169 t_{1} s^{3}}{480}+\frac{119 t_{1} s}{960}+\frac{5 t_{1} s^{2}}{18}+\frac{t_{1}}{32}\right) \pi^{6}\right.
\\
&
\left.
+\left(-\frac{245 s t_{2}}{576 p^{2}}-\frac{23 s^{2} t_{2}}{96 p^{2}}-\frac{31 s^{2} t_{2}}{64 p^{4}}+\frac{139 s^{2} t_{2}}{192}+\frac{245 s t_{2}}{576}+\frac{119 t_{2}}{1152}\right) \pi^{4}
\right.
\\
&
\left.
+\left(-\frac{203 s t_{3}}{192 p^{2}}+\frac{203 s t_{3}}{192}+\frac{35 t_{3}}{96}\right) \pi^{2}+\frac{105 t_{4}}{128}\right)
+ \cdots.
\end{align*}
These volume coefficients reduce to the Masur--Veech volume coefficients \eqref{feg_comp_mv}
at $p=1$ and the twisted Weil--Petersson volume coefficients at $p=\infty$.

\subsubsection{Super Masur--Veech polynomials}\label{subsec:vir_supermvp}

The initial data for the super Masur--Veech polynomials are found from
the Bessel initial data~\eqref{vir_be_abcd} twisted by
equations~\eqref{vir_abcd_twist} and \eqref{mv_twist_u}:
\begin{align*}
&\sfA\big[\sfm\big]^{a_1}_{a_2,a_3}=0,
\qquad
\sfB\big[\sfm\big]^{a_1}_{a_2,a_3}=(2a_2+1) \delta_{a_1+a_2, a_3},
\\
&\sfC\big[\sfm\big]^{a_1}_{a_2,a_3}=\delta_{a_1, a_2+a_3+1}
+ \left(\binom{2a_2+2a_3-2a_1+1}{2a_2+1}
+ \binom{2a_2+2a_3-2a_1+1}{2a_3+1}\right)
\\
&\phantom{\sfC\big[\sfm\big]^{a_1}_{a_2,a_3}=}{}\times \zeta(2a_2+2a_3-2a_1+2),
\\
&\sfD\big[\sfm\big]^{a_1}=\frac{1}{8} \delta_{a_1, 0}.
\end{align*}
By Proposition \ref{prop:givental_act}, the twisted differential operators
\begin{align*}
\widehat{L}_k^{\mathrm{SMV}}=
\widehat{L}^{\mathrm{B}}\big[\sfm\big]_k=
\widehat{U}^{\mathrm{MV}}
\widehat{L}_k^{\mathrm{B}}
\big(\widehat{U}^{\mathrm{MV}}\big)^{-1},
\end{align*}
of the Bessel Virasoro operators $\widehat{L}_k^{\mathrm{B}}$ in equation~\eqref{vir_be}
by the twist \eqref{mv_twist_vir_u} also satisfy the Virasoro relations
\begin{align*}
\big[\widehat{L}_k^{\mathrm{SMV}}, \widehat{L}_{\ell}^{\mathrm{SMV}}\big]
= (k-\ell ) \widehat{L}_{k+\ell}^{\mathrm{SMV}},
\qquad
k, \ell \ge 0,
\end{align*}
and provide the Virasoro constraints
\begin{align*}
\widehat{L}_k^{\mathrm{SMV}} Z^{\mathrm{SMV}}(\hbar; \textbf{t})=0,\qquad
k \ge 0,
\end{align*}
for the generating function of the super Masur--Veech polynomials
\begin{align}
\log Z^{\mathrm{SMV}}(\hbar; \textbf{t})={}&
\log Z^{\mathrm{B}}\big[\sfm\big](\hbar; \textbf{t})=
\log \big(\widehat{U}^{\mathrm{MV}} Z^{\mathrm{B}}(\hbar; \textbf{t})\big)\nonumber
\\
={}&
\sum_{g \ge 0} \hbar^{2g-2} F^{\mathrm{SMV}}_{g}(\textbf{t})
=
\sum_{g \ge 0, n\ge 1} \hbar^{2g-2}
\mathop{\sum_{a_1,\dots,a_n \ge 0}}\limits_{|\mathbf{a}| \le g-1}
F^{\mathrm{SMV}(g)}_{a_1,\dots,a_n}
\frac{t_{a_1}\cdots t_{a_n}}{n!}.
\label{supermv_pf}
\end{align}
By Proposition \ref{prop:cut_join_be}, the super Masur--Veech generating function
is computed by
\begin{align*}
Z^{\mathrm{SMV}}(x \hbar; x \textbf{t})={}&
\widehat{U}^{\mathrm{MV}} Z^{\mathrm{B}}(x \hbar; x \textbf{t})=
\widehat{U}^{\mathrm{MV}} \e^{x \widehat{W}^{\mathrm{B}}}\cdot 1,
\end{align*}
and we obtain the first few volume coefficients such that
\begin{align}
F^{\mathrm{SMV}}_{0}(\textbf{t})={}&F^{\mathrm{B}}_{0}(\textbf{t})=0,
\qquad
F^{\mathrm{SMV}}_{1}(\textbf{t})=F^{\mathrm{B}}_{1}(\textbf{t})
=-\frac18 \log (1-t_0),\nonumber
\\
F^{\mathrm{SMV}}_{2}(\textbf{t})
={}& \frac{9t_1}{128(1-t_0)^3}
+ \frac{3\pi^2}{256}\left(\frac{1}{(1-t_0)^2}-1\right)\nonumber
\\
={}&\left(\frac{3 \pi^{2} t_{0}}{128} + \frac{9 t_{1}}{128}\right)
+ \left(\frac{9}{256} \pi^{2} t_{0}^{2} + \frac{27}{128} t_{1} t_{0}\right)
+ \left(\frac{3}{64} \pi^{2} t_{0}^{3}+\frac{27}{64} t_{1} t_{0}^{2}\right)
+ \cdots,\nonumber
\\
F^{\mathrm{SMV}}_{3}(\textbf{t})
={}& \frac{225t_2}{1024(1-t_0)^5} + \frac{567t_1^2}{1024(1-t_0)^6}
+ \frac{153 \pi^2 t_1}{2048(1-t_0)^5}
+ \frac{23 \pi^4}{4096}\left(\frac{1}{(1-t_0)^4}-1\right)\nonumber
\\
={}& \left(\frac{23}{1024} \pi^{4} t_{0}+\frac{153}{2048} \pi^{2} t_{1}+ \frac{225}{1024} t_{2}\right)\nonumber
\\
&
+ \left(\frac{115}{2048} \pi^{4} t_{0}^{2}+\frac{765}{2048} \pi^{2} t_{1} t_{0}+
\frac{1125}{1024} t_{2} t_{0}+\frac{567}{1024} t_{1}^{2}\right)\nonumber
\\
&
+ \left(\frac{115}{1024}\pi^{4} t_{0}^{3} +\frac{2295}{2048} \pi^{2} t_{1} t_{0}^{2}+
\frac{3375}{1024} t_{2} t_{0}^{2}+\frac{1701}{512} t_{1}^{2} t_{0}\right)
+ \cdots.
\label{feg_comp_mv_super}
\end{align}

Proposition \ref{prop:shift_bgw} leads to the following proposition.

\begin{prop}\label{prop:shift_bgw_tw}
The generating function of the twisted BGW volume coefficients,
\begin{align*}
\log Z^{\mathrm{BGW}}\big[\sfm\big](\hbar; \textbf{t})=
\sum_{g \ge 0, n\ge 1} \hbar^{2g-2}
\mathop{\sum_{a_1,\dots,a_n \ge 0}}\limits_{|\mathbf{a}| \le g-1}
F^{\mathrm{BGW}(g)}\big[\sfm\big]_{a_1,\dots,a_n}
\frac{t_{a_1}\cdots t_{a_n}}{n!},
\end{align*}
is obtained by shifting the variables $t_a$ as equation~\eqref{shift_bgw} in the super Masur--Veech $($i.e., twisted Bessel$)$ generating function \eqref{supermv_pf}.
\end{prop}

\subsubsection[Twisted (2,2p-2) minimal superstring volume polynomials]{Twisted $\boldsymbol{(2,2p-2)}$ minimal superstring volume polynomials}\label{subsec:vir_mg_super_twisted}

The generating function of twisted volume coefficients for the $(2,2p-2)$ minimal superstring,
\begin{align*}
\log Z^{\mathrm{SM}(p)}\big[\sfm\big](\hbar; \textbf{t})={}&
\sum_{g \ge 0} \hbar^{2g-2} F^{\mathrm{SM}(p)}_{g}\big[\sfm\big](\textbf{t}),
\\
={}&
\sum_{g \ge 0, n\ge 1} \hbar^{2g-2}
\mathop{\sum_{a_1,\dots,a_n \ge 0}}\limits_{|\mathbf{a}| \le g-1}
F^{\mathrm{SM}(p) (g)}\big[\sfm\big]_{a_1,\dots,a_n}
\frac{t_{a_1}\cdots t_{a_n}}{n!},
\end{align*}
is obtained by shifting the variables $t_a$ in the super Masur--Veech
generating function \eqref{supermv_pf} as
\begin{align}
t_a
 \to
t_a - \frac{\bigl(-2\pi^2 s\bigr)^a}{(2a+1)!! a!}
\prod_{i=1}^{a}\left(1-\frac{(2i-1)^2}{p^2}\right)
\qquad
\textrm{for}\quad 1 \le a \le \frac{p-1}{2},
\label{shift_minimal_grav_super_s}
\end{align}
where a deformation parameter $s$ is introduced.

Adopting the shift \eqref{shift_minimal_grav_super_s} to the super Masur--Veech generating functions \eqref{feg_comp_mv_super},
we find the first few volume coefficients such that
\begin{align*}
F^{\mathrm{SM}(p)}_{0}\big[\sfm\big](\textbf{t})={}&F^{\mathrm{B}}_{0}(\textbf{t})=0,
\qquad
F^{\mathrm{SM}(p)}_{1}\big[\sfm\big](\textbf{t})=F^{\mathrm{B}}_{1}(\textbf{t})
=-\frac18 \log (1-t_0),
\\
F^{\mathrm{SM}(p)}_{2}\big[\sfm\big](\textbf{t})
={}& \frac{9t_1}{128(1-t_0)^3}
+ \left(\frac{3\pi^2}{64(1-t_0)^3}-\frac{3\pi^2}{64}\right)s P_1
+ \frac{3\pi^2}{256(1-t_0)^2}-\frac{3\pi^2}{256}
\\
={}& \left(\left(\frac{9 s t_{0}}{64}-\frac{9 s t_{0}}{64 p^{2}}+\frac{3 t_{0}}{128}\right) \pi^{2}+\frac{9 t_{1}}{128}\right)
\\
&+ \left(\left(\frac{9 s t_{0}^{2}}{32}-\frac{9 s t_{0}^{2}}{32 p^{2}}+\frac{9 t_{0}^{2}}{256}\right) \pi^{2}+\frac{27 t_{1} t_{0}}{128}\right)
+ \cdots,
\\
F^{\mathrm{SM}(p)}_{3}\big[\sfm\big](\textbf{t})
={}& \frac{225t_2}{1024(1-t_0)^5} + \frac{567t_1^2}{1024(1-t_0)^6}
+ \frac{189 \pi^2 t_{1} s P_1}{256(1-t_{0})^{6}}
+ \frac{153 \pi^2 t_1}{2048(1-t_0)^5}
\\
&+ \frac{23 \pi^4}{4096}\left(\frac{1}{(1-t_0)^4}-1\right)
+ \frac{3\pi^4 s^2 P_1}{512}
\left(
\frac{42P_1-5P_2+5 t_{0} P_2}{(1-t_{0})^{6}}
\right.
\\
&\left.- 42P_1+5P_2 \right)+ \frac{51 \pi^4 s P_1}{1024}\left(\frac{1}{(1-t_0)^5}-1\right)
\\
={}&
\left(\left(\frac{681 s^{2} t_{0}}{512}-\frac{381 s^{2} t_{0}}{256 p^{2}}+\frac{81 s^{2} t_{0}}{512 p^{4}}+\frac{255 s t_{0}}{1024}-\frac{255 s t_{0}}{1024 p^{2}}+\frac{23 t_{0}}{1024}\right) \pi^{4}
\right.
\\
&
\left.+\left(\frac{189 s t_{1}}{256}-\frac{189 s t_{1}}{256 p^{2}}
+\frac{153 t_{1}}{2048}\right) \pi^{2}+\frac{225 t_{2}}{1024}\right)
\\
&
+
\left(\left(\frac{2421 s^{2} t_{0}^{2}}{512}-\frac{1521 s^{2} t_{0}^{2}}{256 p^{2}}+\frac{621 s^{2} t_{0}^{2}}{512 p^{4}}+\frac{765 s t_{0}^{2}}{1024}-\frac{765 s t_{0}^{2}}{1024 p^{2}}
+\frac{115 t_{0}^{2}}{2048}\right) \pi^{4}\right.\\
&\left.+\left(\frac{567 s t_{1} t_{0}}{128}-\frac{567 s t_{1} t_{0}}{128 p^{2}}+\frac{765 t_{1} t_{0}}{2048}\right) \pi^{2}+\frac{1125 t_{2} t_{0}}{1024}+\frac{567 t_{1}^{2}}{1024}\right)
+ \cdots,
\end{align*}
where $P_1= 1-1/p^2$ and $P_2= 1-9/p^2$.
These volume coefficients reduce to the super Masur--Veech volume coefficients \eqref{feg_comp_mv_super} at $p=1$
and the twisted super Weil--Petersson volume coefficients at $p=\infty$.

\appendix

\section[Physical derivations of spectral curves and the Masur--Veech twist]{Physical derivations of spectral curves\\ and the Masur--Veech twist}\label{sec:physical}

In this appendix, we discuss a physical derivation of spectral curves for the JT gravity and the~$(2,p)$ minimal string
as well as their supersymmetric generalizations.
And we also discuss a~physical interpretation of
the Masur--Veech type twist in terms of the JT gravity.

\subsection{JT gravity and Weil--Petersson volume}\label{sec:JT_WP}

Here we summarize basic results of the JT gravity in \cite{Saad:2019lba} which is necessary for our physical derivation of the Masur--Veech type twist.

The JT gravity is the two-dimensional dilaton gravity which appears in a model of $\mathrm{AdS}_2/\mathrm{CFT}_1\!$ correspondence.
In this gravity theory, the dilaton function plays a role of the Lagrange multiplier setting a hyperbolic constraint $R=-2$ for the Ricci curvature of a two-dimensional surface.

From the physical duality conjecture of the $\mathrm{AdS_2/CFT_1}$ correspondence, it is found that the partition function $Z_{0,1}^{\mathrm{JT}}(\beta)$ of the JT gravity on the Euclidean $\mathrm{AdS}_2$ homeomorphic to a hyperbolic disk with a wiggly boundary is dual to the thermal partition function \smash{$\langle Z(\beta)\rangle=\langle\mathrm{e}^{-\beta H_{\mathrm{SYK}}}\rangle$} of the Sachdev--Ye--Kitaev (SYK) model on the boundary circle in the low energy limit which is described by the Schwarzian theory.
On the boundary of the disk parametrized by a proper length coordinate $u$ in Figure \ref{fig:disk_wiggly} (left),
\begin{figure}[t]\centering
 \includegraphics[width=90mm]{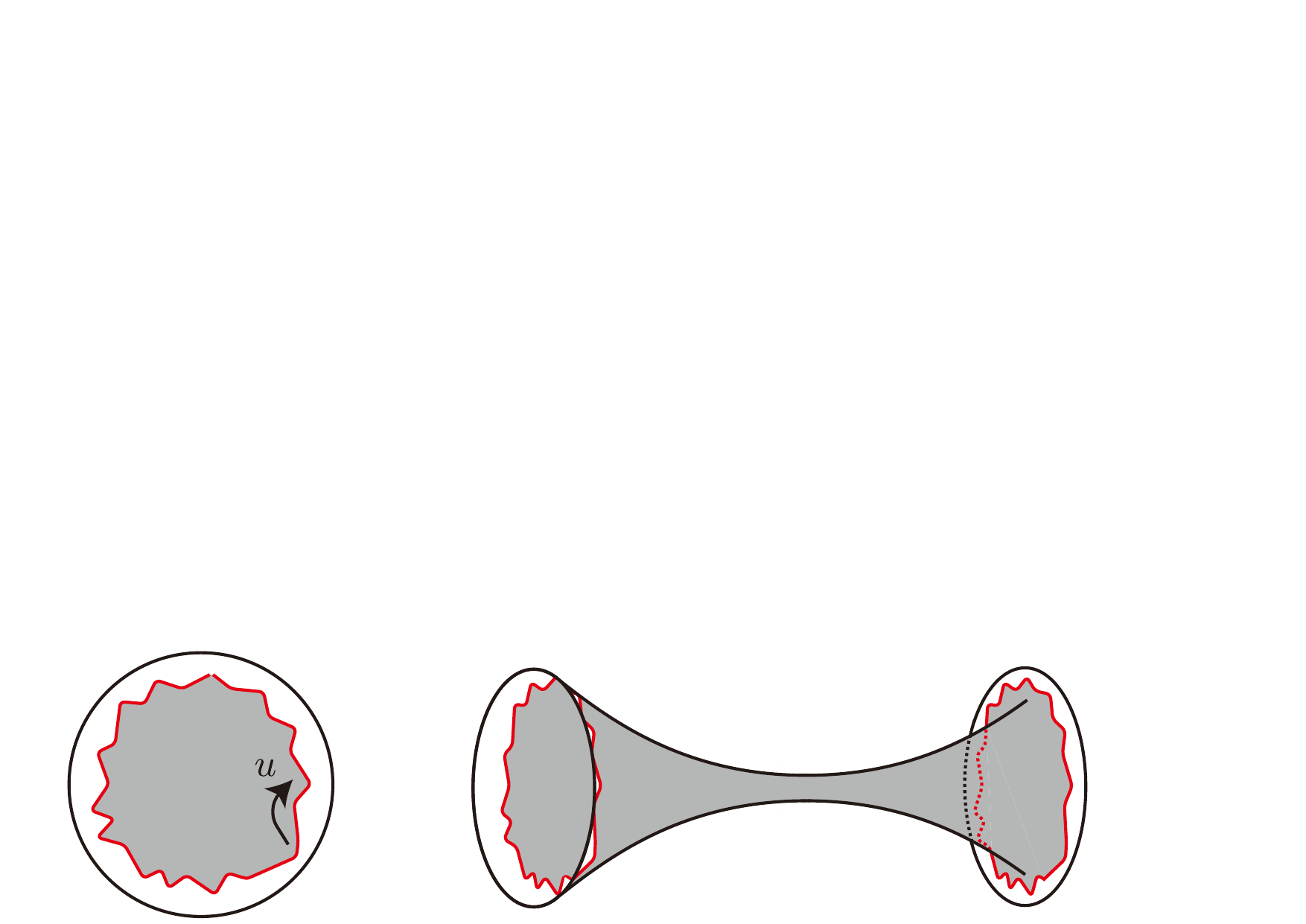}
\caption{Left: A disk with a wiggly boundary parametrized by a proper length coordinate $u$.
Right: A double trumpet with two wiggly boundaries.}\label{fig:disk_wiggly}
\end{figure}
the metric $g_{uu}$ and the dilaton field $\phi$ of the JT gravity obey the \emph{wiggly} boundary conditions below with a parameter $\gamma$:
\begin{align}
\label{eq:wiggly}
g_{uu}\big|_{\mathrm{bdy}}=\frac{1}{\epsilon^2},\qquad \phi\big|_{\mathrm{bdy}}=\frac{\gamma}{\epsilon},\qquad \epsilon\to 0,
\end{align}
and it is necessary to perform the path integral over the boundary graviton modes to get the partition function of the JT gravity.
In \cite{Stanford:2017thb}, a direct computation of the disk partition function~$Z_{0,1}^{\mathrm{JT}}(\beta)$ is performed, and the following striking formula is obtained
\begin{align*}
Z_{0,1}^{\mathrm{JT}}(\beta)=
\frac{\gamma^{\frac{3}{2}}}{(2\pi)^{\frac{1}{2}}\beta^{\frac{3}{2}}}
\mathrm{e}^{\frac{2\pi^2\gamma}{\beta}}.
\end{align*}

Further evidence of the duality conjecture is observed for the partition function of the JT gravity on the hyperbolic \emph{double trumpet} with wiggly boundaries which is homeomorphic to the cylinder in Figure \ref{fig:disk_wiggly} (right).
The double trumpet partition function of the JT gravity is shown to be dual to the spectral form factor $\langle Z(\beta+\mathrm{i}T)Z(\beta-\mathrm{i}T)\rangle$ of the SYK model in the ramp region.
\begin{figure}[t]\centering
 \includegraphics[width=95mm]{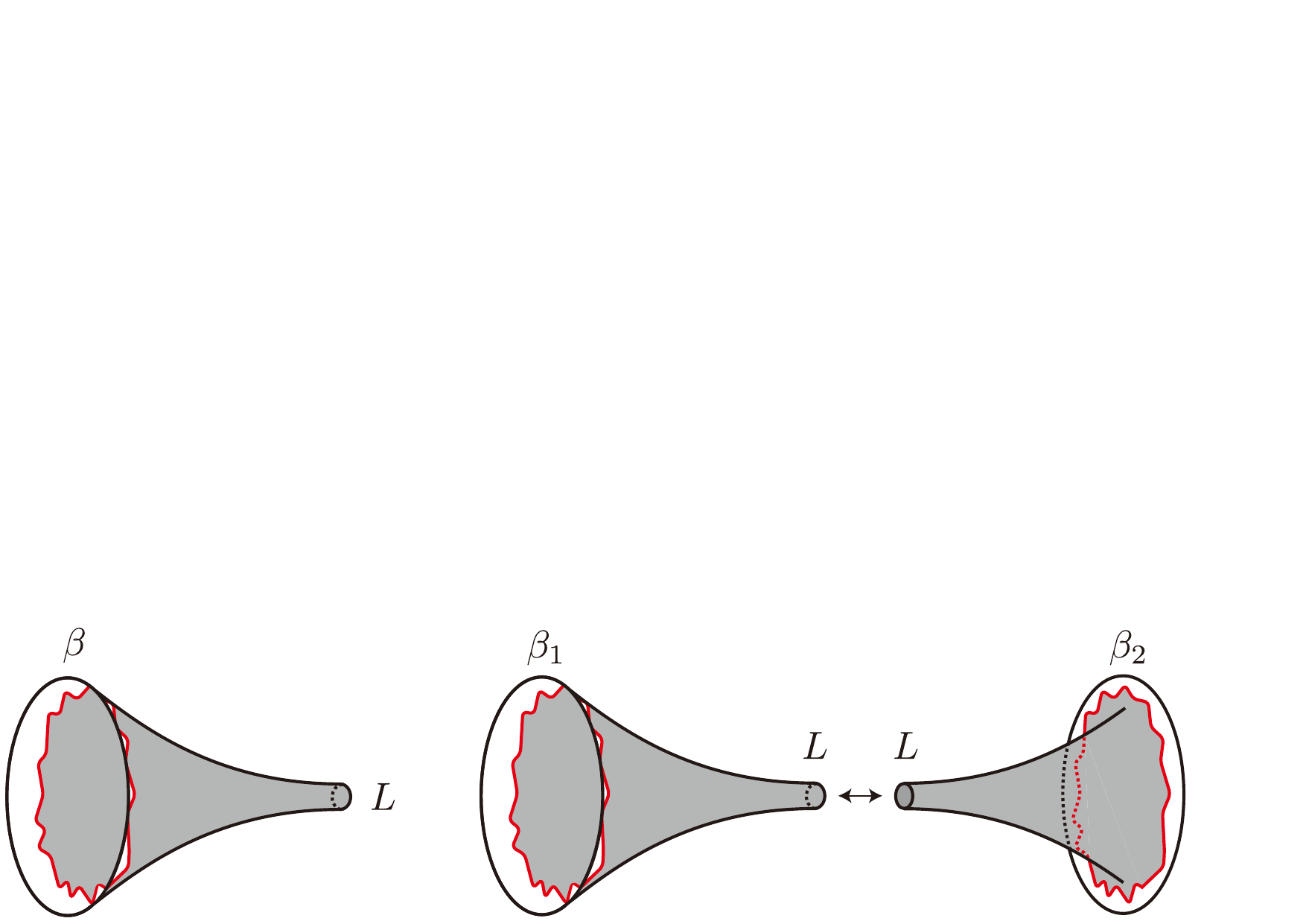}
\caption{Left: A hyperbolic trumpet with a wiggly boundary and a geodesic boundary
of length $L$.
Right: A double trumpet made of two hyperbolic trumpets.}\label{fig:trumpet}
\end{figure}

The double trumpet is divided into two hyperbolic trumpets by cutting along the waist curve (see Figure \ref{fig:trumpet}).
Each hyperbolic trumpet is also homeomorphic to a cylinder, and
one of two boundaries is the wiggly boundary which obeys the boundary condition \eqref{eq:wiggly} for the JT gravity fields.
Another boundary is the geodesic boundary, and we choose its length to be $L$.
In \cite{Saad:2019lba,Stanford:2019vob}, a striking formula of the trumpet partition function $Z_{\mathrm{trumpet}}^{\mathrm{JT}}(\beta,L)$ is obtained
\begin{align*}
Z_{\mathrm{trumpet}}^{\mathrm{JT}}(\beta,L)
=\frac{\gamma^{\frac{1}{2}}}{(2\pi)^{\frac{1}{2}}\beta^{\frac{1}{2}}}
\mathrm{e}^{-\frac{\gamma}{2}\frac{L^2}{\beta}}.
\end{align*}
Gluing two hyperbolic trumpets along the geodesic boundaries, one obtains the double trumpet, and
the gluing formula for the double trumpet partition function $Z_{0,2}^{\mathrm{JT}}(\beta,L)$ is
\begin{align}
\label{eq:double_trumpet}
Z_{0,2}^{\mathrm{JT}}(\beta_1,\beta_2)=
\int_{\IR_+} Z^{\mathrm{JT}}_{\mathrm{trumpet}}(\beta_1,L)
Z^{\mathrm{JT}}_{\mathrm{trumpet}}(\beta_2,L) L{\rm d}L
=\frac{\sqrt{\beta_1\beta_2}}{2\pi(\beta_1+\beta_2)}.
\end{align}

The gluing formula is generalized to the genus $g$ partition function with $n$ boundaries in Figure \ref{fig:multi_boundaries}.
\begin{figure}[t]\centering
 \includegraphics[width=50mm]{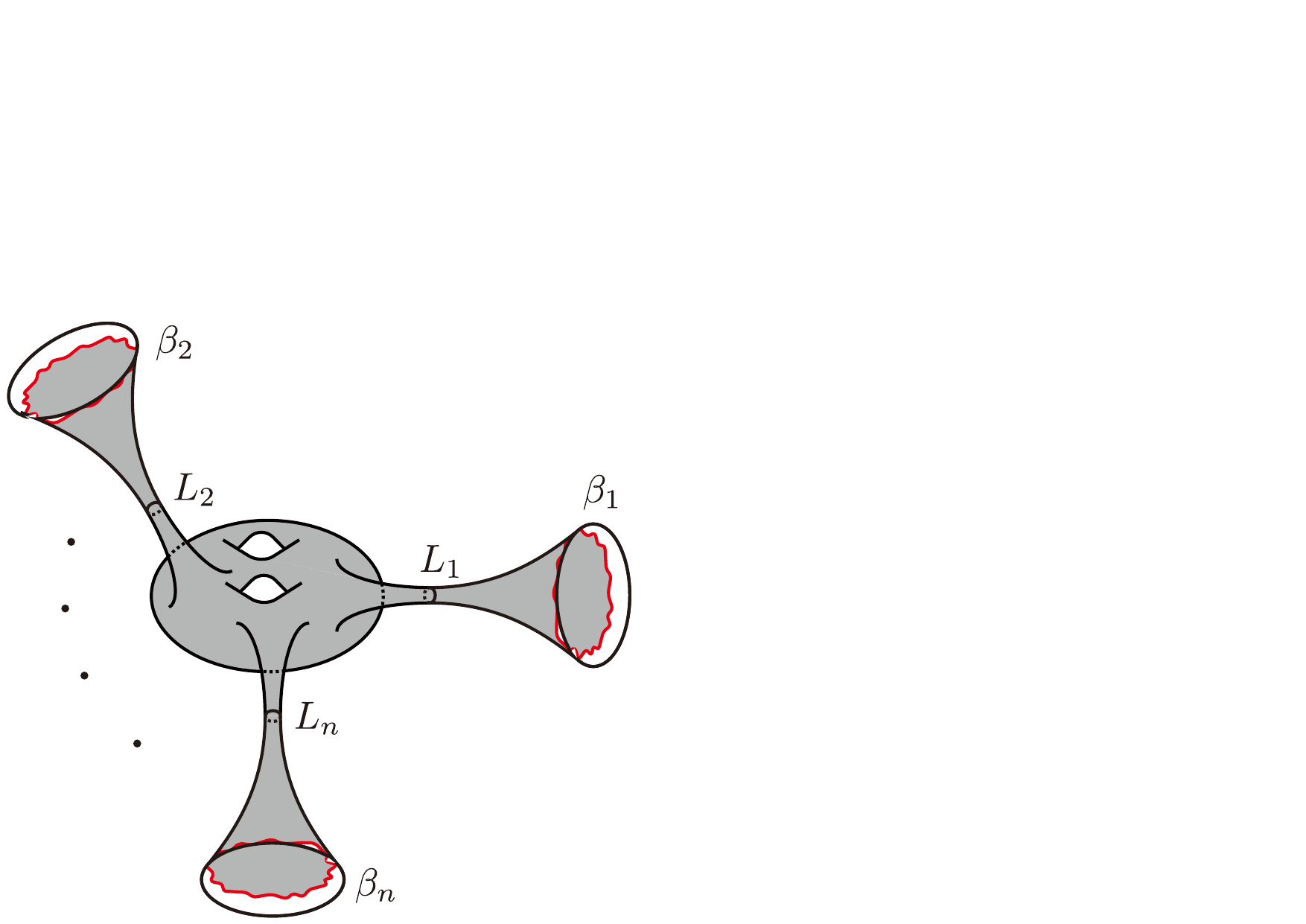}
\caption{A hyperbolic Riemann surface with genus $g$ and $n$ wiggly boundaries, which is constructed
by gluing $n$ hyperbolic trumpets to the bordered hyperbolic Riemann surface.}\label{fig:multi_boundaries}
\end{figure}
The JT gravity partition function $Z_{g,n}^{\mathrm{JT}}(\beta_1,\dots,\beta_n)$ of a genus $g$ hyperbolic bordered Riemann surface with $n$ wiggly boundaries obeys
\begin{align}
\label{eq:JT_partition_function}
Z_{g,n}^{\mathrm{JT}}(\beta_1,\dots,\beta_n)
=\int_{\IR_+^n}
\left(\prod_{i=1}^nZ^{\mathrm{JT}}_{\mathrm{trumpet}}(\beta_i,L_i)\right)
V_{g,n}^{\mathrm{WP}}(L_1,\dots,L_n)
\prod_{i=1}^n L_i{\rm d}L_i,
\end{align}
where $V_{g,n}^{\mathrm{WP}}(L_1,\dots,L_n)$ denotes the Weil--Petersson volume of genus $g$ bordered Riemann surface with boundary lengths $L_1,\dots,L_n$.
In the path integral of the JT gravity, the Weil--Petersson volume arises from the path integral with respect to the metric $g_{\mu\nu}$ and the dilaton field $\phi$ on the bulk of bordered Riemann surfaces.

{\bf Weil--Petersson spectral curve from JT gravity.} The JT gravity partition function~\eqref{eq:JT_partition_function} is directly related to the connected correlation function $W_{g,n}^{\mathrm{WP}}(z_1,\dots,z_n)$
of the CEO topological recursion for the Weil--Petersson volume.
Here we will focus on the derivation of the basic data, the $\sfy$-coordinate function in equation~\eqref{sp_curve_mir} and
the bidifferential $B$ in equation~\eqref{bergman} of
the Weil--Petersson spectral curve
$\cC^{\mathrm{WP}}=\big({\IP}^1;\sfx,\sfy^{\mathrm{WP}},B\big)$
from the disk and double trumpet partition functions.

The $\sfy$-coordinate function is found from the disk partition function $Z_{0,1}^{\mathrm{JT}}(\beta)$ rewritten in the form
\begin{align*}
Z_{0,1}^{\mathrm{JT}}(\beta)
=\int_{\IR_+}
\rho_0^{\mathrm{JT}}(E) \mathrm{e}^{-\beta E} {\rm d}E,
\end{align*}
where $\rho_0^{\mathrm{JT}}(E)$ denotes the genus zero density of states
\begin{align*}
\rho_0^{\mathrm{JT}}(E)=\frac{\gamma}{2\pi^2}\sinh\big(2\pi\sqrt{2\gamma E}\big).
\end{align*}
By a change of variable $E=-z^2$ and the analytic continuation, one finds
the $\sfy$-coordinate function \cite{Saad:2019lba}
\begin{align*}
\sfy^{\mathrm{JT}}(z)=
-\pi \mathrm{i} \rho_0^{\mathrm{JT}}(E)=
\frac{\gamma}{2\pi}\sin\big(2\pi\sqrt{2\gamma}z\big).
\end{align*}
Putting $\gamma=1/2$, we find the $\sfy$-coordinate function $\sfy^{\mathrm{WP}}(z)$ in equation~\eqref{sp_curve_mir} up to an overall constant factor.

We now use a formula which relate the correlation function $W_{g,n}^{\mathrm{WP}}(z_1,\dots,z_n)$ for $2g+2-n>0$ of the CEO topological recursion for the Weil--Petersson volume
and the JT gravity partition function $Z_{g,n}^{\mathrm{JT}}(\beta_1,\dots,\beta_n)$ with $\gamma=1/2$:
\begin{align}
\label{eq:Zgn_Wgn}
W_{g,n}^{\mathrm{WP}}(z_1,\dots,z_n)=
2^n z_1\cdots z_n \int_{\IR_+^n}
Z_{g,n}^{\mathrm{JT}}(\beta_1,\dots,\beta_n)
\mathrm{e}^{-\sum_{i=1}^n \beta_i z_i^2}
\prod_{i=1}^n {\rm d}\beta_i.
\end{align}
This formula is found from the Laplace dual relation between the Weil--Petersson volume $V^{\mathrm{WP}}(L_1,\dots,L_n)$ and the correlation function $W_{g,n}^{\mathrm{WP}}(z_1,\dots,z_n)$. Here the integral formula involving $Z_{\mathrm{trumpet}}^{\mathrm{JT}}(\beta,L)$,
\begin{align}
\label{eq:dual_trumpet}
2z \int_{\IR_+}
Z_{\mathrm{trumpet}}^{\mathrm{JT}}(\beta,L)
\mathrm{e}^{-\beta z^2} {\rm d}\beta
=2z \int_{\IR_+}
\frac{\gamma^{\frac{1}{2}}}{(2\pi)^{\frac{1}{2}}\beta^{\frac{1}{2}}}
\mathrm{e}^{-\beta z^2-\frac{\gamma}{2}\frac{L^2}{\beta}} {\rm d}\beta
=(2\gamma)^{\frac{1}{2}} \mathrm{e}^{-\sqrt{2\gamma}Lz},
\end{align}
is applied to equation~\eqref{eq:JT_partition_function}.
The bidifferential $B$ in equation~\eqref{bergman} is found by applying the above formula to equation~\eqref{eq:double_trumpet},
\begin{align}
\label{eq:Z02_W02_1}
2^2 z_1z_2
\int_{\IR_+^2}
Z_{0,2}^{\mathrm{JT}}(\beta_1,\beta_2)
\mathrm{e}^{-\beta_1z_1^2-\beta_2z_2^2} {\rm d}\beta_1{\rm d}\beta_2
=\frac{1}{(z_1+z_2)^2},
\end{align}
and this agrees with the regularized $(g,n)=(0,2)$ correlation function \cite{Eynard:2004mh} given by
\begin{align*}
B(z_1,z_2)-\frac{{\rm d}\sfx(z_1) \otimes {\rm d}\sfx(z_2)}{\left(\sfx(z_1)-\sfx(z_2)\right)^2}
=
\frac{{\rm d}z_1\otimes {\rm d}z_2}{(z_1-z_2)^2}-\frac{4z_1z_2 {\rm d}z_1\otimes {\rm d}z_2}{(z_1^2-z_2^2)^2}
=\frac{{\rm d}z_1\otimes {\rm d}z_2}{(z_1+z_2)^2}.
\end{align*}
Thus the basic data of the Weil--Petersson spectral curve $\cC^{\mathrm{WP}}=\big({\IP}^1;\sfx,\sfy^{\mathrm{WP}},B\big)$ are found from the JT gravity partition functions.

\subsection{Including a scalar field and the Masur--Veech type twist}\label{sec:Wilson_loop}

Here we introduce an extra scalar field with mass $m$ coupled to the JT gravity fields.
The partition function of the scalar field $Z_{\mathrm{scalar}}(L;\Delta)$ on the hyperbolic trumpet with a geodesic boundary of length $L$ is found in \cite{Jafferis:2022wez} via the heat kernel method,
\begin{align}
\label{eq:scalar_partition_function}
Z_{\mathrm{scalar}}(L;\Delta)
={}&\prod_{p \ge 0}\frac{1}{1-\mathrm{e}^{-L(\Delta+p)}}
=\exp\Bigg(
\sum_{w \ge 1}\frac{\mathrm{e}^{-wL\Delta}}{w\big(1-\mathrm{e}^{-wL}\big)}
\Bigg),
\end{align}
where\footnote{When the bulk scalar field is associated to an operator defined in the boundary field theory on the Riemann surface,
the quantity $\Delta$ is identified with its scaling dimension (see, e.g., \cite{Aharony:1999ti}).}
\begin{align*}
\Delta=\frac12 + \sqrt{\frac14 + m^2}.
\end{align*}
The partition function of the scalar coupled JT gravity is computed by introducing the scalar partition factor
$Z_{\mathrm{scalar}}(L;\Delta)$ for each of the closed geodesics of the hyperbolic bordered Riemann surface with wiggly boundaries in the gluing formula.
Namely, $Z_{\mathrm{scalar}}(L)$ is regarded as a twist function of the ABO topological recursion.
In particular, we focus on a sector of the scalar field path integral
in the exponential formula of equation~\eqref{eq:scalar_partition_function},
\begin{align*}
\mathsf{f}_{w}(L;\Delta):=\frac{\mathrm{e}^{-\Delta wL}}{w(1-\mathrm{e}^{-wL})}
=\frac{1}{w}\sum_{k \ge 0}\mathrm{e}^{-(k+\Delta)wL}.
\end{align*}
This sector in the scalar partition function comes from
path integral contributions of the closed world lines of the scalar field wrapping $w$ times around a closed geodesic of length $L$.
In particular, by considering the $w=1$ sector for
a massless scalar field with $\Delta=1$ ($m=0$),
we find the Masur--Veech type twist function in equation~\eqref{mv_twist},
\begin{align*}
\mathsf{f}_{w=1}(L;1)=\mathsf{f}^{\mathrm{MV}}(L)
=\frac{1}{\e^L -1}.
\end{align*}
Thus we get a physical interpretation of the Masur--Veech type twist from the physical arguments based on the JT gravity.

\begin{rem}\label{rem:ge_mv_tw_phys}
The generalized Masur--Veech type twist function $\mathsf{f}^{\mathrm{g\mathchar`-MV}}$ in equation~\eqref{eq:gen_MV_twist},
normalized by the winding number $w$, is
also found as a sector of equation~\eqref{eq:scalar_partition_function}
for the massless scalar field,
\begin{align*}
\mathsf{f}_{w}(L;1)=
\frac{1}{w} \mathsf{f}^{\mathrm{g\mathchar`-MV}}(L;w)
=\frac{1}{w(\mathrm{e}^{wL}-1)}.
\end{align*}
\end{rem}

{\bf Twisted Weil--Petersson spectral curve from scalar coupled JT gravity.} Based on the above physical interpretation in the JT gravity,
we will find the bidifferential
$B[\sff^{\mathrm{MV}}](z_1, z_2)$ in equation~\eqref{bergman_twist}
with the Masur--Veech type twist, and recover
the twisted Weil--Petersson spectral curve $\cC^{\mathrm{WP}}\big[\sfm\big]=\big({\IP}^1;\sfx,\sfy^{\mathrm{WP}},B\big[\sfm\big]\big)$.
We start from the scalar coupled JT gravity partition function $Z_{0,2}^{\mathrm{JT\mathchar`-scalar}}(\beta_1,\beta_2)$ on the double trumpet
that is found by gluing two partition functions for the hyperbolic trumpets and introducing the scalar partition function along a~simple closed geodesic with geodesic length $L$ as depicted in Figure \ref{fig:double_trumpet} \cite{Jafferis:2022wez},
\begin{figure}[t]\centering
 \includegraphics[width=65mm]{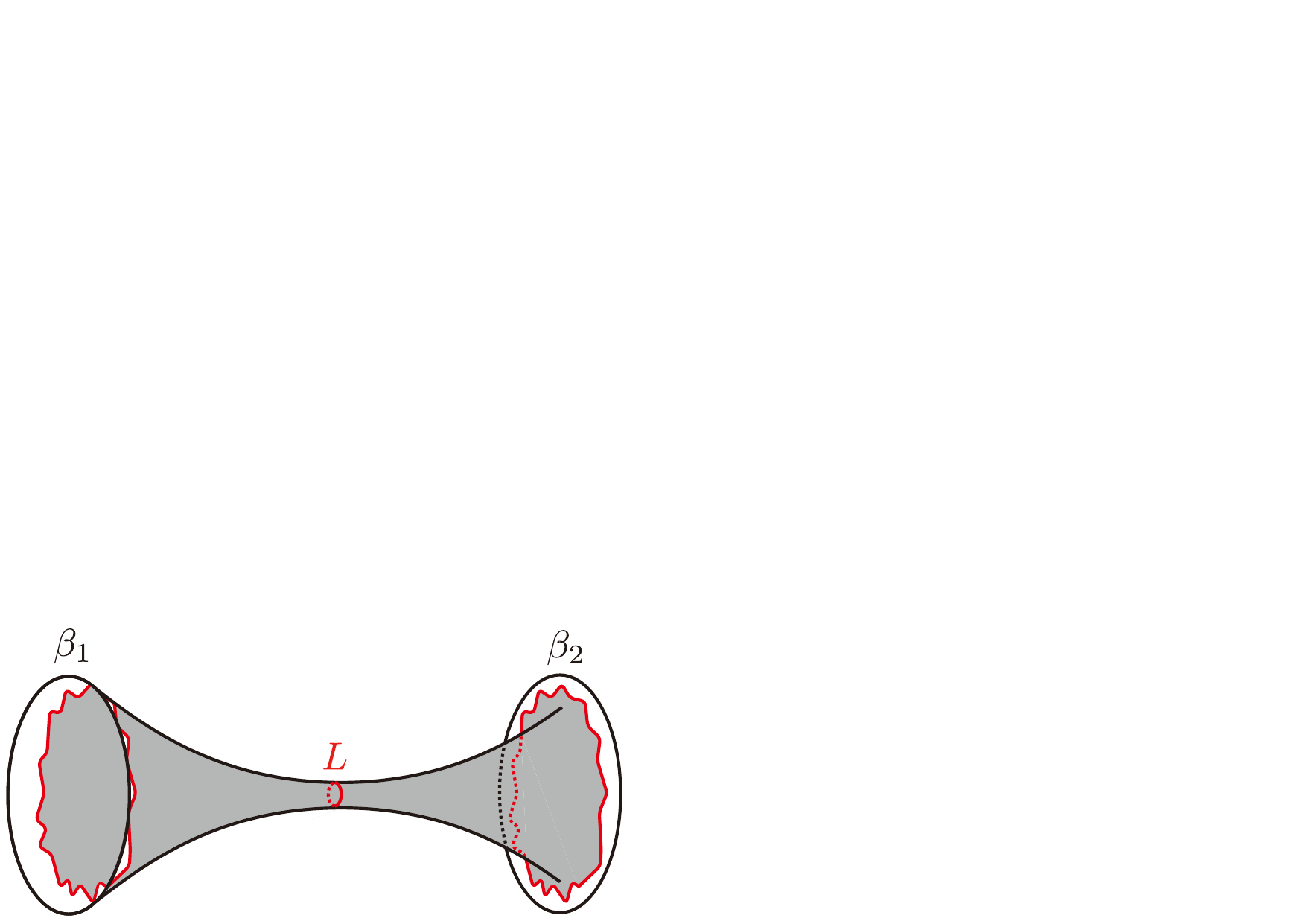}
\caption{A simple closed geodesic with geodesic length $L$ in a hyperbolic double trumpet.}\label{fig:double_trumpet}
\end{figure}
\begin{align}
\label{eq:Z02_scalar}
Z_{0,2}^{\mathrm{JT\mathchar`-scalar}}(\beta_1,\beta_2)
=\int_{\IR_+}
Z_{\mathrm{trumpet}}^{\mathrm{JT}}(\beta_1,L)
Z_{\mathrm{scalar}}(L;\Delta)
Z_{\mathrm{trumpet}}^{\mathrm{JT}}(\beta_2,L) L{\rm d}L.
\end{align}
We now replace $Z_{\mathrm{scalar}}(L;\Delta)$ by the Masur--Veech type twist function $\mathsf{f}^{\mathrm{MV}}(L)$ for the $L$-integral in the right-hand side of equation~\eqref{eq:Z02_scalar}.
The sum of the twisted and untwisted partition functions denoted by $Z_{0,2}^{\mathrm{JT}}\big[\mathsf{f}^{\mathrm{MV}}\big](\beta_1,\beta_2)$ is
\begin{align}
\label{eq:Z02_twisted}
Z_{0,2}^{\mathrm{JT}}\big[\mathsf{f}^{\mathrm{MV}}\big](\beta_1,\beta_2)
=\int_{\IR_+}
Z_{\mathrm{trumpet}}^{\mathrm{JT}}(\beta_1,L)
\left(1+\mathsf{f}^{\mathrm{MV}}(L)\right)
Z_{\mathrm{trumpet}}^{\mathrm{JT}}(\beta_2,L) L{\rm d}L.
\end{align}
Adopting equation~\eqref{eq:dual_trumpet} with $\gamma=1/2$,
for $\mathrm{Re} (z_1+z_2)>0$, we find
\begin{align*}
&
2^2 z_1z_2\int_{\IR_+^2}
Z_{0,2}^{\mathrm{JT}}\big[\mathsf{f}^{\mathrm{MV}}\big](\beta_1,\beta_2)
\mathrm{e}^{-\beta_1z_1^2-\beta_2z_2^2}
{\rm d}\beta_1 {\rm d}\beta_2
\\
&\qquad=\int_{\IR_+}
\mathrm{e}^{-L(z_1+z_2)}\left(1+\frac{1}{\mathrm{e}^L-1}\right)L{\rm d}L
=\frac{1}{(z_1+z_2)^2}
+\sum_{\mfm \ge 1} \frac{1}{(z_1+z_2+\mfm)^2}\\
&\qquad=\zeta(2;z_1+z_2),
\end{align*}
where $\zeta(2;z)$ is known as the generalized zeta function. By an analytic continuation to $z\in\mathbb{C}\setminus \IZ$, $\zeta(2;z)$ is replaced by the Hurwitz zeta function $\zeta_{\mathrm{H}}(2;z)$ in equation~\eqref{hurwitz_def}.
By changing the signature $z_2\to -z_2$ to obtain the correlation function between two points in the same branch of the double cover of the spectral curve,
we find the twisted bidifferential
$B[\sff^{\mathrm{MV}}](z_1, z_2)$ in equation~\eqref{bergman_twist}.
Thus, the basic data of the twisted Weil--Petersson spectral curve $\cC^{\mathrm{WP}}\big[\sfm\big]=\big({\IP}^1;\sfx,\sfy^{\mathrm{WP}},B\big[\sfm\big]\big)$ are also found from the physical arguments of the scalar coupled JT gravity.\looseness=1

As a generalization of the double trumpet partition function \eqref{eq:Z02_twisted}, we consider the twisted JT gravity partition function of a genus $g$ hyperbolic bordered Riemann surface with $n$ wiggly boundaries (see Figure \ref{fig:multi_boundaries_MV}) twisted by the Masur--Veech type twist function:
\begin{figure}[t]\centering
 \includegraphics[width=50mm]{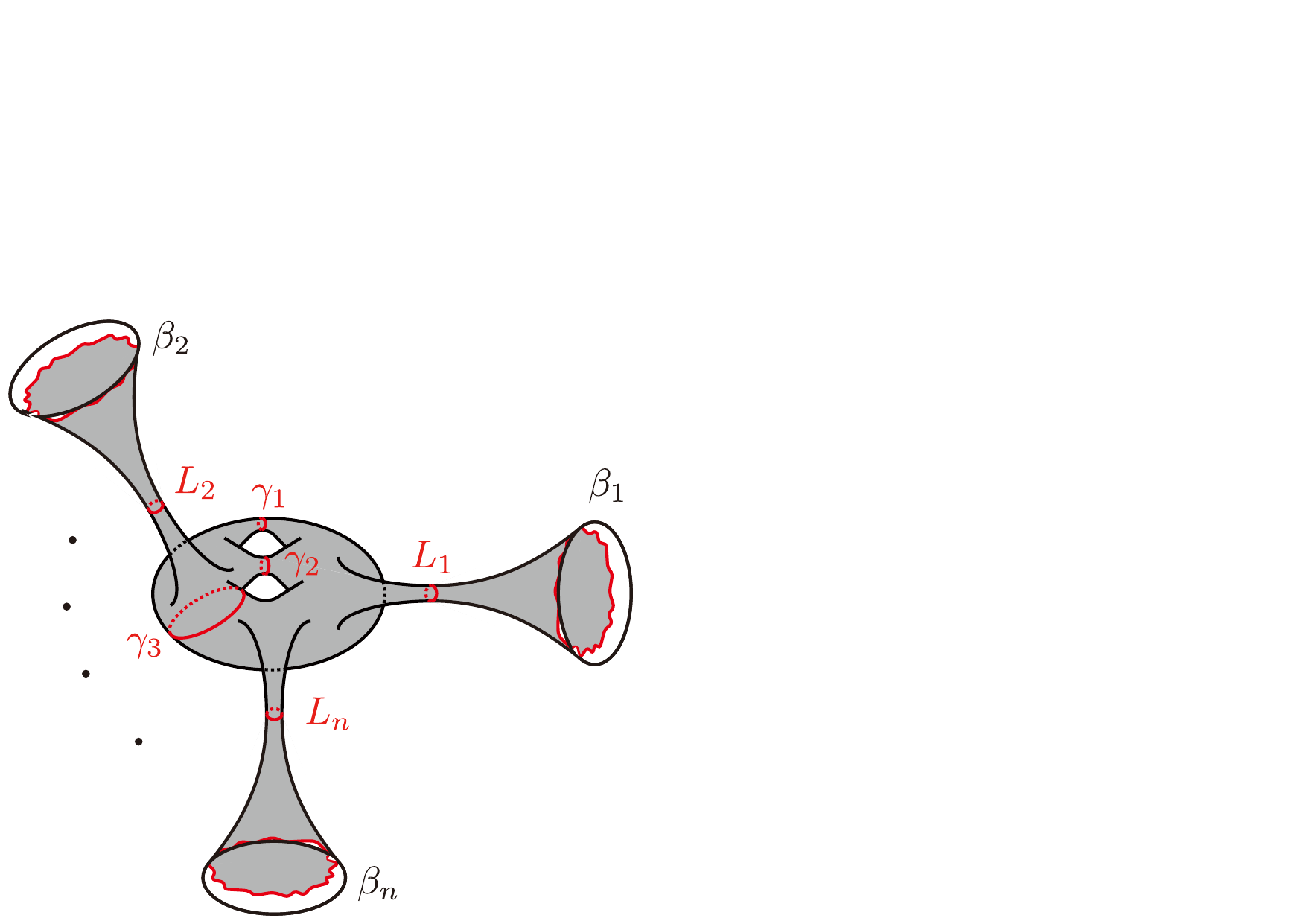}
\caption{Closed geodesics on a hyperbolic Riemann surface with genus $g$ and $n$ wiggly boundaries.}\label{fig:multi_boundaries_MV}
\end{figure}
\begin{align}
Z_{g,n}^{\mathrm{JT}}\big[\mathsf{f}^{\mathrm{MV}}\big](\beta_1,\dots,\beta_n)={}&
\int_{\IR_+^n}
\left(\prod_{i=1}^n\mathsf{f}^{\mathrm{MV}}(L_i) Z^{\mathrm{JT}}_{\mathrm{trumpet}}(\beta_i,L_i)\right)\nonumber
\\
&\times
V^{\mathrm{WP}}_{g,n}\big[\mathsf{f}^{\mathrm{MV}}\big](L_1,\dots,L_n)
\prod_{i=1}^n L_i{\rm d}L_i,\label{eq:twisted_JT}
\end{align}
where $V^{\mathrm{WP}}_{g,n}\big[\mathsf{f}^{\mathrm{MV}}\big](L_1,\dots,L_n)$ is the solution of the ABO topological recursion of the Weil--Petersson volume with the Masur--Veech type twist, i.e., the twisted Weil--Petersson volume.
The ABO topological recursion twisted by $\mathsf{f}^{\mathrm{MV}}$ is equivalent to the Laplace dual of the CEO topological recursion for the twisted Weil--Petersson spectral curve $\cC^{\mathrm{WP}}\big[\sfm\big]$.

\begin{rem}
Here we comment on a physical observation of the twist-elimination map introduced in Section \ref{sec:twisting_tr} in terms of the twisted JT gravity partition functions.
To extract the twisted Weil--Petersson volume from the twisted JT gravity partition function,
we should eliminate the twist factors $\mathsf{f}^{\mathrm{MV}}(L_i)$ and the trumpet partition functions $Z^{\mathrm{JT}}_{\mathrm{trumpet}}(\beta_i,L_i)$ in equation~\eqref{eq:twisted_JT}.
Such manipulation is geometrically interpreted as follows.
By the physical argument, similar to the relation \eqref{eq:Zgn_Wgn},
the twisted JT gravity partition function $Z_{g,n}^{\mathrm{JT}}\big[\mathsf{f}^{\mathrm{MV}}\big]$ and the twisted correlation function \smash{$W_{g,n}^{\mathrm{WP}}\big[\mathsf{f}^{\mathrm{MV}}\big]$} of the CEO topological recursion are
related by an integral transform $\mathcal{I}\left(Z_{g,n}^{\mathrm{WP}}\big[\mathsf{f}^{\mathrm{MV}}\big]\right)=W_{g,n}^{\mathrm{WP}}\big[\mathsf{f}^{\mathrm{MV}}\big]$ such that
\begin{align*}
W_{g,n}^{\mathrm{WP}}\big[\mathsf{f}^{\mathrm{MV}}\big](z_1,\dots,z_n)=
2^n z_1\cdots z_n \int_{\IR_+^n}
Z_{g,n}^{\mathrm{JT}}\big[\mathsf{f}^{\mathrm{MV}}\big](\beta_1,\dots,\beta_n)
\mathrm{e}^{-\sum_{i=1}^n \beta_i z_i^2}
\prod_{i=1}^n {\rm d}\beta_i.
\end{align*}
In the geometric picture, this integral transform $\mathcal{I}$ replaces the hyperbolic trumpet in the Riemann surface with wiggly boundaries
by the marked points
depicted as the left arrow in Figure~\ref{fig:multi_boundaries_MV3} (see \cite{Post:2022dfi} for physical discussions on this geometrical picture for this integral transform).
Furthermore, the twist-elimination map $\mathcal{E}$ eliminates the twist factors associated to the boundaries of the bordered Riemann surface
in the twisted correlation function $W_{g,n}^{\mathrm{WP}}\big[\mathsf{f}^{\mathrm{MV}}\big]$.
Finally, adopting the inverse Laplace transform $\mathcal{L}^{-1}$, we obtain the twisted Weil--Petersson volume $V_{g,n}^{\mathrm{WP}}\big[\mathsf{f}^{\mathrm{MV}}\big]$ with boundary length variables $L_i$.

\begin{figure}[t]\centering
 \includegraphics[width=115mm]{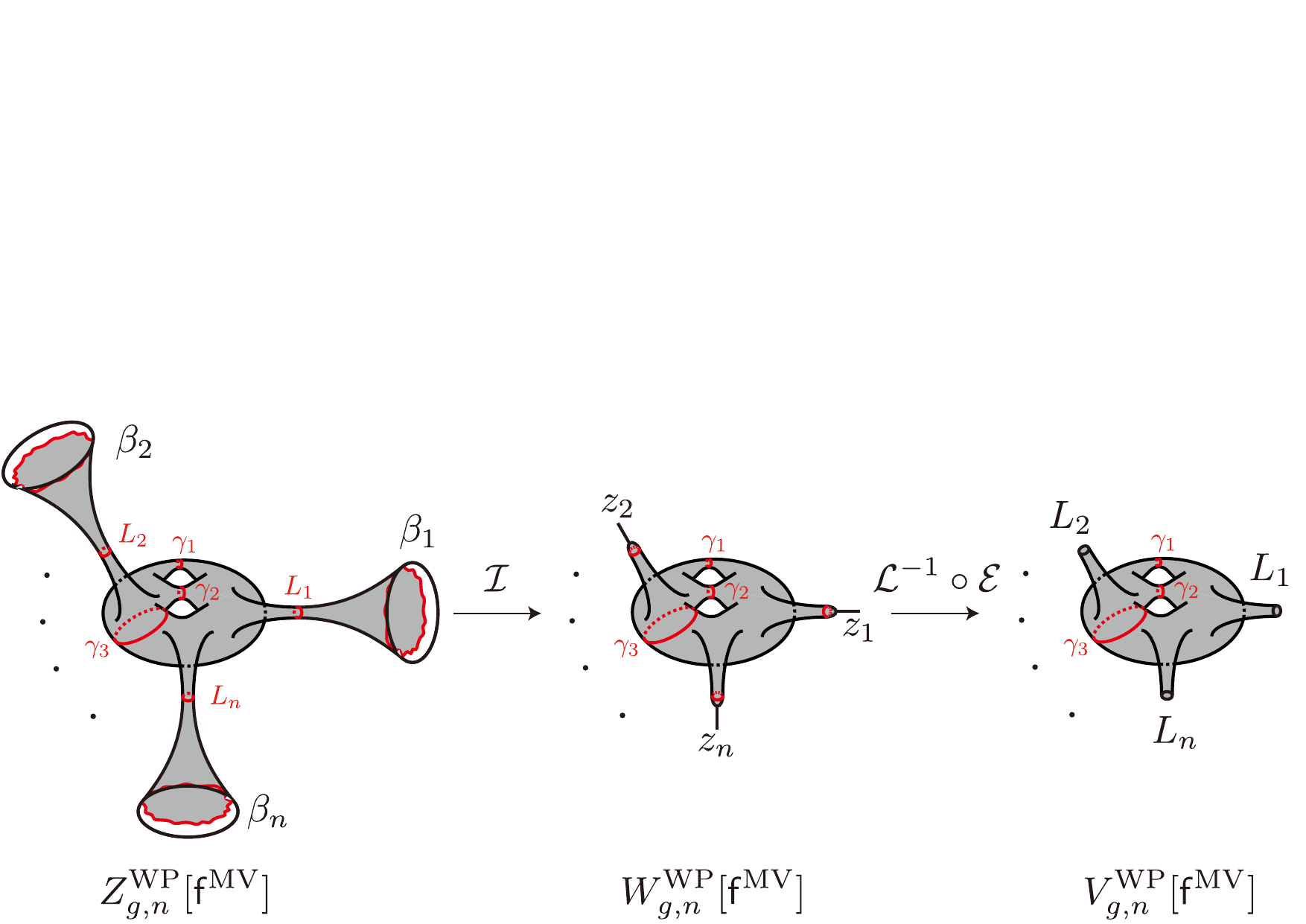}
\caption{A geometrical interpretation of the integral transform $\mathcal{I}$ and the twist-elimination map $\mathcal{E}$ involved with the inverse Laplace transform $\mathcal{L}^{-1}$.}\label{fig:multi_boundaries_MV3}
\end{figure}
\end{rem}

\subsection[Liouville gravity and (2,p) minimal string]{Liouville gravity and $\boldsymbol{(2,p)}$ minimal string}\label{sec:minimal}

Consider a matter CFT coupled to the ghost sector and Liouville CFT
which is referred to as the \emph{Liouville gravity}.
When the matter CFT is the minimal model CFT labeled by a pair of relatively prime integers $(p,p')$, the Liouville gravity yields
the $(p,p')$ \emph{minimal string}
which is a~class of non-critical string theory.
In particular, the $(2,p)$ minimal string with any odd positive integers $p$ is
found by specializing the Liouville parameter $b$ in the Liouville gravity as \smash{$b=\sqrt{\frac{2}{p}}$}.
It is pointed out in \cite{Saad:2019lba} that the partition function of the $(2,p)$ minimal string leads to
the partition function of the JT gravity in the $p\to\infty$ limit.
Detailed studies on the relation between the Liouville gravity and the JT gravity in terms of the topological expansion of the partition functions
are found in \cite{Mertens:2020hbs}.

The Liouville gravity partition function on the two-dimensional surface with a simple boundary condition \cite{Fateev:2000ik,Teschner:2000md}
which fixes the boundary cosmological constant $\mu_{\mathrm{B}}$ is studied in \cite{Edwards:1991jx,Moore:1991ag,Seiberg:2003nm}.
Such a boundary condition is known as the FZZT boundary condition.
The inverse Laplace transform of the Liouville gravity partition function with the FZZT boundary conditions gives
the correlation function of macroscopic loop operators of the Liouville gravity, which is the Liouville gravity partition function of the two-dimensional surface with fixed boundary lengths.
In \cite{Saad:2019lba}, it is pointed out that the JT gravity partition function on the hyperbolic surface with wiggly boundaries
coincides with the Liouville gravity partition function with fixed boundary lengths in the $b\to 0$ limit. (Detailed derivations are found in~\cite{Mertens:2020hbs}.)

The disk partition function \smash{$Z^{\mathrm{L}(b)}_{0,1}(\ell)$} of the Liouville gravity with a fixed boundary length $\ell$ is given by the integral transform of the density of states \smash{$\rho_0^{\mathrm{L}(b)}(E)$} \cite{Moore:1991ir} as%
\footnote{The disk partition function of the Liouville gravity in the JT gravity notation is found such as \cite[equation~(150)]{Saad:2019lba} and \cite[equation~(3.11)]{Mertens:2020hbs}.}
\begin{align*}
&Z^{\mathrm{L}(b)}_{0,1}(\ell)\sim
\int_{\kappa}^{\infty} \mathrm{e}^{-\ell E} \rho_0^{\mathrm{L}(b)}(E) {\rm d}E,
\qquad
\rho_0^{\mathrm{L}(b)}(E)=\frac{1}{4\pi^2}\sinh\left(\frac{1}{b^2}\mathrm{arccosh}\left(\frac{E}{\kappa}\right)\right),
\end{align*}
where $\sim$ implies a multiplication of a constant factor $N(b)$ and a bulk cosmological factor $\mu^{1/(2b^2)}$ (cf.\ \cite[equation (3.9)]{Mertens:2020hbs}),
and the parameter $\kappa$ is given by $\kappa^2=\mu/\sin(\pi b^2)$.
By rescaling the energy $E$ and the boundary length $\ell$ as\footnote{The boundary length $\ell$ of the Liouville gravity is identified as the Euclidean time of the JT gravity in this coincidence. Due to this identification, we use the inverse temperature $\beta$ for the boundary length $\ell$ in the Liouville gravity partition function.}
\begin{align*}
E=\kappa\big(1+2\pi^2b^4E_{\mathrm{JT}}\big),
\qquad \ell=\frac{\beta}{2\pi^2\kappa b^4},
\end{align*}
we obtain the partition function of the JT gravity in the $b\to 0$ limit.
By an analytic continuation of the genus zero density of states \smash{$\rho_0^{\mathrm{L}(b)}(E)$} with $E=\kappa\big(1-2\pi^2b^4z^2\big)$, we find the $\sfy$-coordinate function
\begin{align*}
\sfy^{\mathrm{L}(b)}(z)=
-\pi \mathrm{i} \rho_0^{\mathrm{L}(b)}(E)=
\frac{1}{2\pi}\sin\left(\frac{1}{b^2}\mathrm{arccos}\big(1-2\pi^2b^4z^2\big)\right),
\end{align*}
which defines the Liouville gravity spectral curve $\cC^{\mathrm{L}(b)}$.
In particular for $b=\sqrt{2/p}$, the coordinate function $\sfy^{\mathrm{L}(b)}(z)$ yields
the coordinate function $\sfy^{\mathrm{M}(p)}(z)$ in equation~\eqref{sp_curve_mg} of the~${(2,p)}$ minimal string spectral curve
$\cC^{\mathrm{M}(p)}=\big({\IP}^1;\sfx,\sfy^{\mathrm{M}(p)},B\big)$.\footnote{The coordinate function is also found in \cite{Seiberg:2003nm} from the ground ring relation \cite{Witten:1991zd} of the tachyon module.}
Now we consider two specializations of the parameter~$p$.
One specialization is the limit $p\to\infty$. In this limit, the minimal string reduces to the JT gravity,
and the coordinate functions~\eqref{sp_curve_mir}
of the Weil--Petersson spectral curve~${\cC^{\mathrm{WP}}=\big({\IP}^1;\sfx,\sfy^{\mathrm{WP}},B\big)}$ are recovered.
Another specialization is $p=1$.
In this case, the minimal string is identified with the topological gravity
whose matter CFT has central charge $c=-2$~\cite{DiFrancesco:1993cyw,Ginsparg:1991bi,Ginsparg:1993is},
and the coordinate functions \eqref{sp_curve_kw}
of the Airy spectral curve~${\cC^{\mathrm{A}}=\big({\IP}^1;\sfx,\sfy^{\mathrm{A}},B\big)}$
for the Kontsevich--Witten theory are recovered.
In \cite{Gregori:2021tvs}, the CEO topological recursion for the spectral curve $\mathcal{C}^{\mathrm{M}(p)}$ is
studied , and the non-perturbative behavior of the minimal string theory is discussed in great detail.

Next, we consider the cylinder partition function of the Liouville gravity with fixed boundary lengths $\ell_1$, $\ell_2$ to find the bidifferential of the spectral curve.
From the detailed analysis
of the two point function of the Liouville gravity, a gluing formula
for the cylinder partition function~\smash{$Z^{\mathrm{L}(b)}_{0,2}(\ell_1,\ell_2)$} is obtained in \cite{Martinec:2003ka,Mertens:2020hbs} such that
\begin{align}
\label{eq:cylinder_liouville}
Z^{\mathrm{L}(b)}_{0,2}(\ell_1,\ell_2)
=\frac{2}{\pi}\int_{\IR_+}
\tanh(\pi \lambda) K_{\mathrm{i}\lambda}(\kappa\ell_1) K_{\mathrm{i}\lambda}(\kappa\ell_2)
\lambda {\rm d}\lambda
=\frac{\sqrt{\ell_1\ell_2}}{\ell_1+\ell_2}
\mathrm{e}^{-\kappa(\ell_1+\ell_2)},
\end{align}
where $K_{s}(\ell)$ denotes the modified Bessel function of the second kind, and a formula of the Kontorovich--Lebedev transform is adopted to obtain this result (e.g., see \cite[Section 12.1]{Erdelyi}).
In the Liouville gravity,
the trumpet partition function \smash{$Z^{\mathrm{L}(b)}_{\mathrm{trumpet}}(\ell,\lambda)$}
is computed as the bulk one point function of the disk with a fixed boundary length $\ell$, and given by (see \cite[equations~(4.8) and~(7.57)]{Mertens:2020hbs}),
\begin{align}
\label{eq:trumpet_Liouville}
Z^{\mathrm{L}(b)}_{\mathrm{trumpet}}(\ell,\lambda)
=K_{\mathrm{i}\lambda}(\kappa\ell).
\end{align}
Then, the formula \eqref{eq:cylinder_liouville} is considered as an analogue of the gluing formula \eqref{eq:double_trumpet} for the double trumpet in the JT gravity.
Performing the following integral transform for the $(g,n)=(0,2)$ correlation function \smash{$Z^{\mathrm{L}(b)}_{0,2}(\ell_1,\ell_2)$} with
$E_i=\kappa\big(1-2\pi^2b^4z_i^2\big)$ $(i=1,2)$:
\begin{align}
\label{eq:W02_Liouville}
\frac{2^2 \big(2\pi^2b^4\kappa\big)^2 z_1z_2}{2\pi}
\int_{\IR_+^2}Z^{\mathrm{L}(b)}_{0,2}(\ell_1,\ell_2)
\mathrm{e}^{\ell_1E_1+\ell_2E_2} {\rm d}\ell_1{\rm d}\ell_2
=\frac{1}{(z_1+z_2)^2},
\end{align}
we obtain the same answer as the $(g,n)=(0,2)$ correlation function \eqref{eq:Z02_W02_1} of the JT gravity.
Consequently, by a change of signature $z_2\to -z_2$ of equation~\eqref{eq:W02_Liouville},
we find the bidifferential~$B$ in equation~\eqref{eq:spectral_curve_data_1} for the Liouville gravity spectral curve.
Thus, the basic data of the Liouville gravity spectral curve $\cC^{\mathrm{L}(b)}=\big({\IP}^1;\sfx,\sfy^{\mathrm{L}(b)},B\big)$
are derived from the Liouville gravity partition function, and the spectral curve \smash{$\cC^{\mathrm{L}(b)}$} reduces to the $(2,p)$ minimal string spectral curve \smash{$\cC^{\mathrm{M}(p)}=\big({\IP}^1;\sfx,\sfy^{\mathrm{M}(p)},B\big)$} in Table \ref{tab:spectral_curve} by the specialization $b=\sqrt{2/p}$.

Rescaling the parameters in the Liouville gravity such that
\begin{align*}
\lambda=\frac{L}{2\pi b^2},
\qquad \ell=\frac{\beta}{2\kappa \pi^2 b^4},
\end{align*}
we recover the JT gravity results in the $b\to 0$ limit.
Indeed the trumpet partition function~\eqref{eq:trumpet_Liouville}
and the measure factor in equation~\eqref{eq:cylinder_liouville} reduces to those of the JT gravity in this scaling limit~\cite{Mertens:2020hbs}:
\begin{align*}
\mathrm{e}^{\kappa\ell} Z^{\mathrm{L}(b)}_{\mathrm{trumpet}}(\beta,\lambda)
&\to\pi b^2\sqrt{\frac{\pi}{\beta}} \mathrm{e}^{-\frac{L^2}{4\beta}},
\qquad
\tanh(\pi\lambda) \lambda {\rm d}\lambda \to\frac{1}{4\pi^2b^4} L{\rm d}L,
\end{align*}
and the Liouville gravity partition function reduces to the JT gravity partition function.

Having obtained a set of basic data of the CEO topological recursion,
we can consider an analogue of the Weil--Petersson volume for the $(2,p)$ minimal string.
Let \smash{$V^{\mathrm{L}(b)}_{g,n}(L_1,\dots,L_n)$} be the volume polynomial
which is the Laplace dual of the solution of the CEO topological recursion for the Liouville gravity spectral curve $\cC^{\mathrm{L}(b)}$.
By the specialization $b=\sqrt{2/p}$, the volume polynomial~\smash{$V^{\mathrm{L}(b)}_{g,n}(L_1,\dots,L_n)$} reduces to the volume polynomial
\smash{$V^{\mathrm{M}(p)}_{g,n}(L_1,\dots,L_n)$} for the $(2,p)$ minimal string.\footnote{The $p$-deformed Weil--Petersson volume is computed independently on the basis of
the $(2,p)$ minimal string in \cite[Section~7]{Mertens:2020hbs}, and
we find agreements between the volume polynomials $V^{\mathrm{M}(p)}_{g,n}(L_1,\dots,L_n)$ in \eqref{lst_mst} 
and the $p$-deformed Weil--Petersson volumes
for $(g,n)=(0,4)$ and $(1,1)$ under a change of parameters as $b_i^2=L_i^2-4\pi^2/p^2$
(see version 5 of \cite{Mertens:2020hbs} in arXiv).}
In particular, for $p=\infty$ (resp.\ $p=1$),
the volume polynomial \smash{$V^{\mathrm{M}(p)}_{g,n}(L_1,\dots,L_n)$} reduces to
the Weil--Petersson volume \smash{$V^{\mathrm{WP}}_{g,n}(L_1,\dots,L_n)$}
(resp.\ the Kontsevich--Witten symplectic volume \smash{$V^{\mathrm{A}}_{g,n}(L_1,\dots,L_n)$})
which appears in the path integral for the JT gravity (resp.\ topological gravity) partition functions.
To establish the $p\to\infty$ limit, the non-perturbative study of
the~$(2,p)$ minimal string is also necessary,
and such aspect is addressed in \cite{Gregori:2021tvs}.

\subsection{JT supergravity}\label{sec:super_models}

Supersymmetric generalizations\footnote{A supersymmetric extension of the McShane identity is studied independently in \cite{Penner_super} on the basis of Bowditch's approach \cite{Bowditch} by Markoff triples.}
of Mirzakhani's recursion relation \cite{Mirz3,Mirz1} are given by Stan\-ford--Witten's work \cite{Stanford:2019vob} from the study of the JT supergravity defined on a bordered super Riemann surface.
Mathematical aspects of the volume of the moduli space of super hyperbolic surfaces and the CEO topological recursion formalism are further studied in \cite{Norbury:2020vyi}.
The super Riemann surface is a Riemann surface equipped with a spin structure, and
the partition function of the~$\mathcal{N}=1$ supersymmetric JT gravity (i.e., JT supergravity abbreviated by SJT) is given by the integral over the moduli space of Riemann surfaces involving the sum over spin structures~\cite{Stanford:2019vob}.
The topological expansion of the JT supergravity partition function with respect to the underlying Riemann surfaces
is found in the parallel way as the (bosonic) JT gravity, and
a~classification of fermionic and non-orientable extensions of the JT gravity is discussed in \cite{Stanford:2019vob}.

The basic data for the topological expansion of a fermionic/(non-)orientable JT gravity partition function are also found essentially from the disk, trumpet and double trumpet partition functions.
In the strategy of Stanford--Witten's work, the disk and trumpet partition functions of the fermionic/(non-)orientable JT gravity are studied on the basis of the dual SYK-like models.
Furthermore, the fermionic/non-orientable JT gravity partition function has a description by a~random matrix integral, and
a correspondence between the classification of the fermionic/(non-)orientable JT gravity and
the Dyson $\beta$-ensembles~\cite{Dyson:1962es}
or the Altland--Zirnbauer $(\alpha,\beta)$-ensembles~\cite{Altland:1997zz}
is established as Tables~1--4 of~\cite{Stanford:2019vob}.

The disk and trumpet partition functions for the JT supergravity are\footnote{For the disk and trumpet partition functions \smash{$Z^{\mathrm{D}}_{\mathrm{SJT}}(\beta)$}, \smash{$Z^{\mathrm{T}}_{\mathrm{SJT}}(\beta,L)$} computed in \cite[Appendix C]{Stanford:2019vob} from the boundary super Schwarzian path integrals,
a factor $1/2$ is multiplied to employ the matrix model normalization~\cite{Okuyama:2020qpm}: \smash{$
Z_{0,1}^{\mathrm{SJT}}(\beta)=\frac{1}{2}Z^{\mathrm{D}}_{\mathrm{SJT}}(\beta)$},
\smash{$Z_{\mathrm{trumpet}}^{\mathrm{SJT}}(\beta,L)=\frac{1}{2}Z^{\mathrm{T}}_{\mathrm{SJT}}(\beta,L)$}.
}
\begin{align*}
&
Z_{0,1}^{\mathrm{SJT}}(\beta)
=\sqrt{\frac{1}{2\pi\beta}} \mathrm{e}^{\frac{\pi^2}{\beta}},
\qquad
Z_{\mathrm{trumpet}}^{\mathrm{SJT}}(\beta,L)
=\frac{1}{2\sqrt{2\pi\beta}} \mathrm{e}^{-\frac{L^2}{4\beta}},
\end{align*}
where the parameters $\gamma$ and $\mathrm{e}^{S_0}$ of the super Schwarzian path integrals in \cite[Appendix C]{Stanford:2019vob} are fixed to be $1/2$
and $1$, respectively.
From the disk partition function \smash{$Z_{0,1}^{\mathrm{SJT}}(\beta)$}, as was done in the JT gravity, we obtain the coordinate functions
$\sfx(z)$ and $\sfy^{\mathrm{SWP}}(z)$ in equation~\eqref{sp_curve_sw} of
the super Weil--Petersson spectral curve $\cC^{\mathrm{SWP}}$,
where we employed a normalization given in \cite{Norbury:2020vyi}.
Gluing two trumpet partition functions along the geodesic circle of length $L$, we obtain the double trumpet function
\begin{align*}
Z_{0,2}^{\mathrm{SJT}}(\beta_1,\beta_2)
=2\int_{\IR_+} Z_{\mathrm{trumpet}}^{\mathrm{SJT}}(\beta_1,L)
Z_{\mathrm{trumpet}}^{\mathrm{SJT}}(\beta_2,L) L{\rm d}L
=\frac{\sqrt{\beta_1\beta_2}}{2\pi(\beta_1+\beta_2)},
\end{align*}
where we assume the JT supergravity on orientable surfaces without time-reversal symmetry,
and the factor $2$ in front of the middle integral arises from the sum over the spin structures.
The JT supergravity partition function on an orientable surface with genus $g$ and $n$ wiggly boundaries without time-reversal symmetry is
\begin{align}
\label{eq:partition_function_SJT}
Z^{\mathrm{SJT}}_{g,n}(\beta_1,\dots,\beta_n)
=
\int_{\IR_+^n}
\left(\prod_{i=1}^n2 Z_{\mathrm{trumpet}}^{\mathrm{SJT}}(\beta_i,L_i)\right)
V_{g,n}^{\mathrm{SW}}(L_1,\dots,L_n)
\prod_{i=1}^n L_i{\rm d}L_i,
\end{align}
where $V_{g,n}^{\mathrm{SW}}(L_1,\dots,L_n)$ denotes the Weil--Petersson volume for the moduli space of super Riemann surfaces
computed from the supersymmetric generalization of Mirzakhani's recursion in~\cite[Appendix D]{Stanford:2019vob}.

From this double trumpet partition function, we also obtain the bidifferential $B(z_1,z_2)$ of the super Weil--Petersson spectral curve $\cC^{\mathrm{SWP}}$ in the same way as the JT gravity,
and the super Weil--Petersson spectral curve $\cC^{\mathrm{SWP}}=\big({\IP}^1;\sfx,\sfy^{\mathrm{SWP}},B\big)$ is obtained.
Then, the correlation functions~\smash{$W^{\mathrm{SWP}}_{g,n}(z_1,\dots,z_n)$} are
obtained from the CEO topological recursion, and
their inverse Laplace transforms give
the volume polynomials $V_{g,n}^{\mathrm{SWP}}(L_1,\dots,L_n)$ of the moduli space of super Riemann surfaces which are the supersymmetric analogue of the Weil--Petersson volumes~\smash{$V_{g,n}^{\mathrm{WP}}(L_1,\dots,L_n)$}
and obey the ABO topological recursion.
The Weil--Petersson volume for the moduli space of super Riemann surfaces in the JT supergravity partition function \eqref{eq:partition_function_SJT} and the super Weil--Petersson volume defined in equation~\eqref{sw_volume} are related by \cite[Section 5.3]{Norbury:2020vyi}
\begin{align*}
V_{g,n}^{\mathrm{SW}}(L_1,\dots,L_n)
=(-1)^{n}2^{1-g} V_{g,n}^{\mathrm{SWP}}(L_1,\dots,L_n).
\end{align*}

\subsection{Type 0A minimal superstring}\label{sec:type0A}

It is conjectured in \cite{Okuyama:2020qpm} that the Bessel generating function $Z^{\mathrm{B}}(\hbar; \textbf{t})$ in equation~\eqref{be_pf} is obtained from the string equation \cite{Dalley:1991qg,Dalley:1992br,Morris:1990bw} for the complex matrix model which gives a non-perturbative definition of the free energy of type 0A minimal superstring \cite{Klebanov:2003wg}.%
\footnote{
Proofs of the correspondence between the (generalized) BGW free energy and Kontsevich--Witten free energy
is given in \cite{YZ21,YZ23}.}
In particular, when a finite number of variables $t_a$ is turned on such that
$t_a\ne 0$ ($1\le a\le (p-1)/2$),
the string equation gives the free energy of type 0A $(2,2p-2)$ minimal superstring.%
\footnote{
For the JT supergravity, the free energy is found from the string equation by tuning an infinite number of~$t_a$'s.
The role of the string equation in the JT supergravity is discussed in \cite{Johnson:2019eik,Johnson:2020exp,Johnson:2020heh}.
}

Heuristically we consider a spectral curve
which interpolates the super Weil--Petersson spectral curve $\cC^{\mathrm{SWP}}$ and the Bessel spectral curve
$\cC^{\mathrm{B}}$ with coordinate functions in equation~\eqref{sp_curve_be}.
We obtain the $z\sfy^{\mathrm{SWP}}(z)$ and $z\sfy^{\mathrm{B}}(z)$ by differentiating $\sfy^{\mathrm{WP}}(z)$ and $\sfy^{\mathrm{A}}(z)$
of the spectral curves for the Weil--Petersson volumes and Kontsevich--Witten symplectic volumes in the bosonic models\footnote{This property will be originated from the fact that the supersymmetric formulae for the Mirzakhani--McShane identity found in \cite{Stanford:2019vob}
allows a superfield representation with fermionic coordinates.}
\begin{align*}
\sfy^{\mathrm{SWP}}(z)=\frac{1}{z}\frac{{\rm d}\sfy^{\mathrm{WP}}(z)}{{\rm d}z},
\qquad \sfy^{\mathrm{B}}(z)=\frac{1}{z}\frac{{\rm d}\sfy^{\mathrm{A}}(z)}{{\rm d}z}.
\end{align*}
Adopting this heuristic relation to the $\sfy$-coordinate in equation~\eqref{sp_curve_mg}
of the $(2,p)$ minimal string spectral curve $\cC^{\mathrm{M}(p)}$,
we find a supersymmetric analogue of the $(2,p)$ minimal string
spectral curve $\cC^{\mathrm{SM}(p)}$ with
the coordinate functions in equation~\eqref{sp_curve_mg_super}.
The cut-and-join equation obtained from the spectral curve $\cC^{\mathrm{SM}(p)}$
is studied in Section \ref{subsec:vir_mg_super},
and we find that the generating function
$Z^{\mathrm{SM}(p)}(\hbar; \textbf{t})$ in equation~\eqref{super_mg_pf}
is obtained by the shift \eqref{shift_minimal_grav_super} for
the finite number of variables $t_a$ in
the Bessel generating function $Z^{\mathrm{B}}(\hbar; \textbf{t})$:
\begin{align*}
t_a
 \to
t_a + \gamma_a
\qquad \textrm{with} \quad
\begin{cases}
\displaystyle \gamma_a=-\frac{(-2\pi^2)^a}{(2a+1)!!a!}\prod_{i=1}^a\left(
1-\frac{(2i-1)^2}{p^2}
\right)
& \textrm{for}\ 1 \le a \le \frac{p-1}{2},
\\
\gamma_a=0
& \textrm{for others}.
\end{cases}
\end{align*}
Accordingly, we can regard
\begin{align*}
\widehat{\mathcal{B}}(\beta_1) \cdots \widehat{\mathcal{B}}(\beta_n)
Z^{\mathrm{SM}(p)}(\hbar; \textbf{t})\Big|_{t_a=0}
=
\widehat{\mathcal{B}}(\beta_1) \cdots \widehat{\mathcal{B}}(\beta_n)
Z^{\mathrm{B}}(\hbar; \textbf{t})
\Big|_{t_a=\gamma_a}
\end{align*}
with an operator
\begin{align*}
\widehat{\mathcal{B}}(\beta)=\sum_{a \ge 0} \beta^a \frac{\partial}{\partial t_a},
\end{align*}
as a correlation function of type 0A $(2,2p-2)$ minimal superstring,
where the operator $\widehat{\mathcal{B}}(\beta)$ is considered as a boundary creation operator \cite{Okuyama:2019xbv,Okuyama:2020qpm,Okuyama:2020ncd} with a variable $\beta$ associated to a~boundary.
Although we do not know the appropriate boundary condition for the disk partition function of type 0A $(2,2p-2)$ minimal superstring
which leads to the spectral curve $\cC^{\mathrm{SM}(p)}$,
from these physical observations, we expect that
the spectral curve $\cC^{\mathrm{SM}(p)}$
is obtained from the brane partition functions in type 0A $(2,2p-2)$ minimal superstring.\footnote{The boundary condition of the branes in type 0A minimal superstring should be clarified from the matrix models and super Liouville field theories (e.g., \cite{Seiberg:2003nm}).}

\section{Derivation of the ABO topological recursion data}\label{sec:Derivation_Mirz}

Originally, a derivation of the CEO topological recursion for $\cC^{\mathrm{WP}}$ is given explicitly in \cite{Eynard:2007fi} by
Eynard and Orantin
from Mirzakhani's topological recursion for the Weil--Petersson volume.
We apply their computation to the $(2,p)$ minimal string and the $(2,2p-2)$ minimal superstring in a reverse way (i.e., inverse Laplace transform of the CEO topological recursion), and derive the ABO topological recursion data given by
the kernel functions $H^{\mathrm{M}(p)}(x,y)$ in equation~\eqref{eq:H_minimal}
and $H^{\mathrm{SM}(p)}(x,y)$ in equation~\eqref{eq:H_super_minimal} explicitly.

We focus on the spectral curves $\cC^{\mathrm{M}(p)}$ and $\cC^{\mathrm{SM}(p)}$,
and start from a partially (inverse) Laplace transformed CEO topological recursion for $2g-2+n > 1$:
\begin{align}
2 \widetilde{W}_{g,n}(z,L_K)={}&
\underset{w=0}{\mathrm{Res}}\Biggl[\frac{{\rm d}w}{\sfy(w)}\frac{1}{z^2-w^2}
\Biggl(
\widetilde{Q}_{g,n}(w,w,L_K)\nonumber
\\
&
+
\sum_{m=2}^n 2 \widetilde{W}_{0,2}(w,L_m)\widetilde{W}_{g,n-1}(w,L_{K\setminus \{m\}})
\Biggr)\Biggr],\label{eq:partial_CEO}
\end{align}
where $K=\{2,\dots,n\}$.
In this recursion, $\widetilde{W}_{g,n}$ are partially inverse Laplace transformed correlation functions for $(g,n) \ne (0,2)$,
\begin{align}
\label{eq:tilde_W}
\int_{\IR_+^{n-1}} \widetilde{W}_{g,n}(z,L_K)
\mathrm{e}^{-\sum_{i=2}^n z_iL_i}
\prod_{i=2}^n L_i{\rm d}L_i
=W_{g,n}(z,z_K),
\end{align}
and for $(g,n) = (0,2)$,
\begin{align*}
\int_{\IR_+} \widetilde{W}_{0,2}(z,L_m) \e^{-z_mL_m} L_m{\rm d}L_m
=W_{0,2}(z,z_m)=\frac{1}{(z-z_m)^2},
\end{align*}
which gives
\begin{align}\label{eq:W02}
\widetilde{W}_{0,2}(z,L_m)=\mathrm{e}^{zL_m}.
\end{align}
In terms of the fully inverse Laplace transformed function $V_{g,n}$ for the correlation function $W_{g,n}$,
\begin{align*}
W_{g,n}(z,z_K)
=\int_{\IR_+^n} V_{g,n}(L,L_K) \mathrm{e}^{-zL-\sum_{i=2}^n z_iL_i}
L{\rm d}L \prod_{i=2}^n L_i{\rm d}L_i,
\end{align*}
the relation \eqref{eq:tilde_W} leads to
\begin{align*}
\widetilde{W}_{g,n}(z,L_K)=
\int_{\IR_+} V_{g,n}(L,L_K) \mathrm{e}^{-zL} L{\rm d}L.
\end{align*}
In the recursion \eqref{eq:partial_CEO},
$\widetilde{Q}_{g,n}$ is a partial Laplace transform of
$P_{g,n}$:
\begin{align*}
\widetilde{Q}_{g,n}(z,z,L_K)=
\int_{\IR_+^2} xyP_{g,n}(x,y,L_K) \mathrm{e}^{-z(x+y)} {\rm d}x{\rm d}y,
\end{align*}
where $P_{g,n}$ is the Laplace dual of $Q_{g,n}$ given by
\begin{align*}
Q_{g,n}(z,z,z_K):={}&
W_{g-1,n+1}(z,z,z_K)+\sum_{\substack{h+h'=g\\ J \sqcup J'=K}}^{\mathrm{stable}}
W_{h,1+|J|}(z,z_{J}) W_{h',1+|J'|}(z,z_{J'})
\\
={}&
\int_{\IR_+^{n+1}} xyP_{g,n}(x,y,L_K)
\mathrm{e}^{-z(x+y)-\sum_{i=2}^{n}z_iL_i}
{\rm d}x{\rm d}y \prod_{i=2}^n L_i{\rm d}L_i.
\end{align*}
In the following computations, we will rewrite equation~\eqref{eq:partial_CEO} into the recursion relation for $V_{g,n}$ to find the basic data
of the ABO topological recursion.

\subsection[Derivation for the (2,p) minimal string]{Derivation for the $\boldsymbol{(2,p)}$ minimal string}\label{sec:derivation_minimal}

We will rewrite the partially Laplace transformed CEO topological recursion \eqref{eq:partial_CEO} with the coordinate function $\sfy=\sfy^{\mathrm{M}(p)}$
in equation~\eqref{sp_curve_mg} into the form of
the Mirzakhani type ABO topological recursion \eqref{eq:Mirzakhani's} written in terms of the kernel function in equation~\eqref{eq:DR_H_bosonic} with
$H^{\mathrm{M}(p)}(x,y)$:
\begin{gather}
2\int_{\IR_+} V_{g,n}(L,L_K) \mathrm{e}^{-zL} L{\rm d}L\nonumber
\\
\qquad=
\int_{\IR_+^3}
\left(\int_0^L H^{\mathrm{M}(p)}(x+y,t) {\rm d}t\right)
xyP_{g,n}(x,y,L_K)
\mathrm{e}^{-zL} {\rm d}L{\rm d}x{\rm d}y
\nonumber \\
\qquad\phantom{}{}
+\sum_{m=2}^n \int_{\IR_+^2}
\left(\int_0^L
\left(H^{\mathrm{M}(p)}(x,t+L_m)+H^{\mathrm{M}(p)}(x,t-L_m)\right){\rm d}t \right)
\nonumber \\
\qquad \phantom{=+}{}\times
xV_{g,n-1}(x,L_{K\setminus \{m\}})
\mathrm{e}^{-zL} {\rm d}L{\rm d}x
\nonumber\\
\qquad=
\frac{1}{z} \int_{\IR_+^3}
xy H^{\mathrm{M}(p)}(x+y,t) P_{g,n}(x,y,L_K) \e^{-zt} {\rm d}t{\rm d}x{\rm d}y
\nonumber\\
\qquad \phantom{=}{}
+\sum_{m=2}^n \frac{1}{z}
\int_{\IR_+^2}
\left(H^{\mathrm{M}(p)}(x,t+L_m)+H^{\mathrm{M}(p)}(x,t-L_m)\right)
\nonumber \\
\qquad \phantom{=+}{}\times
xV_{g,n-1}(x,L_{K\setminus \{m\}}) \e^{-zt} {\rm d}t{\rm d}x.
\label{eq:Mirz_LD} \end{gather}

Firstly, we focus on the first term on the right-hand side of equation~\eqref{eq:partial_CEO},
\begin{align}
\underset{w=0}{\mathrm{Res}}
\frac{{\rm d}w}{2w\sfy^{\mathrm{M}(p)}(w)}\left(\frac{1}{z-w}-\frac{1}{z+w}\right)
\widetilde{Q}_{g,n}(w,w,L_K).
\label{fst_term_p_ceo}
\end{align}
Based on the following properties:
\begin{itemize}\itemsep=0pt
\item
$\widetilde{Q}_{g,n}(w,w,L_K)$ and $w \sfy^{\mathrm{M}(p)}(w)$ are even functions of $w$,
\item
${\rm d}w/\sfy^{\mathrm{M}(p)}(w)$ has poles at $w=u_j=(p/2\pi)\sin(j\pi/p)$
($j=0,\pm 1,\dots,\pm (p-1)/2$)
with the residue $(-1)^j\cos(\pi j/p)$ by equation~\eqref{eq:1/y},
\end{itemize}
equation~\eqref{fst_term_p_ceo} is rewritten as
\begin{gather*}
-\left[
\sum_{j=1}^{\frac{p-1}{2}}\underset{w=\pm u_j}{\mathrm{Res}}
+\underset{w=z}{\mathrm{Res}}
\right]
\frac{{\rm d}w}{w\sfy^{\mathrm{M}(p)}(w)}\frac{1}{z-w}
\widetilde{Q}_{g,n}(w,w,L_K)
\\
\qquad=
-\sum_{j=1}^{\frac{p-1}{2}}
(-1)^j\cos\left(\frac{\pi j}{p}\right)
\frac{2}{z^2-u_j^2}
\widetilde{Q}_{g,n}\left(u_j,u_j,L_K\right)
\\
\qquad \phantom{=}{}
+\frac{1}{z}\Bigg(\frac{1}{z}
+\sum_{j=1}^{\frac{p-1}{2}}
(-1)^j\cos\left(\frac{\pi j}{p}\right)
\left(\frac{1}{z-u_j}+\frac{1}{z+u_j}\right)\Bigg)
\widetilde{Q}_{g,n}(z,z,L_K)
\\
\qquad{}=
\frac{1}{z}\int_{\IR_+^2}
\Biggl(
-\sum_{j=1}^{\frac{p-1}{2}}
(-1)^j\cos\left(\frac{\pi j}{p}\right)
\frac{\mathrm{e}^{-u_j(x+y)}}{z+u_j}
+
\sum_{j=0}^{\frac{p-1}{2}}
(-1)^j\cos\left(\frac{\pi j}{p}\right)
\frac{\mathrm{e}^{-z(x+y)}}{z+u_j}
\\
\qquad \phantom{=}{}-\sum_{j=1}^{\frac{p-1}{2}}
(-1)^j\cos\left(\frac{\pi j}{p}\right)
\frac{\mathrm{e}^{-u_j(x+y)}-\mathrm{e}^{-z(x+y)}}{z-u_j}
\Biggr)
xyP_{g,n}(x,y,L_K) {\rm d}x{\rm d}y
\\
\qquad {}=
\frac{1}{z}\int_{\IR_+^2}
\Biggl(
-\sum_{j=1}^{\frac{p-1}{2}}
(-1)^j\cos\left(\frac{\pi j}{p}\right)
\left(\int_0^{\infty} \mathrm{e}^{-u_j(x+y+t)} \e^{-zt} {\rm d}t
+\int_0^{x+y} \mathrm{e}^{-u_j(x+y-t)} \e^{-zt} {\rm d}t\right)
\\
\qquad \phantom{=}{}+\sum_{j=0}^{\frac{p-1}{2}}
(-1)^j\cos\left(\frac{\pi j}{p}\right)
\left(\int_{x+y}^{\infty} \mathrm{e}^{+u_j(x+y-t)} \e^{-zt} {\rm d}t\right)
\Biggr)
xyP_{g,n}(x,y,L_K) {\rm d}x{\rm d}y.
\end{gather*}
Comparing the final expression with the first term in equation~\eqref{eq:Mirz_LD}, we find the kernel function~${H^{\mathrm{M}(p)}(x,y)}$ given in equation~\eqref{eq:H_minimal}.\footnote{The kernel function $H^{\mathrm{M}(p)}(x,y)$ in equation~\eqref{eq:H_minimal} is symmetrized under the action $y\to -y$.
The extra terms which appear in the anti-symmetrization do not contribute to the integrals in the first term in equation~\eqref{eq:Mirz_LD},
because the Heaviside functions in the extra terms vanish in these integrals.}

{\bf Consistency check.} To check the consistency of the kernel function $H^{\mathrm{M}(p)}(x,y)$
derived above, we will focus on the second term on the right-hand side of equation~\eqref{eq:partial_CEO}, and show the following relation:
\begin{align}
&
\int_{\IR_+}
\underset{w=0}{\mathrm{Res}}
\Biggl[\frac{{\rm d}w}{\sfy^{\mathrm{M}(p)}(w)}\frac{2}{z^2-w^2}
\widetilde{W}_{0,2}(w,L_m)\widetilde{W}_{g,n-1}(w,L_{K\setminus \{m\}})
\Biggr]
\mathrm{e}^{-z_mL_m} {\rm d}L_m\nonumber
\\
&\qquad=
\frac{1}{z}\int_{\IR_+^3}
\big(H^{\mathrm{M}(p)}(x,t+L_m)+H^{\mathrm{M}(p)}(x,t-L_m)\big)
xV_{g,n-1}(x,L_{K\setminus \{m\}}\nonumber
\\
&\qquad \phantom{=}{}
\times \e^{-zt-z_mL_m} {\rm d}t{\rm d}x{\rm d}L_m.\label{eq:Laplace_dual_minimal_zm}
\end{align}
Using the expression \eqref{eq:W02} of $\widetilde{W}_{0,2}(w,L_m)$,
we rewrite the left-hand side of equation~\eqref{eq:Laplace_dual_minimal_zm} as
\begin{gather}
\underset{w=0}{\mathrm{Res}}
\frac{{\rm d}w}{\sfy^{\mathrm{M}(p)}(w)}
\frac{2}{z^2-w^2}
\frac{1}{z_m-w} \widetilde{W}_{g,n-1}(w,L_{K\setminus \{m\}})\nonumber
\\
\quad=\underset{w=0}{\mathrm{Res}}
\frac{{\rm d}w}{\sfy^{\mathrm{M}(p)}(w)}
\frac{1}{z^2-w^2}
\left(\frac{1}{z_m-w}+\frac{1}{z_m+w}\right)\widetilde{W}_{g,n-1}(w,L_{K\setminus \{m\}})\nonumber
\\
\quad
=
\underset{w=0}{\mathrm{Res}}
\frac{2{\rm d}w}{\sfy^{\mathrm{M}(p)}(w)}
\frac{z_m}{(z^2-w^2)(z_m^2-w^2)} \widetilde{W}_{g,n-1}(w,L_{K\setminus \{m\}})\nonumber
\\
\quad
=
\underset{w=0}{\mathrm{Res}}
\frac{2{\rm d}w}{\sfy^{\mathrm{M}(p)}(w)}
\frac{1}{z (z_m^2-z^2)}
\left(\frac{z_m}{z-w}-\frac{z}{z_m-w}\right)
\widetilde{W}_{g,n-1}(w,L_{K\setminus \{m\}})\nonumber
\\
\quad
=
-\left[\sum_{j=1}^{\frac{p-1}{2}}\underset{w=\pm u_j}{\mathrm{Res}}
+\underset{w=z,z_m}{\mathrm{Res}}\right]
\frac{2{\rm d}w}{\sfy^{\mathrm{M}(p)}(w)}
\frac{1}{z (z_m^2-z^2)}
\left(\frac{z_m}{z-w}-\frac{z}{z_m-w}\right)
\widetilde{W}_{g,n-1}(w,L_{K\setminus \{m\}})\nonumber\\
\quad
=
\frac{2}{z (z_m^2-z^2)}\Bigg(
-\sum_{j=1}^{\frac{p-1}{2}}
(-1)^j\cos\left(\frac{j\pi}{p}\right)
\left(\frac{z_m}{z+u_j}+\frac{z_m}{z-u_j}
-\frac{z}{z_m+u_j}-\frac{z}{z_m-u_j}\right)
\nonumber \\
\quad\phantom{=}{}
\times
\widetilde{W}_{g,n-1}(u_j,L_{K\setminus \{m\}})
+\sum_{-\frac{p-1}{2} \le j \le \frac{p-1}{2}}
(-1)^j\cos\left(\frac{j\pi}{p}\right) \frac{z_m}{z-u_j}
\widetilde{W}_{g,n-1}(z,L_{K\setminus \{m\}})\nonumber
\\
\quad\phantom{=\times}{}
-\sum_{-\frac{p-1}{2} \le j \le \frac{p-1}{2}}
(-1)^j\cos\left(\frac{j\pi}{p}\right) \frac{z}{z_m-u_j}
\widetilde{W}_{g,n-1}(z_m,L_{K\setminus \{m\}})
\Bigg)\nonumber
\\
\quad=
\frac{2}{z (z_m^2-z^2)}
\int_{\IR_+}\Bigg[
-\sum_{j=1}^{\frac{p-1}{2}}(-1)^j\cos\left(\frac{j\pi}{p}\right)
\left(\frac{z_m \e^{-u_j x}}{z+u_j}-\frac{z \e^{-u_j x}}{z_m+u_j}\right)\nonumber
\\
\quad\phantom{=}{}
-\sum_{j=1}^{\frac{p-1}{2}}(-1)^j\cos\left(\frac{j\pi}{p}\right)
\left(\frac{z_m (\e^{-u_j x}-\e^{-z x})}{z-u_j}
-\frac{z (\e^{-u_j x}-\e^{-z_m x})}{z_m-u_j}\right)\nonumber
\\
\quad\phantom{=}{}
+\sum_{j=0}^{\frac{p-1}{2}}(-1)^j\cos\left(\frac{j\pi}{p}\right)
\left(\frac{z_m \e^{-z x}}{z+u_j}-\frac{z \e^{-z_m x}}{z_m+u_j}\right)
\Bigg]
xV_{g,n-1}(x,L_{K\setminus \{m\}}) {\rm d}x
\nonumber
\\
\quad=
\frac{2}{z}\int_{\IR_+}\Bigg[
-\sum_{j=1}^{\frac{p-1}{2}}(-1)^j\cos\left(\frac{j\pi}{p}\right)
\left(\int_0^{\infty} \mathrm{e}^{-u_j(x+t)}
\frac{z_m \e^{-zt} - z \e^{-z_mt}}{z_m^2-z^2} {\rm d}t\right.
\nonumber\\
\quad\phantom{=}{}
+\left.
\int_0^x \mathrm{e}^{-u_j(x-t)}
\frac{z_m \e^{-zt} - z \e^{-z_mt}}{z_m^2-z^2} {\rm d}t \right)
\nonumber\\
\quad\phantom{=}{}
+\sum_{j=0}^{\frac{p-1}{2}}(-1)^j\cos\left(\frac{j\pi}{p}\right)
\left(\int_x^{\infty} \mathrm{e}^{+u_j(x-t)}
\frac{z_m \e^{-zt} - z \e^{-z_mt}}{z_m^2-z^2} {\rm d}t\right)
\Bigg]
\nonumber \\
\quad\phantom{=+}{}
\times
xV_{g,n-1}(x,L_{K\setminus \{m\}}) {\rm d}x.
\label{eq:inverse_laplace_minimal_2}
\end{gather}
On the other hand, the right-hand side of equation~\eqref{eq:Laplace_dual_minimal_zm} is rewritten as follows \cite{Eynard:2007fi}:
\begin{gather}
\frac{1}{z}\int_{\IR_+^2}
\left(\int_{L_m}^{\infty} \e^{-z(t-L_m)} H^{\mathrm{M}(p)}(x,t) {\rm d}t
+\int_{-L_m}^{\infty} \e^{-z(t+L_m)} H^{\mathrm{M}(p)}(x,t) {\rm d}t \right)\nonumber
 \\
 \phantom{\quad=}{}
\times
xV_{g,n-1}(x,L_{K\setminus \{m\}})
 \e^{-z_mL_m} {\rm d}x{\rm d}L_m
\nonumber \\
\quad=
\frac{1}{z}\int_{\IR_+}
\Bigg(
\int_0^{\infty}
\left(\int_0^t \e^{-(z_m-z)L_m} {\rm d}L_m
+\int_0^{\infty} \e^{-(z_m+z)L_m} {\rm d}L_m \right)
\e^{-zt} H^{\mathrm{M}(p)}(x,t) {\rm d}t
\nonumber\\
 \phantom{\quad=}{}
+\int_{-\infty}^0
\left(\int_{-t}^{\infty} \e^{-(z_m+z)L_m} {\rm d}L_m\right)
\e^{-zt} H^{\mathrm{M}(p)}(x,t) {\rm d}t \Bigg)
xV_{g,n-1}(x,L_{K\setminus \{m\}}) {\rm d}x
\nonumber\\
 \phantom{\quad}{}
=
\frac{1}{z}\int_{\IR_+^2}
\left(
\frac{\mathrm{e}^{-zt}-\mathrm{e}^{-z_mt}}{z_m-z}
+\frac{\mathrm{e}^{-zt}+\mathrm{e}^{-z_mt}}{z_m+z}
\right)
H^{\mathrm{M}(p)}(x,t) xV_{g,n-1}(x,L_{K\setminus \{m\}}) {\rm d}t{\rm d}x,
\label{eq:inverse_laplace_minimal_2_2}
\end{gather}
where $H^{\mathrm{M}(p)}(x,-t)=H^{\mathrm{M}(p)}(x,t)$ is used in the last equality.
Applying the kernel function~${H^{\mathrm{M}(p)}(x,y)}$ in equation~\eqref{eq:H_minimal} to the final answer of equation~\eqref{eq:inverse_laplace_minimal_2_2},
we find the final expression of equation~\eqref{eq:inverse_laplace_minimal_2}.
Thus we derived the kernel function $H^{\mathrm{M}(p)}(x,y)$ in equation~\eqref{eq:H_minimal} of the Mirzakhani type ABO topological recursion
for the $(2,p)$ minimal string.

\subsection[Derivation for the (2,2p-2) minimal superstring]{Derivation for the $\boldsymbol{(2,2p-2)}$ minimal superstring}\label{sec:derivation_super_minimal}

In the same way as the $(2,p)$ minimal string, we will rewrite the partially Laplace transformed CEO topological recursion \eqref{eq:partial_CEO} with the coordinate function $\sfy=\sfy^{\mathrm{SM}(p)}$
in equation~\eqref{sp_curve_mg_super} into the form of the Mirzakhani type ABO topological recursion \eqref{eq:Mirzakhani's} written in terms of the kernel function in equation~\eqref{eq:DR_H_super} with $H^{\mathrm{SM}(p)}(x,y)$:
\begin{gather}
2\int_{\IR_+} V_{g,n}(L,L_K) \mathrm{e}^{-zL} L{\rm d}L\nonumber
\\
\qquad=\int_{\IR_+^3}
xy H^{\mathrm{SM}(p)}(x+y,L) P_{g,n}(x,y,L_K) \mathrm{e}^{-zL}
{\rm d}L{\rm d}x{\rm d}y\nonumber
\\
\qquad \phantom{=}{}
+\sum_{m=2}^n \int_{\IR_+^2}
\big(H^{\mathrm{SM}(p)}(x,L+L_m)+H^{\mathrm{SM}(p)}(x,L-L_m)\big)\nonumber\\
\qquad \phantom{=+}{}\times xV_{g,n-1}(x,L_{K\setminus \{m\}}) \mathrm{e}^{-zL} {\rm d}L{\rm d}x.\label{eq:Mirz_LD2}
\end{gather}

Adopting the following properties:
\begin{itemize}\itemsep=0pt
\item
$\widetilde{Q}_{g,n}(w,w,L_K)$ and $w \sfy^{\mathrm{SM}(p)}(w)$ are even functions of $w$;
\item
${\rm d}w/\sfy^{\mathrm{SM}(p)}(w)$ has poles at $w=\pm u_j'=\pm(p/2\pi)\sin((j-1/2)\pi/p)$, $(j=1,\dots, (p-1)/2)$
with the residue
$+u_j'(1/2\pi)(-1)^j\cos^2(\pi (j-1/2)/p)$ by equation~\eqref{eq:1/y_super};
\end{itemize}
the first term on the right-hand side of the partially Laplace transformed CEO topological recursion \eqref{eq:partial_CEO} is rewritten as
\begin{gather*}
-\left[
\sum_{j=1}^{\frac{p-1}{2}}\underset{w=\pm u'_j}{\mathrm{Res}}
+\underset{w=z}{\mathrm{Res}}
\right]
\frac{{\rm d}w}{z-w}\frac{1}{w\sfy^{\mathrm{SM}(p)}(w)}
\widetilde{Q}_{g,n}(w,w,L_K)
\\
\qquad=
-\sum_{j=1}^{\frac{p-1}{2}}
\frac{(-1)^j}{2\pi}\cos^2\left(\frac{\pi}{p}\left(j-\frac12\right)\right)
\left(
\frac{1}{z-u_j'}-\frac{1}{z+u'_j}
\right)
\\
\qquad \phantom{=-}{}
\times
\big(\widetilde{Q}_{g,n}(u_j',u_j',L_K)
-\widetilde{Q}_{g,n}(z,z,L_K)\big)
+ \widetilde{Q}_{g,n}(z,z,L_K) \delta_{p,1}
\\
\qquad{}
=
\int_{\IR_+^2}
\left(
\sum_{j=1}^{\frac{p-1}{2}}
\frac{(-1)^j}{2\pi}\cos^2\left(\frac{\pi}{p}\left(j-\frac12\right)\right)
\left(
\frac{1}{z+u_j'}-\frac{1}{z-u'_j}
\right)
\big(\e^{-u_j'(x+y)}-\e^{-z(x+y)}\big)\right.
\\
\left.\qquad \phantom{=}{}
+ \e^{-z(x+y)} \delta_{p,1}
\right)
xyP_{g,n}(x,y,L_K) {\rm d}x{\rm d}y
\\
\qquad {}
=
\int_{\IR_+^2}
\left(
\sum_{j=1}^{\frac{p-1}{2}}
\frac{(-1)^j}{2\pi}\cos^2\left(\frac{\pi}{p}\left(j-\frac12\right)\right)
\left(
\int_0^{\infty} \e^{-u_j'(x+y+L)} \e^{-zL} {\rm d}L
\right.\right.
\\
\left.\left.\qquad \phantom{=}{}
-\int_0^{x+y} \e^{-u_j'(x+y-L)} \e^{-zL} {\rm d}L
-\int_{x+y}^{\infty} \e^{+u_j'(x+y-L)} \e^{-zL} {\rm d}L
\right)\right.
\\
\left.\qquad \phantom{=}{}
+\delta_{p,1} \int_0^{\infty} \delta(L-x-y) \e^{-zL} {\rm d}L \right)
xyP_{g,n}(x,y,L_K) {\rm d}x{\rm d}y.
\end{gather*}
Comparing the final expression with the first term in equation~\eqref{eq:Mirz_LD2}, we find the kernel function~$H^{\mathrm{SM}(p)}(x,y)$ given in equation~\eqref{eq:H_super_minimal}.%
\footnote{The kernel function $H^{\mathrm{SM}(p)}(x,y)$ in equation~\eqref{eq:H_super_minimal} is anti-symmetrized under the action
$y\to -y$.
The extra terms which appear in the anti-symmetrization do not contribute to the integrals in the first term in equation~\eqref{eq:Mirz_LD2},
because the Heaviside functions and delta functions in the extra terms vanish in these integrals.
}

{\bf Consistency check.} As the (bosonic) minimal string, we will show the following relation
to check the consistency of the kernel function $H^{\mathrm{SM}(p)}(x,y)$:
\begin{align}
&
\int_{\IR_+}\underset{w=0}{\mathrm{Res}}
\left[\frac{{\rm d}w}{\sfy^{\mathrm{SM}(p)}(w)}\frac{2}{z^2-w^2}
\widetilde{W}_{0,2}(w,L_m)\widetilde{W}_{g,n-1}(w,L_{K\setminus \{m\}})
\right]
\mathrm{e}^{-z_mL_m} {\rm d}L_m\nonumber
\\
&\qquad=
\int_{\IR_+^3}
\big(H^{\mathrm{SM}(p)}(x,L+L_m)+H^{\mathrm{SM}(p)}(x,L-L_m)\big)
xV_{g,n-1}(x,L_{K\setminus \{m\}})\nonumber
\\
&\qquad\phantom{=}{}
\times \mathrm{e}^{-zL-z_mL_m} {\rm d}L{\rm d}x{\rm d}L_m.\label{eq:Laplace_dual_super_minimal_zm}
\end{align}
The left-hand side of equation~\eqref{eq:Laplace_dual_super_minimal_zm} is rewritten
in the similar way as equation~\eqref{eq:inverse_laplace_minimal_2}:
\begin{align}
&
-\left[\sum_{j=1}^{\frac{p-1}{2}}\underset{w=\pm u_j'}{\mathrm{Res}}
+\underset{w=z,z_m}{\mathrm{Res}}\right]
\frac{2{\rm d}w}{\sfy^{\mathrm{SM}(p)}(w)}
\frac{1}{z \big(z_m^2-z^2\big)}
\left(\frac{z_m}{z-w}-\frac{z}{z_m-w}\right)
\widetilde{W}_{g,n-1}(w,L_{K\setminus \{m\}})
\nonumber \\
&
\qquad=
\frac{2}{z_m^2-z^2}\left[
-\sum_{j=1}^{\frac{p-1}{2}}
\frac{(-1)^j}{2\pi} \cos^2\left(\frac{\pi}{p}\left(j-\frac12\right)\right)\right.
\nonumber \\
&\left. \qquad \phantom{=}{}\times
\left(-\frac{z_m}{z+u_j'}+\frac{z_m}{z-u_j'}
+\frac{z_m}{z_m+u_j'}-\frac{z_m}{z_m-u_j'}\right)
\widetilde{W}_{g,n-1}(u_j',L_{K\setminus \{m\}})\right.
\nonumber \\
&\left. \qquad \phantom{=\times}{}
+\left(z_m \delta_{p,1}+\sum_{j=1}^{\frac{p-1}{2}}
\frac{(-1)^j}{2\pi} \cos^2\left(\frac{\pi}{p}\left(j-\frac12\right)\right)
\left(\frac{z_m}{z-u_j'}-\frac{z_m}{z+u_j'}\right)\right)\right.
\nonumber \\
&\left.\qquad \phantom{=}{}
\times
\widetilde{W}_{g,n-1}(z,L_{K\setminus \{m\}})\right.
\nonumber \\
&\left.\left.\qquad \phantom{=\times}{}
-\left(z_m \delta_{p,1}+\sum_{j=1}^{\frac{p-1}{2}}
\frac{(-1)^j}{2\pi} \cos^2\left(\frac{\pi}{p}\left(j-\frac12\right)\right)
\left(\frac{z_m}{z_m-u_j'}-\frac{z_m}{z_m+u_j'}\right)\right)
\right.\right.
\nonumber \\
&\left.\qquad \phantom{ =}{}
\times\widetilde{W}_{g,n-1}(z_m,L_{K\setminus \{m\}})
\right]
\nonumber\\
&\qquad
=
\frac{2z_m}{z_m^2-z^2}\int_{\IR_+}
\left[
\sum_{j=1}^{\frac{p-1}{2}}
\frac{(-1)^j}{2\pi} \cos^2\left(\frac{\pi}{p}\left(j-\frac12\right)\right)\right.
\nonumber \\
&\left.\qquad \phantom{=}{}
\times
\left(\frac{\e^{-u_j' x} - \e^{-z x}}{z+u_j'}
-\frac{\e^{-u_j' x} - \e^{-z x}}{z-u_j'}
-\frac{\e^{-u_j' x} - \e^{-z_m x}}{z_m+u_j'}
+\frac{\e^{-u_j' x} - \e^{-z_m x}}{z_m-u_j'}
\right)\right.
\nonumber \\
&\left. \qquad \phantom{=\times}{}
+ \left(\e^{-z x}-\e^{-z_m x}\right)\delta_{p,1}
\right]
xV_{g,n-1}(x,L_{K\setminus \{m\}}) {\rm d}x
\nonumber
\\
&\qquad
=
2z_m \int_{\IR_+}
\left[
\sum_{j=1}^{\frac{p-1}{2}}
\frac{(-1)^j}{2\pi} \cos^2\left(\frac{\pi}{p}\left(j-\frac12\right)\right)
\left(\int_0^{\infty} \e^{-u_j' (x+L)}
\frac{\e^{-zL}-\e^{-z_mL}}{z_m^2 - z^2} {\rm d}L
\right.\right.
\nonumber \\
&\left.\left.\qquad \phantom{=}{}
- \int_0^{x} \e^{-u_j' (x-L)}
\frac{\e^{-zL}-\e^{-z_mL}}{z_m^2 - z^2} {\rm d}L
- \int_x^{\infty} \e^{+u_j' (x-L)}
\frac{\e^{-zL}-\e^{-z_mL}}{z_m^2 - z^2} {\rm d}L
\right)\right.
\nonumber\\
&\left.\qquad \phantom{=}{}
+ \delta_{p,1} \int_{0}^{\infty} \delta(L-x) \frac{\e^{-z L}-\e^{-z_m L}}{z_m^2 - z^2} {\rm d}L
\right]
xV_{g,n-1}(x,L_{K\setminus \{m\}}) {\rm d}x.
\label{eq:inverse_laplace_super_minimal_2}
\end{align}
On the other hand, similar to equation~\eqref{eq:inverse_laplace_minimal_2_2},
the right-hand side of equation~\eqref{eq:Laplace_dual_super_minimal_zm} is rewritten~as
\begin{align}
\label{eq:inverse_laplace_super_minimal_2_2}
2z_m \int_{\IR_+^2}
\left(
\frac{\mathrm{e}^{-zt}-\mathrm{e}^{-z_mt}}{z_m-z}
+\frac{\mathrm{e}^{-zt}-\mathrm{e}^{-z_mt}}{z_m+z}
\right)
H^{\mathrm{SM}(p)}(x,L) xV_{g,n-1}(x,L_{K\setminus \{m\}}) {\rm d}L{\rm d}x,
\end{align}
where $H^{\mathrm{SM}(p)}(x,-L)=-H^{\mathrm{SM}(p)}(x,L)$ is used.
Applying the kernel function $H^{\mathrm{SM}(p)}(x,y)$ in equation~\eqref{eq:H_super_minimal} to equation~\eqref{eq:inverse_laplace_super_minimal_2_2}, the final expression of equation~\eqref{eq:inverse_laplace_super_minimal_2} is found.
Thus,
the consistency equation~\eqref{eq:Laplace_dual_super_minimal_zm}
for the kernel function $H^{\mathrm{SM}(p)}(x,y)$ is verified, and
the CEO topological recursion for the spectral curve $\cC^{\mathrm{SM}(p)}$ is shown to be equivalent to
the Mirzakhani type ABO topological recursion with the kernel function $H^{\mathrm{SM}(p)}(x,y)$.

\section{Volume polynomials}\label{sec:table_vol}

In this appendix, we give some computational results of volume polynomials
for the 2D gravity models and their Masur--Veech type twist.

\subsection{Volume polynomials}\label{sec:table_vol_no_twist}
Weil--Petersson volumes:
\begin{gather}
\VO^{\mathrm{WP}}_{0,3}= 1,\nonumber\\
\VO^{\mathrm{WP}}_{1,1} =\frac{\pi^2}{12} + \frac{1}{48}L_1^2,\nonumber\\
\VO^{\mathrm{WP}}_{0,4}=2\pi^2 + \frac12\sum_{i=1}^4 L_i^2,\nonumber\\
\VO^{\mathrm{WP}}_{1,2} =\frac{\pi^4}{4} + \frac{\pi^2}{12}\sum_{i=1}^2 L_i^2+\frac{1}{192}\sum_{i=1}^2 L_i^4 + \frac{1}{96} L_1^2L_2^2,\nonumber\\
\VO^{\mathrm{WP}}_{0,5} = 10\pi^4 + 3\pi^2\sum_{i=1}^5 L_i^2+\frac{1}{8} \sum_{i=1}^5 L_i^4 + \frac12 \sum_{1\le i<j\le 5} L_i^2L_j^2,\nonumber\\
\VO^{\mathrm{WP}}_{1,3} =
\frac{14\pi^6}{9} + \frac{13\pi^4}{24} \sum_{i=1}^3 L_i^2
+ \frac{\pi^2}{24} \sum_{i=1}^3 L_i^4
+ \frac{\pi^2}{8} \sum_{1\le i<j\le 3} L_i^2L_j^2
+ \frac{1}{1152} \sum_{i=1}^3 L_i^6\nonumber\\
\hphantom{\VO^{\mathrm{WP}}_{1,3} =}{}
+ \frac{1}{192}
\sum_{\begin{subarray}{c}
1\le i, j\le 3\\ (i\ne j)
\end{subarray}}
L_i^2 L_j^4
+ \frac{1}{96} L_1^2L_2^2L_3^2,\nonumber\\
\VO^{\mathrm{WP}}_{2,1} =
\frac{29\pi^8}{192} + \frac{169\pi^6}{2880}L_1^2
+ \frac{139\pi^4}{23040}L_1^4 + \frac{29\pi^2}{138240}L_1^6
+ \frac{1}{442368}L_1^8.\label{lst_wp}
\end{gather}
Volume polynomials of $(2,p)$ minimal string:
\begin{gather}
 \VO^{\mathrm{M}(p)}_{0,3} =1,\nonumber\\
 \VO^{\mathrm{M}(p)}_{1,1} = \frac{\pi^2}{12}\left(1-\frac{1}{p^2}\right)+ \frac{1}{48}L_1^2,\nonumber\\
 \VO^{\mathrm{M}(p)}_{0,4} = 2\pi^2\left(1-\frac{1}{p^2}\right)+ \frac12\sum_{i=1}^4 L_i^2,\nonumber\\
 \VO^{\mathrm{M}(p)}_{1,2} =
 \frac{\pi^4}{4}\left(1-\frac{1}{p^2}\right)\left(1+\frac{5}{3p^2}\right)
+ \frac{\pi^2}{12}\left(1-\frac{1}{p^2}\right)\sum_{i=1}^2 L_i^2
+ \frac{1}{192}\sum_{i=1}^2 L_i^4 + \frac{1}{96} L_1^2L_2^2,\nonumber\\
 \VO^{\mathrm{M}(p)}_{0,5} =
 10\pi^4\left(1-\frac{1}{p^2}\right)\left(1+\frac{3}{5p^2}\right)
+ 3\pi^2\left(1-\frac{1}{p^2}\right)\sum_{i=1}^5 L_i^2
+ \frac{1}{8} \sum_{i=1}^5 L_i^4
 + \frac12 \sum_{1\le i<j\le 5} L_i^2L_j^2,\nonumber\\
 \VO^{\mathrm{M}(p)}_{1,3} =
 \frac{14\pi^6}{9}\left(1-\frac{1}{p^2}\right)
\left(1+\frac{20}{7p^2}+\frac{3}{p^4}\right)
+ \frac{13\pi^4}{24}\left(1-\frac{1}{p^2}\right)\left(1+\frac{11}{13p^2}\right)
\sum_{i=1}^3 L_i^2
\nonumber\\
 \hphantom{\VO^{\mathrm{M}(p)}_{1,3} =}{}
 + \frac{\pi^2}{24}\left(1-\frac{1}{p^2}\right) \sum_{i=1}^3 L_i^4
+ \frac{\pi^2}{8}\left(1-\frac{1}{p^2}\right) \sum_{1\le i<j\le 3} L_i^2L_j^2
+ \frac{1}{1152} \sum_{i=1}^3 L_i^6
\nonumber\\
\hphantom{\VO^{\mathrm{M}(p)}_{1,3} =}{} + \frac{1}{192}
\sum_{\begin{subarray}{c}
1\le i, j\le 3\\ (i\ne j)
\end{subarray}}
L_i^2 L_j^4
+ \frac{1}{96} L_1^2L_2^2L_3^2,\nonumber\\
 \VO^{\mathrm{M}(p)}_{2,1} =
 \frac{29\pi^8}{192}\left(1-\frac{1}{p^2}\right)
\left(1+\frac{2423}{435p^2}+\frac{41}{3p^4}+\frac{6557}{435p^6}\right)
\nonumber\\
\hphantom{\VO^{\mathrm{M}(p)}_{2,1} =}{} + \frac{169\pi^6}{2880}\!\left(1-\frac{1}{p^2}\right)\!
\left(1+\frac{430}{169p^2}+\frac{361}{169p^4}\right) \! L_1^2
+ \frac{139\pi^4}{23040}\left(1-\frac{1}{p^2}\right)\!
\left(1+\frac{93}{139p^2}\right)\!L_1^4
\nonumber\\
\hphantom{\VO^{\mathrm{M}(p)}_{2,1} =}{} + \frac{29\pi^2}{138240}\left(1-\frac{1}{p^2}\right)L_1^6
+ \frac{1}{442368}L_1^8.\label{lst_mst}
\end{gather}
Super Weil--Petersson volumes:
\begin{gather}
 \VO^{\mathrm{SWP}}_{0,n} = 0,\nonumber\\[-0.25mm]
 \VO^{\mathrm{SWP}}_{1,n} = \frac{(n-1)!}{8},\nonumber\\
 \VO^{\mathrm{SWP}}_{2,1} = \frac{9\pi^2}{64}+\frac{3}{256}L_1^2,\nonumber\\[-0.25mm]
 \VO^{\mathrm{SWP}}_{2,2} =
 \frac{9\pi^2}{16}+\frac{9}{256}\sum_{i=1}^2 L_i^2,\nonumber\\[-0.25mm]
 \VO^{\mathrm{SWP}}_{2,3} =
 \frac{45\pi^2}{16}+\frac{9}{64}\sum_{i=1}^3 L_i^2,\nonumber\\[-0.25mm]
 \VO^{\mathrm{SWP}}_{2,4} =
 \frac{135\pi^2}{8}+\frac{45}{64}\sum_{i=1}^4 L_i^2,\nonumber\\[-0.25mm]
 \VO^{\mathrm{SWP}}_{2,5} =
 \frac{945\pi^2}{8}+\frac{135}{32}\sum_{i=1}^5 L_i^2,\nonumber\\[-0.25mm]
 \VO^{\mathrm{SWP}}_{3,1} =
 \frac{681\pi^4}{512}+\frac{63\pi^2}{512}L_1^2+\frac{15}{8192}L_1^4,\nonumber\\[-0.25mm]
 \VO^{\mathrm{SWP}}_{3,2} =
 \frac{2421\pi^4}{256}+
\frac{189\pi^2}{256}\sum_{i=1}^2 L_i^2 +
\frac{75}{8192}\sum_{i=1}^2 L_i^4 + \frac{63}{2048} L_1^2L_2^2,\nonumber\\[-0.25mm]
 \VO^{\mathrm{SWP}}_{3,3} =
 \frac{19593\pi^4}{256}+
\frac{1323\pi^2}{256}\sum_{i=1}^3 L_i^2 +
\frac{225}{4096}\sum_{i=1}^3 L_i^4 + \frac{189}{1024}
\sum_{1\le i<j\le 3} L_i^2L_j^2,\nonumber\\[-0.25mm]
 \VO^{\mathrm{SWP}}_{4,1} =
 \frac{278833\pi^6}{8192} + \frac{106911\pi^4}{32768} L_1^2+
\frac{8625\pi^2}{131072} L_1^4 +\frac{175}{524288} L_1^6 .\label{lst_swp}
\end{gather}
Volume polynomials of the $(2,2p-2)$ minimal superstring:
\begin{gather}
 \VO^{\mathrm{SM}(p)}_{0,n} = 0,\nonumber\\[-0.25mm]
 \VO^{\mathrm{SM}(p)}_{1,n} = \frac{(n-1)!}{8},\nonumber\\[-0.25mm]
 \VO^{\mathrm{SM}(p)}_{2,1} =
 \frac{9\pi^2}{64}\left(1-\frac{1}{p^2}\right)
+\frac{3}{256}L_1^2,\nonumber\\[-0.25mm]
 \VO^{\mathrm{SM}(p)}_{2,2} =
 \frac{9\pi^2}{16}\left(1-\frac{1}{p^2}\right)
+\frac{9}{256}\sum_{i=1}^2 L_i^2,\nonumber\\[-0.25mm]
 \VO^{\mathrm{SM}(p)}_{2,3} =
 \frac{45\pi^2}{16}\left(1-\frac{1}{p^2}\right)
+\frac{9}{64}\sum_{i=1}^3 L_i^2,\nonumber\\[-0.25mm]
 \VO^{\mathrm{SM}(p)}_{2,4} =
 \frac{135\pi^2}{8}\left(1-\frac{1}{p^2}\right)
+\frac{45}{64}\sum_{i=1}^4 L_i^2,\nonumber\\[-0.25mm]
 \VO^{\mathrm{SM}(p)}_{2,5} =
 \frac{945\pi^2}{8}\left(1-\frac{1}{p^2}\right)
+\frac{135}{32}\sum_{i=1}^5 L_i^2,\nonumber\\[-0.25mm]
 \VO^{\mathrm{SM}(p)}_{3,1} =
 \frac{681\pi^4}{512}
\left(1-\frac{1}{p^2}\right)\left(1-\frac{27}{227p^2}\right)
+\frac{63\pi^2}{512}L_1^2\left(1-\frac{1}{p^2}\right)
+\frac{15}{8192}L_1^4,\nonumber\\[-0.25mm]
 \VO^{\mathrm{SM}(p)}_{3,2} =
 \frac{2421\pi^4}{256}
\left(1-\frac{1}{p^2}\right)\left(1-\frac{69}{269p^2}\right)
+ \frac{189\pi^2}{256}\left(1-\frac{1}{p^2}\right)\sum_{i=1}^2 L_i^2
\nonumber\\[-0.25mm]
 \hphantom{\VO^{\mathrm{SM}(p)}_{3,2} =}{}
 + \frac{75}{8192}\sum_{i=1}^2 L_i^4
+ \frac{63}{2048} L_1^2L_2^2,\nonumber\\[-0.25mm]
 \VO^{\mathrm{SM}(p)}_{3,3} =
 \frac{19593\pi^4}{256}
\left(1-\frac{1}{p^2}\right)\left(1-\frac{111}{311p^2}\right)
+ \frac{1323\pi^2}{256}\left(1-\frac{1}{p^2}\right)\sum_{i=1}^3 L_i^2
\nonumber\\[-0.25mm]
\hphantom{\VO^{\mathrm{SM}(p)}_{3,3} =}
 + \frac{225}{4096}\sum_{i=1}^3 L_i^4
+ \frac{189}{1024} \sum_{1\le i<j\le 3} L_i^2L_j^2,\nonumber\\[-0.25mm]
 \VO^{\mathrm{SM}(p)}_{4,1} =
 \frac{278833\pi^6}{8192}
\left(1-\frac{1}{p^2}\right)\left(1-\frac{44866}{278833p^2}
+\frac{1233}{278833p^4}\right)\label{lst_super_mst}
\\[-0.25mm]
\hphantom{\VO^{\mathrm{SM}(p)}_{4,1} =}{}
 + \frac{106911\pi^4}{32768}
\left(1-\frac{1}{p^2}\right)\left(1-\frac{12637}{35637p^2}\right)L_1^2
+ \frac{8625\pi^2}{131072}\left(1-\frac{1}{p^2}\right)L_1^4
+ \frac{175}{524288} L_1^6 .\nonumber
\end{gather}

\subsection{Twisted volume polynomials}\label{sec:table_vol_with_twist}

Masur--Veech polynomials:
\begin{gather}
 \VO^{\mathrm{MV}}_{0,3} = 1, \nonumber\\
 \VO^{\mathrm{MV}}_{1,1} =
 \frac{\pi^2}{12} + \frac{1}{48}L_1^2, \nonumber\\
 \VO^{\mathrm{MV}}_{0,4} =
 \frac{\pi^2}{2} + \frac12\sum_{i=1}^4 L_i^2, \nonumber\\
 \VO^{\mathrm{MV}}_{1,2} =
 \frac{\pi^4}{16} + \frac{\pi^2}{24}\sum_{i=1}^2 L_i^2
+\frac{1}{192}\sum_{i=1}^2 L_i^4 + \frac{1}{96} L_1^2L_2^2, \nonumber\\
 \VO^{\mathrm{MV}}_{0,5} =
 \frac{3\pi^4}{4} + \frac{\pi^2}{2}\sum_{i=1}^5 L_i^2
+\frac{1}{8} \sum_{i=1}^5 L_i^4 + \frac12 \sum_{1\le i<j\le 5} L_i^2L_j^2, \nonumber\\
 \VO^{\mathrm{MV}}_{1,3} =
 \frac{11\pi^6}{96} + \frac{\pi^4}{16} \sum_{i=1}^3 L_i^2
+ \frac{13\pi^2}{1152} \sum_{i=1}^3 L_i^4
+ \frac{\pi^2}{24} \sum_{1\le i<j\le 3} L_i^2L_j^2
\nonumber\\
\hphantom{\VO^{\mathrm{MV}}_{1,3} =}{}
 + \frac{1}{1152} \sum_{i=1}^3 L_i^6
+ \frac{1}{192}
\sum_{\begin{subarray}{c}
1\le i, j\le 3\\ (i\ne j)
\end{subarray}}
L_i^2 L_j^4
+ \frac{1}{96} L_1^2L_2^2L_3^2, \nonumber\\
 \VO^{\mathrm{MV}}_{2,1} =
 \frac{29\pi^8}{2560} + \frac{\pi^6}{192}L_1^2
+ \frac{119\pi^4}{138240}L_1^4 + \frac{\pi^2}{13824}L_1^6
+ \frac{1}{442368}L_1^8 .\label{lst_mv}
\end{gather}
Twisted Weil--Petersson volumes:
\begin{gather}
 \VO^{\mathrm{WP}}_{0,3}\big[\sfm\big] = 1,\nonumber\\
 \VO^{\mathrm{WP}}_{1,1}\big[\sfm\big] =
 \frac{(s+1)\pi^2}{12} + \frac{1}{48}L_1^2,\nonumber\\
 \VO^{\mathrm{WP}}_{0,4}\big[\sfm\big] =
 \frac{(4s+1)\pi^2}{2} + \frac12\sum_{i=1}^4 L_i^2,\nonumber\\
 \VO^{\mathrm{WP}}_{1,2}\big[\sfm\big] =
 \frac{\bigl(36 s^{2}+26 s+9\bigr)\pi^4}{144}
+ \frac{(2s+1)\pi^2}{24}\sum_{i=1}^2 L_i^2
+ \frac{1}{192}\sum_{i=1}^2 L_i^4 + \frac{1}{96} L_1^2L_2^2,\nonumber\\
 \VO^{\mathrm{WP}}_{0,5}\big[\sfm\big] =
 \frac{\bigl(120 s^{2}+40 s+9\bigr)\pi^4}{12}
+ \frac{(6s + 1)\pi^2}{2}\sum_{i=1}^5 L_i^2
+ \frac{1}{8} \sum_{i=1}^5 L_i^4 + \frac12 \sum_{1\le i<j\le 5} L_i^2L_j^2,\nonumber\\
 \VO^{\mathrm{WP}}_{1,3}\big[\sfm\big] =
 \frac{\bigl(448 s^{3}+284 s^{2}+122 s+33\bigr)\pi^6}{288}
+ \frac{\bigl(26 s^{2}+13 s+3\bigr)\pi^4}{48}\sum_{i=1}^3 L_i^2
\nonumber\\
\hphantom{\VO^{\mathrm{WP}}_{1,3}\big[\sfm\big] =}{}
 + \frac{(48 s+13)\pi^2}{1152} \sum_{i=1}^3 L_i^4
+ \frac{(3 s+1)\pi^2}{24} \sum_{1\le i<j\le 3} L_i^2L_j^2
\nonumber\\
\hphantom{\VO^{\mathrm{WP}}_{1,3}\big[\sfm\big] =}{} + \frac{1}{1152} \sum_{i=1}^3 L_i^6
+ \frac{1}{192}
\sum_{\begin{subarray}{c}
1\le i, j\le 3\\ (i\ne j)
\end{subarray}}
L_i^2 L_j^4
+ \frac{1}{96} L_1^2L_2^2L_3^2,\nonumber\\
 \VO^{\mathrm{WP}}_{2,1}\big[\sfm\big] =
 \frac{\bigl(31320 s^{4}+27600 s^{3}+16796 s^{2}+8100 s+2349\bigr)\pi^8}{207360}\nonumber\\
\hphantom{\VO^{\mathrm{WP}}_{2,1}\big[\sfm\big] =}{}
+ \frac{\bigl(1014 s^{3}+800 s^{2}+357 s+90\bigr)\pi^6}{17280} L_1^2 + \frac{(834 s^{2}+490 s+119)\pi^4}{138240} L_1^4
\nonumber\\
\hphantom{\VO^{\mathrm{WP}}_{2,1}\big[\sfm\big] =}{}
+ \frac{(29 s+10)\pi^2}{138240} L_1^6 + \frac{1}{442368}L_1^8 .\label{lst_wp_tw_s}
\end{gather}
Twisted volume polynomials of the $(2,p)$ minimal string:
\begin{gather}
 \VO^{\mathrm{M}(p)}_{0,3}\big[\sfm\big] = 1,\nonumber\\
 \VO^{\mathrm{M}(p)}_{1,1}\big[\sfm\big] =
 \left[s\left(1-\frac{1}{p^2}\right)+1\right]\frac{\pi^2}{12}
+ \frac{1}{48}L_1^2,\nonumber\\
 \VO^{\mathrm{M}(p)}_{0,4}\big[\sfm\big] =
 \left[4s\left(1-\frac{1}{p^2}\right)+1\right]\frac{\pi^2}{2}
+ \frac12\sum_{i=1}^4 L_i^2,\nonumber\\
 \VO^{\mathrm{M}(p)}_{1,2}\big[\sfm\big] =
 \left[\left(1-\frac{1}{p^2}\right)
\left(12s^2\left(3+\frac{5}{p^2}\right)+26s\right)+9\right]\frac{\pi^4}{144}\nonumber\\
\hphantom{\VO^{\mathrm{M}(p)}_{1,2}\big[\sfm\big] =}{}
+ \left[2s\left(1-\frac{1}{p^2}\right)+1\right]
\frac{\pi^2}{24}\sum_{i=1}^2 L_i^2
 + \frac{1}{192}\sum_{i=1}^2 L_i^4 + \frac{1}{96} L_1^2L_2^2,\nonumber\\
 \VO^{\mathrm{M}(p)}_{0,5}\big[\sfm\big] =
 \left[\left(1-\frac{1}{p^2}\right)
\left(24s^2\left(5+\frac{3}{p^2}\right)+40s\right)+9\right]\frac{\pi^4}{12}\nonumber\\
\hphantom{\VO^{\mathrm{M}(p)}_{0,5}\big[\sfm\big] =}{}
+ \left[6s\left(1-\frac{1}{p^2}\right)+1\right]\frac{\pi^2}{2}\sum_{i=1}^5 L_i^2
 + \frac{1}{8} \sum_{i=1}^5 L_i^4 + \frac12 \sum_{1\le i<j\le 5} L_i^2L_j^2,\nonumber\\
 \VO^{\mathrm{M}(p)}_{1,3}\big[\sfm\big] =
 \left[\left(1-\frac{1}{p^2}\right)
\left(64s^3\left(7+\frac{20}{p^2}+\frac{21}{p^4}\right)
+4s^2\left(71+\frac{49}{p^2}\right)+122s \right) + 33 \right]
\frac{\pi^6}{288}
\nonumber\\
\hphantom{\VO^{\mathrm{M}(p)}_{1,3}\big[\sfm\big] =}{}
 + \left[\left(1-\frac{1}{p^2}\right)
\left(2s^2\left(13+\frac{11}{p^2}\right) + 13s \right) + 3 \right]
\frac{\pi^4}{48}\sum_{i=1}^3 L_i^2
\nonumber\\
\hphantom{\VO^{\mathrm{M}(p)}_{1,3}\big[\sfm\big] =}{} + \left[48s\left(1-\frac{1}{p^2}\right)+13\right]
\frac{\pi^2}{1152} \sum_{i=1}^3 L_i^4
+ \left[3s\left(1-\frac{1}{p^2}\right)+1\right]
\frac{\pi^2}{24} \sum_{1\le i<j\le 3} L_i^2L_j^2
\nonumber\\
\hphantom{\VO^{\mathrm{M}(p)}_{1,3}\big[\sfm\big] =}{} + \frac{1}{1152} \sum_{i=1}^3 L_i^6
+ \frac{1}{192}
\sum_{\begin{subarray}{c}
1\le i, j\le 3\\ (i\ne j)
\end{subarray}}
L_i^2 L_j^4
+ \frac{1}{96} L_1^2L_2^2L_3^2,\nonumber\\
 \VO^{\mathrm{M}(p)}_{2,1}\big[\sfm\big] =
 \left[\left(1-\frac{1}{p^2}\right)
\left(72s^4\left(435+\frac{2423}{p^2}+\frac{5945}{p^4}+\frac{6557}{p^6}\right)\right.\right.\nonumber\\
 \left.\left.
 \hphantom{\VO^{\mathrm{M}(p)}_{2,1}\big[\sfm\big] =}{}
+ 240s^3\!\left(115 + \frac{322}{p^2} + \frac{331}{p^4}\right)\!
+ 68s^2\!\left(247 + \frac{185}{p^2}\right)\! +8100s \right)\! +2349\right]\!
\frac{\pi^8}{207360}
\nonumber\\
\hphantom{\VO^{\mathrm{M}(p)}_{2,1}\big[\sfm\big] =}{} + \left[\left(1-\frac{1}{p^2}\right)
\left(6s^3\left(169+\frac{430}{p^2}+\frac{361}{p^4}\right)\right.\right.\nonumber\\
 \left.\left.
 \hphantom{\VO^{\mathrm{M}(p)}_{2,1}\big[\sfm\big] =}{}
+ 160s^2\left(5+\frac{4}{p^2}\right) + 357s \right) +90\right]
\frac{\pi^6}{17280} L_1^2
\nonumber\\
\hphantom{\VO^{\mathrm{M}(p)}_{2,1}\big[\sfm\big] =}{}
 + \left[\left(1-\frac{1}{p^2}\right)
\left(6s^2\left(139 + \frac{93}{p^2}\right) +490s \right) + 119\right]
\frac{\pi^4}{138240} L_1^4
\nonumber\\
\hphantom{\VO^{\mathrm{M}(p)}_{2,1}\big[\sfm\big] =}{}
 + \left[29s\left(1-\frac{1}{p^2}\right)+10\right]\frac{\pi^2}{138240} L_1^6
+ \frac{1}{442368}L_1^8 .\label{lst_mst_twist_s}
\end{gather}
Super Masur--Veech polynomials:
\begin{gather}
 \VO^{\mathrm{SMV}}_{0,n} = 0,\nonumber\\
 \VO^{\mathrm{SMV}}_{1,n} = \frac{(n-1)!}{8},\nonumber\\
 \VO^{\mathrm{SMV}}_{2,1} =
 \frac{3\pi^2}{128}+\frac{3}{256}L_1^2,\nonumber\\
 \VO^{\mathrm{SMV}}_{2,2} =
 \frac{9\pi^2}{128}+\frac{9}{256}\sum_{i=1}^2 L_i^2,\nonumber\\
 \VO^{\mathrm{SMV}}_{2,3} =
 \frac{9\pi^2}{32}+\frac{9}{64}\sum_{i=1}^3 L_i^2,\nonumber\\
 \VO^{\mathrm{SMV}}_{2,4} =
 \frac{45\pi^2}{32}+\frac{45}{64}\sum_{i=1}^4 L_i^2,\nonumber\\
 \VO^{\mathrm{SMV}}_{2,5} =
 \frac{135\pi^2}{16}+\frac{135}{32}\sum_{i=1}^5 L_i^2,\nonumber\\
 \VO^{\mathrm{SMV}}_{3,1} =
 \frac{23\pi^4}{1024}+\frac{51\pi^2}{4096}L_1^2+\frac{15}{8192}L_1^4,\nonumber\\
 \VO^{\mathrm{SMV}}_{3,2} =
 \frac{115\pi^4}{1024}+
\frac{225\pi^2}{4096}\sum_{i=1}^2 L_i^2 +
\frac{75}{8192}\sum_{i=1}^2 L_i^4 + \frac{63}{2048} L_1^2L_2^2,\nonumber\\
 \VO^{\mathrm{SMV}}_{3,3} =
 \frac{345\pi^4}{512}+
\frac{765\pi^2}{2048}\sum_{i=1}^3 L_i^2 +
\frac{225}{4096}\sum_{i=1}^3 L_i^4 + \frac{189}{1024}
\sum_{1\le i<j\le 3} L_i^2L_j^2,\nonumber\\
 \VO^{\mathrm{SMV}}_{4,1} =
 \frac{1827\pi^6}{32768} + \frac{473\pi^4}{16384} L_1^2+
\frac{625\pi^2}{131072} L_1^4 +\frac{175}{524288} L_1^6 .\label{lst_smv}
\end{gather}
Twisted super Weil--Petersson volumes:
\begin{gather}
 \VO^{\mathrm{SWP}}_{0,n}\big[\sfm\big] = 0,\nonumber\\
 \VO^{\mathrm{SWP}}_{1,n}\big[\sfm\big] =
 \frac{(n-1)!}{8},\nonumber\\
 \VO^{\mathrm{SWP}}_{2,1}\big[\sfm\big] =
 \frac{3(6s+1)\pi^2}{128}+\frac{3}{256}L_1^2,\nonumber\\
 \VO^{\mathrm{SWP}}_{2,2}\big[\sfm\big] =
 \frac{9(8s+1)\pi^2}{128}+\frac{9}{256}\sum_{i=1}^2 L_i^2,\nonumber\\
 \VO^{\mathrm{SWP}}_{2,3}\big[\sfm\big] =
 \frac{9(10s+1)\pi^2}{32}+\frac{9}{64}\sum_{i=1}^3 L_i^2,\nonumber\\
 \VO^{\mathrm{SWP}}_{2,4}\big[\sfm\big] =
 \frac{45(12s+1)\pi^2}{32}+\frac{45}{64}\sum_{i=1}^4 L_i^2,\nonumber\\
 \VO^{\mathrm{SWP}}_{2,5}\big[\sfm\big] =
 \frac{135(14s+1)\pi^2}{16}+\frac{135}{32}\sum_{i=1}^5 L_i^2,\nonumber\\
 \VO^{\mathrm{SWP}}_{3,1}\big[\sfm\big] =
 \frac{\bigl(1362 s^{2}+255 s+23\bigr)\pi^4}{1024}
+\frac{3(168 s+17)\pi^2}{4096}L_1^2+\frac{15}{8192}L_1^4,\nonumber\\
 \VO^{\mathrm{SWP}}_{3,2}\big[\sfm\big] =
 \frac{\bigl(9684 s^{2}+1530 s+115\bigr)\pi^4}{1024}+
\frac{3(1008s+85)\pi^2}{4096}\sum_{i=1}^2 L_i^2
\nonumber\\
\hphantom{\VO^{\mathrm{SWP}}_{3,2}\big[\sfm\big] =}{}
 + \frac{75}{8192}\sum_{i=1}^2 L_i^4 + \frac{63}{2048} L_1^2L_2^2,\nonumber\\
 \VO^{\mathrm{SWP}}_{3,3}\big[\sfm\big] =
 \frac{3\bigl(13062 s^{2}+1785 s+115\bigr)\pi^4}{512}+
\frac{9(1176 s+85)\pi^2}{2048}\sum_{i=1}^3 L_i^2
\nonumber\\
\hphantom{\VO^{\mathrm{SWP}}_{3,3}\big[\sfm\big] =}{}
+ \frac{225}{4096}\sum_{i=1}^3 L_i^4 + \frac{189}{1024}
\sum_{1\le i<j\le 3} L_i^2L_j^2,\nonumber\\
 \VO^{\mathrm{SWP}}_{4,1}\big[\sfm\big] =
 \frac{\bigl(1115332 s^{3}+216788 s^{2}+26488 s+1827\bigr)\pi^6}{32768}\nonumber\\
 \hphantom{\VO^{\mathrm{SWP}}_{4,1}\big[\sfm\big] =}{}
 +
\frac{\bigl(106911 s^{2}+14643 s+946\bigr)\pi^4}{32768} L_1^2
 + \frac{125(69 s+5)\pi^2}{131072} L_1^4 +\frac{175}{524288} L_1^6 .\!\!\!\label{lst_super_wp_tw_s}
\end{gather}
Twisted volume polynomials of the $(2,2p-2)$ minimal superstring:
\begin{gather}
 \VO^{\mathrm{SM}(p)}_{0,n}\big[\sfm\big] = 0,\nonumber\\
 \VO^{\mathrm{SM}(p)}_{1,n}\big[\sfm\big] =
 \frac{(n-1)!}{8},\nonumber\\
 \VO^{\mathrm{SM}(p)}_{2,1}\big[\sfm\big] =
 \left[18s\left(1-\frac{1}{p^2}\right)+3\right]\frac{\pi^2}{128}
+\frac{3}{256}L_1^2,\nonumber\\
 \VO^{\mathrm{SM}(p)}_{2,2}\big[\sfm\big] =
 \left[72s\left(1-\frac{1}{p^2}\right)+9\right]\frac{\pi^2}{128}
+\frac{9}{256}\sum_{i=1}^2 L_i^2,\nonumber\\
 \VO^{\mathrm{SM}(p)}_{2,3}\big[\sfm\big] =
 \left[90s\left(1-\frac{1}{p^2}\right)+9\right]\frac{\pi^2}{32}
+\frac{9}{64}\sum_{i=1}^3 L_i^2,\nonumber\\
 \VO^{\mathrm{SM}(p)}_{2,4}\big[\sfm\big] =
 \left[540s\left(1-\frac{1}{p^2}\right)+45\right]\frac{\pi^2}{32}
+ \frac{45}{64}\sum_{i=1}^4 L_i^2,\nonumber\\
 \VO^{\mathrm{SM}(p)}_{2,5}\big[\sfm\big] =
 \left[1890s\left(1-\frac{1}{p^2}\right)+135 \right]\frac{\pi^2}{16}
+ \frac{135}{32}\sum_{i=1}^5 L_i^2,\nonumber\\
 \VO^{\mathrm{SM}(p)}_{3,1}\big[\sfm\big] =
 \left[\left(1-\frac{1}{p^2}\right)
\left(6s^2\left(227-\frac{27}{p^2}\right)+255s \right)+23 \right]
\frac{\pi^4}{1024}
\nonumber\\
\hphantom{\VO^{\mathrm{SM}(p)}_{3,1}\big[\sfm\big] =}{}
 + \left[504s\left(1-\frac{1}{p^2}\right)+51\right]\frac{\pi^2}{4096}L_1^2
+ \frac{15}{8192}L_1^4,\nonumber\\
 \VO^{\mathrm{SM}(p)}_{3,2}\big[\sfm\big] =
 \left[\left(1-\frac{1}{p^2}\right)
\left(36s^2\left(269-\frac{69}{p^2}\right)+1530s \right)+115 \right]
\frac{\pi^4}{1024}
\nonumber\\
\hphantom{\VO^{\mathrm{SM}(p)}_{3,2}\big[\sfm\big] =}{}
 + \left[3024s\left(1-\frac{1}{p^2}\right)+255 \right]
\frac{\pi^2}{4096}\sum_{i=1}^2 L_i^2
+ \frac{75}{8192}\sum_{i=1}^2 L_i^4 + \frac{63}{2048} L_1^2L_2^2,\nonumber\\
 \VO^{\mathrm{SM}(p)}_{3,3}\big[\sfm\big] =
 \left[\left(1-\frac{1}{p^2}\right)
\left(126s^2\left(311 - \frac{111}{p^2}\right)+5355s \right)+345 \right]
\frac{\pi^4}{512}
\nonumber\\
 \hphantom{\VO^{\mathrm{SM}(p)}_{3,3}\big[\sfm\big] =}{}
 + \left[10584s\left(1-\frac{1}{p^2}\right)+765 \right]
\frac{\pi^2}{2048}\sum_{i=1}^3 L_i^2
+ \frac{225}{4096}\sum_{i=1}^3 L_i^4
\nonumber\\
\hphantom{\VO^{\mathrm{SM}(p)}_{3,3}\big[\sfm\big] =}{}
 + \frac{189}{1024}
\sum_{1\le i<j\le 3} L_i^2L_j^2,\nonumber\\
 \VO^{\mathrm{SM}(p)}_{4,1}\big[\sfm\big] =
 \left[\left(1-\frac{1}{p^2}\right)
\left(4s^3\left(278833 - \frac{44866}{p^2} + \frac{1233}{p^4}\right)
\right.\right.
\nonumber\\
 \hphantom{\VO^{\mathrm{SM}(p)}_{4,1}\big[\sfm\big] =}{}
+ 4s^2\left.\left.\left(54197 - \frac{19197}{p^2}\right)+26488s \right)+1827 \right]
\frac{\pi^6}{32768}
\nonumber\\
\hphantom{\VO^{\mathrm{SM}(p)}_{4,1}\big[\sfm\big] =}{}
 + \left[\left(1-\frac{1}{p^2}\right)
\left(3s^2\left(35637 - \frac{12637}{p^2}\right)+14643s \right)+946 \right]
\frac{\pi^4}{32768} L_1^2
\nonumber\\
\hphantom{\VO^{\mathrm{SM}(p)}_{4,1}\big[\sfm\big] =}{}
 + \left[8625s\left(1-\frac{1}{p^2}\right)+625 \right]\frac{\pi^2}{131072} L_1^4
+ \frac{175}{524288} L_1^6 .\label{lst_super_mst_twist_s}
\end{gather}

\subsection*{Acknowledgements}

Authors thank Kohei Iwaki, Kazumi Okuyama, Kento Osuga, and Yuji Terashima for discussions and useful comments.
One of the authors (H.F.) is grateful to J{\o}rgen Ellegaard Andersen for instructive discussions on the geometric recursions.
The authors also thank the anonymous referees for helpful and constructive comments which improved the quality of this manuscript.
This work was supported by JSPS KAKENHI Grant Numbers JP18K03281, JP20K03601, JP20K03931, JP21H04994, JP22H01117.

\pdfbookmark[1]{References}{ref}
\LastPageEnding

\end{document}